\definecolor{jgreen}{rgb}{0,0.7,0}
\tikzstyle{c} =	[coordinate]
\tikzstyle{v} = 	[circle, draw=black, line width=.5pt, fill=black, inner sep=0pt, minimum size=1.5mm]
\tikzstyle{vb} =	[circle, draw=black, line width=.5pt, fill=red, inner sep=0pt, minimum size=1.5mm]
\tikzstyle{vh} =	[circle, draw=black, line width=.5pt, fill=blue, inner sep=0pt, minimum size=1.5mm]
\tikzstyle{vs} =	[circle, draw=black, line width=.5pt, fill=jgreen, inner sep=0pt, minimum size=1.5mm]
\tikzstyle{e} =	[draw=red,line width=1.6pt]
\tikzstyle{eb} =	[draw=jgreen,line width=1.6pt]
\tikzstyle{eh} =	[dashed]
\tikzstyle{es} =	[draw=blue,line width=1.6pt]
\tikzstyle{f} = 	[line width=0.01pt,dotted,fill=blue, fill opacity=.1]
\tikzstyle{cs} = 	[draw=none,fill=red, fill opacity=.7]
\tikzstyle{bb} =	[dashed,line width=1.4pt]
\tikzstyle{bh} =	[dotted,line width=1pt]
\tikzstyle{h} = 	[ellipse, inner sep=0.1pt, draw=black]
\tikzstyle{hv} = 	[ellipse, inner sep=0.1pt, draw=red]
\tikzstyle{he} = 	[ellipse, inner sep=0.1pt, draw=blue]
\tikzstyle{hb} = 	[ellipse, inner sep=0.1pt, draw=jgreen]
\newcommand{\simplicial}{{simplicial}}
\newcommand{\trace}{{trace}}
\newcommand{\extd}{\mathrm{d}}
\newcommand{\tr}{\mathrm{tr}}
\newcommand{\gft}{\textsc{gft}}
\newcommand{\mfgft}{\textsc{mf-gft}}
\newcommand{\pgft}{\textsc{p-gft}}
\newcommand{\dwgft}{\textsc{dw-gft}}
\newcommand{\eprl}{\textsc{eprl}}
\newcommand{\kkl}{\textsc{kkl}}
\newcommand{\lqg}{\textsc{lqg}}
\newcommand{\SF}{\textsc{sf}}
\newcommand{\sub}{\textsc{sub}}
\newcommand{\so}{\mathfrak{so}}
\newcommand{\su}{\mathfrak{su}}
\newcommand{\SO}{\mathrm{SO}}
\newcommand{\SU}{\mathrm{SU}}
\newcommand{\bg}{\mathfrak{c}}
\newcommand{\bgs}{\mathfrak{C}}
\newcommand{\bgt}{\widetilde{\mathfrak{c}}}
\newcommand{\bgst}{\widetilde{\mathfrak{C}}}
\newcommand{\bbg}{\mathfrak{b}}
\newcommand{\bbgs}{\mathfrak{B}}
\newcommand{\bbgt}{\widetilde{\mathfrak{b}}}
\newcommand{\bbgst}{\widetilde{\mathfrak{B}}}
\newcommand{\sfa}{\mathfrak{a}}
\newcommand{\sfas}{\mathfrak{A}}
\newcommand{\sfat}{\widetilde{\mathfrak{a}}}
\newcommand{\sfast}{\widetilde{\mathfrak{A}}}
\newcommand{\bp}{{\mathfrak{p}}}
\newcommand{\bps}{\mathfrak{P}}
\newcommand{\bpt}{{\widetilde{\mathfrak{p}}}}
\newcommand{\bpst}{\widetilde{\mathfrak{P}}}
\newcommand{\sfr}{\mathfrak{m}}
\newcommand{\sfrs}{\mathfrak{M}}
\newcommand{\sfrt}{\widetilde{\mathfrak{m}}}
\newcommand{\sfrst}{\widetilde{\mathfrak{M}}}
\newcommand{\sfm}{\mathfrak{m}}
\newcommand{\sfms}{\mathfrak{M}}
\newcommand{\sfmt}{\widetilde{\mathfrak{m}}}
\newcommand{\sfmst}{\widetilde{\mathfrak{M}}}
\newcommand{\sfo}{\mathfrak{o}}
\newcommand{\sfos}{\mathfrak{O}}
\newcommand{\sfl}{\mathfrak{l}}
\newcommand{\sfls}{\mathfrak{L}}
\newcommand{\sta}{\mathfrak{s}}
\newcommand{\vbar}{\bar{v}}
\newcommand{\Vbar}{\overline{\mathcal{V}}}
\newcommand{\vhat}{\hat{v}}
\newcommand{\Vhat}{\widehat{\mathcal{V}}}
\newcommand{\Phit}{\widetilde{\Phi}}
\newcommand{\phit}{\widetilde{\phi}}
\newcommand{\Obt}{\widetilde{\mathcal{O}}}
\newcommand{\obt}{\widetilde{O}}
\newcommand{\m}{n}
\newcommand{\copies}{n}
\newcommand{\loopless}{\textsc{l}}
\newcommand{\primitive}{\textsc{s}}
\newcommand{\primitivedually}{\textsc{s}\textrm{-}\textsc{dw}}
\newcommand{\looplessdually}{\textsc{l}\textrm{-}\textsc{dw}}
\newcommand{\Simplicial}{\textsc{s}}
\newcommand{\Simplicialdually}{\textsc{s}\textrm{-}\textsc{dw}}
\newcommand{\looptosimp}{\textsc{l-s}}
\newcommand{\Acal}{\mathcal{A}}
\newcommand{\Ccal}{\mathcal{C}}
\newcommand{\Dcal}{\mathcal{D}}
\newcommand{\Ecal}{\mathcal{E}}
\newcommand{\Fcal}{\mathcal{F}}
\newcommand{\Ical}{\mathcal{I}}
\newcommand{\Mcal}{\mathcal{M}}
\newcommand{\Ocal}{\mathcal{O}}
\newcommand{\Rcal}{\mathcal{R}}
\newcommand{\Vcal}{\mathcal{V}}
\newcommand{\Nbb}{\mathbb{N}}
\newcommand{\Rbb}{\mathbb{R}}
\newcommand{\bbB}{\mathbb{B}}  %\Bbb was already defined within some package
\newcommand{\Dbb}{\mathbb{D}}
\newcommand{\Ibb}{\mathbb{I}}
\newcommand{\Kbb}{\mathbb{K}}
\newcommand{\Pbb}{\mathbb{P}}
\newcommand{\Sbb}{\mathbb{S}}
\newcommand{\Vbb}{\mathbb{V}}
\newcommand{\bbBt}{\widetilde{\mathbb{B}}}
\newcommand{\Kbbt}{\widetilde{\mathbb{K}}}
\newcommand{\Vbbt}{\widetilde{\mathbb{V}}}
\newcommand{\Kbbb}{\overline{\mathbb{K}}}
\newcommand{\Vbbb}{\overline{\mathbb{V}}}
\newcommand{\bbBb}{\overline{\mathbb{B}}}
\newcommand{\dbb}{\mathbf{d}}
\newcommand{\ibb}{\mathbf{i}}
\newcommand{\pbb}{\mathbf{p}}
\newcommand{\vbb}{\mathbf{v}}
\newcommand{\fdiagram}{\Gamma}
\newcommand{\ra}{\rightarrow}
\newcommand{\us}{\underset}
\newcommand{\In}{\subset}
\newcommand{\Cc}{\mathcal{C}}
\newcommand{\SC}{\Cc^\mathrm{sim}}
\newcommand{\PC}{\Cc^\mathrm{poly}}
\newcommand{\ssub}{\Delta}
\newcommand{\br}{\partial}
\newcommand{\Vb}{\overline{\mathcal{V}}}
\newcommand{\vb}{\bar v}
\newcommand{\Vh}{\widehat{\mathcal{V}}}
\newcommand{\vh}{\hat v}
\newcommand{\Eb}{\overline{\mathcal{E}}}
\newcommand{\cGFT}{Freidel:2005jy,Oriti:2006ts,Oriti:2012wt,Krajewski:2012wm,Baratin:2012ge,Oriti:2009ur,Oriti:2014wf}
\newcommand{\ckkl}{Kaminski:2010ba}
\newcommand{\cklp}{Kisielowski:2012bo}
\newcommand{\ceprl}{Engle:2007em,Engle:2008ka,Engle:2008fj}
\newcommand{\cfk}{Freidel:2008fv}
\newcommand{\cbo}{Baratin:2012br}
\newcommand{\cOperator}{Bahr:2011ey,Bahr:2012iu}
\newcommand{\cguraulost}{Gurau:2010iu}
\newcommand{\csmerlaklost}{Smerlak:2011ea}
\newcommand{\cgftrenorm}{Freidel:2009ek,BenGeloun:2013fw,BenGeloun:2013dl,BenGeloun:2013ek,BenGeloun:2013uf,Samary:2014bs}
\newcommand{\cCOR}{Carrozza:2014ee,Carrozza:2014bh,Carrozza:2014tf}
\newcommand{\cGFC}{Gielen:2013cr,Gielen:2014gv,Gielen:2014ca,Calcagni:2014uz,Gielen:2014vk,Sindoni:2014vs}
\newcommand{\clargeN}{Bonzom:2011cs,Gurau:2012ek,Gurau:2012hl,Bonzom:2012bg,Gurau:2012td,Gurau:2012wj,Baratin:2014bea}
\newcommand{\cdouble}{Gurau:2011sk}
\theoremstyle{definition}
\newtheorem{defin}{Definition}[section]
\newtheorem{remark}[defin]{Remark}
\theoremstyle{plain}
\newtheorem{proposition}[defin]{Proposition}
\newtheorem{conjecture}[defin]{Conjecture}
\newtheorem{corollary}[defin]{Corollary}
\begin{document}

%%%%%%%%%%%%%%%%%
% header begins %
%%%%%%%%%%%%%%%%%

%\begin{center}
%\hspace{13cm}{AEI-2014-044}
%\end{center}

\title{\bf Group field theories for all loop quantum gravity}

\author{Daniele\ Oriti, James\ P.\ Ryan, Johannes\ Th\"urigen%
\footnote{{daniele.oriti@aei.mpg.de}, {james.ryan@aei.mpg.de}, {johannes.thuerigen@aei.mpg.de}}
\\[0.2cm]
\small MPI f\"ur Gravitationsphysik, Albert Einstein Institut, Am M\"uhlenberg 1, D-14476 Potsdam, Germany
}

\date{}

\maketitle

\begin{abstract}
Group field theories represent a 2nd quantized reformulation of the loop quantum gravity state space and a completion of the spin foam formalism. States of the canonical theory, in the traditional continuum setting, have support on graphs of arbitrary valence. On the other hand, group field theories have usually been defined in a simplicial context, thus dealing with a restricted set of graphs. In this paper, we generalize the combinatorics of group field theories to cover all the loop quantum gravity state space. As an explicit example, we describe the group field theory formulation of the \kkl\ spin foam model, as well as a particular modified version. We show that the use of tensor model tools allows for the most effective construction. In order to clarify the mathematical basis of our construction and of the formalisms with which we deal, we also give an exhaustive description of the combinatorial structures entering spin foam models and group field theories, both at the level of the boundary states and of the quantum amplitudes.
\end{abstract}

%%%%%%%%%%%%%%%
% header ends %
%%%%%%%%%%%%%%%

%%%%%%%%%%%%%%%%%%%%
% main body begins %
%%%%%%%%%%%%%%%%%%%%

\section{Introduction}

\newcommand{\fk}{\textsc{fk}}
\newcommand{\bo}{\textsc{bo}}

The field of non-perturbative, background independent,  quantum gravity has witnessed several important developments in the last decades. 

In particular, loop quantum gravity \cite{Thiemann:2007wt,Ashtekar:2004fk,Rovelli:2004wb} emerged as a prominent candidate in the endeavour to fully describe the kinematics of quantum geometry. At its base lie quantum states that can be defined in a purely algebraic and combinatorial manner, a complete basis for which is provided by spin networks: graphs labelled by irreducible representations of the Lorentz or the rotation group. Furthermore, a quantum dynamics for such quantum geometric states can be rigorously defined, although both its solution and the extraction of effective classical dynamics are fraught with difficulties. 

On the covariant side, spin foam models \cite{Baez:2000kp, Oriti:2003uw, Perez:2003wk,Rovelli:2004wb,Perez:2013uz,Rovelli:2011tk,Bianchi:2013fh} rose to prominence both as a new approach to lattice gravity path integrals and as a covariant definition of the dynamics of loop quantum gravity states. 
They are similarly based on combinatorial and algebraic structures. Spacetime is replaced by a (simplicial) %cellular
complex and discrete quantum geometric data. This data comes in the form of group/algebra elements or representations, labelling various components of the %cellular
complex. It plays the role of the discrete metric, reproducing at the covariant level the histories for quantum states. In the case of 4--dimensional quantum gravity, the most actively studied models are the \eprl\ model \cite{Engle:2008ka,Engle:2008fj}, the \fk\ model \cite{Freidel:2008fv} and the \bo\ model \cite{\cbo}. 

Group field theory (\gft) \cite{\cGFT} also took a more central role in the quantum gravity landscape, in connection to loop quantum gravity and spin foam models. These are quantum field theories on group manifolds characterized by a peculiar non--local pairing of field variables in their interactions and motivated from both the canonical and the covariant perspective. 
In fact, on the one hand, they represent a 2nd quantized, Fock space reformulation of the loop quantum gravity state space. In this capacity, spin network vertices play the role of fundamental quanta, created/annihilated by field operators. 
Meanwhile, their canonical quantum equations of motion (e.g. the Hamiltonian constraint equation)  are encoded in (a sector of) the quantum equations of motion for the $n$--point functions of the corresponding field theory \cite{Oriti:2013vv}. 
On the other hand, they provide a completion of the spin foam formalism. 
A spin foam model, defined on a given (simplicial) %cellular 
complex, encodes a finite number of degrees of freedom. This presents the issue of defining a quantum dynamics for the infinite degrees of freedom that one expects a quantum gravity theory to possess. One strategy, following the lattice gravity interpretation of spin foam models, is to define some refinement procedure for the spin foam complex. Thereafter, one looks for fixed points as the renormalized amplitudes flow under coarse graining \cite{Dittrich:2012ba}. 
A second strategy focusses on defining an appropriate sum over spin foams, including a sum over the %cellular 
complexes themselves \cite{Reisenberger:2001hd,Baez:2000kp,Oriti:2003uw, Perez:2003wk}. This is more directly in line with the interpretation of spin foams as histories of spin networks.  \gft s provide a natural and elegant way to define this sum. In fact, for any given spin foam model, there is a \gft\ model, whose perturbative expansion around the Fock vacuum, generates a series catalogued by spin foam complexes, weighted by the appropriate amplitude. In other words, the spin foam complexes arise as \gft\ Feynman diagrams and spin foam amplitudes as \gft\ Feynman amplitudes. Thus, \gft\ models complete the spin foam picture and moreover, \emph{any} \gft\  model defines a complete spin foam model. %including the sum over spin foam complexes, the latter arising as the perturbative sum over GFT Feynman diagrams. 
If one then keeps in mind that \gft s give a 2nd quantized formulation of canonical \lqg, one obtains a direct link between the canonical and covariant approaches \cite{Oriti:2013vv}.

Such sums over complexes is not reserved solely for the \gft\ formalism.  %They can be generated by simpler models, maintaining the combinatorial structure of \gft s but with simpler amplitudes, depending on their exact dependence on additional data.
Tensor models \cite{Gurau:2012hl} are a generalization of matrix models \cite{Francesco:1995ih} to dimensions greater than two. They can be seen as providing a stripped-down version of \gft s, reducing them to purely combinatorial models. Indeed, the group--theoretic data are dropped altogether; equivalently, this can be seen as restricting the discrete geometric data to the graph--distance metric and working with equilateral triangulations. %complexes.
As a result, the amplitudes depend only on the combinatorics of the simplicial %cellular
complexes. This allows one to focus principally on the sum over complexes. Indeed, many of the recent advances in tensor models exert increasing analytic control over such series.  Some of these advances have been already extended to the more involved \gft\ framework. 
Thus, one should expect that techniques from tensor models could play a greater role in the context of spin foam models and loop quantum gravity, since one needs to exert analytical control over the spin foam sum, as well as the combinatorial structure of quantum states. This paper provides one example of this fruitful exchange.

An important issue concerns the combinatorial structure of graphs and complexes and directly affects the relation between the canonical and covariant approaches. On the one hand, the set of graphs supporting quantum states of the canonical theory includes graphs of arbitrary valence. 
This stems partly from its historic origin as a direct quantization of a continuum gravity theory. 
On the other hand, spin foam models have often been defined to evolve quantum states with support on a restricted set of graphs, those that may be endowed with a simplicial interpretation. 
Such a choice has several motivations: it facilitates calculations; it is in this restricted context that their discrete geometric properties are best understood, in particular,  the so--called simplicity constraints that reduce topological \textsc{bf} theory to gravity \cite{Barbieri:1998kx,Barrett:1998fp,Barrett:2000fr}; such states arise as a superselection sector for certain \lqg\  Hamiltonian constraints.  This also implies that the boundary data in the covariant setting % corresponding to states of the canonical theory, 
are simplicial. Importantly, current \gft s share the same type of boundary states and amplitudes.

To ensure a better matching between canonical \lqg\ and covariant spin foam models, as well as to have a \gft\ formulation for both approaches, one may want to generalize the combinatorial structures appearing in spin foam models and \gft\ to arbitrary graphs and complexes. A second motivation arises from the study of physical applications and the continuum limit, where it may be worth possessing a larger class of models at the outset. Afterwards,  physical rather than aesthetic or mathematical reasons restrict the combinatorial structures taking part. This has been already done, at least partially, in the context of \gft\ renormalization \cite{\cgftrenorm, \cCOR}, following developments in tensor models \cite{Gurau:2012hl, Bonzom:2012bg}. 

From another perspective, the matching with canonical \lqg\ could also be achieved the other way around, i.e.~by working with a simplicial version of the canonical theory. 
Rather than dealing with a quantization of continuum general relativity (\textsc{gr}), one views continuum \textsc{gr} arising only as the effective theory of the quantum dynamics for fundamental structures that are intrinsically discrete.
From this point of view, it makes sense to start with the simplest possible discrete structures, provided they are general enough to recover continuum manifolds in some approximation, and generalize them only if and when necessary. With respect to this criterion, simplicial complexes are sufficiently general. Let us emphasize that the other reason for restricting to simplicial complexes (and fixed--valence graphs) is practical. Controlling the sums over complexes and dealing with arbitrary superpositions of graph--based states is complicated enough when their combinatorics is restricted. A generalization would seem, a priori, to make things worse. 

While the reluctance to complicate things may explain the delay in developing combinatorial generalizations of spin foam models and \gft s, there is no obstruction, mathematical or conceptual, to doing so. In fact, a generalization of current spin foam models, in particular the \eprl\ model, to a larger set of %cellular 
complexes was provided in \cite{\ckkl}. The aim of this article is to show that the \gft\ framework is also completely comfortable with the generation of complexes evolving arbitrary spin network states. 

More precisely, our results are the following: 

We give an extensive and exhaustive description of the combinatorial structures entering spin foam models and \gft s, both at the level of the boundary states and of the quantum amplitudes. 
To this end, we define spin foam atoms and molecules as structures that are directly adapted to the needs of spin foams and \gft.   
In addition, we also investigate and detail the fundamental properties of combinatorial complexes in a more precise, mathematical sense, introducing the concept of abstract polyhedral complexes as a generalization of abstract simplicial complex to include abstract polytopes.
This prepares the mathematical foundation and the intuition for the \gft\ construction.  Moreover, we believe it is of intrinsic value, clarifying, relating and extending several results in the literature. 

We generalize the combinatorics of \gft s through two mechanisms.  The first proposal constitutes a very formal (and thereby somewhat trivial) generalization of the \gft\ formalism to one based on an infinite number of fields. Having said that, this multi--field \gft\  generates series catalogued by arbitrary 2--complexes, while arbitrary graphs label quantum states. This is a direct counterpart of the \kkl--extension of gravitational spin foam models. It shows the absence of any fundamental obstruction to accommodating arbitrary combinatorial structures. However, one does not expect such a field theory to be %manageable/
useful since the sum over complexes appears no more tamed than before. 
 
  More interesting and much more manageable is our second construction. % a \gft\ formalism that generates arbitrary 2--complexes %in terms of a decomposition into 2--complex building blocks which can be interpreted as the dual 2--skeleta to $D$-simplices. 
  As is well known, the standard simplicial \gft\ contains an interaction based upon a 2--complex that may be interpreted as the dual 2--skeleton of a $D$--simplex. 
Remarkably, at the 2--complex level, arbitrary 2--complexes can be decomposed in terms of this simplicial 2--complex. 
Thus, the standard \gft\ is sufficient to generate arbitrary 2--complexes. However, the subtle issue is to assign correct amplitudes.  This is solved by a mild extension, wherein one augments the data set over which the \gft\ field is defined, so as to exert more sensitive control over the combinatorial structures generated by the theory. This is known as dual--weighting and permits one to tune the theory to a regime, in which the perturbative series are catalogued by appropriately weighted arbitrary spin foams (not just simplicial spin foams). 
%Meanwhile, the perturbative series generated by this theory are catalogued by cell complexes constructed from the usual \lq\lq simplicial building blocks\rq\rq. However, the dual--weighting allows one to tune to a sector of the theory, wherein one may interpret the amplitudes, in a precise fashion, as gravitational spin foam amplitudes weighting more general 2-complexes. 
%Of course, simplicial--decomposable complexes, also appear effectively in the standard GFT formalism. However, thanks to the extra tensor indices and our improved construction based on them, we can work with generalized complexes also at the level of the quantum spin foam amplitudes associated to them. 
Here we accomplish two things. First, we give an explicit \gft\ formulation of the \kkl--extension of the \eprl\ model (and of other similar spin foam models).  Second, we propose a new (set of) model(s) incorporating similar constraints that are arguably better motivated from the geometric point of view. % also defined for arbitrary  complexes.

The presentation of these results is structured as follows. In Section \ref{sec:comb}, we set the stage for defining the generalized \gft s, discussing the combinatorics structures upon which they are supported, as well as introducing all the relevant concepts for the constructive way spin foam molecules (combinatorial 2--complexes) are generated in \gft s as a bonding of atoms determined by their boundary graphs. In particular, we show how these graphs and molecules can be decomposed into graphs and atoms of a simplicial kind. In the appendix, we show that these combinatorics are indeed the ones of combinatorial 2--complexes, that is, the $\m=2$ case of abstract polyhedral $\m$-complexes. 
Then, in Section \ref{sec:gft}, we review the definition of \gft s and rephrase them in terms of the generalized combinatorics language. The definition of multi--field \gft\ which generates arbitrary spin foam molecules is thereafter straightforward. For the implementation of dually--weighted \gft s, we detail the dual--weighting mechanism that realizes, in a dynamical manner, the decomposition of generic molecules in terms of simplicial building blocks. 
Finally, in Section \ref{sec:sfm} we show how gravitational spin foam models incorporating the relevant simplicity constraints in their operators can be generalized to both the multi-field \gft\ as well as the dually weighted \gft. As an example we present the details for the \eprl--simplicity constraints.

%%%%%%%%%%%%%%%%%%%%%%%%%%%%%%%%%%%%%%%%%
%%%%%%%%%%%%%%%%%%%%%%%%%%%%%%%%%%%%%%%%%

\newpage

%%%%%%%%%%%%%%%%%%%%%%%%%%%%%%%%%%%%%%%%%
%%%%%%%%%%%%%%%%%%%%%%%%%%%%%%%%%%%%%%%%%

\section{Combinatorics of spin foams} \label{sec:comb}

The graphical and topological structures, upon which spin foam models have support, tend to have a rather molecular structure. This has been noted and explained in detail in \cite{\cklp}.  The coming section includes a self--contained description of these structures, one that increases its utility within the group field theory framework. Moreover, while we have consciously opted for a physicochemical naming convention,  rather than the cephalopodal counterpart used in \cite{\cklp}, we stress that its use is for purely intuitive purposes.

 Given the technical nature of the coming section, we present a synopsis of the main points.

\begin{minipage}{0.7\textwidth}
\begin{displaymath}
  \xymatrix{
   & &\bps \ar[d] \\ 
   %\textsc{unlabelled} &
   \bgs \ar[rr]^{\beta}& &\bbgs \ar@/^/[r]^{\alpha}  &\sfas \ar@/^/[l]^{\delta} \ar@{.>}[rr]&&\sfrs\ar@/_{3pc}/@{.>}[lll]_{\delta}\\
   &\\
  % \textsc{labelled} &  
   \bgst_{n,\loopless} \ar[rr]\ar[uu]^{\pi_{n,\loopless}} & &\bbgst_{n,\loopless} \ar@/^/[r]\ar[uu]  &\sfast_{n,\loopless} \ar@/^/[l]\ar[uu]\ar@{.>}[r] &\sfrst_{n,\loopless}\ar[r]&\sfrst_{n,\looplessdually}\ar[uu]^{\Pi_{n,\looplessdually}}\ar[dd]_{D_{n,\looptosimp}}&\\
   & &\bpst_{n%,\primitive
   }\ar[u] \ar[d] & & & &\\
 %  \textsc{%primitive  $n$-simplicial}   &
 \bgst_{n,\primitive}\ar[rr]^{\widetilde{\beta}}&&
    \bbgst_{n,\primitive}\ar@/^/[r]^{\widetilde{\alpha}}& \sfast_{n,\primitive}\ar@/^/[l]^{\widetilde{\delta}}\ar@{.>}[r]&\sfrst_{n,\primitive}\ar[r]& \sfrst_{n,\primitivedually}\ar@/^/@{.>}[uulll]^{\widetilde{\delta}}%\ar@/_{2pc}/[uuuu]_{\Pi_{n,\primitivedually}}
 }
\end{displaymath}
\end{minipage}
\begin{minipage}{0.3\textwidth}
\begin{tabular}{cl}
 $\bgs$ 	& boundary graphs \\
$\bbgs$	& bisected boundary graphs \\
$\sfas$ 	& spin foam atoms \\
$\bps$	& boundary patches \\
$\sfrs$	& spin foam molecules \\
\\
$\sim$	& labelled \\
$\loopless$& loopless \\
$n$		& $n$-regular \\
$\primitive$& simplicial \\ 
\\
$\beta$	& bisection map\\
$\alpha$	& bulk map \\
$\delta$	& boundary map\\
$\pi$		& projection map from \\
		& labelled to unlabelled\\
$D$		& decomposition map from\\
		& loopless to simplicial

\end{tabular}
\end{minipage}
  
One starts with a set of boundary graphs $\bgs$ that provide support for %the 
loop quantum gravity states. For a graph $\bg\in\bgs$, one arrives at the corresponding bisected boundary graph  $\bbg= \beta(\bg)\in\bbgs$ by bisecting each of its edges. The graph $\bbg$ can be augmented to arrive at the corresponding 2--dimensional spin foam atom $\sfa = \alpha(\bbg) \in\sfas$.  This spin foam atom $\sfa$ is the simplest spin foam structure with $\bbg$ as a boundary: $\bbg= \delta(\sfa)$.  Moreover, the bisected boundary graph $\bbg$ can be decomposed into boundary patches $\bp\in\bps$.  The boundary patches are important because it is along these patches that atoms are bonded to form composite structures, known as spin foam molecules $\sfrs$. The boundary ($\delta$) of these molecules are (generically a collection of) graphs in $\bbgs$. Moreover, the molecules are the objects generated in the perturbative expansion of the group field theory. 

From the \gft\ perspective, however,  one looks for as concise a way as possible to generate such structures.  It emerges that the complexity of the \gft\ generating function can be infinitely reduced by considering labelled ($\sim$), $n$--regular, loopless ($\loopless$) graphs $\bgst_{n,\loopless}$. The labels are associated to each edge and drawn from the set $\{real, virtual\}$, while loopless means that the terminus of any edge does not coincide with its source. For this set of objects, one can then follow an analogous procedure to generate $\bbgst_{n,\loopless}$, $\sfast_{n,\loopless}$ and $\sfrst_{n,\loopless}$.    

There is a surjection $\pi_{n,\loopless}:\bgst_{n,\loopless}\longrightarrow\bgs$, meaning that each graph in $\bgs$ is represented by a class of graphs in $\bgst_{n,\loopless}$.  This surjection can be extended to $\bbgst_{n,\loopless}$ and $\sfast_{n,\loopless}$ but not the molecules $\sfrst_{n,\loopless}$.  However, one can identify a subset $\sfrst_{n,\looplessdually}\subset\sfrst_{n,\loopless}$, for which one can extend $\pi_{n,\loopless}$ to a surjection $\Pi_{n,\looplessdually}:\sfrst_{n,\looplessdually}\longrightarrow\sfrs$.  Thus, every molecule in $\sfrs$ is represented by a class of molecules in $\sfrst_{n,\looplessdually}$.

The key now is that the patches making up any graph in $\bbgst_{n,\loopless}$ come from a finite set of patches $\bpst_{n}$, called $n$--patches. Using these patches one can pick out a finite subset of \simplicial\ $n$--graphs $\bgst_{n,\primitive}\subset\bgst_{n,\loopless}$,
that are based on the complete graph over $n+1$ vertices.
$\bbgst_{n,\primitive}$, $\sfast_{n,\primitive}$ and $\sfrst_{n,\primitive}$ follow as before. 

While $\bgst_{n,\primitive}$, $\bbgst_{n,\primitive}$ and $\sfast_{n,\primitive}$ are finite sets, the set of \simplicial\ spin foam molecules $\sfrst_{n,\primitive}$ is infinite and contains a subset $\sfrst_{n,\Simplicialdually}$ whose elements reduce properly to molecules in $\sfrs$.  But the set $\sfrst_{n, \Simplicialdually}$ does not cover $\sfrs$ through some surjection, but maps onto a subset.  To cover all of $\sfrs$, one needs $\sfrst_{n,\looplessdually}$. Having said that, \textit{i}) there is a decomposition map $D_{n,\looptosimp}:\sfrst_{n,\looplessdually}\longrightarrow\sfrst_{n,\Simplicialdually}$ and \textit{ii}) every graph or collection of graphs from $\bbgst_{n,\loopless}$ arises as the boundary of some molecule in $\sfrst_{n,\primitivedually}$. As a result, $\sfmst_{n,\Simplicialdually}$ is sufficient to support a spin foam dynamics for arbitrary \lqg\ quantum states. 

\

The forthcoming construction is separated into six parts. The first and second catalogue the basic building blocks or atoms, along with the set of possible bonds that may arise between pairs of atoms.  These structures are drawn directly from those used in loop quantum gravity. Both the set of atoms and the set of their bonds are very large and inspire an attempt to find smaller subsets, introduced in the third and forth part, that still probe the whole space of graphical structures in some precisely defined sense which is explained and proven in the fifth and sixth part. 

After all these technicalities  we will discuss the relation of the 2-dimensional spin foam atoms and molecules to higher dimensional topologies in a seventh subsection.
Finally we will close this section emphasizing that the whole construction can be equivalently carried out in the language of stranded diagrams which is the usual one used in the \gft\ literature and is totally equivalent to the more \lqg\ oriented language of boundary graphs and spin foam atoms used in this work.

%%%%%%%%%%%%%%%%%%%%%%%%%%%%%%%%%%%%%%%%%

\subsection{Part 1: catalogue the basic building blocks} 
\label{ssec:stepOne}

This part focusses on defining the structure underlying loop quantum gravity and spin foams:
\begin{displaymath}
  \xymatrix{
 %  && &\bps \ar[d]_{\sigma} \\ 
   \textsc{unlabelled} &\bgs \ar[r]^{\beta} &\bbgs \ar@/^/[r]^{\alpha}  &\sfas \ar@/^/@{.>}[l]^{\delta}% \ar@{.>}[rr]&&\sfrs \\
   }
  \end{displaymath}

\begin{defin}[{\bf boundary graph}]
  \label{def:atomicbg}
  A \emph{boundary graph} is a double $\bg = (\overline{\mathcal{V}},\overline{\mathcal{E}})$, where $\overline{\mathcal{V}}$ is the vertex set and $\overline{\mathcal{E}}$ is the edge (multi)set,%
\footnote{A multiset is an extension of set concept, in which elements are allowed to occur multiple times.}
comprising of unordered two--element subsets of $\overline{\mathcal{V}}$,%
 \footnote{For a loop, the two--element subset is itself a multiset $(\bar{v}\bar{v})$.}
  subject to the condition that the graph is connected. 
\end{defin}
The set of boundary graphs is denoted by $\bgs$. 
Indeed this is just the set of connected multigraphs.
\begin{remark}
  One should note here that multi--edges (multiple edges joining two vertices), loops (edges whose two vertices coincide) and even 1--valent vertices (vertices with only one incident edge) are allowed.  Thus, $\bgs$ constitutes a very large set.  However, such graphs arise within loop quantum gravity, can be incorporated within the group field theory framework and so, in principle, serve as an appropriate starting point.  Later, this set can be whittled down to a more manageable subset. 
\end{remark}
\begin{defin}[{\bf bisected boundary graph}]
  \label{def:bisectedbg}
  A \emph{bisected boundary graph} is a double, $\bbg = (\mathcal{V}_{\bbg}, \mathcal{E}_{\bbg})$, constituting a bipartite graph with vertex partition $\mathcal{V}_{\bbg}=\overline{\mathcal{V}}\cup\widehat{\mathcal{V}}$, such that the vertices $\hat{v}\in\widehat{\mathcal{V}}$ are bivalent. 
  %constructed from $\bg\in\bgs$ by bisecting each of its edges.
\end{defin}
The set of bisected boundary graphs is denoted by $\bbgs$.  
\begin{proposition}
  \label{prop:bisectedcor}
  There is a bijection $\beta: \bgs \longrightarrow \bbgs$.
\end{proposition}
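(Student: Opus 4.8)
The plan is to construct the map $\beta$ explicitly as the "bisection" operation already described informally in the text, construct an explicit inverse, and then verify that both composites are the identity. First I would define $\beta$ on a boundary graph $\bg = (\Vbar, \Ebar)$: take the vertex set of $\beta(\bg)$ to be $\Vbar \cup \Vhat$, where $\Vhat$ has one new vertex $\vh_{\eb}$ for each edge $\eb \in \Ebar$ (counting multiplicity, since $\Ebar$ is a multiset); and replace each edge $\eb = (\vbar_1\vbar_2) \in \Ebar$ by the two edges $(\vbar_1 \vh_{\eb})$ and $(\vh_{\eb}\vbar_2)$. By construction every $\vh_{\eb}$ is bivalent and the resulting graph is bipartite with respect to the partition $\Vbar \cup \Vhat$ (every new edge joins an old vertex to a new one), and it is connected because $\bg$ is; so $\beta(\bg) \in \bbgs$. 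One subtlety to handle carefully here is the loop case $\eb = (\vbar\vbar)$: it gets replaced by two parallel edges $(\vbar\vh_{\eb})$, so $\vh_{\eb}$ is still bivalent and the target is a legitimate bisected boundary graph.

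Next I would construct the candidate inverse $\gamma: \bbgs \to \bgs$. Given $\bbg = (\Vcal_{\bbg}, \Ecal_{\bbg})$ with partition $\Vbar \cup \Vhat$, set the vertex set of $\gamma(\bbg)$ to be $\Vbar$, and for each $\vh \in \Vhat$ — which is bivalent, so its two incident edges are $(\vbar_1\vh)$ and $(\vh\vbar_2)$ for some $\vbar_1, \vbar_2 \in \Vbar$ (possibly equal) — include one edge $(\vbar_1\vbar_2)$ in the edge multiset $\Ebar$. Because bivalence pins down exactly two incident edge-ends at each $\vh$, this assignment is well defined; connectedness of $\gamma(\bbg)$ follows from that of $\bbg$ by contracting the bivalent vertices.

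Then I would check $\gamma \circ \beta = \mathrm{id}_{\bgs}$ and $\beta \circ \gamma = \mathrm{id}_{\bbgs}$. The first is immediate: bisecting $\eb$ and then smoothing out $\vh_{\eb}$ returns $\eb$, and the vertex set $\Vbar$ is untouched throughout, so one recovers $\bg$ on the nose. For the second, starting from $\bbg$, the map $\gamma$ records for each $\vh$ the pair of $\Vbar$-endpoints of its two edges; re-bisecting that pair reinserts a bivalent vertex sitting on those same two edges, and since $\bbg$ is bipartite there are no edges internal to $\Vbar$ to worry about and no edges internal to $\Vhat$ (bivalent $\Vhat$-vertices cannot be adjacent to each other without violating the bipartite partition, once one notes every $\Vhat$-vertex has both its edges going to $\Vbar$), so the reconstruction is exact. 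I expect the only real care-points — not genuine obstacles — to be bookkeeping with the multiset structure of $\Ebar$ (parallel edges in $\bg$ produce distinct $\Vhat$-vertices, which must be kept distinct under $\gamma$) and the degenerate loop/1-valent-vertex cases; once those are dispatched, bijectivity is a routine verification. I would then set $\beta$ to be this map, noting $\gamma = \beta^{-1}$, which establishes the proposition.
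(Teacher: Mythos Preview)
Your proposal is correct and follows essentially the same approach as the paper: both define $\beta$ as edge bisection and exhibit the inverse as smoothing out the bivalent $\Vhat$-vertices. The paper phrases this as separate injectivity and surjectivity checks rather than verifying both composites, but the underlying construction and the care-points (loops, multisets) are identical.
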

\begin{proof}
Given a boundary graph $\bg\in \bgs$, the \emph{bisection map} $\beta$ acts on each edge $\bar{e}=(\bar{v}_1\bar{v}_2)\in\overline{\mathcal{E}}$, replacing it by a pair of edges $\{(\bar{v}_1\hat{v}),(\bar{v}_2\hat{v})\}$, where $\hat{v}$ is a newly created bivalent vertex effectively bisecting the original edge. Thus, under the action of $\beta$:
%\vspace{-0.3cm}
  \begin{description}
    \item[--] $\overline{\mathcal{V}}\longrightarrow \mathcal{V}_{\bbg} = \overline{\mathcal{V}}\cup\widehat{\mathcal{V}}$, where $\widehat{\mathcal{V}}$ is the set of vertices bisecting the original edges of $\bg$;
    \item[--] $\overline{\mathcal{E}}\longrightarrow \mathcal{E}_{\bbg} =  \bigcup_{\bar{e}\in\overline{\mathcal{E}}}\{(\bar{v}_1 \hat{v}),\,(\bar{v}_2 \hat{v}): \bar{e} =  (\bar{v}_1\bar{v}_2)\}$ is the multiset of newly bisected edges.\footnote{\label{fn:self}Note that a loop $\bar{e} = (\bar{v}\bar{v})\in\overline{\mathcal{E}}$ is replaced by the multiset of edges $\{(\bar{v}\hat{v}),(\bar{v}\hat{v})\}$ and thus $\mathcal{E}_{\bbg}$ is a multiset.} 
  \end{description}
  This clearly results in an element of $\bbgs$ and the constructive nature of the map assures its injectivity. 

  Given a graph $\bbg\in\bbgs$, removing the vertex subset $\widehat{\mathcal{V}}$ and replacing the edge pair $\{(\bar{v}_1\hat{v}),(\bar{v}_2\hat{v})\}$ by $(\bar{v}_1\bar{v}_2)$ results in an element $\bg\in\bgs$ such that $\beta(\bg) = \bbg$. Thus, $\beta$ is surjective.  
\end{proof}
  A graph $\bg\in\bgs$ and its bisected counterpart $\bbg\in\bbgs$ are presented in Figure \ref{fig:boundary}. 

\begin{figure}[htb]
  \centering
  %\vspace{3cm}
  \tikzsetnextfilename{boundary}
  % a generic atomic boundary graph undergoing bisection

\begin{tikzpicture}[scale=1.5]

\draw [|->] (2,0) -- node[label=above:$\beta$]{} (3,0);

% pyramid vertex boundary graph
\begin{scope}
\draw [eb] (1.5,1) circle (0.25cm);
\node [vb]		(a)	at (0,-1)		{};
\node [vb]		(b)	at (1.25,1)		{}; 
\node [vb]		(c)	at (0.5,0.5) 	{}; 
\node [vb]		(d)	at (-0.5,0.5)	{}; 
\node [vb]		(e)	at (-1.25,1)	{}; 
\node [vb]		(f)	at (1,-1)		{}; 
\foreach \i/\j in {a/b,a/c,a/d,a/e,b/c,b/e,c/d,d/e,a/f}{
 \draw [eb] (\i) -- (\j);
  }
\end{scope}

% pyramid vertex boundary graph bisected
\begin{scope}[xshift=4.5cm]
\draw [eb] (1.5,1) circle (0.25cm);
\node [vb]		(a)	at (0,-1)		{};
\node [vb]		(b)	at (1.25,1)		{}; 
\node [vb]		(c)	at (0.5,0.5) 	{}; 
\node [vb]		(d)	at (-0.5,0.5)	{}; 
\node [vb]		(e)	at (-1.25,1)	{}; 
\node [vb]		(f)	at (1,-1)		{}; 
\node [vh]		(g)	at (1.75,1)		{}; 
\foreach \i/\j in {a/b,a/c,a/d,a/e,b/c,b/e,c/d,d/e,a/f}{
 \draw [eb] (\i) -- node[vh] {} (\j);
  }
\end{scope}

\end{tikzpicture}
  \caption{\label{fig:boundary} A boundary graph $\bg$ and its bisected counterpart $\bbg$.}
\end{figure}

\begin{remark}
  The bipartite property of the graphs $\bbg\in\bbgs$ means that the pairs $(\bar{v}\hat{v})\in\mathcal{E}_{\bbg}$ are ordered and thus, $\bbg$ is quite naturally a directed graph. 
\end{remark}

\begin{defin}[{\bf spin foam atom}]
  \label{def:sfatom}
  A \emph{spin foam atom} is a triple, $\sfa = (\mathcal{V}_{\sfa},\mathcal{E}_{\sfa},\mathcal{F}_{\sfa})$,  of vertices, edges and faces. It is constructed from the pair $(\bbg, \alpha)$, where ${\bbg}\in{\bbgs}$ and $\alpha$ is a \emph{bulk map} sending $\bbg$ to: %$\bbg =(\mathcal{V}_{\bbg},\mathcal{E}_{\bbg})$ to :
  \vspace{-0.3cm}
  
\begin{description}
  \item[--] $\mathcal{V}_{\sfa} = \mathcal{V}\cup\mathcal{V}_{\bbg}$, where $\mathcal{V} = \{v\}$ is a one--element vertex set, containing the \emph{bulk} vertex; 
    
  \item[--] $\mathcal{E}_{\sfa} = \mathcal{E}\cup \mathcal{E}_{\bbg}$, where $\mathcal{E} = \bigcup_{u\in\Vcal_{\bbg}} \{ (vu) : v\in\mathcal{V}\}$. $\mathcal{E}$ contains precisely one edge for each vertex in $\Vcal_{\bbg}$, joining it to the bulk vertex $v$. Thus, $u$ takes values in $\Vbar$ and $\Vhat$. 
    
  \item[--] $\mathcal{F}_{\sfa} = \bigcup_{\hat{v}\in\widehat{\mathcal{V}}}\{(v\bar v\hat v): (\bar v\hat v)\in \mathcal{E}_{\bbg}\}$, where $(v\bar{v}\hat{v})$ is the prescription for a face in terms of the three vertices on its boundary. % It is the a set of vertex-cycles of length four, precisely one for each (bisecting) vertex $\hat{v}\in\widehat{\mathcal{V}}$.\footnotemark[\ref{fn:self}]
\end{description}
\end{defin}

One denotes the set of spin foam atoms by $\sfas$. 

\begin{remark}[{\bf boundary map}]
  \label{rem:bdycor}
  By construction $\alpha:\bbgs\longrightarrow\sfas$ is a bijection.  Moreover, one may define a \emph{boundary map} $\delta:\sfas\longrightarrow\bbgs$, such that for $\sfa$ constructed from $(\bbg,\alpha)$, this map is defined as $\delta(\sfa) = \alpha^{-1}(\sfa) = \bbg$. 
%  To say that one is given a spin foam atom $\sfa$ is to be given a triple  $(\mathcal{V},\mathcal{E},\mathcal{F})$ that possess a finer structure outlined in the definition above.  So given a spin foam atom, one may construct its boundary graph by simply deleting its bulk subsets. As a result the following proposition holds. 
\end{remark}
Thus, as a result of the bijective property of the maps $\alpha$ and $\beta$, the following proposition holds:

\fbox{
  \begin{minipage}[c][][c]{0.97\textwidth}
\begin{proposition}
  \label{prop:bijectionAtom}
  The set $\sfas$ of spin foam atoms is catalogued precisely by the set $\bgs$ of boundary graphs. 
\end{proposition}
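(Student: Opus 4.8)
The plan is to chain together the two bijections established earlier in this part. By Proposition~\ref{prop:bisectedcor}, the bisection map $\beta\colon\bgs\longrightarrow\bbgs$ is a bijection, and by Remark~\ref{rem:bdycor} the bulk map $\alpha\colon\bbgs\longrightarrow\sfas$ is a bijection with inverse the boundary map $\delta$. Composing these, the map $\alpha\circ\beta\colon\bgs\longrightarrow\sfas$ is a bijection, with inverse $\beta^{-1}\circ\delta$. This is precisely the assertion that the set $\sfas$ is catalogued by the set $\bgs$: every spin foam atom $\sfa\in\sfas$ arises from a unique boundary graph $\bg\in\bgs$ via $\sfa=\alpha(\beta(\bg))$, and distinct boundary graphs yield distinct atoms.

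First I would recall explicitly that both $\alpha$ and $\beta$ have been shown to be bijections in the preceding text, so there is nothing further to verify about them individually. Second, I would invoke the elementary fact that a composition of bijections is a bijection, and that its inverse is the composition of the inverses in reverse order; concretely $(\alpha\circ\beta)^{-1}=\beta^{-1}\circ\delta$, recalling $\delta=\alpha^{-1}$. Third, I would spell out what ``catalogued precisely by'' means in this context — namely that $\alpha\circ\beta$ furnishes the required one-to-one correspondence between boundary graphs and spin foam atoms — and note as a sanity check that the construction is constructive and explicit (given $\bg$, bisect its edges to get $\bbg$, then add the bulk vertex, the cone edges, and the triangular faces to get $\sfa$; conversely, given $\sfa$, discard the bulk vertex and its incident edges and faces to recover $\bbg$, then contract the bivalent vertices to recover $\bg$).

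There is essentially no obstacle here: the proposition is a formal corollary of the two bijectivity results already in hand, and the ``proof'' amounts to observing that bijections compose. The only thing warranting a sentence of care is making the informal phrase ``catalogued precisely by'' into the precise statement ``there is a bijection $\bgs\longrightarrow\sfas$,'' so that the reader sees the composite $\alpha\circ\beta$ is the intended correspondence; everything else is immediate.
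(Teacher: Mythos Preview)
Your proposal is correct and takes essentially the same approach as the paper: the proposition is stated immediately after the sentence ``Thus, as a result of the bijective property of the maps $\alpha$ and $\beta$, the following proposition holds,'' and no further proof is given. Your write-up simply makes explicit the composition $\alpha\circ\beta$ that the paper leaves implicit.
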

\end{minipage}
}

An illustrative example of such a structure is presented in Figure \ref{fig:atom}.

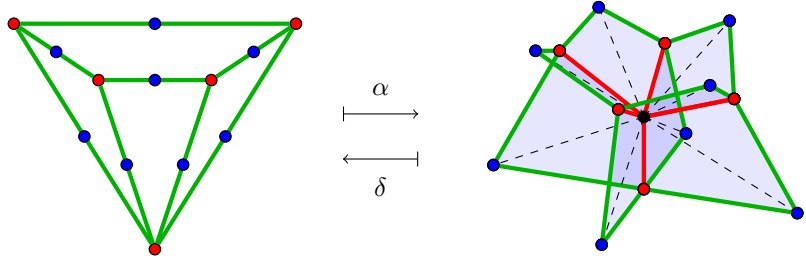
\begin{figure}[htb]
  \centering
  %\vspace{3cm}
  \tikzsetnextfilename{apyr}
  \begin{tikzpicture}

\draw [|->] (-4,0.3) -- node[label=above:$\alpha$] {} (-3,0.3);
\draw [<-|] (-4,-.3) -- node[label=below:$\delta$] {} (-3,-.3);

% pyramid vertex boundary graph
\begin{scope}[xshift=-6.5cm, scale=1.5]
\node [vb]		(a)	at (0,-1)		{};
\node [vb]		(b)	at (1.25,1)		{}; 
\node [vb]		(c)	at (0.5,0.5) 	{}; 
\node [vb]		(d)	at (-0.5,0.5)	{}; 
\node [vb]		(e)	at (-1.25,1)	{}; 
\foreach \i/\j in {a/b,a/c,a/d,a/e,b/c,b/e,c/d,d/e}{
 \draw [eb] (\i) -- node[vh] {} (\j);
  }
\end{scope}

% ---- pyramid vertex 2-foam
\begin{scope}[yshift=.5cm, scale=2]
\node [c]		(v)	at (0,-.12)		{};
\node [c]		(1)	at (-0.56,0.32) 	{}; 
\node [c]		(2)	at (-0.17,-0.07)	{}; 
\node [c]		(3)	at (0.6,0) 		{}; 
\node [c]		(4)	at (.14,.37)	{}; 
\node [c]		(5)	at (0,-.6)		{};
\node [c]		(12)	at (-.72,.32)	{};
\node [c]		(23)	at (.44,.09)	{};
\node [c]		(34)	at (.57,.52)	{};
\node [c]		(14)	at (-.3,.61)		{};
\node [c]		(15)	at (-1,-.44)		{};
\node [c]		(25)	at (-.28,-.97)	{};
\node [c]		(35)	at (1.02,-.76)	{};
\node [c]		(45)	at (.28,-.23)	{};
\foreach \i/\j in {1/2,1/4,2/3,3/4,1/5,2/5,3/5,4/5}{
 \path	[f] 	(\i) -- (\i\j) -- (\j) -- (v) -- cycle;
 }
 \foreach \i in {1,2,3,4,5}{
  \draw [e] (\i) -- (v);
  }
\foreach \i/\j in {1/2,1/4,2/3,3/4,1/5,2/5,3/5,4/5}{
 \draw 	[eh]	(v)		-- (\i\j);
 \draw	[eb] 	(\i) node[vb] {} -- (\i\j) node[vh] {};
 \draw 	[eb] 	(\j) node[vb] {} -- (\i\j) node[vh] {};
 }
\draw [e] (3) node[vb] {} -- (v) node[v] {};
%
%\draw (-0.58,-0.14)-- (0,1.37);
%\draw (-0.58,-0.14)-- (1.15,-0.32);
%\draw (-0.58,-0.14)-- (-1.45,-0.74);
%\draw (0.89,-1.2)-- (1.15,-0.32);
%\draw (1.15,-0.32)-- (0,1.37);
%\draw(0,1.37)--  (0.89,-1.2);
%\draw (-1.45,-0.74)-- (0.89,-1.2);
%\draw(0.89,-1.2)-- (1.15,-0.32);
%\draw (-1.45,-0.74)-- (0,1.37);
%\draw(0,1.37)--  (0.89,-1.2);
%\draw (0.89,-1.2)-- (-1.45,-0.74);
\end{scope}

\end{tikzpicture}
  \caption{\label{fig:atom} A spin foam atom and its (bisected) boundary graph.}
\end{figure}

%\comment{As shown in Section \ref{xxx}, the atoms may be endowed with a 2--(chain--)complex structure while, their boundaries, are 1--sub--complexes.}

\begin{remark}
  \label{rem:retraction}
  A neat alternative to the above construction is given in \cite{\cklp}.  One embeds the graph $\bbg\in\bbgs$ in the bounding 3--sphere of a 4--ball. One performs a radial deformation retraction of this ball to a point, denoted by $v\in \mathcal{V}$.  This retraction restricts to the graph, where one denotes the path traced out by the vertex $\bar{v}\in\overline{\mathcal{V}}$ and $\vh\in\Vh$ as edges $(v\bar{v}),(v\vh)\in \mathcal{E}$ respectively, while the surface traced out by an edge in $(\bar v\hat v)\in \mathcal{E}_{\bbg}$ is interpreted as a face $f=(v\bar v \hat v)\in \mathcal{F}_{\sfa}$.  
  In contrast, the definition given earlier was chosen to be purely combinatorial.
\end{remark}

%%%%%%%%%%%%%%%%%%%%%%%%%%%%%%%%%%%%%%%%%

\subsection{Part 2: bonding atoms to build molecules}
\label{sssec:bonding}

The second step is to describe the procedure by which these atoms bond  to form composite structures, thus completing the unlabelled part of the diagram:  
\begin{displaymath}
  \xymatrix{
   & &\bps \ar[d] \\ 
   \textsc{unlabelled} &\bgs \ar[r]^{\beta} &\bbgs \ar@/^/[r]^{\alpha}  &\sfas \ar@/^/[l]^{\delta} \ar@{.>}[r]&\sfrs\ar@/_{3pc}/@{.>}[ll]_{\delta} 
   }
  \end{displaymath}

\begin{defin}[{\bf boundary patch}]
  A \emph{boundary patch} is a double $\bp = \bp_{\vbar} = (\mathcal{V}_{\bp},\mathcal{E}_{\bp})$, where:
  \vspace{-0.2cm}

  \begin{description}
    \item[--] $\mathcal{V}_{\bp} = \{\vbar\} \cup  \Vhat_{\bp}$, $\Vhat_{\bp}\ne\emptyset$;
    \item[--] $\mathcal{E}_{\bp} = \{(\vbar\vhat) : \vhat\in\Vhat_{\bp}\}$ is a multiset of edges where each $(\vbar\vhat)$ occurs at least once and at most twice.
  \end{description} 
\end{defin}
  \begin{remark}
  Boundary patches are useful since they arise as the doubles $\bp_{\vbar}(\bbg) = (\mathcal{V}_{\bar{v}}, \mathcal{E}_{\bar{v}})$, formed as the closure of the star of $\bar{v}\in\overline{\mathcal{V}}$, within $\bbg\in\bbgs$.

  Thus, $\mathcal{V}_{\bar{v}} = \{\bar{v}\} \cup  \{\hat{v}\in\mathcal{V}_{\bbg}:(\bar{v}\hat{v})\in\mathcal{E}_{\bbg}\}$, and $\mathcal{E}_{\bar{v}} = \{e = (\bar{v}\hat{v})\in\mathcal{E}_{\bbg}\}$. In words, a boundary patch $\bp_{\vbar}(\bbg)$ is a graph containing $\bar{v}$ itself, all boundary edges containing $\bar{v}$ (the result of the star operation),  as well as the endpoints of these edges (the result of the closure operation). A simple example is depicted in Figure \ref{fig:star}. 

  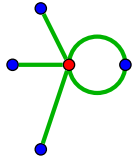
\begin{figure}[htb]
  \centering
  \tikzsetnextfilename{patch}
  \begin{tikzpicture}[scale=1.5]
  \draw [eb] (0.75,0.5) circle (.25cm);
  \node [vb]		(vb)	at (.5,.5) 	 	{}; 
  \node [vh]		(1)	at (0,.5)		{};
  \node [vh]		(2)	at (.25,1)		{}; 
  \node [vh]		(3)	at (.25,-.25)	{}; 
  \node [vh]		(4)	at (1,.5)		{};
  \foreach \i/\j in {vb/1,vb/2,vb/3}{
  \draw [eb] (\i) -- (\j);
  }
  \end{tikzpicture}
  %\vspace{3cm}
  \caption{\label{fig:star} A boundary patch.}
\end{figure}

\end{remark}
The set of boundary patches is denoted by $\bps$.
\begin{remark}[{\bf generators}]
  \label{rem:generator}
  For some subset of patches, $\bps_{\sub}\subseteq\bps$,  the set of graphs \emph{generated} by $\bps_{\sub}$, denoted $\sigma(\bps_{\sub})$, is the set of all possible graphs that are composed only of patches from $\bps_{\sub}$.
\end{remark}

Then, it is quite clear that:
\begin{proposition}
$\bbgs = \sigma(\bps)$.
\end{proposition}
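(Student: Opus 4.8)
The proof is a routine double-inclusion argument, so the plan is to verify both $\sigma(\bps)\subseteq\bbgs$ and $\bbgs\subseteq\sigma(\bps)$.

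First I would establish $\sigma(\bps)\subseteq\bbgs$. A graph composed of patches from $\bps$ is assembled by gluing patches along their $\vhat$-vertices. One must check that such a gluing always produces a genuine element of $\bbgs$, i.e.\ a connected bipartite graph whose $\widehat{\mathcal V}$-vertices are all bivalent. The bipartite structure is automatic: each patch $\bp_{\vbar}$ contributes a single $\vbar$-vertex on one side and its $\vhat$-vertices on the other, and the gluing only identifies $\vhat$-vertices with $\vhat$-vertices, so no edge ever runs between two $\vbar$-vertices. Bivalence of the $\vhat$-vertices follows from the patch axiom that each edge $(\vbar\vhat)$ occurs at least once and at most twice: a $\vhat$-vertex incident to a doubled edge within a single patch is already bivalent and is not glued further, while a $\vhat$-vertex carrying a single edge in its patch must be glued to exactly one other single-edge $\vhat$-vertex from another patch to complete the bipartite graph, again yielding valence two. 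Connectivity is imposed as part of the definition of the generated graph (patches are bonded into a single connected object). Hence every element of $\sigma(\bps)$ lies in $\bbgs$.

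For the reverse inclusion $\bbgs\subseteq\sigma(\bps)$, I would use the observation already recorded in the remark following the definition of boundary patch: for any $\bbg\in\bbgs$ and any $\vbar\in\overline{\mathcal V}$, the closure of the star of $\vbar$ is a boundary patch $\bp_{\vbar}(\bbg)\in\bps$. Taking the collection $\{\bp_{\vbar}(\bbg):\vbar\in\overline{\mathcal V}\}$, I claim their union (with the $\vhat$-vertices identified according to their occurrence in $\bbg$) reconstructs $\bbg$ exactly: every $\vbar$-vertex appears in precisely one such patch; every $\vhat$-vertex, being bivalent in $\bbg$, appears in the patches of its (one or two) $\vbar$-neighbours and is correctly glued; and every edge $(\vbar\vhat)\in\mathcal E_{\bbg}$ belongs to the patch $\bp_{\vbar}(\bbg)$. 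Thus $\bbg$ is composed of patches from $\bps$, so $\bbg\in\sigma(\bps)$.

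I do not anticipate a genuine obstacle here — the statement is essentially a bookkeeping identity. The one point requiring a little care is the at-most-twice condition in the definition of boundary patch and its interplay with bivalence: one should make sure the gluing rule for forming $\sigma(\bps)$ is stated so that a $\vhat$-vertex is either doubled inside one patch (corresponding to a loop of $\bbg$, cf.\ footnote~\ref{fn:self}) or shared between exactly two patches, and never left with valence one or three. Spelling this out cleanly is the only place where the argument could look slightly delicate, and it is exactly what the bivalence axiom in Definition~\ref{def:bisectedbg} is there to guarantee.
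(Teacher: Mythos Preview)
Your proposal is correct. The paper itself gives no explicit proof, stating only ``it is quite clear that'' and relying on the preceding remark that patches arise as closures of stars $\bp_{\vbar}(\bbg)$; your double-inclusion argument is precisely the natural elaboration of that remark, so your approach coincides with what the paper leaves implicit.
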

\begin{remark}[{\bf bondable}]
  \label{rem:bondable}
  Two patches, $\bp_{\bar{v}_1}(\bbg_1)$ and $\bp_{\bar{v}_2}(\bbg_2)$,  whether or not $\bbg_1$ and $\bbg_2$ are distinct,  are said to be \emph{bondable}, if $|\mathcal{V}_{\bar{v}_1}|= |\mathcal{V}_{\bar{v}_2}|$ and $|\mathcal{E}_{\bar{v}_1}|= |\mathcal{E}_{\bar{v}_2}|$ (and thus, they have the same number of loops).  
\end{remark}

\begin{defin}[{\bf bonding map}]
  \label{def:bonding}
  A \emph{bonding map}, $\gamma:\bp_{\bar{v}_1}(\bbg_1)\longrightarrow \bp_{\bar{v}_2}(\bbg_2)$, is a map identifying, elementwise, two bondable patches such that:
  \begin{equation}
    \bar{v}_1    \longrightarrow    \bar{v}_2\,,\quad\quad
    \mathcal{V}_{\bar{v}_1} - \{\bar{v}_1\}   \longrightarrow  \mathcal{V}_{\bar{v}_2} - \{\bar{v}_2\}\,,\quad\quad
  \mathcal{E}_{\bar{v}_1}  \longrightarrow  \mathcal{E}_{\bar{v}_2}
  \end{equation}
  with the compatibility condition that for each identified pair $\hat{v}_1\in\mathcal{V}_{\bar{v}_1}\longrightarrow\hat{v}_2\in\mathcal{V}_{\bar{v}_{2}}$, then $e_1 = (\bar{v}_1\hat{v}_1)\in\mathcal{E}_{\bar{v}_1}\longrightarrow e_2 = (\bar{v}_2\hat{v}_2)\in\mathcal{E}_{\bar{v}_2}$.  
\end{defin}
A simple example is illustrated in Figure \ref{fig:bonding}. 
\begin{remark}
The compatibility condition ensures that loops are bonded to loops. In principle, slightly more general gluing maps can be incorporated within the group field theory framework, corresponding to loop edges bonding to non-loop edges.  However, these gluings are absent from the loop quantum gravity and spin foam theories.  Thus, there is no motivation to include them here.
\end{remark}
\begin{remark}
Certainly, for two bondable patches, there are many bonding maps that satisfy the compatibility condition.   However, all may be obtained from a given one by applying compatible permutations to the sets $\mathcal{V}_{\bar{v}_1}$ and $\mathcal{E}_{\bar{v}_1}$.   
\end{remark}

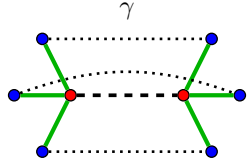
\begin{figure}[htb]
  \centering
  %\vspace{3cm}
  \tikzsetnextfilename{bonding}
 \begin{tikzpicture}[scale=1.5]
\node [vb]		(a)	at (-0.5,0)		{}; 
\node [vh]		(1)	at (-.75,.5)		{};
\node [vh]		(2)	at (-.75,-.5)	{}; 
\node [vh]		(3)	at (-1,0)		{}; 
\node [vb]		(b)	at (0.5,0)		{}; 
\node [vh]		(4)	at (.75,.5)		{};
\node [vh]		(5)	at (.75,-.5)		{}; 
\node [vh]		(6)	at (1,0)		{}; 
\foreach \i/\j in {a/1,a/2,a/3,b/4,b/5,b/6}{
 \draw [eb] (\i) --  (\j);
  }
\path	(1) 	edge [bh] node[label=above:$\gamma$] {} (4)
	(2)	edge [bh]		 		(5)
	(3)	edge [bh, bend left=20]	(6)
	(a)	edge [bb]				(b);
\end{tikzpicture}
  \caption{\label{fig:bonding} A bonding map $\gamma$ identifying two bondable patches.} 
\end{figure}

\begin{defin}[{\bf spin foam molecule}]
  \label{def:molecule}
  A \emph{spin foam molecule} is a triple, $\sfr = (\mathcal{V}_{\sfr}, \mathcal{E}_{\sfr}, \mathcal{F}_{\sfr})$,  constructed from a collection of spin foam atoms quotiented by a set of bonding maps.
\end{defin}

\begin{remark}[{\bf bonding example}]
  \label{rem:bondingexample}
  It is worth considering the simple example of two spin foam atoms $\sfa_1$ and $\sfa_2$, with respective bisected boundary graphs $\bbg_1 = \alpha^{-1}(\sfa_1)$ and $\bbg_2 = \alpha^{-1}(\sfa_2)$ and two bondable patches $\bp_{\bar{v}_1}(\bbg_1)$ and $\bp_{\bar{v}_2}(\bbg_2)$. Quotienting the pair $\sfa_1$, $\sfa_2$ by a bonding map $\gamma:\bp_{\bar{v}_1}(\bbg_1)\longrightarrow \bp_{\bar{v}_2}(\bbg_2)$ results in a spin foam molecule $\sfr  \equiv \sharp_\gamma\,\{\sfa_1,\sfa_2\}$:
  \begin{equation}  
    \mathcal{V}_{\sfr} = \sharp_\gamma\,\{\mathcal{V}_{\sfa_1},\mathcal{V}_{\sfa_2}\}\;,
     \quad\quad
     \mathcal{E}_{\sfr}=\sharp_\gamma\,\{\mathcal{E}_{\sfa_1},\mathcal{E}_{\sfa_2}\}\;,
     \quad\quad
     \mathcal{F}_{\sfr}=\sharp_\gamma\,\{\mathcal{F}_{\sfa_1},\mathcal{F}_{\sfa_2}\}\;,
  \end{equation}
where $\sharp_\gamma$ denotes the union of the relevant sets \emph{after} the identification of the elements of $(\mathcal{V}_{\bar{v}_1}\subset \mathcal{V}_{\sfa_1}, \mathcal{E}_{\bar{v}_1}\subset\mathcal{E}_{\sfa_1})$ with those of $(\mathcal{V}_{\bar{v}_2}\subset \mathcal{V}_{\sfa_2}, \mathcal{E}_{\bar{v}_2}\subset\mathcal{E}_{\sfa_2})$.
%\footnote{
%Faces in $\Fcal_{\sfa_1},\Fcal_{\sfa_2}$ are not identified by the bonding map $\gamma$ such that $\Fcal_{\sfr}$ has the same number of elements as $\Fcal_{\sfa_1}\cup\Fcal_{\sfa_2}$, but with vertices in faces $(v\bar v_1\hat v\bar v_2)$ identified according to $\gamma$. 
%A peculiar case are self-bondings of an atom to itself, $\sfa_1=\sfa_2$, where moreover the identified patches are neighbours sharing one or more vertices $\vhat\in\Vcal_{\bp_{\bar{v}_1}}\cap\Vcal_{\bp_{\bar{v}_2}}$.
%Upon identification $\vbar\equiv\vbar_1=\vbar_2$ the face in $\Fcal_{\sfr}$ corresponding to the vertex $\vhat$ consist of only three vertices, $(v\vbar\vhat)$.
%}
%\cjt{structure of faces is left open/implicit here... Note that for self-bonding they consist of 3 instead of 4 vertices!}
Thus, there exists still structure at the interface between the two bonded atoms, specifically, $\bp_{\bar{v}}(\bbg) \equiv \bp_{\bar{v}_1}(\bbg_1) = \bp_{\bar{v}_2}(\bbg_2)$. A realization of the above example is presented in Figure \ref{fig:molecule}. 

%\jt{Just as vertices have a graded structure $\Vcal_{\sfr}=\Vcal\cup\Vbar\cup\Vhat$ induced by the structure of atoms $\Vcal_{sfa}$, also $\Ecal_{\sfr}=\Ecal\cup\Ehat$ induced by $\Ecal_{\sfa}=\Ecal\cup\Ecal_{\bbg}$ where $\Ehat$ is defined as the set of all edges of type $(\vbar\vhat)\in\Ecal_{\sfr}$ coming from $\Ecal_{\bbg}$s remaining in the quotient.
%}
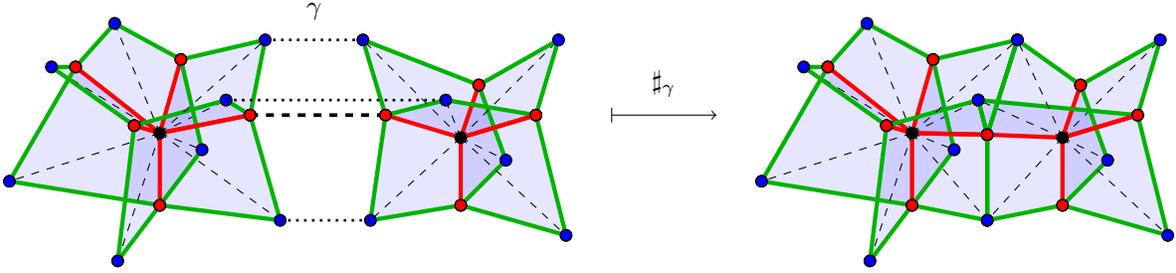
\begin{figure}[htb]
  \centering
  \tikzsetnextfilename{molecule}
\begin{tikzpicture}[scale=2]
\path (.6,0) edge [bb]	(1.5,0);

\node [c]		(v)	at (0,-.12)		{};
\node [c]		(6)	at (-0.56,0.32) 	{}; 
\node [c]		(7)	at (-0.17,-0.07)	{}; 
\node [c]		(3)	at (0.6,0) 		{}; 
\node [c]		(8)	at (.14,.37)	{}; 
\node [c]		(5)	at (0,-.6)		{};
\node [c]		(67)	at (-.72,.32)	{};
\node [c]		(37)	at (.44,.1)		{};
\node [c]		(38)	at (.7,.5)		{};
\node [c]		(68)	at (-.3,.61)		{};
\node [c]		(56)	at (-1,-.44)		{};
\node [c]		(57)	at (-.28,-.97)	{};
\node [c]		(35)	at (.8,-.7)		{};
\node [c]		(58)	at (.28,-.23)	{};
\foreach \i/\j in {6/7,6/8,3/7,3/8,5/6,5/7,3/5,5/8}{
 \path	[f] 	(\i) -- (\i\j) -- (\j) -- (v) -- cycle;
 }
 \foreach \i in {6,7,3,8,5}{
  \draw [e] (\i) -- (v);
  }
\foreach \i/\j in {6/7,6/8,3/7,3/8,5/6,5/7,3/5,5/8}{
 \draw 	[eh]	(v)	-- (\i\j);
 \draw	[eb] 	(\i) node[vb] {} -- (\i\j) node[vh] {};
 \draw 	[eb] 	(\j) node[vb] {} -- (\i\j) node[vh] {};
 }
\draw [e] (3) node[vb] {} -- (v) node[v] {};

\begin{scope}[xshift=2cm]
\node [c]		(v)	at (0,-.15)		{};
\node [c]		(1)	at (.5,0)	 	{}; 
\node [c]		(2)	at (.12,.2)		{}; 
\node [c]		(3)	at (-.5,0)		{}; 
\node [c]		(4)	at (0,-.6)		{};
\node [c]  		(12)	at (.65,.5)		{};
\node [c]	 	(13)	at (-.1,.1)		{};
\node [c]		(14)	at (.7,-.8)	{};
\node [c]		(23)	at (-.65,.5)		{};
\node [c] 		(24)	at (.3,-.3)		{};
\node [c]		(34)	at (-.6,-.7)		{};
\foreach \i/\j in {1/2,1/3,1/4,2/3,2/4,3/4}{
 \path	[f] 	(\i) -- (\i\j) -- (\j) -- (v) -- cycle;
 }
 \foreach \i in {1,2,3,4}{
  \draw [e] (\i) -- (v);
  }
\foreach \i/\j in {2/4,1/2,1/3,1/4,2/3,3/4}{
 \draw 	[eh]	(v)		-- (\i\j);
 \draw	[eb] 	(\i) node[vb] {} -- (\i\j) node[vh] {};
 \draw 	[eb] 	(\j) node[vb] {} -- (\i\j) node[vh] {};
 }
\draw [e] (1) node[vb] {} -- (v) node[v] {};
\end{scope}

%bonding
\path	(38)	edge [bh] node[label=above:$\gamma$] {} (23)
	(37)	edge [bh]		(13)
	(35)	edge [bh] 		(34);
%\path	(.65,.25)	edge [bh, bend left] node[label=above:$\gamma$] {} (1.38,.25)
%	(.52,.05)	edge [bh, bend left=20]	(1.7,.05)
%	(.7,-.35)	edge [bh, bend left=20] 		(1.45,-.35)
%	(.6,0)		edge [bb]				(1.5,0);

\draw [|->] (3,0) -- node[label=above:$\sharp_{\gamma}$] {} (3.7,0);

\begin{scope}[xshift=5cm]
\node [c]		(w)	at (0,-.12)		{};
\node [c]		(6)	at (-0.56,0.32) 	{}; 
\node [c]		(7)	at (-0.17,-0.07)	{}; 
\node [c]		(8)	at (.14,.37)	{}; 
\node [c]		(5)	at (0,-.6)		{};
\node [c]		(67)	at (-.72,.32)	{};
\node [c]		(37)	at (.44,.1)		{};
\node [c]	 	(13)	at (.44,.1)		{};
\node [c]		(38)	at (.7,.5)		{};
\node [c]		(23)	at (.7,.5)		{};
\node [c]		(68)	at (-.3,.61)		{};
\node [c]		(56)	at (-1,-.44)		{};
\node [c]		(57)	at (-.28,-.97)	{};
\node [c]		(35)	at (.5,-.7)		{};
\node [c]		(34)	at (.5,-.7)		{};
\node [c]		(58)	at (.28,-.23)	{};
\begin{scope}[xshift=1cm]
\node [c]		(v)	at (0,-.15)		{};
\node [c]		(1)	at (.5,0)	 	{}; 
\node [c]		(2)	at (.12,.2)		{}; 
\node [c]		(3)	at (-.5,-.13)	{}; 
\node [c]		(4)	at (0,-.6)		{};
\node [c]  		(12)	at (.65,.5)		{};
\node [c]		(14)	at (.7,-.8)		{};
\node [c] 		(24)	at (.3,-.3)		{};
\end{scope}
\foreach \i/\j in {6/7,6/8,3/7,3/8,5/6,5/7,3/5,5/8}{
 \path	[f] 	(\i) -- (\i\j) -- (\j) -- (w) -- cycle;
 }
 \foreach \i in {6,7,3,8,5}{
  \draw [e] (\i) -- (w);
  }
\foreach \i/\j in {6/7,6/8,3/7,3/8,5/6,5/7,3/5,5/8}{
 \draw 	[eh]	(w)	-- (\i\j);
 \draw	[eb] 	(\i) node[vb] {} -- (\i\j) node[vh] {};
 \draw 	[eb] 	(\j) node[vb] {} -- (\i\j) node[vh] {};
 }
\draw [e] (3) node[vb] {} -- (w) node[v] {};

\foreach \i/\j in {1/2,1/3,1/4,2/3,2/4,3/4}{
 \path	[f] 	(\i) -- (\i\j) -- (\j) -- (v) -- cycle;
 }
 \foreach \i in {1,2,3,4}{
  \draw [e] (\i) -- (v);
  }
\foreach \i/\j in {2/4,1/2,1/3,1/4,2/3,3/4}{
 \draw 	[eh]	(v)		-- (\i\j);
 \draw	[eb] 	(\i) node[vb] {} -- (\i\j) node[vh] {};
 \draw 	[eb] 	(\j) node[vb] {} -- (\i\j) node[vh] {};
 }
\draw [e] (1) node[vb] {} -- (v) node[v] {};

\end{scope}

\end{tikzpicture} 
\caption{\label{fig:molecule} The bonding $\sharp_{\gamma}$ of two atoms along an identification of patches $\gamma$.}
\end{figure}

\end{remark}

\begin{remark}[{\bf molecule boundary}]
  \label{rem:moleculeboundary}
  The boundary map $\delta$ can be extended to the spin foam molecule $\sfr = \sharp_{\{\gamma\}_I}\,\{\sfa\}_{J}$, where $I,J$ are index sets.  $\delta(\sfr)$ is identified as the subset of constituent boundary graphs, $\cup_{j\in J}\delta(\sfa_j)$ formed from the edges that remain unbonded, along with their vertices. In symbols:
  \begin{equation}
    \label{eq:moleculeboundary}
    \mathcal{E}_{\delta(\sfr)} = \bigcup_{j\in J}\mathcal{E}_{\delta(\sfa_j)} - \bigcup_{i\in I}\mathcal{E}_{\gamma_{i}}\;,\quad\quad
    \mathcal{V}_{\delta(\sfr)} = \{ \bar{v},\hat{v}: (\bar{v}\hat{v})\in\mathcal{E}_{\delta(\sfr)}\}\;. 
  \end{equation}
  where $\gamma:\bp_{\bar{v}_{i_1}}(\bbg_{i_1})\longrightarrow \bp_{\bar{v}_{i_2}}(\bbg_{i_2})$ and $\mathcal{E}_{\gamma_i} = \mathcal{E}_{\bar{v}_{i_1}}\cup\mathcal{E}_{\bar{v}_{i_2}}$.
  In general, $\delta(\sfr)$ need not be connected, but it will be the disjoint union of some set of bisected boundary graphs.  Moreover, these boundary graphs will very rarely coincide with the boundary graphs associated to any of the constituent atoms.

If a spin foam molecule $\sfr$ has a non-vanishing boundary $\delta(\sfr)\ne\emptyset$, one might also term it as a \emph{spin foam radical}.  

On the other hand, if $\delta(\sfr)=\emptyset$, $\sfr$ can be called a \emph{saturated} or \emph{closed} spin foam molecule.
\end{remark}

%%%%%%%%%%%%%%%%%%%%%%%%%%%%%%%%%%%%%%%%%

\subsection{Part 3: specifying to loopless, regular and simplicial structures}
\label{sssec:special}

There are few obvious restrictions one can have on graphs, atoms and molecules which will become important later.
These are loopless and regular structures as well as the restriction to a single type of spin foam atom which we shall call simplicial.
All of them mirror exactly the structure of the most general case. For example,  loopless structures are related in the following way:
\begin{displaymath}
  \xymatrix{
  % & &\bps_{\loopless} \ar[d] \\ 
   \textsc{loopless} &\bgs_{\loopless} \ar[r]^{\beta} &\bbgs_{\loopless} \ar@/^/[r]^{\alpha}  &\sfas_{\loopless} \ar@/^/[l]^{\delta} \ar@{.>}[r]&\sfrs_{\loopless}\ar@/_{3pc}/@{.>}[ll]_{\delta} 
   }
  \end{displaymath}

\begin{defin}[{\bf loopless structures}]
\label{def:loopless}
Loopless structures are specified by:

A \emph{loopless boundary graph}, $\bg\in\bgs_{\loopless}$, is a $\bg=(\Vb,\Eb)\in\bgs$ without edges from any vertex $\vb\in\Vb$ to itself, that is for every $\vb\in\Vb$:  $(\vb\vb)\not\in\Eb$. 

Their images under the bisection map $\beta$ and thereafter the bulk map $\alpha$ straightforwardly define \emph{loopless bisected boundary graphs} $\bbgs_{\loopless}$ and \emph{loopless atoms} $\sfas_{\loopless}$, respectively.

For a graph in $\bbgs_{\loopless}$, all of its patches are obviously loopless. In fact, the \emph{loopless patches} are uniquely specified by $n$, the number of edges. 
Therefore, we call it an \emph{n--patch}, $\bp_n$, and we have that $\bps_{\loopless}=\bigcup_{\copies=1}^{\infty}\{\bp_{\copies}\}$. 
Moreover, $\bbgs_{\loopless} = \sigma(\bps_{\loopless})$, the loopless graphs are generated by loopless patches. 

Through the bonding maps $\gamma$, one constructs \emph{loopless spin foam molecules} $\sfms_{\loopless}$.

\end{defin}

\begin{remark}[{\bf loopless molecules}]
\label{rem:LMolecules}
Loopless molecules are indeed the most natural class of 2--dimensional combinatorial objects, since they are triangulations of a certain kind of abstract (i.e. combinatorial) polyhedral 2--complexes.
We provide the definition of abstract polyhedral complexes in the appendix and prove their precise relation to $\sfrs_{\loopless}$ in Proposition \ref{prop:LMolecules}.

This also means that arbitrary spin foam molecules $\sfrs$,  
do not correspond naturally to 2--complexes in a combinatorial sense, exactly because they are containing loops. Nevertheless, from the quantum gravity viewpoint, these structures are necessary to provide dynamics for the most general graph, upon which \lqg\ states are based. Moreover, abstract polyhedral complexes can be generalized to match $\sfrs$ (Proposition \ref{prop:Molecules}).
\end{remark}

Another important restriction concerns the valency of boundary graph vertices: 
\begin{defin}[{\bf $n$--regular structures}]
An \emph{$\copies$-regular boundary graph} $\bg\in\bgs_{\copies}$ is a double $\bg=(\Vb,\Eb)\in\bgs$, for which every vertex $\vb\in\Vb$ is $\copies$--valent.  In other words, there are exactly $\copies$ edges $(\vb\vh)\in\Eb$ containing $\vb$. Analogous to Definition \ref{def:loopless}, the notion of their bisected counterparts $\bbgs_{\copies}$, the related \emph{$\copies$--regular atoms} $\sfas_{\copies}$, as well as $\copies$--regular molecules $\sfrs_{\copies}$, is straightforward.
\end{defin}
\begin{remark}[{\bf $\copies$--regular and loopless}]
	Combining these restrictions, one arrives at much simpler sets of graphs $\bbgs_{n,\loopless}$, atoms $\sfas_{n,\loopless}$ and molecules $\sfms_{n,\loopless}$. In particular, $\bbgs_{n,\loopless} = \sigma(\bp_n)$, a single patch generates the whole set. 
Since the structure of a \gft\ field is determined by a patch, these structures will play a role in single field \gft s, explained in detail in section \ref{sec:gft}.
\end{remark}

Nevertheless, the simplest \gft\ is not only defined in terms of one field, but also only one interaction term of simplicial type. This motivates the following definition:

\begin{defin}[{\bf $\copies$--simplicial molecules}]
The set of \emph{$\copies$--simplicial molecules} $\sfrs_{\copies,\primitive}$ consists of all molecules, which are bondings of the single spin foam atom $\sfa_{\copies,\primitive}$ obtained from the complete graph with $\copies +1$ vertices $K_{\copies+1}$,
\[
\sfa_{\copies,\primitive}:=\alpha(\bbg_{\copies,\primitive}):=\alpha(\beta(\bg_{\copies,\primitive})):=\alpha(\beta(K_{\copies+1})).
\]
\end{defin}

A complete graph is displayed in Figure \ref{fig:complete}.
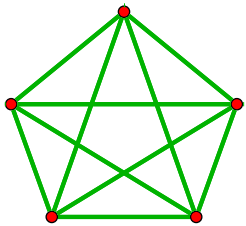
\begin{figure}[htb]
  \centering
    \tikzsetnextfilename{complete}
\begin{tikzpicture}[scale=1.5]
\node [vb]		(a)	at (0,0.82)		{};
\node [vb]		(b)	at (-1,0) 		{}; 
\node [vb]		(c)	at (-0.64,-1) 	{}; 
\node [vb]		(d)	at (0.64,-1)	{}; 
\node [vb]		(e)	at (1,0)		{}; 
\foreach \i in {a,b,c,d,e}{
  \foreach \j in {a,b,c,d,e}{
    \draw [eb] (\i) -- (\j);
    }
  }
\end{tikzpicture}
  \caption{\label{fig:complete} The complete graph over $n+1$ vertices (n = 4).}
\end{figure}

\begin{remark}[{\bf clarification on the notion \lq\simplicial\rq}]
	It must be emphasized that the special class of $\copies$--simplicial molecules $\sfrs_{\copies,\primitive}\In\sfrs_{n,\loopless}\In\sfrs_{\loopless}$, like all other loopless  molecules, are polyhedral 2--complexes.
We call them simplicial because each spin foam atom in itself can be canonically understood as the dual 2--skeleton of an $n$--simplex (cf. Figure \ref{fig:dualtet} and \ref{fig:hassedual}, and the Appendix).
But this can be done only locally, since
it has been proven in \cite{\cguraulost} that not every simplicial spin foam molecule (referred to as \gft--gluing therein) can be assigned a simplicial complex, for which the molecule arises as the dual 2--skeleton.
\end{remark}

\begin{remark}
  As mentioned at the outset of this section, the construction presented here is effectively very similar to the \emph{operator spin network} approach devised in \cite{\cklp}, which in turn is based upon the language of \emph{operator spin foams} \cite{\cOperator}.

  For clarity, it is worth setting up a small dictionary between the two descriptions. To begin, loopless boundary patches  correspond to \emph{squids}. Then \emph{squid graphs} are defined as gluings of such patches where gluing vertices of a patch to itself is allowed. Thus, these are what we call bisected boundary graphs. Our definition of patches including loops in general is necessary from a GFT perspective.
  Moreover, the set of squid graphs considered in \cite{\cklp} corresponds to that subset of boundary graphs without 1--valent vertices $\bar{v}\in\overline{\mathcal{V}}$. However, this is a choice and is easily generalized. 
%Moreover, one will see later that 1--valent vertices are somewhat trivial within a pure gravity theory and only become meaningful when matter is coupled. 

Squid graphs encode \emph{1--vertex spin foams} through a retraction, which was mentioned above in Remark \ref{rem:retraction} (in \cite{\cklp} also a more combinatorial definition is given), just as boundary graphs encode spin foam atoms.  After that,  1--vertex spin foams are glued together by identifying pairs of squids, just like boundary patches are bonded during the construction of spin foam molecules.

\end{remark}

%%%%%%%%%%%%%%%%%%%%%%%%%%%%%%%%%%%%%%%%%

\subsection{Part 4: labelled structures}
\label{sssec:virtual}

The set of spin foam atoms $\mathfrak{A}$ is efficiently catalogued by their boundary graphs $\bgs$. However, this is a large collection of objects and thus motivates one to seek out sub--atomic building blocks that are more concisely presented but can nevertheless resemble all of $\bgs$.  

This search is divided into two stages.  This first stage examines the boundary graphs in terms of their constituent boundary patches. The set of such patches is very large. 
Thus, the first stage will focus on manufacturing a manageable\footnote{A set with a (small) finite number of elements.} set of patches, with which, none the less, one may encode all the boundary graphs in $\bgs$.  

Having accomplished this, the next stage examines the boundary graphs from the perspective of generating them by bonding boundary graphs from a more manageable set. 

%Anticipating later restrictions from the group field theory framework, a suitable course of action emerges:

To set the stage, in this part we introduce labelled structures:
\begin{displaymath}
  \xymatrix{
   & &\bpst \ar[d] \\ 
   \textsc{labelled} &\bgst \ar[r]^{\widetilde{\beta}} &\bbgst \ar@/^/[r]^{\widetilde{\alpha}}  &\sfast \ar@/^/[l]^{\widetilde{\delta}} \ar@{.>}[r]&\sfrst\ar@/_{3pc}/@{.>}[ll]_{\widetilde{\delta}} 
   }
  \end{displaymath}

\begin{defin}[{\bf labelled boundary graph}]
  A \emph{labelled boundary graph}, $\bgt$ is a boundary graph augmented with a label for each edge drawn from the set $\{real,virtual\}$. 
\end{defin}
The set of such graphs is denoted by $\bgst$ and is much larger than the set $\bgs$, since for a graph $\bg = (\overline{\mathcal{V}},\overline{\mathcal{E}})\in\bgs$, there are $2^{|\overline{\mathcal{E}}|}$ labelled counterparts in $\bgst$.
\begin{remark}[{\bf labelled structures}]
  \label{rem:generalization}
  There are some trivial generalizations:
  
  \vspace{-0.2cm}

  \begin{description}
    \item[--] The \emph{labelled bisected boundary graphs}, denoted by $\bbgt\in\bbgst$, are obtained using a bisection map $\widetilde{\beta}$ that maintains edge labelling. Thus, if $(\bar{v}_1\bar{v}_2)\in\bgt$ is a real (virtual) edge, then $\{\hat{v},(\bar{v}_1\hat{v}),(\bar{v}_2\hat{v})\}\subset\bbgt=\widetilde{\beta}(\bgt)$ is a real (resp.\ virtual) subset, where $\hat{v}$ is the bisecting vertex.
    \item[--] The \emph{labelled spin foam atoms}, denoted by $\sfat\in\sfast$, are obtained using a bulk map $\widetilde{\alpha}$, such that if $(\bar{v}\hat{v})$ is real  (virtual), then so is $(v\bar{v}\hat{v})$. In other words,  the faces inherit their label from the boundary $\delta(\sfat)=\bbgt$, where $\widetilde{\delta} = \widetilde{\alpha}^{-1}$.
    \item[--] The \emph{labelled boundary patches}, denoted by $\bpt\in\bpst$, are bonded pairwise using bonding maps $\widetilde{\gamma}$ that ensure real (virtual) elements bonded to real (resp.\ virtual) elements.
    \item[--] With these bonding maps, \emph{labelled spin foam molecules} $\sfrt\in\sfrst$ follow immediately. 
  \end{description}
\end{remark}

%%%%%%%%%%%%%%%%%%%%%%%%%%%%%%%%%%%%%%%%%

\subsection{Part 5: molecules from labelled, $n$--regular, loopless structures}
\label{sssec:labelledloopless}

This part focusses on defining a projection $\pi$ which relates labelled graphs to unlabelled ones by contracting and deleting the virtual edges, as well as its restriction to the labelled, $n$--regular, loopless structures,
$\pi_{n,\loopless}$, $\Pi_{n,\loopless}$ and $\Pi_{n,\looplessdually}$, which can be shown to still map surjectively to arbitrary graphs and molecules:
\begin{displaymath}
  \xymatrix{
   \textsc{unlabelled} &\bgs \ar[rr]^{\beta}& &\bbgs \ar@/^/[r]^{\alpha}  &\sfas \ar@/^/%@{.>}
   [l]^{\delta} \ar@{.>}[rr]&&\sfrs \\
   &\\
   \textsc{labelled} &  \bgst_{n,\loopless} \ar[rr]^{\widetilde{\beta}} \ar[uu]^{\pi_{n,\loopless}} & &\bbgst_{n,\loopless}  \ar@/^/[r]^{\widetilde{\alpha}}\ar[uu]  
   &\sfast_{n,\loopless} \ar@/^/[l]^{\widetilde{\delta}}\ar[uu]_{\Pi_{n,\loopless}}\ar@{.>}[r] &\sfrst_{n,\loopless}\ar[r] &\sfrst_{n,\looplessdually}\ar[uu]^{\Pi_{n,\looplessdually}}
 %  && \bpst_{n,\primitive}\ar[ur]  & & & &\\
 }
  \end{displaymath}

    One can naturally identify the unlabelled boundary graphs $\bgs$ with the subset of labelled graphs that possess only real edges $\bgst_{real}\subset\bgst$. However, one would like to go further and utilize the unlabelled graphs to mark classes of labelled graphs.  From another aspect, one would think of this class of labelled graphs as encoding an underlying (unlabelled) subgraph $\bg\in\bgs$. 

    To uncover this structure, one defines certain moves on the set of labelled graphs:
\begin{defin}[{\bf reduction moves}]
  \label{def:moves}
  Given a graph $\tilde{\bg}\in\bgst$, there are two moves that reduce the virtual edges of the graph:

  \vspace{-0.2cm}

  \begin{description}
    \item[--] given two vertices, $\bar{v}_1$ and $\bar{v}_2$, such that $(\bar{v}_1\bar{v}_2)$ is a virtual edge of $\tilde{\bg}$, a \emph{contraction move}, removes this virtual edge and identifies the vertices $\bar{v}_1$ and $\bar{v}_2$;
    \item[--] given a vertex $\bar{v}$ such that $(\bar{v}\bar{v})$ is a virtual loop, a \emph{deletion move} is simply the removal of this edge.
  \end{description}

  \vspace{-0.2cm}

  These inspire two counter moves:

  \vspace{-0.2cm}

  \begin{description}
    \item[--] given a vertex $\bar{v}$, an \emph{expansion move} partitions the edges, incident at $\bar{v}$, into two subsets. In each subset, $\bar{v}$ is replaced by two new vertices $\bar{v}_1$ and $\bar{v}_2$, respectively, and a virtual edge $(\bar{v}_1\bar{v}_2)$ is added to the graph.\footnote{There is subtlety for loops, in that both ends are incident at $\bar{v}$ and may (or may not) be separated by the partition.} 
    \item[--] given a vertex $\bar{v}$, a \emph{creation move} adds a virtual loop to the graph at $\bar{v}$.
  \end{description}
  These moves are illustrated in Figure \ref{fig:moves}.

  \begin{figure}[htb]
\centering
 \tikzsetnextfilename{moves}
      %reduction and expansion moves
\begin{tikzpicture}[scale=1.5]

\draw [eb, dotted] (0,.25) circle (0.25cm);
\node [vb]  (v) 	at (0,0) {};
\draw [eb]  (v) -- (-.7,0);
\draw [eb]  (v) -- (-.5,-.5);
\draw [eb]  (v) -- (.5,-.5);
\draw [eb]  (v) -- (.7,0);

\draw [|->] (1.5,0) -- node[label=above:$\pi$] {} (2,0);

\begin{scope}[xshift=3.5cm]
\node [vb]  (v) 	at (0,0) {};
\draw [eb]  (v) -- (-.7,0);
\draw [eb]  (v) -- (-.5,-.5);
\draw [eb]  (v) -- (.5,-.5);
\draw [eb]  (v) -- (.7,0);
\end{scope}

\begin{scope}[xshift=0cm,yshift=1.5cm]
\node [vb]  (v) 	at (0,0) {};
\node [vb]  (v2) 	at (.5,0) {};
\draw [eb, dotted] (v) -- (v2);
\draw [eb]  (v) -- (-.7,0);
\draw [eb]  (v) -- (-.5,-.5);
\draw [eb]  (v) -- (-.5,.5);
\draw [eb]  (v2) -- (1,-.5);
\draw [eb]  (v2) -- (1,.5);

\draw [|->] (1.5,0) -- node[label=above:$\pi$] {} (2,0);
\end{scope}

\begin{scope}[xshift=3.5cm,yshift=1.5cm]
\node [vb]  (v) 	at (0,0) {};
\draw [eb]  (v) -- (-.7,0);
\draw [eb]  (v) -- (-.5,-.5);
\draw [eb]  (v) -- (.5,-.5);
\draw [eb]  (v) -- (-.5,.5);
\draw [eb]  (v) -- (.5,.5);
\end{scope}

\end{tikzpicture}
\caption{\label{fig:moves} Contraction/expansion and deletion/creation moves.}
\end{figure}
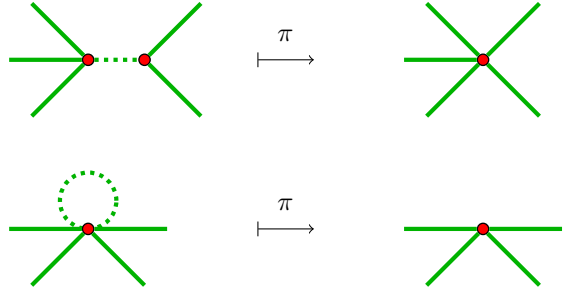

\end{defin}
\begin{remark}[{\bf projector}]
  \label{rem:projector}
This allows one to define a projection $\pi:\bgst\longrightarrow\bgs$, which captures the complete removal of virtual edges through contraction and deletion. It is well--defined, in the sense that contraction and deletion eventually map to an element of $\bgs$ (that is, the graph remains connected) and the element $\bg\in\bgs$ acquired from $\bgt\in\bgst$ is independent of the sequence of contraction and deletion moves used to reduce the graph. In turn, this means that the $\pi^{-1}(\bg)$ partition $\bgst$ into classes. 
\end{remark}

%\begin{remark}
  %\label{rem:surjectors}
  In fact, one is interested only in the $n$--regular ($n>2$) subset $\bgst_n$.
One denotes the restriction of $\pi$ to these subsets as $\pi_n$. Note that the $\pi_n$ are no longer projections, since $\pi_n(\bgt)$ with $\bgt\in\bgst_{n}$ need no longer be $n$--valent. 
%\end{remark}

\begin{proposition}[{\bf surjections}]
  \label{prop:surjectors}
  The maps $\pi_n$ have the following properties:
  \vspace{-0.2cm}
  \begin{description}
    \item[--]  The map $\pi_n:\bgst_{n}\longrightarrow \bgs$ is surjective, for $n$ odd.
    \item[--]  The map $\pi_n:\bgst_{n}\longrightarrow \bgs_{even}\subset\bgs$, is surjective for $n$ even, where $\bgs_{even}$ is the subset of boundary graphs with only even--valent vertices. 
  \end{description}
\end{proposition}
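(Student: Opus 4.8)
The plan is to show that for every target graph $\bg$ there is an $n$--regular labelled graph in $\bgst_n$ that $\pi$ maps to $\bg$. Since the reduction moves of Definition~\ref{def:moves} act only on virtual edges and never touch real ones, and since $\bg$ itself carries no virtual edges, a labelled graph $\bgt$ satisfies $\pi(\bgt)=\bg$ if and only if $\bgt$ is obtained from $\bg$ by a finite sequence of expansion and creation moves. Hence it suffices to transform $\bg$, by such counter--moves, into a connected labelled graph all of whose vertices are $n$--valent, valence counting both real and virtual incident edges and a virtual loop counting twice. I would do this locally: replace each vertex $\vbar\in\Vb$ of $\bg$, of valence $d_{\vbar}$, by a connected ``cluster'' $C_{\vbar}$ consisting of $k_{\vbar}$ new vertices joined into a tree by virtual edges (obtained from $\vbar$ by $k_{\vbar}-1$ expansion moves, choosing the partitions so as to distribute the $d_{\vbar}$ real half--edges at $\vbar$ across the cluster as one wishes) together with some virtual loops (creation moves). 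Because $\bg$ is connected and every cluster is connected, the resulting $\bgt$ is connected, so $\bgt\in\bgst_n$ once the clusters are $n$--regular; and contracting all virtual tree--edges and deleting all virtual loops collapses each $C_{\vbar}$ back to $\vbar$, returning $\bg$ exactly, so $\pi(\bgt)=\bg$.

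It remains to choose each cluster $n$--regular. Summing valences over the $k$ vertices of a cluster built on a virtual tree gives $d + 2(k-1) + 2L$, where $L$ is the total number of virtual loops inserted; equating this to $nk$ forces $L=\tfrac12\big(nk-d-2(k-1)\big)$, a non--negative integer precisely when $(n-2)k\ge d-2$ and $nk\equiv d\pmod 2$. For $n\ge 3$ the inequality holds for all large $k$, so everything hinges on the congruence: when $n$ is odd one takes $k$ of the same parity as $d$ and there is no restriction on $d$, whereas when $n$ is even the congruence reads $d\equiv 0\pmod 2$, which is exactly the hypothesis $\bg\in\bgs_{even}$. Once $k$ is fixed in the admissible parity class and large enough, a routine bookkeeping argument — take the tree to be a path $w_1-\cdots-w_k$ with tree--valences $1,2,\dots,2,1$, load the $d$ real ports onto the cluster vertices (most onto $w_1$, the rest spread thinly while respecting the per--vertex parity constraints), and pad each vertex up to valence $n$ with virtual loops — shows the local valences can all be made equal to $n$ with non--negative loop counts. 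The small cases are immediate: if $d=0$ with $n$ even, or more generally $d\le n$ with $d\equiv n\pmod 2$, a single--vertex cluster with $(n-d)/2$ virtual loops already works, and $d=0$ with $n$ odd is handled by a two--vertex cluster joined by a virtual edge, each vertex carrying $(n-1)/2$ virtual loops.

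I expect the main obstacle to be precisely this last step: organizing the distribution of real ports and virtual loops so that every individual cluster vertex (not merely the cluster as a whole) ends at valence exactly $n$ with all loop counts non--negative, while keeping the cluster connected, and cleanly isolating the parity congruence as the sole genuine obstruction. That it truly is an obstruction for $n$ even is easy to see and worth recording: contracting a virtual edge replaces two valences by their sum minus $2$, and deleting a virtual loop lowers a valence by $2$, so all valence parities are preserved by $\pi$; starting from an $n$--regular graph with $n$ even, every vertex of $\pi(\bgt)$ is therefore even--valent, which is why for $n$ even the codomain must be taken to be $\bgs_{even}$ rather than all of $\bgs$.
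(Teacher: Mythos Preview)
Your argument is correct and is essentially the paper's own approach: locally replace each vertex of $\bg$ by a virtual path (your ``cluster'') decorated with virtual loops so that every new vertex has valence $n$, with the parity congruence $nk\equiv d\pmod 2$ isolating exactly the $\bgs_{even}$ restriction for $n$ even. The paper carries this out concretely for $n=3,4$ and then adds $(n-3)/2$ resp.\ $(n-4)/2$ virtual loops at each vertex, whereas you do general $n$ directly via the counting argument; your final paragraph on why $\pi_n$ lands in $\bgs_{even}$ for $n$ even is in fact more explicit than the paper's one--line assertion of the same fact.
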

\begin{proof}
    First, one proves the results for the lowest values of $n$. 
    For $n=3$, consider a graph $\bg\in\bgs$ and say it possesses an $m$--valent vertex ($m>3$).  Then, one may expand such a vertex to a sequence of 3--valent vertices joined by a string of virtual edges.  For a 2--valent vertex, one first creates a virtual loop and then expand the resulting 4--valent vertex. For a 1--valent vertex, one simply creates a virtual loop. See Figure \ref{fig:3decomp} for an illustration of these three cases processes.

    \begin{figure}[htb]
      \centering
      \tikzsetnextfilename{3val}
      % a generic atomic boundary graph undergoing bisection

\begin{tikzpicture}[scale=1.5]

\draw [eb] (0.7,0.0)-- (0.0,0.0);
\draw [eb] (-0.7,0.0)-- (0.0,0.0);
\draw [eb] (0.0,0.0)-- (-0.5,-0.5);
\draw [eb] (0.5,-0.5)-- (0.0,0.0);
\draw [eb] (0.0,0.25) circle (0.25cm);

\draw [fill=red] (0.0,0.0) circle (1.5pt);

\draw [<-|] (1.5,0) -- node[label=above:$\pi_3$] {} (2,0);

\begin{scope}[xshift=4cm]

\draw [dotted, eb] (-0.5,0.0)-- (0.0,0.0);
\draw [dotted, eb] (1.0,0.0)-- (0.0,0.0);
\draw [eb] (-1.0,-0.5)-- (-0.5,0.0);
\draw [eb] (-0.5,0.0)-- (-1.2,0.0);
\draw [eb] (1.7,0.0)-- (1.0,0.0);
\draw [eb] (1.5,-0.5)-- (1.0,0.0);
\draw [shift={(0.25,0.1875)},eb]  plot[domain=-0.64:3.785,variable=\t]({1.0*0.3125*cos(\t r)+-0.0*0.3125*sin(\t r)},{0.0*0.3125*cos(\t r)+1.0*0.3125*sin(\t r)});
\begin{scriptsize}
\draw [fill=red] (-0.5,0.0) circle (1.5pt);
\draw [fill=red] (0.0,0.0) circle (1.5pt);
\draw [fill=red] (1.0,0.0) circle (1.5pt);
\draw [fill=red] (0.5,0.0) circle (1.5pt);
\draw [fill=red] (-0.5,0.0) circle (1.5pt);
\draw [fill=red] (1.0,0.0) circle (1.5pt);
\end{scriptsize}
\end{scope}
%%%%%%%%%%%%%%%%%%%%%%%%%%%%%
\begin{scope}[xshift=0cm,yshift=-1.5cm]
\draw [eb] (0.7,0.0)-- (0.0,0.0);
\draw [eb] (-0.7,0.0)-- (0.0,0.0);
\draw [fill=red] (0.0,0.0) circle (1.5pt);

\draw [<-|] (1.5,0) -- node[label=above:$\pi_3$] {} (2,0);
\end{scope}

%\begin{scope}[xshift=3.5cm,yshift=-1.5cm]
%\draw [eb] (0.7,0.0)-- (0.0,0.0);
%\draw [eb] (-0.7,0.0)-- (0.0,0.0);
%\draw [dotted, eb] (0.0,0.25) circle (0.25cm);
%\draw [fill=red] (0.0,0.0) circle (1.5pt);
%
%\draw [->, line width=1.6pt] (1.5,0)-- (2,0);
%\end{scope}

\begin{scope}[xshift=3.5cm,yshift=-1.5cm]
\draw [eb] (-0.7,0.0)-- (0.0,0.0);
\draw [dotted, eb] (0.5,0.0)-- (0.0,0.0);
\draw [eb] (1.2,0.0)-- (0.5,0.0);
\draw [dotted, shift={(0.25,0.1875)},eb]  plot[domain=-0.64:3.785,variable=\t]({1.0*0.3125*cos(\t r)+-0.0*0.3125*sin(\t r)},{0.0*0.3125*cos(\t r)+1.0*0.3125*sin(\t r)});
\draw [fill=red] (0.5,0.0) circle (1.5pt);
\draw [fill=red] (0.0,0.0) circle (1.5pt);
\end{scope}

%%%%%%%%%%%%%%%%%%%%%%%%%%%%%%%%
\begin{scope}[xshift=0cm,yshift=-3cm]
\draw [eb] (-0.7,0.0)-- (0.0,0.0);
\draw [fill=red] (0.0,0.0) circle (1.5pt);
\draw [<-|] (1.5,0) -- node[label=above:$\pi_3$] {} (2,0);
\end{scope}

\begin{scope}[xshift=3.5cm,yshift=-3cm]
\draw [eb] (-0.7,0.0)-- (0.0,0.0);
\draw [dotted, eb] (0.25,0) circle (0.25cm);
\draw [fill=red] (0.0,0.0) circle (1.5pt);
\end{scope}

\end{tikzpicture}
      \caption{\label{fig:3decomp} The expansion and creation moves to arrive at a 3--valent graph.}
    \end{figure}
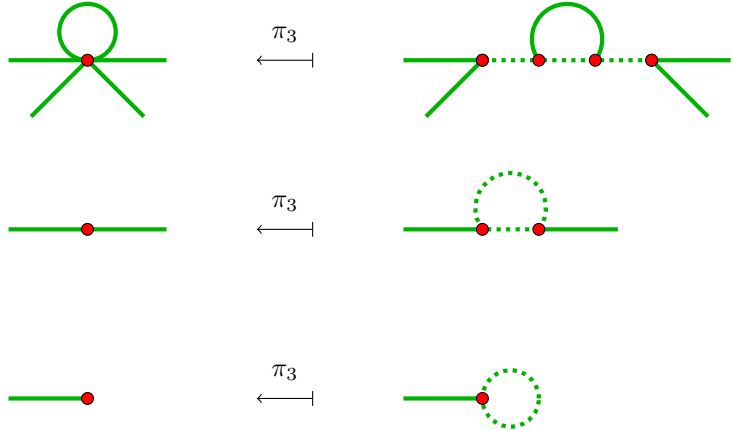

    For $n$ even, one notes that the $\pi_n$ maps into $\bgs_{even}$ since contraction and deletion both preserve the evenness of the vertex valency.  Specializing for a moment to the case of $n=4$, consider a graph $\bg\in\bgs_{even}$.  Once again, examining an $m$--valent vertex in $\bg$ ($m$ even), one may expand such a vertex to a sequence of 4--valent vertices joined by a string of virtual edges.  For a 2--valent vertex, one may simply add a virtual loop. See Figure \ref{fig:4decomp} for an illustration. 

    \begin{figure}[htb]
      \centering
      \tikzsetnextfilename{4val}
      % a generic atomic boundary graph undergoing bisection

\begin{tikzpicture}[scale=1.5]

\draw [eb] (0.7,0.0)-- (0.0,0.0);
\draw [eb] (-0.7,0.0)-- (0.0,0.0);
\draw [eb] (0.0,0.0)-- (-0.5,-0.5);
\draw [eb] (0.5,-0.5)-- (0.0,0.0);
\draw [eb] (0.0,0.25) circle (0.25cm);

\draw [fill=red] (0.0,0.0) circle (1.5pt);

\draw [<-|] (1.5,0) --  node[label=above:$\pi_{4}$] {} (2,0);

\begin{scope}[xshift=3.5cm]

\draw [dotted, eb] (0.5,0.0)-- (0.0,0.0);
\draw [eb] (-0.7,0.0)-- (0.0,0.0);
\draw [eb] (0.0,0.0)-- (-0.5,-0.5);
\draw [eb] (1.0,-0.5)-- (0.5,0.0);
\draw [eb] (0.5,0.0)-- (1.2,0.0);
\draw [shift={(0.25,0.1875)},eb]  plot[domain=-0.64:3.785,variable=\t]({1.0*0.3125*cos(\t r)+-0.0*0.3125*sin(\t r)},{0.0*0.3125*cos(\t r)+1.0*0.3125*sin(\t r)});
\begin{scriptsize}
\draw [fill=red] (0.0,0.0) circle (1.5pt);
\draw [fill=red] (0.5,0.0) circle (1.5pt);
\end{scriptsize}
\end{scope}

\begin{scope}[xshift=0cm,yshift=-1.5cm]
\draw [eb] (0.7,0.0)-- (0.0,0.0);
\draw [eb] (-0.7,0.0)-- (0.0,0.0);
\draw [fill=red] (0.0,0.0) circle (1.5pt);

\draw [<-|] (1.5,0) -- node[label=above:$\pi_{4}$] {} (2,0);
\end{scope}

\begin{scope}[xshift=3.5cm,yshift=-1.5cm]
\draw [eb] (0.7,0.0)-- (0.0,0.0);
\draw [eb] (-0.7,0.0)-- (0.0,0.0);
\draw [dotted, eb] (0.0,0.25) circle (0.25cm);
\draw [fill=red] (0.0,0.0) circle (1.5pt);

\end{scope}

\end{tikzpicture}
      \caption{\label{fig:4decomp} The expansion and creation moves to arrive at a 4--valent graph.}
    \end{figure}
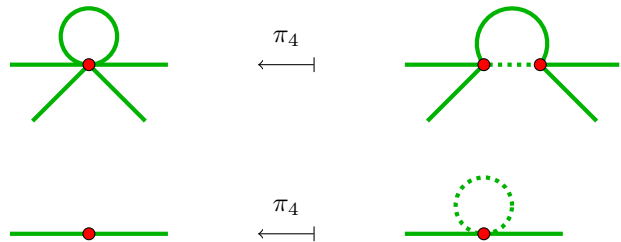

To generalize to arbitrary $n$ odd (even), then one need only to create $(n-3)/2$ (resp.\ $(n-4)/2$) virtual loops at each vertex.
\end{proof}

\begin{remark}
  In effect, one has encoded the unlabelled graphs in $\bgs$ in terms of labelled $n$--regular graphs in $\bgst_{n}$. The surjectivity result above implies that for $n$ odd (even), each graph $\bg\in\bgs$ (resp.\ $\bgs_{even}$) labels an class $\pi_n^{-1}(\bg)$ of graphs in $\bgst_{n}$. 
 % where the members of the class are related by combinations of the four graph moves laid out in Definition \ref{def:moves} that maintain the $n$--valency.\footnote{Note that this is a rather trivial statement, in the sense that given two graphs in $g_1,g_2\in\psi_{n}^{-1}(\bg)$, one may choose the combination of moves to include those that take $\bg_1 \rightarrow \bg\rightarrow \bg_2$. No effort is made here to define a small subset of what might be termed as fundmental moves that are ergodic on the classes.} 
\end{remark}

One can go even a step further, encoding $\bgs$ in terms of loopless, $n$-regular labelled graphs:
\begin{remark}[{\bf surjection:} $\pi_{n,\loopless}$]
  \label{rem:looplessg}
  There exists a sequence of expansion and creation moves that effect a \emph{(1--n)--move}.  Consider an element of $\bgst_{n}$ that has (up to $\lfloor n/2\rfloor$) loops at some vertex $\bar{v}$.  Then, applying a (1--$n$)--move to this vertex, one can remove all loops. The effect of this transformation is depicted in Figure \ref{fig:onefour} for  a vertex with $n=4$ and one loop.   Thus, in each class $\pi_n^{-1}(\bg)$, there is a loopless graph. As for $n$ odd (even), there is a projection $\pi_{n,\loopless}:\bgst_{n,\loopless}\longrightarrow\bgs$ (resp.\ $\bgs_{even}$) such that $\pi_{n,\loopless}$ is surjective. Thus again, the boundary graphs $\bg$ label classes $\pi_{n,\loopless}^{-1}(\bg)$ in $\bgst_{n,\loopless}$.

  \begin{figure}[htb]
    \centering
    %\vspace{3cm}
     \tikzsetnextfilename{14move}
      % a generic atomic boundary graph undergoing bisection

\begin{tikzpicture}[scale=1.5]

\draw [eb] (0.0,0.0)-- (-0.5,-0.5);
\draw [eb] (0.5,-0.5)-- (0.0,0.0);
\draw [eb] (0.0,0.25) circle (0.25cm);
\draw [fill=red] (0.0,0.0) circle (1.5pt);

\draw [<-|] (1.5,0) --  node[label=above:$\pi_{4,\loopless}$] {} (2,0);

\begin{scope}[xshift=3.5cm,yshift=.25cm]

\draw [dotted, eb] (0.5,0.0)-- (0.0,0.0);
\draw [dotted, eb] (0.5,-0.5)-- (0.0,-0.5);
\draw [dotted, eb] (0,-0.5) -- (0.0,0.0);
\draw [dotted, eb] (0.5,0.0)-- (0.5,-0.5);
\draw [dotted, eb] (0.5,-0.5)-- (0.0,0.0);
\draw [dotted, eb] (0.5,0.0)-- (0.0,-0.5);

\draw [eb] (0.0,-0.5)-- (-0.5,-1);
\draw [eb] (1.0,-1)-- (0.5,-0.5);
\draw [shift={(0.25,0.1875)},eb]  plot[domain=-0.64:3.785,variable=\t]({1.0*0.3125*cos(\t r)+-0.0*0.3125*sin(\t r)},{0.0*0.3125*cos(\t r)+1.0*0.3125*sin(\t r)});
\begin{scriptsize}
\draw [fill=red] (0.0,0.0) circle (1.5pt);
\draw [fill=red] (0.5,0.0) circle (1.5pt);
\draw [fill=red] (0.0,-0.5) circle (1.5pt);
\draw [fill=red] (0.5,-0.5) circle (1.5pt);
\end{scriptsize}
\end{scope}

\end{tikzpicture}
   \caption{\label{fig:onefour} Use of a $(1-n)$--move on an $n$-valent vertex with loop to create a loopless graph ($n=4$ in the example)}
  \end{figure}
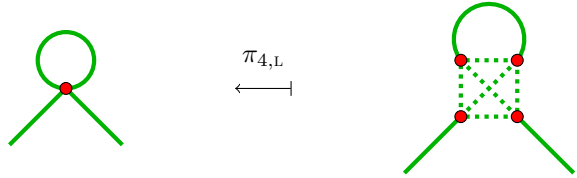

 \end{remark}
%The structures $\bbgst_{n,\loopless}$, $\sfast_{n,\loopless}$ and $\sfrst_{n,\loopless}$ follow readily. 

\begin{remark}[{\bf atomic reduction}]

  \label{rem:atomicred}
  There is an obvious and natural extension of the contraction/expansion and deletion/creation moves, defined for $\bgt\in\bgst$ in Definition \ref{def:moves}, to labelled spin foam atoms $\sfat\in\sfast$:
  %\comment{Toying with the idea of defining original moves on $\bbgst$ rather than $\bgst$, but then the diagrams would clutter slightly.}

  \vspace{-0.2cm}

  \begin{description}
    \item[--]  {A contraction move on the virtual edge $(\bar{v}_1\bar{v}_2)\in \bgt$ translates to: \textit{i}) the deletion of the virtual subset \\
     $\{\hat{v}$, $(\bar{v}_1\hat{v})$, $(\bar{v}_2\hat{v})$, $(v\vhat)$,  $(v\bar{v}_1\hat{v})$, $(v\bar{v}_2\vhat)\}\subset\sfat$, as well as \textit{ii}) the identifications $\bar{v}_1=\bar{v}_2$ and   $(v\bar{v}_1)= (v\bar{v}_2)$.  
     }
    \item[--] {A deletion move on a virtual loop $(\bar{v}\bar{v})\in\bgt$ translates to the deletion of the virtual subset\\
     $\{\hat{v}, (\bar{v}\hat{v}), (\bar{v}\hat{v}), (v\bar{v}\hat{v}), (v\bar{v}\hat{v})\}\subset\sfat$. 
    }
  \end{description}
  The expansion and creation moves are similarly extended and they are illustrated in Figure \ref{fig:atommoves}.  

  \begin{figure}[htb]
    \centering
     \tikzsetnextfilename{apyr2}
     \begin{tikzpicture}[scale=2]

\draw [|->] (1.5,0) --  node[label=above:$\Pi_{3,\loopless}$] {} (2,0);

% ---- triangulated pyramid vertex 2-foam
\begin{scope}
\node [c]		(v)	at (0,-.12)		{};
\node [c]		(1)	at (-0.56,0.32) 	{}; 
\node [c]		(2)	at (-0.17,-0.07)	{}; 
\node [c]		(3)	at (0.6,0) 		{}; 
\node [c]		(4)	at (.14,.37)	{}; 
\node [c]		(5a)	at (-.1,-.6)		{};
\node [c]		(5b)	at (.3,-.5)		{};
\node [c]		(12)	at (-.72,.32)	{};
\node [c]		(23)	at (.44,.09)	{};
\node [c]		(34)	at (.57,.52)	{};
\node [c]		(14)	at (-.3,.61)		{};
\node [c]		(15a)	at (-1,-.44)		{};
\node [c]		(25a)	at (-.28,-.97)	{};
\node [c]		(35b)	at (1.02,-.76)	{};
\node [c]		(45b)	at (.28,-.23)	{};
\node [c]		(5a5b)at (.1,-.55)	{};
\foreach \i/\j in {1/2,1/4,2/3,3/4,1/5a,2/5a,3/5b,4/5b,5a/5b}{
 \path	[f] 	(\i) -- (\i\j) -- (\j) -- (v) -- cycle;
 }
 \foreach \i in {1,2,3,4,5a,5b}{
  \draw	[e] 	(\i) -- (v);
  }
\foreach \i/\j in {1/2,1/4,2/3,3/4,1/5a,2/5a,3/5b,4/5b}{
 \draw 	[eh]	(v)		--	 (\i\j);
 \draw	[eb] 	(\i) node[vb] {} -- (\i\j) node[vh] {};
 \draw 	[eb] 	(\j) node[vb] {} -- (\i\j) node[vh] {};
 }
\draw 	[eh]	(v)		 	 -- (5a5b);
\draw[dotted,eb](5a) node[vb] {} -- (5a5b) node[vh] {};
\draw[dotted,eb](5b) node[vb] {} -- (5a5b) node[vh] {};
\draw 	[e] 	(3) node[vb] {} -- (v) node[v] {};
\end{scope}
% ---- pyramid vertex 2-foam
\begin{scope}[xshift=3.5cm]
\node [c]		(v)	at (0,-.12)		{};
\node [c]		(1)	at (-0.56,0.32) 	{}; 
\node [c]		(2)	at (-0.17,-0.07)	{}; 
\node [c]		(3)	at (0.6,0) 		{}; 
\node [c]		(4)	at (.14,.37)	{}; 
\node [c]		(5)	at (0,-.6)		{};
\node [c]		(12)	at (-.72,.32)	{};
\node [c]		(23)	at (.44,.09)	{};
\node [c]		(34)	at (.57,.52)	{};
\node [c]		(14)	at (-.3,.61)		{};
\node [c]		(15)	at (-1,-.44)		{};
\node [c]		(25)	at (-.28,-.97)	{};
\node [c]		(35)	at (1.02,-.76)	{};
\node [c]		(45)	at (.28,-.23)	{};
\foreach \i/\j in {1/2,1/4,2/3,3/4,1/5,2/5,3/5,4/5}{
 \path	[f] 	(\i) -- (\i\j) -- (\j) -- (v) -- cycle;
 }
 \foreach \i in {1,2,3,4,5}{
  \draw [e] (\i) -- (v);
  }
\foreach \i/\j in {1/2,1/4,2/3,3/4,1/5,2/5,3/5,4/5}{
 \draw 	[eh]	(v)		-- 	(\i\j);
 \draw	[eb] 	(\i) node[vb] {} -- (\i\j) node[vh] {};
 \draw 	[eb] 	(\j) node[vb] {} -- (\i\j) node[vh] {};
 }
\draw [e] (3) node[vb] {} -- (v) node[v] {};
\end{scope}

\end{tikzpicture}
    \caption{\label{fig:atommoves} A contraction move on an atom.}
  \end{figure}
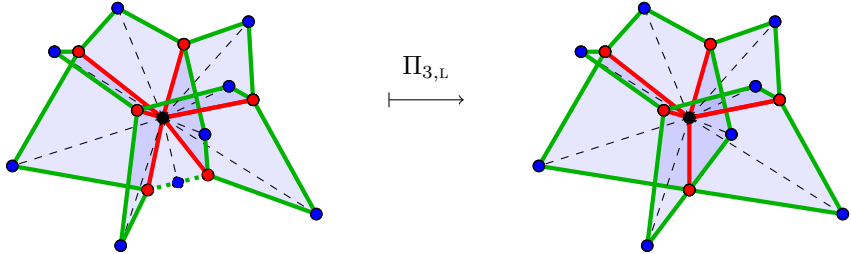

  Quite trivially, one may extend the map $\pi_{n,\loopless}$ of Remark \ref{rem:looplessg} to $\sfat\in\sfast_{n,\loopless}$. This map $\Pi_{n,\loopless}:\sfast_{n,\loopless}\longrightarrow\sfas$, $\Pi_{n,\loopless} \equiv \delta^{-1}\circ\beta\circ\pi_{n,\loopless}\circ\widetilde{\beta}^{-1}\circ\widetilde{\delta}$ is surjective.  Thus, each $\sfa\in\sfas$ marks a non--trivial class $\Pi_{n,\loopless}^{-1}(\sfa)\in\sfast_{n,\loopless}$.
\end{remark}

\begin{remark}[{\bf molecule reduction}]
  \label{rem:moleculered}
  While the bonding of atoms in $\sfast$ just follows the procedure laid out in Remark \ref{rem:generalization},  the reduction of a labelled spin foam molecule possesses certain subtleties. %\footnote{Indeed, these subtleties arise within the full set of labelled spin foam molecules $\sfrst$.} 
  Within a spin foam molecule, two scenarios arise for a virtual vertex $\hat{v}\in\sfrt$:
  \begin{description}
    \item[$\hat{v}\notin\widetilde{\delta}(\sfrt)$:]  
      Consider a virtual vertex $\hat{v}$ with the virtual edges and faces incident at $\hat{v}$, here denoted by $\{(\bar{v}_1\hat{v}), \dots, (\bar{v}_{k}\hat{v})\}$ and 
$\{(v_{12}\bar{v}_1\hat{v}),(v_{12}\bar{v}_2\hat{v}),\dots,(v_{k1}\bar{v}_k\hat{v}),(v_{k1}\bar{v}_1\hat{v})\}$, respectively. Following the rules laid out in Remark \ref{rem:atomicred}, a contraction move applied to that virtual substructure \textit{i}) deletes $\{\hat{v}\}$, as well as all edges and faces incident at $\hat{v}$ and \textit{ii}) identifies $\bar{v}\equiv\bar{v}_i$ and pairwise $(v_{ii+1}\bar{v}) \equiv (v_{ii+1}\bar{v}_i) = (v_{ii+1}\bar{v}_{i+1})$, for all $i\in\{1,\dots,k\}$.
  
      As illustrated in Figures \ref{fig:moleculecontraction} and \ref{fig:moleculecontraction2}, this contraction only behaves well when $k = 2$, that is, there are two virtual edges of type $(\vbar\vhat)$ incident at $\hat{v}$. For other values of $k$, the resulting structure does not lie within $\sfrst$ and therefore ultimately, it lies  outside $\sfrs$; the reason is that in a $\sfr\in\sfrs$ there are precisely two edges of type $e=(v\bar{v})\in\mathcal{E}$ incident at each vertex $\bar{v}\notin\delta(\sfr)$
while in the reduction of a $\sfrt\in\sfrst$ in general there occur any $k\ge 2$ edges at a vertex $\bar{v}\notin\widetilde{\delta}(\sfrt)$.

Moreover, the above condition ensures good behaviour under deletion moves as well.

    \item[$\hat{v}\in\widetilde{\delta}(\sfrt)$:]  In this case, a similar argument reveals the necessity for precisely one virtual edge of type $(\vbar\vhat)$ incident at $\hat{v}$ to obtain a molecule $\sfr\in\sfrs$ upon reduction.
  \end{description}
\end{remark}

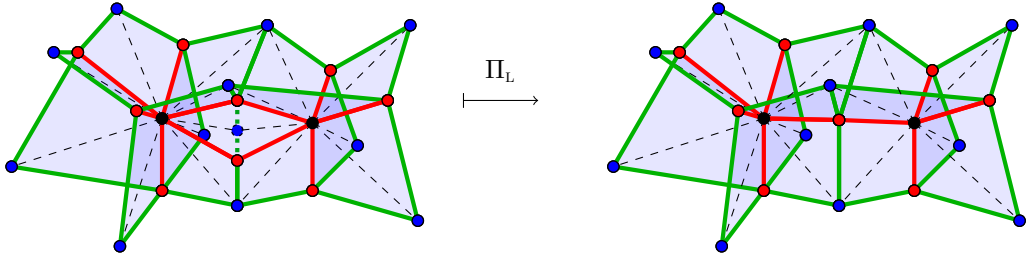
\begin{figure}[htb]
    \centering
        \tikzsetnextfilename{moleculecontraction}
        \begin{tikzpicture}[scale=2]

\draw [|->] (2,0) --  node[label=above:$\Pi_{\loopless}$] {} (2.5,0);

\node [c]		(w)	at (0,-.12)		{};
\node [c]		(6)	at (-0.56,0.32) 	{}; 
\node [c]		(7)	at (-0.17,-0.07)	{}; 
\node [c]		(8)	at (.14,.37)	{}; 
\node [c]		(5)	at (0,-.6)		{};
\node [c]		(67)	at (-.72,.32)	{};
\node [c]		(3a7)	at (.44,.1)		{};
\node [c]	 	(13a)	at (.44,.1)		{};
\node [c]		(3a8)	at (.7,.5)		{};
\node [c]		(23a)	at (.7,.5)		{};
\node [c]		(68)	at (-.3,.61)		{};
\node [c]		(56)	at (-1,-.44)		{};
\node [c]		(57)	at (-.28,-.97)	{};
\node [c]		(3b5)	at (.5,-.7)		{};
\node [c]		(3b4)	at (.5,-.7)		{};
\node [c]		(58)	at (.28,-.23)	{};
\begin{scope}[xshift=1cm]
\node [c]		(v)	at (0,-.15)		{};
\node [c]		(1)	at (.5,0)	 	{}; 
\node [c]		(2)	at (.12,.2)		{}; 
\node [c]		(3a)	at (-.5,0)		{}; 
\node [c]		(3b)	at (-.5,-.4)		{};
\node [c]		(3a3b)at (-.5,-.2)	{}; 
\node [c]		(4)	at (0,-.6)		{};
\node [c]  		(12)	at (.65,.5)		{};
\node [c]		(14)	at (.7,-.8)		{};
\node [c] 		(24)	at (.3,-.3)		{};
\end{scope}

\path [eb,dotted] (3a) -- node[vh] {} (3b);

\foreach \i/\j in {6/7,6/8,3a/7,3a/8,5/6,5/7,3b/5,5/8,3a/3b}{
 \path	[f] 	(\i) -- (\i\j) -- (\j) -- (w) -- cycle;
 \draw 	[eh]	(w) -- (\i\j);
 }
 \foreach \i in {6,7,3a,3b,8,5}{
  \draw [e] (\i) -- (w);
  }
\foreach \i/\j in {6/7,6/8,3a/7,3a/8,5/6,5/7,3b/5,5/8}{
 \draw	[eb] 	(\i) node[vb] {} -- (\i\j) node[vh] {};
 \draw 	[eb] 	(\j) node[vb] {} -- (\i\j) node[vh] {};
 }
\draw [e] (3a) node[vb] {} -- (w) node[v] {};
\draw [e] (3b) node[vb] {} -- (w) node[v] {};

\foreach \i/\j in {1/2,1/3a,1/4,2/3a,2/4,3b/4,3a/3b}{
 \path	[f] 	(\i) -- (\i\j) -- (\j) -- (v) -- cycle;
 \draw 	[eh]	(v) -- (\i\j);
 }
 \foreach \i in {1,2,3a,3b,4}{
  \draw [e] (\i) -- (v);
  }
\foreach \i/\j in {1/2,1/3a,1/4,2/3a,2/4,3b/4}{
 \draw	[eb] 	(\i) node[vb] {} -- (\i\j) node[vh] {};
 \draw 	[eb] 	(\j) node[vb] {} -- (\i\j) node[vh] {};
 }
\draw [e] (1) node[vb] {} -- (v) node[v] {};

\begin{scope}[xshift=4cm]
\node [c]		(w)	at (0,-.12)		{};
\node [c]		(6)	at (-0.56,0.32) 	{}; 
\node [c]		(7)	at (-0.17,-0.07)	{}; 
\node [c]		(8)	at (.14,.37)	{}; 
\node [c]		(5)	at (0,-.6)		{};
\node [c]		(67)	at (-.72,.32)	{};
\node [c]		(37)	at (.44,.1)		{};
\node [c]	 	(13)	at (.44,.1)		{};
\node [c]		(38)	at (.7,.5)		{};
\node [c]		(23)	at (.7,.5)		{};
\node [c]		(68)	at (-.3,.61)		{};
\node [c]		(56)	at (-1,-.44)		{};
\node [c]		(57)	at (-.28,-.97)	{};
\node [c]		(35)	at (.5,-.7)		{};
\node [c]		(34)	at (.5,-.7)		{};
\node [c]		(58)	at (.28,-.23)	{};
\begin{scope}[xshift=1cm]
\node [c]		(v)	at (0,-.15)		{};
\node [c]		(1)	at (.5,0)	 	{}; 
\node [c]		(2)	at (.12,.2)		{}; 
\node [c]		(3)	at (-.5,-.13)	{}; 
\node [c]		(4)	at (0,-.6)		{};
\node [c]  		(12)	at (.65,.5)		{};
\node [c]		(14)	at (.7,-.8)		{};
\node [c] 		(24)	at (.3,-.3)		{};
\end{scope}
\foreach \i/\j in {6/7,6/8,3/7,3/8,5/6,5/7,3/5,5/8}{
 \path	[f] 	(\i) -- (\i\j) -- (\j) -- (w) -- cycle;
 }
 \foreach \i in {6,7,3,8,5}{
  \draw [e] (\i) -- (w);
  }
\foreach \i/\j in {6/7,6/8,3/7,3/8,5/6,5/7,3/5,5/8}{
 \draw 	[eh]	(w)	-- (\i\j);
 \draw	[eb] 	(\i) node[vb] {} -- (\i\j) node[vh] {};
 \draw 	[eb] 	(\j) node[vb] {} -- (\i\j) node[vh] {};
 }
\draw [e] (3) node[vb] {} -- (w) node[v] {};

\foreach \i/\j in {1/2,1/3,1/4,2/3,2/4,3/4}{
 \path	[f] 	(\i) -- (\i\j) -- (\j) -- (v) -- cycle;
 }
 \foreach \i in {1,2,3,4}{
  \draw [e] (\i) -- (v);
  }
\foreach \i/\j in {2/4,1/2,1/3,1/4,2/3,3/4}{
 \draw 	[eh]	(v)		-- (\i\j);
 \draw	[eb] 	(\i) node[vb] {} -- (\i\j) node[vh] {};
 \draw 	[eb] 	(\j) node[vb] {} -- (\i\j) node[vh] {};
 }
\draw [e] (1) node[vb] {} -- (v) node[v] {};

\end{scope}

\end{tikzpicture}
    \caption{\label{fig:moleculecontraction} Contraction move with respect to a vertex $\vh$ incident to two virtual edges in a molecule.}
 \end{figure}

\begin{figure}[htb]
    \centering
        \tikzsetnextfilename{moleculecontraction2}
        \begin{tikzpicture}

\draw [|->] (2,0) --  node[label=above:$\sharp_{\{\gamma_1,\gamma_2,\gamma_3\}}$] {} (3.5,0);
\draw [|->] (6.5,0) --  node[label=above:$\Pi_{\loopless}$] {} (7.5,0);

\begin{scope}[xshift=.5cm,yshift=.4cm]
\node [c]		(fu)	at (0,0)		{};
\node [c]		(u)	at (1,.58)	 	{};  
\node [c]		(uv)	at (0,.58)		{};
\node [c]		(uw)	at (.5,-.3)		{};
\end{scope}
\begin{scope}[xshift=-.5cm,yshift=.4cm]
\node [c]		(fv)	at (0,0)		{};
\node [c]		(v)	at (-1,.58)		{};
\node [c]		(vu)	at (0,.58)		{};
\node [c]	 	(vw)	at (-.5,-.3)		{};
\end{scope}
\begin{scope}[yshift=-.5cm]
\node [c]		(fw)	at (0,0)		{};
\node [c]		(w)	at (0,-1.16)	{}; 
\node [c]		(wu)	at (.5,-.3)		{};
\node [c]		(wv)	at (-.5,-.3)		{};
\end{scope}
\foreach \i/\j/\k in {u/v/w,v/w/u,w/u/v}{
\path		[f] 	(\i) 	-- (\i\j) -- (f\i) -- (\i\k) -- cycle;
\draw[dotted,eb](f\i) 	-- (\i\j);
\draw[dotted,eb](f\i) 	-- (\i\k);
\path		(\j\k) 	edge [bb]  (\k\j)
		(f\i)	edge [bh] 	(f\j);
\draw	[e]	(\i)	-- (\i\j) node[vb]{};
\draw	[e]	(\i)  node[v]{} 	-- (\i\k) node[vb]{};
\draw	[eh]	(\i) 	-- (f\i) node[vh]{};
}

\begin{scope}[xshift=5cm]
\node [c]		(fu)	at (0,0)		{};
\node [c]		(u)	at (1,.58)	 	{};  
\node [c]		(uv)	at (0,.58)		{};
\node [c]		(uw)	at (.5,-.3)		{};
\node [c]		(fv)	at (0,0)		{};
\node [c]		(v)	at (-1,.58)		{};
\node [c]		(vu)	at (0,.58)		{};
\node [c]	 	(vw)	at (-.5,-.3)		{};
\node [c]		(fw)	at (0,0)		{};
\node [c]		(w)	at (0,-1.16)	{}; 
\node [c]		(wu)	at (.5,-.3)		{};
\node [c]		(wv)	at (-.5,-.3)		{};
\foreach \i/\j/\k in {u/v/w,v/w/u,w/u/v}{
\path		[f] 	(\i) 	-- (\i\j) -- (f\i) -- (\i\k) -- cycle;
\draw[dotted,eb](f\i) 	-- (\i\j);
\draw[dotted,eb](f\i) 	-- (\i\k);
\draw	[e]	(\i)	-- (\i\j) node[vb]{};
\draw	[e]	(\i)  node[v]{} 	-- (\i\k) node[vb]{};
\draw	[eh]	(\i) 	-- (f\i) node[vh]{};
 }
\end{scope}

\begin{scope}[xshift=9cm]
\node [vb]		(f)	at (0,0)		{};
\node [v]		(u)	at (1,.58)	 	{};  
\node [v]		(v)	at (-1,.58)		{};
\node [v]		(w)	at (0,-1.16)	{}; 
\foreach \i in {u,v,w}{
\draw [e] (f) -- (\i);
}
\end{scope}
\end{tikzpicture}
    \caption{\label{fig:moleculecontraction2} Contraction move with respect to a vertex $\vh$ adjacent to three virtual edges as consequence of three bondings. The contraction identifies three boundary vertices and the resulting vertex is incident to three bulk edges. Such a situation is not possible in a molecule $\sfr\in\sfrs$.}
 \end{figure}
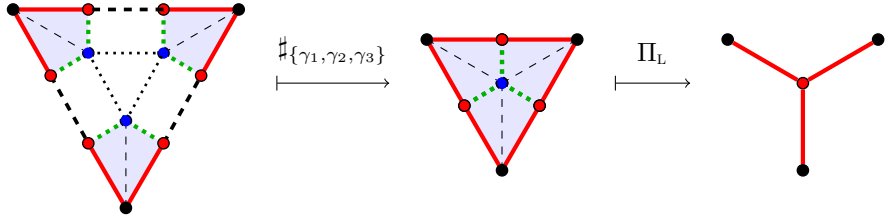
 
\begin{remark}[{\bf dually--weighted molecules}]
\label{rem:dually} 
  Remark \ref{rem:moleculered} instructs that one is not interested in the whole of $\sfrst_{n,\loopless}$, but rather in the subset that possesses vertices $\hat{v}\in\widehat{\mathcal{V}}$ with at most two (resp.\ precisely two) virtual edges incident at a vertex $\hat{v}$.
This set is denoted by $\sfrst_{n,\looplessdually}$. % and $\widetilde{\mathfrak{M}}_{n,\looplessdually}$. 
  The reason for this nomenclature will become clear in Section \ref{sec:gft}. Fortunately, the expansion/creation moves act each time on a single vertex $\bar{v}$, so that one may define a surjective map $\Pi_{n,\looplessdually}:\sfrst_{n,\looplessdually}\longrightarrow\sfrs$.  In words, each unlabelled spin foam molecule is represented in $\sfrs_{n,\looplessdually}$.
\end{remark}

%Then, it is rather obvious that:
%
%\fbox{
%  \begin{minipage}[c]{0.97\textwidth}
\begin{remark}%{proposition}
  \label{prop:generated}
  Anticipating the \gft\ application, it should be emphasized that the whole construction is based on only one single kind of labelled patches,  the $n$-patch.
In the  labelled case this is not unique but there are $2^n$ $n$-patches and we denote their set as $\bpst_n$.
Thus we have that  $\bbgst_{n,\loopless} = \sigma(\bpst_{n})$.
\end{remark}%{proposition}
%\end{minipage}
%}

%%%%%%%%%%%%%%%%%%%%%%%%%%%%%%%%%%%%%%%%%

\subsection{Part 6: molecules from simplicial structures}
\label{sssec:primitivea}

Finally, we can show that it is even possible to use only molecules obtained from bonding labelled atoms of simplicial type to recover all arbitrary unlabelled molecules in terms of reduction:
\begin{displaymath}
  \xymatrix{
    \textsc{unlabelled} & & &\bbgs &&&\sfrs\ar@/_/@{.>}[lll]_{\delta} \\
   &\\
   \textsc{labelled} &  & &\bbgst_{n,\loopless} \ar[uu]  & &&\\
   && \bpst_{n%,\primitive
  }\ar[ur] \ar[dr] & & & &\\
   \textsc{%primitive
   simplicial} &\bgst_{n,\primitive}\ar[rr]^{\widetilde{\beta}}&& \bbgst_{n,\primitive}\ar@/^/[r]^{\widetilde{\alpha}}& \sfast_{n,\primitive}\ar@/^/[l]^{\widetilde{\delta}}\ar@{.>}[r]&\sfrst_{n,\primitive}\ar[r] & \sfrst_{n,\primitivedually}\ar@/^/@{.>}[uulll]^{\widetilde{\delta}}\ar@/_{2pc}/[uuuu]_{\Pi_{n,\primitivedually}}
 }
  \end{displaymath}

In Propositions \ref{prop:surjectors}%and \ref{prop:generated}
, it was shown that all boundary graphs could be encoded in terms of labelled, $n$--regular, loopless graphs.  Moreover, from the spin foam point of view these graphs occur as the boundaries of labelled spin foam atoms $\sfast_{n,\loopless}$ \emph{and} labelled spin foam molecules $\sfrst_{n,\looplessdually}$ (see Remarks \ref{rem:moleculered} and \ref{rem:dually}).  However, one would also like to show that all possible boundary graphs arise as the boundary of molecules composed of atoms drawn from a small finite set of types. 

This can be achieved using the labelled version of simplicial graphs and atoms.
\begin{remark}[{\bf labelled $n$--simplicial structures}]
Due to the label on each edge, there are $2^{(n+1)(n+2)/2}$ \emph{labelled $n$--simplicial boundary graphs}, denoted $\bgst_{n,\primitive}$. 

Through the maps $\widetilde{\beta}$ and $\widetilde{\alpha}$, defined in Remark \ref{rem:generalization}, one can rather easily obtain the \emph{labelled bisected $n$--simplicial graphs} $\bbgst_{n,\primitive}$ and \emph{labelled $n$--simplicial atoms} $\sfast_{n,\primitive}$, respectively. 

Furthermore, label--preserving bonding maps $\widetilde{\gamma}$ give rise to \emph{labelled $n$--simplicial molecules} $\sfrst_{n,\primitive}$, and their subclass $\sfrst_{n,\primitivedually}$ according to Remark \ref{rem:dually}.
\end{remark}

\begin{remark}[{\bf atoms from patches}]
  \label{rem:patchatom}
  One can use an $n$--patch $\bpt_n\in\bpst_{n}$ as the foundation for a bisected \simplicial\ $n$--graph $\bbgt\in\bbgst_{n,\primitive}$ in the following manner:  
  \begin{description} 
    \item[--]  A $n$--patch consists of a single $n$--valent vertex $\bar{v}$,  1--valent vertices $\hat{v}^i$ with $i\in I_{\bar{v}}$ an $n$--element index set, and labelled edges $(\bar{v}\hat{v}^i)$.  
    \item[--] For each $i$, one creates a new vertex $\bar{v}^i$, along with an edge $(\bar{v}^i\hat{v}^i)$ with the same label as $(\bar{v}\hat{v}^i)$. 
    \item[--] For each pair of new vertices $\bar{v}^i$ and $\bar{v}^j$ with $i\neq j$, one creates a new vertex $\hat{v}^{ij}$, along with a pair of {\bf real} 
    %\cjt{what was the reason for this? For 2.41 is should rather be virtual, I think...}
    edges $(\bar{v}^i\hat{v}^{ij})$ and $(\bar{v}^j\hat{v}^{ij})$.
      %comprising of the sets:
      %\begin{equation}
      %  \overline{\mathcal{V}} = \{\bar{v}\}\cup \bigcup_{i=1}^n\{\bar{v}_i\}\;, \quad\quad
      %  \widehat{\mathcal{V}} = \bigcup_{i = 1}^n\{\hat{v}^i\}\cup\bigcup_{\substack{i,j = 1\\ i\neq j}}^n\{\hat{v}^{ij}}\;, \quad\quad
      %  \mathcal{E} = \{(\bar{v}\hat{v}^
      %\end{equation}
  \end{description}
  The result is a \simplicial\ $n$--graph. In a moment, it will be useful to distinguish the constructed \simplicial\ $n$--graph by $\bbgt_{\bar{v}}$, the original %primitive
   $n$--patch by $\bpt(\bar{v}) \equiv \bpt_{ \bar{v}}(\bbgt_{\bar{v}})$,  and new patches by $\bpt_{\bar{v}_i}(\bbgt_{\bar{v}_i})$ for $i\in I$.
\end{remark}

The aim is summarized in the statement:

\fbox{
  \begin{minipage}[c][][c]{0.97\textwidth}
\begin{proposition}
  \label{prop:encodeproof}
  Every graph in $\bgst_{n,\loopless}$ arises as the boundary graph of a dually--weighted molecule composed of \simplicial\ $n$--atoms.  
\end{proposition}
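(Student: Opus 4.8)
\emph{Approach.} Given $\bgt\in\bgst_{n,\loopless}$, set $\bbgt=\widetilde{\beta}(\bgt)\in\bbgst_{n,\loopless}$; it suffices to build a molecule $\sfrt\in\sfrst_{n,\primitivedually}$ whose boundary $\widetilde{\delta}(\sfrt)$ equals $\bbgt$ (so that $\bgt=\widetilde{\beta}^{-1}(\widetilde{\delta}(\sfrt))$ is the boundary graph of $\sfrt$). By Remark~\ref{prop:generated}, $\bbgst_{n,\loopless}=\sigma(\bpst_{n})$, so $\bbgt$ is the gluing of its $n$--patches $\bpt_{\vbar}\equiv\bp_{\vbar}(\bbgt)$, one per original vertex $\vbar\in\overline{\mathcal V}_{\bgt}$, identified pairwise along the bivalent bisecting vertices. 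The plan is to fatten each such patch into a single \simplicial\ $n$--atom and then bond these atoms back together so that their distinguished patches reassemble $\bbgt$, everything else falling into the bulk.

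\emph{Construction and boundary.} For each $\vbar$, Remark~\ref{rem:patchatom} turns the $n$--patch $\bpt_{\vbar}$ into a labelled bisected \simplicial\ $n$--graph $\bbgt_{\vbar}$, hence into $\sfat_{\vbar}=\widetilde{\alpha}(\bbgt_{\vbar})\in\sfast_{n,\primitive}$, in which $\bpt_{\vbar}$ sits as the ``central'' patch while the remaining $n$ ``peripheral'' patches $\bpt_{\vbar^{i}}$ each carry exactly one non--real edge, the half--edge inherited from $\bpt_{\vbar}$. Each edge $\mathbf e=(\vbar\vbar')$ of $\bgt$ is bisected in $\bbgt$ by a single vertex $\vhat_{\mathbf e}$, appearing, say, in the $i$--th slot of $\bpt_{\vbar}$ and the $i'$--th slot of $\bpt_{\vbar'}$; one bonds $\bpt_{\vbar^{i}}$ to $\bpt_{\vbar'^{i'}}$ via the bonding map that identifies the two copies of $\vhat_{\mathbf e}$ (matching the remaining, purely real, auxiliary bisecting vertices by any bijection). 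These are bondable loopless $n$--patches and the map preserves labels, since the inherited edges share the label of $\mathbf e$ and all other edges are real. As $\bgt$ is $n$--regular, all $n$ peripheral patches of every $\sfat_{\vbar}$ are consumed, so the resulting $\sfrt\in\sfrst_{n,\primitive}$ retains only central patches on its boundary. Tracking what becomes interior -- exactly the peripheral (bonded) edges, hence the vertices $\vbar^{i}$ and $\vhat_{\vbar^{i}\vbar^{j}}$ -- one sees that the surviving central patches $\bpt_{\vbar}$ are glued along the identified $\vhat_{\mathbf e}$'s precisely as prescribed by the incidence of $\bgt$; hence $\widetilde{\delta}(\sfrt)=\bbgt$.

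\emph{Main obstacle.} The real work is to show that $\sfrt$ is \emph{dually weighted}, i.e.\ lies in $\sfrst_{n,\primitivedually}$ and not merely in $\sfrst_{n,\primitive}$: by Remark~\ref{rem:dually} no bisecting vertex may be incident to more than two virtual edges of the relevant type. Bisecting vertices coming from real edges of $\bgt$, and the auxiliary vertices $\vhat_{\vbar^{i}\vbar^{j}}$, are harmless since no virtual edge is present. The delicate case is $\vhat_{\mathbf e}$ for a virtual edge $\mathbf e=(\vbar\vbar')$ of $\bgt$: already in $\bbgt$ it carries the two virtual halves $(\vbar\vhat_{\mathbf e})$ and $(\vbar'\vhat_{\mathbf e})$, and the bonding above attaches there the virtual inherited peripheral edge as well, so one must verify that the count still respects the bound -- which, with the bookkeeping of Remark~\ref{rem:moleculered}, reduces to checking that the \emph{bulk} virtual $(\bar v\hat v)$--edge at $\vhat_{\mathbf e}$ is exactly one. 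When a vertex of $\bgt$ carries several virtual edges this requires care, and wherever the simple recipe overshoots one interposes one further \simplicial\ $n$--atom per virtual edge, transporting the virtual label along a short dually--weighted chain while leaving the boundary unchanged. This dual--weighting bookkeeping around the virtual edges of $\bgt$ is the crux; everything else is routine assembly of the machinery of Parts~1--5.
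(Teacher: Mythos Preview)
Your construction---one \simplicial\ $n$--atom per vertex of $\bgt$ built via Remark~\ref{rem:patchatom}, then bonded along all peripheral patches so that exactly the central patches survive and reassemble $\bbgt$---is precisely the paper's, and your boundary computation is correct.

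The gap is in your ``main obstacle'' paragraph, where you manufacture a difficulty that is not there and then patch it with an unspecified fix. The dually--weighted verification is a one--liner. By the third bullet of Remark~\ref{rem:patchatom}, every \emph{newly created} bisecting vertex $\hat v^{ij}$ (those joining two peripheral $\bar v$--vertices) carries only \textbf{real} edges. After all the bondings, these are exactly the bisecting vertices lying in the \emph{interior} of $\sfrt$; the $\hat v$--vertices on $\widetilde\delta(\sfrt)=\bbgt$ are precisely the original $\hat v_i^{j(a)}=\hat v_j^{i(a)}$. Hence every interior $\hat v$ of $\sfrt$ is real, the dual--weighting constraint on virtual $\hat v$'s is vacuous in the bulk, and $\sfrt\in\sfrst_{n,\primitivedually}$ immediately. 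That is the entire argument.

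Your further worry that ``a vertex of $\bgt$ carrying several virtual edges requires care'' is a red herring: distinct edges of $\bgt$ bisect to distinct $\hat v$'s, so virtual labels never accumulate at any single interior $\hat v$, irrespective of how many virtual edges meet at a given $\bar v$. The proposed remedy of ``interposing one further \simplicial\ $n$--atom per virtual edge'' is therefore unnecessary, and in any case you neither specify which atom, bonded how, nor argue that the boundary remains $\bbgt$ afterwards---so as written it is not a proof step. Replace the final paragraph with the observation above and the argument is complete.
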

\end{minipage}
}

\begin{proof}
  The basic argument is fairly straightforward and goes as follows: given a graph $\bgt\in\bgst_{n,\loopless}$, one bisects it and thereafter cuts it into its constituent patches; one uses Remark \ref{rem:patchatom} to construct a \simplicial\ $n$--atom from each patch: one supplements this set of atoms with bonding maps that yield a molecule with $\bgt$ as boundary.  
The procedure is also sketched in Figure \ref{fig:decomposition}.
\begin{figure}[htp]
    \centering
        \tikzsetnextfilename{decomposition}
        \begin{tikzpicture}[scale=1.5]
\node [vb,label=left:$\vb_i$]	(1)	at (0,-.2)	{};
\node [vh,label=above:$\vh_{ij}$](12)	at (1,-.1)	{};
\node [vh]					(13)	at (.4,.4)	{};
\node [vh]					(14)	at (.5,-.5)	{};
\node [vb,label=right:$\vb_j$]	(2)	at (2,0)	{};
\node [vh]					(21)	at (1,-.1)	{};
\node [vh]					(23)	at (1.6,.5)	{};
\node [vh]					(24)	at (1.5,-.4)	{};
\foreach \i/\j in {1/2,1/3,1/4,2/1,2/3,2/4}{
    \draw [eb] (\i) -- (\i\j);
    }
\draw [->] (3,0) -- (4,0);

\begin{scope}[xshift=6cm,yshift=-.2cm]
  \node [vb,label=left:$\vb_i$]	(3)	at (-1.1,-.2){};
  \node [vb]				(4)	at (0,-.8)	{};
  \node [vb,label=above:$\vb_i^j$](1)	at (1,0) 	{};
  \node [vb]				(2)	at (.2,1)	{};
  \foreach \i/\j in {1/2,2/4,3/4,2/3,1/3,1/4}{
    \draw [eb] (\i) -- node[vh] {} (\j);
    }
\end{scope}

\begin{scope}[xshift=9cm]
  \node [vb,label=above:$\vb_j^i$](5)	at (-1.1,-.2){};
  \node [vb]				(4)	at (0,-.8)	{};
  \node [vb,label=right:$\vb_j$]	(6)	at (1,0) 	{};
  \node [vb]				(2)	at (.2,1)	{};
  \foreach \i/\j in {6/2,2/4,5/4,2/5,6/5,6/4}{
    \draw [eb] (\i) -- node[vh] {} (\j);
    }
\end{scope}

\path	(1) 	edge [bb] (5)
	(6,-.3) node[label=100:$\vh_i^j$] {} edge [bh, bend right=15] node[label=below:$\gamma_{ij}$] {} (9,-.1);
\node  [label=100:$\vh_j^i$]  at (9,-.1) {};

\end{tikzpicture}
    \caption{\label{fig:decomposition} Decomposition of an atom with boundary graph $\bgt\in\bgst_{n,\loopless}$ into simplicial atoms, sketched for the patches of two connected vertices in $\bgt$ and $n=3$.}
 \end{figure}

  \begin{description}
    \item[{\it index:}]
  More precisely, consider a labelled, loopless, $n$--regular graph $\bgt\in\bgst_{n,\loopless}$, with $\bgt = (\overline{\mathcal{V}},\overline{\mathcal{E}})$. It is useful to index the vertex set by $\bar{v}_i$ with $i\in\{1,\dots,|\overline{\mathcal{V}}|\}$. This induces an index for the edges; an edge joining $\bar{v}_i$ to $\bar{v}_j$ is indexed by $e_{ij}^{(a)}$, where a non--trivial index $(a)$ arises should multiple edges join the two vertices. 
  
\item[{\it bisect:}] The graph $\bgt$ has a bisected counterpart $\widetilde\beta(\bgt) = \bbgt = (\mathcal{V}_{\bbgt}, \mathcal{E}_{\bbgt})$.  
  The vertex set $\mathcal{V}_{\bbgt} = \overline{\mathcal{V}}\cup\widehat{\mathcal{V}}$, where $\widehat{\mathcal{V}}$ is the set of bisecting vertices.
  A vertex in $\widehat{\mathcal{V}}$ is indexed by $\hat{v}_{ij}^{(a)}$ if it bisects the edge $e_{ij}^{(a)}$ of $\bgt$.
   
\item[{\it cut:}]
  The boundary patches in $\bbgt$ are $\bpt_{\bar{v}_i}(\bbgt)$ with $i\in\{1,\dots, |\overline{\mathcal{V}}|\}$. The patch $\bpt_{\bar{v}_i}(\bbgt)$ is comprised of the vertex $\bar{v}_i$, the  $n$ vertices $\hat{v}_{ij}^{(a)}$ and $n$ edges $(\bar{v}_i\hat{v}_{ij}^{(a)})$. The indices of type $j(a)$, attached to the $n$ elements $\hat{v}_{ij}^{(a)}$, form an $n$--element index set $I_{\bar{v}_i}$. 
  
  %Each patch $\bpt(\bbgt,\bar{v}_i)$ is a  $n$--patch and
  
  Each bisecting vertex $\hat{v}_{ij}^{(a)}\in\widehat{\mathcal{V}}$ is shared by precisely two patches.

  Now one cuts the graph along each bisecting vertex and considers each patch in isolation.
  This cutting procedure sends each $\bpt_{\bar{v}_i}(\bbgt) \longrightarrow\bpt(\bar{v}_i)$, where $\bpt(\bar{v}_i)$ is a %\simplicial\ 
   $n$--patch
 comprising of a vertex $\bar{v}_i$, $n$ vertices $\hat{v}_{i}^{j(a)}$ and $n$ edges $(\bar{v}_i\hat{v}_{i}^{j(a)})$.

 Thus, after cutting, a bisecting vertex $\hat{v}_{ij}^{(a)}$ is represented by $\hat{v}_{i}^{j(a)}$ in $\bpt(\bar{v}_i)$ and $\hat{v}_{j}^{i(a)}$ in $\bpt(\bar{v}_j)$.

\item[{\it atoms:}]
  For the patch $\bpt(\bar{v}_i)$, the $n$ superscript indices $j(a)$ are that indexing set $I_{\bar{v}_i}$, defined a moment ago.   Thus, one may use Remark \ref{rem:patchatom} to construct, from $\bpt(\bar{v}_i)$, a \simplicial\  $n$--graph $\bbgt_{\bar{v}_i}$  and there after a \simplicial\  $n$--atom $\sfat_{\bar{v}_i}$.

  Through this process, one obtains a set of \simplicial\  $n$--atoms, $\sfat_{\bar{v}_i}$ with $i\in\{1,\dots,|\overline{\mathcal{V}}|\}$.  This set is denoted by $\sfat_{\overline{\mathcal{V}}}$, since the atoms are in one--to--one correspondence with the vertices $\overline{\mathcal{V}}$ of $\bbgt$.   They will be used to form a spin foam molecule whose bisected boundary graph is $\bbgt$.
 
\item[{\it bonding maps:}] 

  For each pair $\bar{v}_{i}^{j(a)}\in\bbgt_{\bar{v}_i}$, $\bar{v}_j^{i(a)}\in\bbgt_{\bar{v}_j}$, define a bonding map
  \begin{eqnarray}
\gamma_{ij}^{(a)}:\bpt_{\bar{v}_{i}^{j(a)}}(\bbgt_{\bar{v}_i}) & \longrightarrow & \bpt_{ \bar{v}_j^{i(a)}}(\bbgt_{\bar{v}_j})\\
\bar{v}_i^{j(a)} & \longrightarrow & \bar{v}_j^{i(a)} \\
\hat{v}_i^{j(a)} & \longrightarrow & \hat{v}_j^{i(a)}
\end{eqnarray}
while the remaining $n-1$ vertices in each patch are paired in an arbitrary way:\footnote{As an aside, the bonding maps are specified only up to permutations of these $n-1$ vertex pairings, leading to ${n-1 \choose 2}$ choices for each bonding map. However, the resulting spin foam molecules possess the same boundary.}
\begin{eqnarray}
  \left\{ \hat{v}_i^{j(a)k(b)} : k(b)\in I_{\bar{v}_i} - \{j(a)\}\right\} & \longrightarrow & \left\{ \hat{v}_j^{i(a)l(c)} : l(c)\in I_{\bar{v}_j} - \{i(a)\}\right\}\;.
\end{eqnarray}
The set of bonding maps is denoted $\gamma_{\widehat{\mathcal{V}}}$, since the maps are in one--to--one correspondence with the bisecting vertices $\widehat{\mathcal{V}}$ of  $\bbgt$.

Then, in the molecule $\sfrt = \sharp_{\gamma_{\Vhat}}\sfat_{\Vbar}$, the only patches that remain unbonded are the original $\bpt(\bar{v}_i)$ for $i\in\{1,\dots,|\overline{\mathcal{V}}|\}$.  Moreover, after one relabels the identified vertices $\hat{v}_{ij}^{(a)} \equiv \hat{v}_{i}^{j(a)} = \hat{v}_{j}^{i(a)}$, one has truly come full circle:  the boundary of $\sfrt$, which may be extracted using Remark \ref{rem:moleculeboundary},  satisfies the relation $\widetilde{\delta}(\sfrt) = \bbgt$.

\item[{\it dually--weighted:}]
  From Remark \ref{rem:patchatom}, one notices that all edges added in the construction are {\bf real}.  Thus, the molecule $\sfrt\in\sfrst_{n,\primitivedually}$.

\end{description}
\end{proof}

Proposition \ref{prop:encodeproof} has the following consequence:  
\begin{corollary}[{\bf molecule decomposition}]
\label{rem:refmolreduction}
There is a decomposition map $D_{n,\looptosimp}:\sfrst_{n,\looplessdually}\longrightarrow\sfrst_{n,\Simplicialdually}$. %(Remark \ref{rem:dually}) to the \simplicial\  molecule subclass that still yields a projection map $\Pi_{n,\primitivedually}:\sfrst_{n,\primitivedually}\longrightarrow\sfrs$. Equivalently, each unlabelled molecule $\sfr\in\sfrs$ is represented even by a molecule obtained from \simplicial\  atoms, $\sfrt\in\sfrst_{n,\primitivedually}$.
\end{corollary}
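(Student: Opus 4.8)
The plan is to build $D_{n,\looptosimp}$ by running the decomposition of Proposition \ref{prop:encodeproof} locally, once on each constituent atom of the molecule, and then re-using the original bonds verbatim. Concretely, write a given $\sfrt\in\sfrst_{n,\looplessdually}$ as $\sfrt=\sharp_{\{\gamma_i\}_{i\in I}}\{\sfat_a\}_{a\in A}$, where each $\sfat_a=\widetilde{\alpha}(\bbgt_a)$ is a labelled, loopless, $n$--regular atom with $\bbgt_a=\widetilde{\beta}(\bgt_a)$ and $\bgt_a\in\bgst_{n,\loopless}$, and where each bond $\gamma_i$ identifies a pair of $n$--patches living inside the bisected boundary graphs $\bbgt_{a_1},\bbgt_{a_2}$. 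First I would apply the construction in the proof of Proposition \ref{prop:encodeproof} to each $\bgt_a$ \emph{separately}: bisect $\bgt_a$, cut it along its bisecting vertices into the $n$--patches $\bpt_{\bar v}(\bbgt_a)$, build a \simplicial\ $n$--atom from each patch via Remark \ref{rem:patchatom}, and bond these \simplicial\ atoms along the bisecting vertices of $\bbgt_a$. This produces, for every $a\in A$, a molecule $\sfrt_a\in\sfrst_{n,\primitivedually}$ composed solely of \simplicial\ $n$--atoms, all of whose newly introduced edges are {\bf real}, with $\widetilde{\delta}(\sfrt_a)=\bbgt_a=\widetilde{\delta}(\sfat_a)$, and — crucially — whose free (unbonded) boundary patches are exactly the original central patches $\bpt(\bar v)\equiv\bpt_{\bar v}(\bbgt_a)$, carried over rigidly together with their edge labels.

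Second, since the free boundary patches of $\sfrt_a$ coincide, as labelled $n$--patches, with the patches of $\bbgt_a$ that the bonds of $\sfrt$ act on, each $\gamma_i:\bpt_{\bar v_1}(\bbgt_{a_1})\to\bpt_{\bar v_2}(\bbgt_{a_2})$ is still a legitimate bonding map, now between free boundary patches of $\sfrt_{a_1}$ and $\sfrt_{a_2}$: bondability is automatic because the patches are literally the same (all are $n$--patches, so $|\mathcal{V}_{\bar v_1}|=|\mathcal{V}_{\bar v_2}|=n+1$ and $|\mathcal{E}_{\bar v_1}|=|\mathcal{E}_{\bar v_2}|=n$), and the loop--to--loop compatibility condition of Definition \ref{def:bonding} is vacuous in the loopless setting. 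I would therefore set
\[
  D_{n,\looptosimp}(\sfrt)\;:=\;\sharp_{\{\gamma_i\}_{i\in I}}\{\sfrt_a\}_{a\in A},
\]
the bonding of the decomposed molecules $\sfrt_a$ along their free boundary patches by the very same family of bonds $\{\gamma_i\}_{i\in I}$. By construction, this object is assembled entirely from \simplicial\ $n$--atoms.

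Third, I would check that $D_{n,\looptosimp}(\sfrt)$ indeed lands in $\sfrst_{n,\Simplicialdually}$ and that it preserves the boundary. The dual--weighting condition of Remark \ref{rem:dually} only constrains bisecting vertices $\hat v$ incident to virtual $(\bar v\hat v)$--edges; since every edge introduced by the decomposition is {\bf real}, the virtual $(\bar v\hat v)$--edges of $D_{n,\looptosimp}(\sfrt)$ are exactly those already present in $\sfrt$ (sitting inside the untouched central patches), and each bisecting vertex therefore still carries at most (resp.\ exactly) two of them; hence $D_{n,\looptosimp}(\sfrt)\in\sfrst_{n,\Simplicialdually}$. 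That $\widetilde{\delta}(D_{n,\looptosimp}(\sfrt))=\widetilde{\delta}(\sfrt)$ follows precisely as in the closing step of the proof of Proposition \ref{prop:encodeproof}: after relabelling the identified vertices across each $\gamma_i$, the only patches that remain unbonded in $D_{n,\looptosimp}(\sfrt)$ are those left unbonded in $\sfrt$, so the bisected boundary graph reassembles to $\widetilde{\delta}(\sfrt)$. Finally, for genuine well--definedness one must fix the arbitrary $(n-1)$--vertex pairings appearing in Remark \ref{rem:patchatom} and in the internal bonds of Proposition \ref{prop:encodeproof} — e.g.\ by a global ordering of the index sets $I_{\bar v}$ — different choices yielding molecules with identical boundaries. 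The step I expect to carry the real content is the rigidity claim in the first paragraph, namely that the decomposition of Proposition \ref{prop:encodeproof} can be arranged so that the central patch of each atom is left literally in place, labels and all; this is exactly what Remark \ref{rem:patchatom} secures by distinguishing $\bpt(\bar v)\equiv\bpt_{\bar v}(\bbgt_{\bar v})$ and attaching only fresh {\bf real} structure around it, and it is what makes the transport of the bonds $\{\gamma_i\}$ unambiguous.
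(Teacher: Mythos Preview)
Your proposal is correct and follows exactly the same approach as the paper: decompose each constituent atom of $\sfrt$ via Proposition \ref{prop:encodeproof}, then reassemble using the original bonds. The paper's own proof is a two--line sketch that simply invokes Proposition \ref{prop:encodeproof} atom by atom, so your version is considerably more explicit --- in particular, your checks that the original bonds $\{\gamma_i\}$ transfer verbatim to the free central patches and that the dual--weighting condition survives (because all newly introduced edges are \textbf{real}) are details the paper leaves implicit.
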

\begin{proof}
	Consider $\sfrt\in\sfrst_{n,\looplessdually}$. 	By Proposition \ref{prop:encodeproof}, one can decompose each of its atoms, leading to the image of the molecule $\sfrt$ itself under decomposition map $D_{n,\looptosimp}$. 
\end{proof}
We note an important limitation.  
\begin{proposition}
	\label{prop:limitation}
	The projection $\Pi_{n,\Simplicialdually}:\sfrst_{n,\Simplicialdually}\longrightarrow\sfrs$ is {\bf not} surjective.
\end{proposition}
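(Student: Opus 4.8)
The plan is to prove non-surjectivity by exhibiting a single molecule of $\sfrs$ that is not in the image. A convenient choice is the one-atom molecule $\sfa_0 := \alpha(\beta(\bg_0))$, where $\bg_0$ is the boundary graph consisting of two vertices $\vbar_1,\vbar_2$ joined by a single edge; as a spin foam molecule this is simply the atom $\sfa_0$ together with the empty set of bonding maps. Note that $\sfa_0$ does lie in the image of the \emph{unrestricted} map $\Pi_{n,\looplessdually}$ (for $n$ odd this is the surjectivity asserted in Remark~\ref{rem:dually}, obtained by expanding each univalent vertex of $\bg_0$ to a loopless $n$--regular labelled graph and reducing back), so the obstruction is genuinely tied to the restriction to \emph{simplicial} atoms.

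First I would record two structural facts about the reduction moves of Remarks~\ref{rem:atomicred} and~\ref{rem:moleculered}. (a) A contraction or deletion move on a molecule only ever removes a bivalent vertex $\vhat$ together with the edges and faces incident on it and identifies a pair of vertices of type $\vbar$; it never acts on bulk vertices. Hence $\Pi_{n,\Simplicialdually}$ preserves the number of bulk vertices, and for a molecule obtained by bonding $k$ copies of $\sfa_{n,\primitive}$ that number is exactly $k$. Since $\sfa_0$ has a single bulk vertex, any preimage $\sfrt\in\sfrst_{n,\Simplicialdually}$ is a bonding of exactly one copy of $\sfa_{n,\primitive}$, possibly bonded to itself. (b) If $\sfrt$ carries even one bonding map, then the two identified patches leave $\widetilde{\delta}(\sfrt)$, so their distinguished vertex $\vbar$ lies on no boundary edge and is therefore an internal vertex of $\sfrt$; by (a) it is still an internal vertex of $\Pi_{n,\Simplicialdually}(\sfrt)$. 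But every $\vbar$--vertex of $\sfa_0 = \alpha(\beta(\bg_0))$ lies on its boundary $\beta(\bg_0)$. Hence $\sfrt$ carries no bonding map at all, i.e.\ $\sfrt\in\sfast_{n,\primitive}$ is a labelled copy of $\sfa_{n,\primitive}$ with some set $S\subseteq E(K_{n+1})$ of virtual edges.

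Next I would evaluate $\Pi_{n,\Simplicialdually}$ on such a $\sfrt$ straight from its definition $\delta^{-1}\circ\beta\circ\pi_{n,\loopless}\circ\widetilde{\beta}^{-1}\circ\widetilde{\delta}$: applying $\widetilde{\delta}$ and $\widetilde{\beta}^{-1}$ returns $K_{n+1}$ with virtual--edge set $S$, $\pi_{n,\loopless}$ contracts the edges of $S$ (deleting any virtual loop that appears along the way), and $\beta,\delta^{-1}$ re-atomize, giving $\alpha\big(\beta(K_{n+1}/S)\big)$, where $K_{n+1}/S$ is the multigraph obtained from the complete graph on $n+1$ vertices by contracting $S$. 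It then suffices to observe that for $n\geq 2$ this multigraph is never $\bg_0$. Its vertices are the connected components of $(V(K_{n+1}),S)$: if there is one component the multigraph has a single vertex; if there are at least three it has at least three vertices; and if there are exactly two, of sizes $b_1,b_2$ with $b_1+b_2 = n+1\geq 3$, then since $S$ contains no edge between the two components, all $b_1 b_2\geq 2$ edges of $K_{n+1}$ joining them survive as parallel edges of $K_{n+1}/S$. In every case $K_{n+1}/S$ fails to be the two--vertex single--edge graph $\bg_0$, so $\sfa_0$ is not in the image and $\Pi_{n,\Simplicialdually}$ is not surjective.

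The delicate points I have glossed over are the bookkeeping in step~(b) --- that a bonding map really does produce an internal $\vbar$ surviving reduction --- and the claim in step three that reducing a single labelled simplicial atom yields exactly $\alpha\circ\beta$ of the edge--contraction $K_{n+1}/S$, including the handling of virtual edges that turn into loops; both are routine given the definitions of Parts~1--5 but need care, and I expect step~(b) to be the main source of friction. For $n$ even there is also a shorter route: contraction and deletion preserve even--valence of boundary vertices whereas every $\vbar$ of a simplicial atom has valence $n$, so no simplicial dually--weighted molecule can reduce to an atom, such as $\sfa_0$, whose boundary graph carries an odd--valent vertex.
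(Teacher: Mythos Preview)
Your strategy is genuinely different from the paper's. The paper does not exhibit a specific molecule outside the image; it only \emph{sketches} a reason why the decomposition $D_{n,\looptosimp}$ of Corollary~\ref{rem:refmolreduction} cannot be undone by reduction, namely that any decomposition of a non--simplicial atom into simplicial ones necessarily introduces \emph{real} structures (the edges added in Remark~\ref{rem:patchatom}), and real structures are invisible to the contraction/deletion moves. Your approach---pick the atom $\sfa_0$ over the two--vertex, one--edge graph and argue no simplicial dually--weighted molecule can hit it---is more concrete and, once completed, gives a sharper statement. The Case~1 analysis (no self--bonding) is clean and correct: for $n\geq 2$ the multigraph $K_{n+1}/S$ always carries at least $b_1b_2\geq n\geq 2$ parallel edges between its two parts, so it is never $\bg_0$. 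The parity shortcut for $n$ even is also correct and is essentially the content of the second clause of Proposition~\ref{prop:surjectors}.

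The gap is in step~(b). Your justification ``by~(a) it is still an internal vertex'' does not follow: fact~(a) concerns bulk vertices $v\in\Vcal$, not vertices of type $\vbar$. Reduction \emph{does} act on $\vbar$--vertices, merging them in pairs, and nothing in~(a) prevents your internal $\vbar$ from being merged with a boundary $\vbar'$, in which case the merged vertex inherits the boundary edges of $\vbar'$ and lands on $\delta(\Pi(\sfrt))$. What actually saves the argument is a different mechanism: when two patches of a single simplicial atom are bonded, each $\vhat$ in the bonded patches is identified with another $\vhat$ of the same atom, and the resulting internal $\vhat$--vertex generically acquires \emph{three} incident edges all carrying the \emph{same} label (because bonding respects labels and both half--edges at a bisecting vertex share the label of the original edge). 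The dually--weighted condition then forces that common label to be \emph{real}, so the bonded $\vbar$ sits on no virtual edge and is never touched by any contraction. This is the analysis that needs to be carried out, and with multiple self--bondings one must also check the exceptional identifications (where two bondings conspire to give an internal $\vhat$ with only two incident edges); it is doable but is the substantive part of the proof rather than ``routine bookkeeping''. You are right to flag~(b) as the point of friction; I would simply upgrade its status from bookkeeping to the main lemma.
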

	We sketch our reasoning here. 
	Consider a generic $\sfr\in\sfrs$ and let $\sfrt$ be a representative in the class $\Pi_{n,\looplessdually}^{-1}(\sfr)$. Then, $\sfrt$ consists of bonded spin foam atoms drawn from the set $\sfast_{n,\loopless}$. According to Proposition \ref{prop:encodeproof}, every atom $\sfat\in\sfast_{n,\loopless}$ has a decomposition into simplicial atoms of $\sfast_{n,\primitive}$.  Just like in the decomposition utilized in \ref{prop:encodeproof},  it is possible to show that any decomposition requires one to add {\bf real} structures in order to maintain the integrity of the boundary graph under reduction.  However, if one adds in real structures, then one does not arrive back to the original atom/molecule after reduction, since reduction just amounts to contraction and deletion of virtual structures.

%%%%%%%%%%%%%%%%%%%%%%%%%%%%%%%%%%%%%%%%%

\subsection{Enhancing with higher--dimensional information}
\label{sssec:enhancing}

 We pause to remark on the relationship between these molecular spin foam structures and $D$--dimensional topologies. For clarity, we shall concentrate on $n$--regular structures. % on the subset of $\bgst_{n,\loopless}$ generated by bonding \simplicial\  $(n+1)$--patches. 

The elements of $\sfast_{n,\primitive}$  possess at most 2--dimensional components, and so in principle have no information about any higher--dimensional embedding.  Such higher--dimensional components must be added by some mechanism.  There exist two paths%
\footnote{The two paths mentioned above are the ones most often used in the quantum gravity literature. From the mathematical perspective there is an interesting third way, detailed in \cite{Denicola:2010fa}. Therein the authors extend embedded discrete structures to include topological data that encode the underlying $D$-manifold as a branched cover.
}
 that one may follow, both of which set $n=D$.   

\begin{remark}[{\bf $D$--dimensional structure by hand}]
In the first approach, one notes that spin foam atoms $\sfast_{D,\primitive}$ form the dual 2--skeleton to a $D$--simplex. Thus, at the atomic level, the $D$--dimensional structure can be defined by hand \emph{once} at the outset. As the result, the \simplicial\  $D$--graphs implicitly encode the $(D-1)$--dimensional boundary of a $D$--simplex, while the \simplicial\  $D$--patches are enhanced to $(D-1)$--simplices.  The tricky issue, of course, comes when one bonds \simplicial\  $D$--patches. These bonding maps should be augmented to identify $(D-1)$--dimensional information.  With many subtleties, these enhanced bonding maps can be defined \emph{once} at the start and applied mechanically throughout the bonding process.  However, the spin foam molecules, reconstructed in the manner, will generically encode $D$--dimensional objects that are very ill--behaved from a topological viewpoint \cite{\cguraulost, \csmerlaklost}. 
\end{remark}

\begin{remark}[{\bf $D$--dimensional structure from colouring}]
A second approach, which has gained a lot of traction in recent years,  is based upon so--called \emph{$D$--coloured graphs} \cite{Bonzom:2012bg}. Of course, this means defining yet another set of boundary graphs, with yet more labels, their associated spin foam atoms, bonding maps and so on. However, the definitions are like those given above, so we concentrate on their properties. Consider the set of labelled loopless $D$--regular boundary graphs $\bgst_{D,\loopless}$. Look for the subset that are \emph{$D$--colourable}, in the sense that one may assign to each edge another label drawn from the set $\{1,\dots, D\}$, such that the $D$ edges of each \simplicial\  $D$--patch have distinct colour s. This subset is called $\bgst_{D,coloured}$.  It emerges that the \simplicial\  $(D+1)$--graphs lie in this subset \emph{and} they generate, when coloured and accompanied by bonding maps that conserve edge colour , the whole of $\bgst_{D,coloured}$. 
Remarkably, this colour  information ensures that one can reconstruct an abstract simplicial pseudo-manifold \cite{Gurau:2010iu}.  While not all graphs in $\bgst_{D,\loopless}$ are $D$--colourable,  the $D$--dimensional topologies encoded by such spin foam molecules are much better behaved than those reconstructed using the first approach. 
\end{remark}

  One could in principle attempt to make a more ambitious statement. By showing the existence, for $D$ odd (even), of a surjective map $\pi_{D,coloured}:\bgst_{D,coloured}\longrightarrow\bgs$ (resp.\ $\bgs_{even}$), one could conjecture the following:
\begin{conjecture}
$D$--coloured graphs capture all of $\bgs$  ($\bgs_{even}$).  
\end{conjecture}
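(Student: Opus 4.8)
The plan is to produce the desired map as the restriction $\pi_{D,coloured}=\pi_{D,\loopless}\big|_{\bgst_{D,coloured}}$ of the surjection $\pi_{D,\loopless}$ of Remark~\ref{rem:looplessg}. Since $\pi_{D,\loopless}$ is already known to be surjective onto $\bgs$ for $D$ odd, resp.\ onto $\bgs_{even}$ for $D$ even, the conjecture reduces to a single claim: for every $\bg$ in the relevant set, the fibre $\pi_{D,\loopless}^{-1}(\bg)$ contains at least one $D$--colourable graph. Now a \simplicial\ $D$--patch is just the star of a $D$--valent vertex, so being $D$--colourable is the same as admitting a proper edge--colouring by $D$ colours; by König's edge--colouring theorem every $D$--regular \emph{bipartite} multigraph admits one (and is automatically loopless). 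Hence it suffices to exhibit, in each fibre, a $D$--regular bipartite representative.

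Next I would analyse the structure of an arbitrary element $\bgt$ of $\pi_{D,\loopless}^{-1}(\bg)$. Because the reduction of Definition~\ref{def:moves} contracts \emph{all} virtual edges and deletes all virtual loops while keeping $\bg=\pi(\bgt)$ connected, the virtual subgraph of $\bgt$ must decompose into exactly one connected \emph{cloud} $C_{\bar v}$ per vertex $\bar v\in\Vbar$, with all real edges of $\bg$ running between clouds and no virtual edge between distinct clouds (an ``auxiliary'' cloud carrying no real edge would contract to an isolated vertex). Summing valences inside a cloud gives $D\,|C_{\bar v}| = d_{\bar v} + 2\,e^{\mathrm{virt}}_{\bar v}$, where $d_{\bar v}=\deg_{\bg}(\bar v)$ (loops counted twice) and $e^{\mathrm{virt}}_{\bar v}$ is the number of internal virtual edges; hence $D\,|C_{\bar v}|\equiv d_{\bar v}\pmod 2$. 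For $D$ even this \emph{forces} $d_{\bar v}$ even, i.e.\ $\bg\in\bgs_{even}$, whereas for $D$ odd one is free to choose $|C_{\bar v}|$ with the parity of $d_{\bar v}$. This both explains why the conjecture is stated for $\bgs$ when $D$ is odd and for $\bgs_{even}$ when $D$ is even, and indicates that the only genuine obstruction is this parity one.

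The constructive core is then to design, for each $\bar v$, a \emph{cloud gadget}: a connected bipartite graph on $|C_{\bar v}|$ vertices (of the parity dictated above) carrying $d_{\bar v}$ distinguished ``ports'', with all non--port degree absorbed by internal virtual edges, so that after gluing the gadgets along the edges of $\bg$ the total graph is $D$--regular, loopless, lies in $\pi_{D,\loopless}^{-1}(\bg)$, and is $D$--colourable. I would try the simplest gadgets first: short chains of $D$--valent vertices joined by virtual edges, terminated so as to expose the right number of ports, mimicking the expansion/creation strategy of Proposition~\ref{prop:surjectors} but with every virtual loop replaced by a loop--free bipartite substitute. The colouring would be supplied either by arranging that the \emph{whole} glued graph is bipartite (then invoke König directly), or, when $\bg$ has odd cycles obstructing a global $2$--colouring, by exploiting the colour--parity flip that an odd cloud provides to reroute the colouring around those cycles.

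The \textbf{main obstacle} is precisely this last step: one must build cloud gadgets that are simultaneously bipartite, loopless, $D$--regular after gluing, in the correct fibre, \emph{and} flexible enough to realise every port--colour assignment compatible with $\bg$ — in particular to cancel the colour--parity defect of the odd cycles of $\bg$ — and to do so uniformly in $D$, not merely for the base cases $D=3,4$ appearing in Proposition~\ref{prop:surjectors}. An alternative entry point via Proposition~\ref{prop:encodeproof} (every graph in $\bgst_{n,\loopless}$ is the boundary of a dually--weighted \simplicial\ molecule, and \simplicial\ $(D+1)$--graphs are $D$--colourable) meets the same difficulty, since the boundary graph produced there is not a priori $(D+1)$--coloured as a molecule and the colour still has to be propagated through the reduction.
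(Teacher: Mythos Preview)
The paper does not prove this statement: it is explicitly presented as a \emph{conjecture}, with no proof offered. The only comment the paper makes is the one--line remark that ``all one would need to show is that in every class $\pi_{D,\loopless}^{-1}(\bg) \subset \bgst_{D,\loopless}$, there is a graph that is $D$--colourable.'' So there is no paper proof to compare against; your proposal is being measured against a hint, not an argument.

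That said, your proposal is well aligned with the paper's hint and goes considerably further. Reducing to $\pi_{D,coloured}=\pi_{D,\loopless}|_{\bgst_{D,coloured}}$ and asking for a $D$--colourable representative in each fibre is exactly what the paper suggests. Your observation that $D$--colourability is proper edge--colouring by $D$ colours, together with K\"onig's theorem for bipartite $D$--regular multigraphs, is a genuine addition: it converts the problem into finding a \emph{bipartite} representative in each fibre, which is a sharper target than the paper formulates. Your parity count $D|C_{\bar v}|\equiv d_{\bar v}\pmod 2$ correctly recovers the odd/even dichotomy and shows it is the only local obstruction.

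Where your proposal remains incomplete --- and you are candid about this --- is precisely where the paper is silent too: the construction of cloud gadgets that are simultaneously bipartite, loopless, $D$--regular after gluing, and compatible with a global edge--colouring when $\bg$ has odd cycles. This is the actual content of the conjecture, and neither you nor the paper supplies it. Your proposal should therefore be read as a plausible strategy outline rather than a proof; the conjecture, as far as the paper is concerned, remains open.
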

In essence, all one would need to show is that in every class $\pi_{D,\loopless}^{-1}(\bg) \subset \bgst_{D,\loopless}$, there is a graph that is $D$--colourable.

The benefit would be that in this way one could, for arbitrary molecules $\sfr$, specify the subclass whose molecules allow for a subdivision into the colourable subclass of $\sfrst_{n,\primitivedually}$.
Thus, all these molecules would have a well-behaved topological structure as pseudo-$D$-manifolds. 
In particular, their atoms would carry the structure of $D$-dimensional polytopes (cf. \ref{sec:polytopes}).

\subsection{Stranded diagrams}
\label{ssec:stranded}

One might wonder at this stage how the structures above match the usual stranded graph description utilized in group field theory. It emerges that stranded graphs can easily incorporate the information pertaining to generic spin foam atoms and molecules, as well as virtual and \simplicial\  structures.  Moreover, stranded diagrams provide a more succinct graphical representation for molecular spin foams.  With this aim in mind, we provide here a dictionary between the two descriptions.
\begin{defin}
  A \emph{stranded atom} is the double, $\mathfrak{s} = (\mathcal{C}, \mathcal{R})$, such that:
  \begin{description}
    \item[--] $\mathcal{C}$ is a set of vertices partitioned into subsets known as \emph{coils}. This set $\mathcal{C}$ has an even number of elements and coils are denoted by $c$.  
    \item[--] $\mathcal{R}$ is the set of \emph{reroutings}, where a rerouting is an edge, refered to quite frequently as a \emph{strand}, joining a pair of {\bf distinct} vertices in $\mathcal{C}$.  This set of reroutings saturates the set of vertices,  in the sense that each vertex is an endpoint of exactly one strand.  
  \end{description}
\end{defin}
We denote the set of stranded atoms by $\mathfrak{S}$. 
\begin{remark}
  One must take note of a particular type of rerouting, known as a \emph{retracing}.  This refers to a strand joining two vertices in the same coil.  One will see in moment that a retracing corresponds to a loop in the associated boundary graph.
\end{remark}
\begin{remark}
  \label{rem:explosion}
  Consider a spin foam atom $\sfa=(\mathcal{V},\mathcal{E},\mathcal{F})\in\sfas$.  As was shown in Proposition \ref{rem:bdycor}, it is completely determined by its boundary graph $\bg = (\overline{\mathcal{V}}, \overline{\mathcal{E}})\in\bgs$. From $\bg$, one constructs a stranded graph $\mathfrak{s}=(\mathcal{C},\mathcal{R})$ by \lq\lq exploding\rq\rq\ the vertices $\bar{v}\in\overline{\mathcal{V}}$. More precisely, for each edge $\bar{e} = (\bar{v}_1\bar{v}_2)\in\overline{\mathcal{E}}$, one creates two vertices in $\mathcal{C}$ (one for each endpoint) and a strand in $\mathcal{R}$ joining them. The subset of vertices in $\mathcal{C}$ created from a given endpoint vertex in $\overline{\mathcal{V}}$ constitutes a coil. 

  The reverse operation is equally simple. Given a stranded diagram $\mathfrak{s}$, one constructs a boundary graph $\bg$ by identifying the vertices within each coil. 

  These operations are clearly inversely related and are illustrated for a simple example in Figure \ref{fig:explosion}.
 
  \begin{figure}[htb]
    \centering
  %  \vspace{3cm}
     \tikzsetnextfilename{strandedvertex}
      \begin{tikzpicture}[scale=1.5]

\draw [<->] (2,0)-- (3,0);
% pyramid vertex boundary graph
\begin{scope}

\node [vb]		(a)	at (0,-1)		{};
\node [vb]		(b)	at (1.25,1)		{}; 
\node [vb]		(c)	at (0.5,0.5) 	{}; 
\node [vb]		(d)	at (-0.5,0.5)	{}; 
\node [vb]		(e)	at (-1.25,1)	{}; 
\foreach \i/\j in {a/b,a/c,a/d,a/e,b/c,b/e,c/d,d/e}{
 \draw [eb] (\i) -- node[vh] {} (\j);
  }
\end{scope}

% ---- stranded pyramid vertex 
\begin{scope}[xshift=5cm]
\foreach \i in {0,60,120,180}{
\draw [cs, rotate=\i] (1.1,-.3) rectangle (.9,.3);
}
\draw [cs]	 (-.4,-1.1) rectangle (.4,-.9);
\node [vs]		(14)	at (-.35,.95)	{};
\node [vs]		(15)	at (-.5,.86)		{};
\node [vs]		(12)	at (-.65,.77)	{};
\node [vs]		(21)	at (-1,.2)		{};
\node [vs]		(23)	at (-1,0)		{};
\node [vs]		(25)	at (-1,-.2)		{};
\node [vs]		(52)	at (-.3,.-1)		{};
\node [vs]		(51)	at (-.1,-1)		{};
\node [vs]		(54)	at (.1,-1)		{};
\node [vs]		(53)	at (.3,-1)		{};
\node [vs]		(34)	at (1,.2)		{};
\node [vs]		(32)	at (1,0)		{};
\node [vs]		(35)	at (1,-.2)		{};
\node [vs]		(41)	at (.35,.95)	{};
\node [vs]		(45)	at (.5,.86)		{};
\node [vs]		(43)	at (.65,.77)	{};
\path
\foreach \i/\j in {14/41,21/12,43/34}{
  (\i) edge [es,bend right=70] (\j)
  }
\foreach \i/\j in {52/25,35/53}{
  (\i) edge [es,bend right=50] (\j)
  }
\foreach \i/\j in {51/15,45/54}{
  (\i) edge [es,bend right=16] (\j)
  }
(23) edge [es] (32);
\end{scope}
\end{tikzpicture}
      \caption{\label{fig:explosion} An example of the relation between (bisected) boundary graphs and stranded diagrams. While faces of atoms (and molecules) are in 1-to-1 correspondence to bisection vertices in the graph description, in the stranded diagrams they are uniquely represented by the strands.
            %\comment{EDIT: no bisections in diagram on left}
            }
  \end{figure}
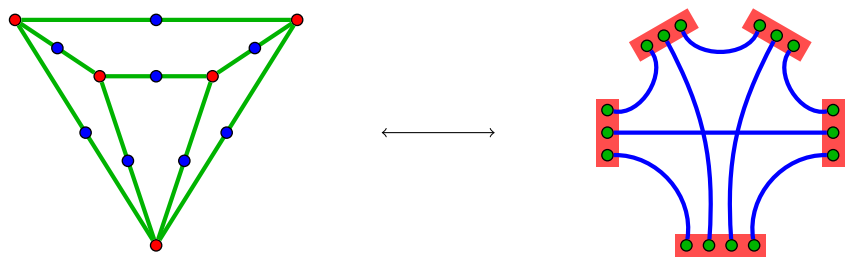
  
\end{remark}
From the Remark \ref{rem:explosion}, the following holds:
\begin{proposition}
  There exists a bijection between the set of spin foam atoms $\sfas$ and the set of stranded atoms $\mathfrak{S}$.
\end{proposition}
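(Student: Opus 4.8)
The plan is to factor the claimed bijection through the set $\bgs$ of boundary graphs, reducing the statement to a comparison of two elementary combinatorial operations. By Proposition~\ref{prop:bisectedcor} the bisection map $\beta\colon\bgs\to\bbgs$ is a bijection, and by Remark~\ref{rem:bdycor} so is the bulk map $\alpha\colon\bbgs\to\sfas$, with inverse the boundary map $\delta$; hence $\alpha\circ\beta\colon\bgs\to\sfas$ is a bijection with inverse $\beta^{-1}\circ\delta$ — this is exactly the content of Proposition~\ref{prop:bijectionAtom}. It therefore suffices to exhibit a bijection $\bgs\to\stas$ and then precompose it with $\beta^{-1}\circ\delta$.

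For the bijection $\bgs\to\stas$ I would use precisely the two operations of Remark~\ref{rem:explosion}: the \emph{explosion} map $\mathrm{ex}\colon\bgs\to\stas$, which replaces each edge $\bar e=(\bar v_1\bar v_2)\in\overline{\mathcal{E}}$ by two vertices joined by a strand and collects the vertices produced at a common endpoint $\bar v$ into a coil $c_{\bar v}$; and its reverse $\mathrm{co}\colon\stas\to\bgs$, which contracts each coil of $\sta=(\Ccal,\Rcal)$ to a single vertex and reinterprets each strand as an edge between the (possibly coinciding) resulting vertices. First I would check these are well defined. For $\mathrm{ex}$: the vertex set of $\mathrm{ex}(\bg)$ has cardinality $2|\overline{\mathcal{E}}|$, which is even; each of its vertices is by construction the endpoint of exactly one strand, so the reroutings saturate it; and a loop $(\bar v\bar v)$ produces two \emph{distinct} vertices inside the single coil $c_{\bar v}$, joined by a legitimate strand that is a retracing. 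So $\mathrm{ex}(\bg)\in\stas$. For $\mathrm{co}$: contracting coils yields a multigraph on the coil set, with a retracing becoming a loop and parallel strands becoming a multi-edge, all of which Definition~\ref{def:atomicbg} allows. The one point where the two conventions must be lined up is connectedness, which is imposed on $\bgs$ but not literally on $\stas$: the cleanest reading is that $\stas$ consists exactly of those stranded atoms whose coil-quotient is connected — equivalently, $\stas$ is by definition the image of $\mathrm{ex}$ — and with that understanding $\mathrm{co}$ does return an element of $\bgs$.

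Next I would verify $\mathrm{co}\circ\mathrm{ex}=\mathrm{id}_{\bgs}$ and $\mathrm{ex}\circ\mathrm{co}=\mathrm{id}_{\stas}$. In the first composition the coils of $\mathrm{ex}(\bg)$ are canonically indexed by $\overline{\mathcal{V}}$ and its strands by $\overline{\mathcal{E}}$, with a strand's two endpoints sitting in the coils indexed by the endpoints of the corresponding edge; contracting the coils therefore returns $(\overline{\mathcal{V}},\overline{\mathcal{E}})$ on the nose, loops and edge-multiplicities intact. In the second, exploding $\mathrm{co}(\sta)$ recreates for each edge — that is, for each original strand — two endpoint vertices in the two coils it was incident to, and these are canonically identified with the two original endpoints of that strand, so one recovers $(\Ccal,\Rcal)$ together with its coil partition. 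Composing, $\mathrm{ex}\circ\beta^{-1}\circ\delta\colon\sfas\to\stas$ is the asserted bijection.

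I expect no genuine difficulty here, only careful bookkeeping around degenerate features: confirming that an exploded loop gives a retracing rather than a self-strand (handled by the distinctness of its two endpoint vertices), and that the strand/edge correspondence tracks multiplicities rather than mere counts (it does, since $\overline{\mathcal{E}}$ and $\Rcal$ are honest indexing sets carrying the same data). The only item that genuinely requires a choice is the connectedness convention on $\stas$ noted above; once that is fixed, every remaining step is a direct identification.
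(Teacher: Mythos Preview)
Your argument is correct and follows exactly the route the paper takes: the paper simply asserts the proposition as an immediate consequence of Remark~\ref{rem:explosion}, whose explosion and coil--contraction operations you have spelled out in detail and verified to be mutually inverse, composed with the bijection $\sfas\cong\bgs$ of Proposition~\ref{prop:bijectionAtom}. Your observation about connectedness is a genuine point the paper glosses over --- the definition of stranded atom imposes no connectedness condition, so the bijection only holds under the reading you propose (restricting $\stas$ to those stranded atoms with connected coil--quotient); this is the natural convention and is clearly what the paper intends.
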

One can also bond stranded atoms to form stranded molecules. 
\begin{remark}[{\bf stranded counterparts}]
	The stranded counterparts of various objects take the form:
\begin{itemize}
  \item[--] A \emph{stranded patch} is a coil $c\subset\mathcal{C}$ along with retracings within that coil. 
  \item[--] Two stranded patches are \emph{bondable} if they have the same number of vertices and the same number of retracings.  Knowledge of the retracing are necessary to capture the loop information of a boundary patch. 
  \item[--]A \emph{stranded bonding map} identifies the vertices within two bondable stranded patches, with the compatibility condition that the vertices associated to a retracing in one patch are identified with the vertices associated to a retracing in the other. This is illustrated in Figure \ref{fig:StrandedBonding}.
	 
  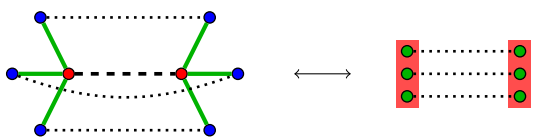
\begin{figure}[htb]
	  \centering
     \tikzsetnextfilename{strandedbonding}
      \begin{tikzpicture}[scale=1.5]
  %propagator
  \draw [<->] (1.5,0) -- (2,0);
%  \node [vb, label=180:1]	(1)	at (-.5,0) 	{};
%  \node [vb, label=0:2]	(2)	at (1,0)	{}
%    edge [eb, bend right=70]	node[vh, label=90:	$(12)^1$] {}	(1)
%    edge [eb]				node[vh, label=90:	$(12)^2$] {}	(1)
%    edge [eb, bend left=70]		node[vh, label=90:	$(12)^3$] {}	(1);
\node [vb]		(a)	at (-0.5,0)		{}; 
\node [vh]		(1)	at (-.75,.5)		{};
\node [vh]		(2)	at (-.75,-.5)	{}; 
\node [vh]		(3)	at (-1,0)		{}; 
\node [vb]		(b)	at (0.5,0)		{}; 
\node [vh]		(4)	at (.75,.5)		{};
\node [vh]		(5)	at (.75,-.5)		{}; 
\node [vh]		(6)	at (1,0)		{}; 
\foreach \i/\j in {a/1,a/2,a/3,b/4,b/5,b/6}{
 \draw [eb] (\i) --  (\j);
  }
\path	(1) 	edge [bh] 				(4)
	(2)	edge [bh]		 		(5)
	(3)	edge [bh, bend right=20]	(6)
	(a)	edge [bb]				(b);
  \begin{scope}[xshift=3cm]
  \foreach \i in {0,180}{
    \draw [cs, rotate=\i] (.6,-.3) rectangle (.4,.3);
    }
  \node [vs]	(11)	at (-.5,.2)	{};
  \node [vs]	(12)	at (-.5,0)	{};
  \node [vs]	(13)	at (-.5,-.2)	{};    
  \node [vs]	(21)	at (.5,.2)	{};
  \node [vs	]	(22)	at (.5,0)	{};
  \node [vs]	(23)	at (.5,-.2)	{};  
  \path
  \foreach \i in {1,2,3}{
    (1\i) edge [bh]  (2\i)
    };
  \end{scope}
  \end{tikzpicture}		
	  \caption{Stranded bonding map.}
	  \label{fig:StrandedBonding}
  \end{figure}

  \item[--] A \emph{stranded molecule} is a set of strand atoms quotiented by a set of stranded bonding maps, as drawn in Figure \ref{fig:StrandedMolecule}. 

	  \begin{figure}[htb]
		  \centering
		  \tikzsetnextfilename{strandedmolecule}
      \begin{tikzpicture}[scale=1.5]

\draw [|->] (4.5,0) -- node[label=above:$\sharp_{\gamma}$] {} (5.5,0);

% ---- stranded pyramid vertex 
\begin{scope}
\foreach \i in {0,60,120,180}{
\draw [cs, rotate=\i] (1.1,-.3) rectangle (.9,.3);
}
\draw [cs]	 (-.4,-1.1) rectangle (.4,-.9);
\node [vs]		(14)	at (-.35,.95)	{};
\node [vs]		(15)	at (-.5,.86)		{};
\node [vs]		(12)	at (-.65,.77)	{};
\node [vs]		(21)	at (-1,.2)		{};
\node [vs]		(23)	at (-1,0)		{};
\node [vs]		(25)	at (-1,-.2)		{};
\node [vs]		(52)	at (-.3,.-1)		{};
\node [vs]		(51)	at (-.1,-1)		{};
\node [vs]		(54)	at (.1,-1)		{};
\node [vs]		(53)	at (.3,-1)		{};
\node [vs]		(34)	at (1,.2)		{};
\node [vs]		(32)	at (1,0)		{};
\node [vs]		(35)	at (1,-.2)		{};
\node [vs]		(41)	at (.35,.95)	{};
\node [vs]		(45)	at (.5,.86)		{};
\node [vs]		(43)	at (.65,.77)	{};
\path
\foreach \i/\j in {14/41,21/12,43/34}{
  (\i) edge [es,bend right=70] (\j)
  }
\foreach \i/\j in {52/25,35/53}{
  (\i) edge [es,bend right=50] (\j)
  }
\foreach \i/\j in {51/15,45/54}{
  (\i) edge [es,bend right=16] (\j)
  }
(23) edge [es] (32);
\end{scope}

%stranded bonding
\begin{scope}[xshift=1.5cm]
%  \foreach \i in {0,180}{
%    \draw [cs, rotate=\i] (.6,-.3) rectangle (.4,.3);
%    }
  \node [vs]	(11)	at (-.5,.2)	{};
  \node [vs]	(12)	at (-.5,0)	{};
  \node [vs]	(13)	at (-.5,-.2)	{};    
  \node [vs]	(21)	at (.5,.2)	{};
  \node [vs	]	(22)	at (.5,0)	{};
  \node [vs]	(23)	at (.5,-.2)	{};  
  \path
  \foreach \i in {1,2,3}{
    (1\i) edge [bh]  (2\i)
    };
   \node at (0,.5) {$\gamma$} ;
\end{scope}

%stranded tetrahedron
\begin{scope}[xshift=3cm] 
  \foreach \i in {0,90,180,270}{
    \draw [cs, rotate=\i] (1.1,-.3) rectangle (.9,.3);
    }
  \node [vs]	(12)	at (1,.2)	{};
  \node [vs]	(13)	at (1,0)	{};
  \node [vs]	(14)	at (1,-.2)	{};    
  \node [vs]	(21)	at (.2,1)	{};
  \node [vs]	(24)	at (0,1)	{};
  \node [vs]	(23)	at (-.2,1)	{};    
  \node [vs]	(32)	at (-1,.2)	{};
  \node [vs]	(31)	at (-1,0)	{};
  \node [vs]	(34)	at (-1,-.2)	{};    
  \node [vs]	(41)	at (.2,-1)	{};
  \node [vs]	(42)	at (0,-1)	{};
  \node [vs]	(43)	at (-.2,-1)	{};    
  \path
  (13) edge [es] 	 (31)
  (24) edge [es]  (42)  
  \foreach \i/\j in {14/41,21/12,32/23,43/34}{
  (\i) edge [es,bend right=50]  (\j)
  };
  \end{scope}
  
  % ---- stranded pyramid vertex 
\begin{scope}[xshift=7cm]
\foreach \i in {60,120,180}{
\draw [cs, rotate=\i] (1.1,-.3) rectangle (.9,.3);
}
\draw [cs]	 (-.4,-1.1) rectangle (.4,-.9);
\node [vs]		(14)	at (-.35,.95)	{};
\node [vs]		(15)	at (-.5,.86)		{};
\node [vs]		(12)	at (-.65,.77)	{};
\node [vs]		(21)	at (-1,.2)		{};
\node [vs]		(23)	at (-1,0)		{};
\node [vs]		(25)	at (-1,-.2)		{};
\node [vs]		(52)	at (-.3,.-1)		{};
\node [vs]		(51)	at (-.1,-1)		{};
\node [vs]		(54)	at (.1,-1)		{};
\node [vs]		(53)	at (.3,-1)		{};
\node [vs]		(34)	at (1,.2)		{};
\node [vs]		(32)	at (1,0)		{};
\node [vs]		(35)	at (1,-.2)		{};
\node [vs]		(41)	at (.35,.95)	{};
\node [vs]		(45)	at (.5,.86)		{};
\node [vs]		(43)	at (.65,.77)	{};
\path
\foreach \i/\j in {14/41,21/12,43/34}{
  (\i) edge [es,bend right=70] (\j)
  }
\foreach \i/\j in {52/25,35/53}{
  (\i) edge [es,bend right=50] (\j)
  }
\foreach \i/\j in {51/15,45/54}{
  (\i) edge [es,bend right=16] (\j)
  }
(23) edge [es] (32);
\end{scope}

%stranded tetrahedron
\begin{scope}[xshift=9cm] 
  \foreach \i in {0,90,180,270}{
    \draw [cs, rotate=\i] (1.1,-.3) rectangle (.9,.3);
    }
  \node [vs]	(12)	at (1,.2)	{};
  \node [vs]	(13)	at (1,0)	{};
  \node [vs]	(14)	at (1,-.2)	{};    
  \node [vs]	(21)	at (.2,1)	{};
  \node [vs]	(24)	at (0,1)	{};
  \node [vs]	(23)	at (-.2,1)	{};    
  \node [vs]	(32)	at (-1,.2)	{};
  \node [vs]	(31)	at (-1,0)	{};
  \node [vs]	(34)	at (-1,-.2)	{};    
  \node [vs]	(41)	at (.2,-1)	{};
  \node [vs]	(42)	at (0,-1)	{};
  \node [vs]	(43)	at (-.2,-1)	{};    
  \path
  (13) edge [es] 	 (31)
  (24) edge [es]  (42)  
  \foreach \i/\j in {14/41,21/12,32/23,43/34}{
  (\i) edge [es,bend right=50]  (\j)
  };
  \end{scope}

\end{tikzpicture}
		  \caption{Stranded molecule.}
		  \label{fig:StrandedMolecule}
	  \end{figure}
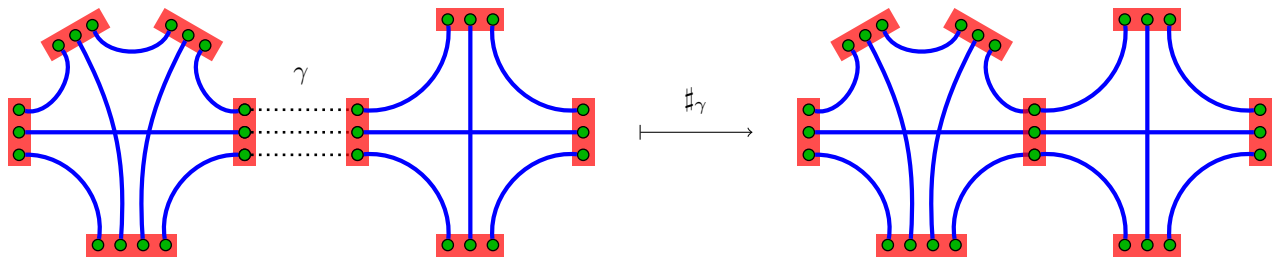

\end{itemize}
\end{remark}
One can translate the concepts such as labelled, loopless, simplicial to the stranded diagram realization. This is left to the interested reader since these structures are not extensively used in the remaining sections.  Having said that, we should also mention that stranded graphs are a natural and powerful tool in the \gft\ formalism.
One particular advantage of stranded diagrams as compared to bondings of boundary graphs is that the full internal bonding structure, including the ordering of bondings of faces along patches, is represented in these diagrams in terms of the strands. This is not possible in bondings of boundary graphs.

%%%%%%%%%%%%%%%%%%%%%%%%%%%%%%%%%%%%%%%%%%%%%%%%%%%%%%%%%%%%%%%%%%%%%%%%%%%%%%%%%%%%

\newpage

\section{Group field theories: generating spin foam molecules}
\label{sec:gft}

\renewcommand{\pgft}{\textsc{s-gft}}

Having laid the combinatorial foundations, let us now turn to our main goal: 
\begin{center}
  defining a \gft\ framework that can accommodate, both kinematically and dynamically,  all the states and histories that one might expect to appear in loop quantum gravity.
\end{center}
The route is divided into three parts.  First, we shall summarize some generalities of the \gft\ set--up, with respect to its definition as a quantum field theory generating spin foam molecules.  This will clarify how the graphs supporting \lqg\ states, as well as the complexes supporting spin foam amplitudes, appear in this context. 

Next, we shall outline the class of \gft\ models that are standard in the literature. These are based on a single field and generate series catalogued by a specific subset of the unlabelled spin foam molecules $\sfrs$.  Via the interpretation given in Section \ref{sssec:enhancing}, these are associated to $n$--dimensional simplicial structures.

Finally, we shall generalize the \gft\ framework to incorporate broader classes of models. 
There are two main avenues to follow:
\begin{itemize}
  \item[i)] One can stick with unlabelled structures but attempt to directly generate (larger subsets of) $\sfrs$. In this context, the first generalization is effected simply by broadening the type of interaction terms in the theory while keeping a single field. Such models are already common in the \gft\ literature. \cite{Bonzom:2012bg,\cgftrenorm,\cCOR}

    The second generalization involves passing from a single--field to multi--field group field theory.  In this manner, one can generate all of $\sfrs$, albeit in a rather formal manner, with an infinite set of \gft\ fields.  
    
  \item[ii)]
    One moves over to labelled structures, which permit a much simpler class of \gft s, based on a single \gft\ field over a larger data domain. This data domain, inspired by a standard technique in tensor models known as dual--weighting, allows one to generate dynamically the spin foam molecules in $\sfrst_{n,\Simplicialdually}$.  Drawing upon the results of Section \ref{sssec:primitivea}, one has encoded the molecules in $\sfrs$, at least at the combinatorial level.  This sets the scene for Section \ref{sec:sfm}, where we devise a class of \gft\ models that generate weights for the molecules in $\sfrs$ and that effectively assign to the underlying molecules $\sfrs$ the amplitude expected by the 4d \eprl\  quantum gravity spin foam theory. 
    \end{itemize}
The nomenclature and definitions introduced in the previous section will be used extensively in the following.

\subsection{{\sc gft} generalities}
\label{sec:gftmain}

Let us first recount the general definitions and structures of \gft s, as one finds them in the literature \cite{\cGFT}. 
\begin{defin}[{\bf group field}]
  A \emph{group field}, $\phi$, is a function over a group:
  \begin{equation}
    \label{eq:field}
    \phi:G^{\times\copies}\longrightarrow\Rbb\;,
  \end{equation}
  where $G$ is a group, while $n\in\Nbb$. 
\end{defin}
\begin{defin}[{\bf group field theory}]
  \label{def:gft}
A \emph{group field theory} is a quantum field theory for a group field, defined by a partition function: 
\begin{equation}
 \label{eq:partition}
Z_{\gft} = \int \Dcal\phi\; e^{-S[\phi]}\;,
\end{equation}
where $\Dcal\phi$ denotes a (formal) measure on the space of group fields, while the action functional takes the form:
\begin{equation}
   S[\phi] = \frac12\int [\extd g]\; \phi(g_1)\;\Kbb(g_1, g_2)\;\phi(g_2) + \sum_{i\in I}\lambda_i \int [\extd g]\;\Vbb_i\big(\{g_j\}_{J_i}\big)\;\prod_{j\in J_i}\phi(g_{j})\;.
 \label{eq:action}
\end{equation}
$\Kbb$ is the kinetic kernel, $\Vbb_i$ are vertex (interaction) kernels satisfying \emph{combinatorial non--locality},  while $I$ and $J_i$ are finite sets indexing the interactions and the number of fields in the $i$th interaction, respectively.  Meanwhile, $[\extd g]$ represents the appropriate number of copies of the measure on $G$ and $\{\lambda_i\}_{I}$ is the set of coupling constants.\footnote{There is an analogous set of actions for complex group fields and of course, one can define models involving several such fields.}  

\end{defin}
\begin{remark}[{\bf kinetic kernel}]
  The kinetic kernel is a real function with domain $G^{\times 2n}$ that (in some model dependent manner) pairs arguments according to $(g_{1a},g_{2a})$ with $a \in \{1,\dots, n\}$:
  \begin{equation}
    \Kbb(g_1, g_2)  = \Kbb(g_{11},g_{21};\dots;g_{1n},g_{2n})
  \end{equation}
\end{remark}
\begin{remark}[{\bf vertex kernels and combinatorial non--locality}]
  \label{rem:CombinNonLocal}
  {\it Combinatorial non--locality} is a property possessed by \gft\ interaction kernels, effected through  pairwise convolution of the field arguments.  It is the main peculiarity of \gft s with respect to local quantum field theories on space--time. In more detail, the \gft\ interaction kernels do not impose coincidence of the points,  in the group space $G^{\times\copies}$,  at which the interaction fields are evaluated.  Rather, the totality of field arguments from the smaller group space $G$ occurring in a given action term (that is $\copies\times|J|$ for an interaction term with $|J|$ group fields) is partitioned into pairs and the kernels convolve such pairs: 
  \begin{equation}
  \Vbb\big(\{g_j\}_J\big)  = \Vbb\big( \{g_{ja}g_{kb}^{-1}\}\big)
  \end{equation}
  where $j,k\in J$, $a,b\in\{1,\dots,\copies\}$ and $(ja, kb)$ is an element of the pairwise partition of the set $J\times\{1,\dots,\copies\}$. 
  The specific combinatorial pattern of such pairings determines the combinatorial structure of the Feynman diagrams of the theory.  It will be one of the main foci in later discussions, %which details such patterns 
both in the standard \gft\ models and, later on, in the generalized class of models.
\end{remark}

%Such a theory can be considered as a specific class of tensor field theories which generalize matrix models, i.e. tensors with indices in a Lie group $G$. 
%On the other hand, the quantum theory has another meaning as a second quantization of \lqg\ \cite{Oriti:2013vv} and as sum over spin foams \cite{Pietri:2000dt, Reisenberger:2000ue}.
Besides this combinatorial peculiarity, one deals with \gft s as one would any other QFT; the main features follow.

\begin{defin}[{\bf quantum observables}]
  \label{def:QuantumObservables}
  \emph{(Quantum) observables}, $O[\phi]$,  are functionals of the group field. 
\end{defin} 
  In particular, the kinetic and interaction terms are quantum observables. Due to their functional form, they motivate interest in a subset of polynomial functionals of the field:
  \begin{defin}[{\bf \trace\ observables}] A \emph{\trace\ observable} is a polynomial functional of the group field that satisfies combinatorial non--locality (since all group elements are traced over pairwise).  Thus, they have the generic form:
\begin{equation}
  \label{eq:obsdef}
  O[\phi] \equiv \int [dg]\; \bbB\big(\{g_j\}_{J}\big)\; \prod_{j\in J} \phi(g_j)\;,\quad\quad\textrm{where}
  \quad\quad \bbB\big(\{g_{j}\}_{J}\big) = \bbB\big(\{g_{ja}g_{kb}^{-1}\}\big)
\end{equation}
and $(ja,kb)$ is an element of the pairwise partition of the set $J\times \{1,\dots,n\}$.
\end{defin}

\begin{remark}[{\bf estimating observables}]
  Expectation values of quantum observables are estimated using perturbative techniques.  For example, the observable $O[\phi]$, expanded with respect to the coupling constants $\{\lambda_i\}_I$, leads to a series of Gaussian integrals evaluated through Wick contraction.  The patterns of contractions are catalogued by Feynman diagrams:
\begin{eqnarray}
  \langle O\rangle_{\gft} &=& \frac{1}{Z_{\gft}}\int \Dcal\phi\; O[\phi]\;e^{-S[\phi]} \nonumber \\
  &=& \frac{1}{Z_{\gft}} \int \Dcal\phi\;O[\phi] \sum_{\{c_i\}_I} \prod_{i\in I}\frac{1}{c_i!}\Bigg[ \lambda_i \int [\extd g]\;\Vbb_i\Big(\{g_j\}_{J_i}\Big)\;\prod_{j\in J_i}\phi(g_{j})\Bigg]^{c_i} e^{-\frac12\int[\extd g]\;\phi(g_1)\;\Kbb(g_1,g_2)\;\phi(g_2)}\nonumber\\
&=& \sum_{\fdiagram} \frac{1}{C(\fdiagram)} A(\fdiagram;\{\lambda_i\}_I)\;,
\label{eq:expand}
\end{eqnarray}
where $C(\fdiagram)$ are the combinatorial factors related to the automorphism group of the Feynman diagram $\fdiagram$  and $A(\fdiagram;\{\lambda_i\}_I)$ is the weight of $\fdiagram$ in the series. The Feynman amplitudes $A(\fdiagram)$ are constructed by convolving (in group space) propagators $\Pbb = \Kbb^{-1}$ and interaction kernels. In this section, however, the focus lies solely on the combinatorial aspects of the \gft\ perturbative expansion. Discussion of specific models is postponed to Section \ref{sec:sfm}. 
\end{remark}
\begin{remark}[{\bf stranded diagrams}]
  The stranded diagram representation of the Feynman diagrams $\fdiagram$ is immediate. With reference to Section \ref{ssec:stranded},  one associates a coil $c$,  with $\copies$ vertices to each field $\phi$.
  
  In an interaction term, the fields represent a set of coils $\Ccal$, while the combinatorial non--locality property of the interaction kernel encodes the set of reroutings $\Rcal$.  Thus, each interaction term represents a stranded atom $\sta = (\Ccal, \Rcal)$. 

The kinetic term, through its involvement in the Wick contraction, is responsible for the bonding of these stranded atoms. Then, the perturbative expansion is quite clearly catalogued by stranded molecules. 

Through the bijection outlined in Section \ref{ssec:stranded}, one could now map to spin foam atoms and molecules. % However, we shall describe this in detail from scratch in a moment.
\end{remark}
\begin{remark}[{\bf quantum geometric interpretation}]
  In Section \ref{sec:sfm}, we shall concentrate our attention on the \eprl\ quantum gravity \gft.
  However, we provide some interpretation here for \gft s as models of quantum or random geometry.  The components of a \gft\  have already been understood  in terms of topological structures, primarily in two dimensions, but also secondarily in $D$ dimensions (although this enhancement is a subtle issue about which we have made some comments in Section \ref{sssec:enhancing}).  
  
  Keeping to $D$--dimensional language, the group fields correspond to $(D-1)$--dimensional building blocks of $(D-1)$--dimensional topological structures, the \trace\ observables. In a similar manner, the interaction terms in the action correspond to the $D$--dimensional building blocks for $D$--dimensional topological structures cataloguing the terms of the perturbative expansions. 

  Then, the estimating of observables $\langle O_{1}\dots O_{l}\rangle$ via perturbative expansion, yields a sum over $D$--dimensional topological structures, whose boundaries are precisely the $l$ $(D-1)$--dimensional structures encoded by observables. In other words, one is calculating the correlation of the $l$ $(D-1)$--dimensional structures. 

The intention of both the data contained in the group $G$ and the kernels (boundary $\bbB$, kinetic $\Kbb$ and interaction $\Vbb$) is to transform all these topological statements above into quantum geometrical ones.  More precisely, using results from loop quantum gravity, as well as lattice quantum gravity, depending on the precise realization of the data set, it may be interpreted as one of the following: the discrete gravitational connection; the discrete fluxes of the conjugate triad; or the eigenvalues of fundamental quantum geometric operators like areas and volumes. 
\end{remark}

\subsection{Combinatorial correspondence}
\label{ssec:CombCorr}
Let us recast this \gft\ formalism in terms of the combinatorial structures detailed in Section \ref{sec:comb}:  
  \begin{itemize}
    \item[--] The set of group fields is indexed by the set of patches:
      \begin{equation}
	\Phi = \{\phi_{\bp}\}_{\bps}\;,\quad\quad\textrm{where}\quad\quad \phi_{\bp}:G^{\times |\mathcal{E}_{\bp}|} \longrightarrow \Rbb\;,
      \end{equation}
      and $\bp = (\{\vbar\}\cup\Vhat_{\bp},\mathcal{E}_{\bp})$. 
\item[--]  The set of \trace\ observables is indexed by the set of bisected boundary graphs:
  \begin{equation}
    \Ocal = \{O_{\bbg}\}_{\bbgs}\;,\quad\quad\textrm{where}\quad\quad O_{\bbg}[\Phi] = \int [\extd g]\; \bbB_{\bbg}\big(\{g_{\vbar}\}_{\Vbar}\big)\prod_{\vbar\in\Vbar} \phi_{\bp_{\vbar}}(g_{\vbar})
  \end{equation}
  and $\bbg = (\Vbar\cup\Vhat, \mathcal{E}_{\bbg})$.  The patches of $\bbg$ are in correspondence with vertices of $\Vbar$ and one has that $g_{\vbar} = \{g_{\vbar\vhat}: (\vbar\vhat)\in\mathcal{E}_{\bbg}\}$. Combinatorial non--locality is realized using the bisecting vertices $\Vhat$. Each such vertex has a pair of incident edges and thus they encode a pairwise partition of the data set $\{g_{\vbar}\}_{\Vbar}$. Conversely, a pairwise partition of this data set determines a graph $\bbg$.   
  Thus, the graphs in $\bbgs$ catalogue the combinatorially non--local configurations.
  % To see that a graph $\bbg$ determines a pairwise partition of the field arguments $\{g_{\vbar}\}_{\Vbar}$ and vice versa, note that the  

\item[--] Likewise, the set of vertex interactions is indexed by $\bbgs$: 
  \begin{equation}
    \lambda_{\bbg}\int [\extd g]\; \Vbb_{\bbg}\big(\{g_{\vbar}\}_{\Vbar}\big)\prod_{\vbar\in\Vbar} \phi_{\bp_{\vbar}}(g_{\vbar})\;.
  \end{equation}
  As a result of the bijection in Proposition \ref{prop:bijectionAtom}, the interaction terms can be interpreted as generating spin foam atoms $\sfa = \alpha(\bbg)$.

\item[--]
  The kinetic term, through its role in the Wick contractions occurring in later perturbative expansions,  is responsible for the bonding of patches compatible according to the compatibility condition of Definition \ref{def:bonding},
$\bp\equiv\bp_1\cong\bp_2$:  
  \begin{equation}
  \label{eq:kinetic}
    \frac12\int[\extd g]\; \phi_{\bp}(g_{\vbar_1})\;\Kbb_{\bp}(g_{\vbar_1}, g_{\vbar_2})\;\phi_{\bp}(g_{\vbar_2})\;,\quad\quad\textrm{where}\quad\quad \Kbb_{\bp}(g_{\vbar_1},g_{\vbar_2}) = \Kbb(\{g_{\vbar_1\vhat}, g_{\vbar_2\vhat} \})\;
  \end{equation}
  is a function of group elements for each $(\vbar_i\vhat)\in\mathcal{E}_{\bp}$.
%  The pairing of the edges in the kinetic kernel identifies the edges of the patch and vertex identification follows by the compatibility condition of Definition \ref{def:bonding}. 
\item[--]
  Then, generic models are defined via:
  \begin{equation}
    Z = \int \Dcal\Phi \;e^{-S[\Phi]}
  \end{equation}
  with:
  \begin{equation}
    S[\Phi] = \frac12\int[\extd g]\; \phi_{\bp}(g_{\vbar_1})\;\Kbb_{\bp}(g_{\vbar_1}, g_{\vbar_2})\;\phi_{\bp}(g_{\vbar_2}) + 
    \sum_{\bbg\in\bbgs}\lambda_{\bbg}\int [\extd g]\; \Vbb_{\bbg}\big(\{g_{\vbar}\}_{\Vbar}\big)\prod_{\vbar\in\Vbar} \phi_{\bp_{\vbar}}(g_{\vbar})\;.
  \end{equation}
  
\item[--]
Sums and products of \trace\ observables can be estimated perturbatively, generating series of the type:
\begin{eqnarray}
  \langle O_{\bbg_1}\dots O_{\bbg_l}\rangle &=& \frac{1}{Z}\int \Dcal\Phi\; O_{\bbg_1}[\Phi]\dots O_{\bbg_l}[\Phi]\;e^{-S[\Phi]} \nonumber \\
 % &=& \frac{1}{Z_{\gft}} \int \Dcal\phi\;O[\phi] \sum_{\{c_i\}_I} \prod_{i\in I}\frac{1}{c_i!}\Bigg[ \lambda_i \int [\extd g]\;\Vbb_i\Big(\{g_j\}_{J_i}\Big)\;\prod_{j\in J_i}\phi(g_{j})\Bigg]^{c_i} e^{-\frac12\int[\extd g]\;\phi(g_1)\;\Kbb(g_1,g_2)\;\phi(g_2)}\nonumber\\
  &=& \sum_{\substack{\sfr\in\sfrs\\[0.1cm] \delta(\sfr) = \sqcup_{i = 1}^{l}\bbg_{i}}} \frac{1}{C(\sfr)} A(\sfr;\{\lambda_{\bbg}\}_{\bbgs})\;.
\end{eqnarray}
Thus, the Feynman diagrams generated by \gft s are actually better characterized as spin foam molecules. 
\end{itemize}
Using the above index, one can catalogue the generalized classes of \gft\ models that make contact with the set of spin foam molecules $\sfrs$. This will be done in later sections.

\begin{remark}[{\bf generalization and control}]
  It is worth noting some motivations for considering such generalized \gft\ models: 
  \begin{itemize}
    \item[--]
As one can see above, there is no technical obstacle whatsoever, within the \gft\ formalism,  to passing from a single--field \gft\ to a multi--field \gft\ (indexed by some set of patches) and/or stimulating new interaction terms (indexed by some set of bisected boundary graphs).  Such choices generate broader classes of spin foam molecules, as one might wish from an \lqg\ perspective. 
  
Given the facility with which such generalized \gft s are defined, a real issue is rather to pinpoint some criterion, for selecting one model over another. Other important issues centre on settling  \textit{i}) whether or not one is able to control analytically or numerically the dynamics of such generalized \gft s  and  \textit{ii}) whether or not such control is improved by one choice of combinatorics over another. Indeed, these issues should also be posed from the spin foam perspective. 

A common choice in the spin foam and \gft\ literature is to restrict to spin foam atoms and molecules with a $D$--dimensional simplicial interpretation.  This choice could be motivated as being more  \lq fundamental\rq, in the sense that one can triangulate more general complexes but not vice versa, and as being simpler than other alternatives.

\item[--]
  Moreover, generalized \gft s already exist in the literature. Indeed, so--called \emph{invariant tensor models}, which are in essence single--field \gft s with a specific subset of generalized interactions \cite{Bonzom:2012bg}, have been the setting for most studies on \gft\ renormalization \cite{\cgftrenorm, \cCOR} and for analysis using tensor model techniques \cite{Gurau:2012hl}. 

%These models give Feynman diagrams which are dual to 2-skeleta of face-simplicial combinatorial complexes, that is $D$-complexes whose $D$-cells have simplicial boundaries but are for the rest arbitrary \cite{uncoloring}. Moreover, such cells are dual to so-called bubbles of the standard simplicial \gft s whose contribution to the Feynman amplitudes is crucial for the scaling and renormalization properties of the same \cite{uncoloring}.  

\item[--]
Finally, even in models starting with simplicial interactions only, one should expect the quantum dynamics to generate new effective interactions with generalized combinatorics. In turn, these new interaction terms should then be taken into account in the renormalization flow of the simplicial models. Again, the issue is not whether such combinatorial generalizations can be considered, but how one should deal with them in the quantum dynamics of the theory.
\end{itemize}
\end{remark}

\subsection{Simplicial {\sc gft}}
\label{ssec:simplicialgft}
For a moment, let us focus on the \gft\ corresponding to the unlabelled, $n$--regular, simplicial structures: $\bbgs_{n,\Simplicial}$, $\sfas_{n,\Simplicial}$ and $\sfrs_{n,\Simplicial}$ from Section \ref{sssec:special}. As shown in Section \ref{sssec:enhancing}, such structures have a simplicial interpretation. They correspond to a particularly simple choice of combinatorics for the \gft\ action and represent a class of models that are by far the most used in the quantum gravity literature. 

The parameter $n$ is set to the dimension $D$ of the space--time to be reconstructed via the \gft\ dynamics. 
\begin{itemize}
  \item[--] The group field corresponds to the unique unlabelled %primitive
   $D$--patch $\bp_D$:
    \begin{equation}
      \phi\equiv\phi_{\bp_D}:G^D \longrightarrow \Rbb\;.
    \end{equation}
  \item[--] The pairing of field arguments in the interaction kernel is based upon the unique unlabelled simplicial $D$--graph $\bbg\in\bbgs_{D,\Simplicial}$ (that is, $K_{D+1}$, the complete graph over $D+1$ vertices), which allows one to abbreviate notation:
    \begin{equation}
      \label{eq:SimplicialVertex}
      \Vbb_{\bbg}(g) =  \Vbb(\{g_{ij}g_{ji}^{-1}\})\;, \quad\quad \textrm{with}\quad\quad i<j\;.
    \end{equation}
    Henceforth, when dealing with graphs based upon $K_{D+1}$, the markers $i,j\in\{1,\dots,D+1\}$ index the $D+1$ vertices $\Vbar$ and thus the patches of $\bbg$. The bisecting vertices are labelled by $(ij)$.\footnote{The parenthesis signifies that both $(ij)$ and $(ji)$ mark the same bisecting vertex.}  The edge joining the vertex $i$ to the vertex $(ij)$ is denoted by $ij$, while the edge joining the vertex $j$ to the vertex $(ij)$ is denoted by $ji$.    

  \item[--] In the kinetic kernel, the data indices are abbreviated to $g_1\equiv g_{\vbar_1}$ and $g_2\equiv g_{\vbar_2}$. 

  \item[--] The action is therefore specified by:
    \begin{equation}
      \label{eq:SimplicialAction}
S[\phi] = \frac12\int [\extd g]\; \phi(g_1)\;\Kbb(g_1, g_2)\;\phi(g_2) 
+\lambda \int [\extd g]\; \Vbb_{\bbg}(g) \prod_{j=1}^{D+1}\phi(g_j).
    \end{equation}
    up to the precise form of the kinetic and interaction kernels. 
\item[--] There is a distinguished subclass of \trace\ observables indexed by $\bbgs_{D,\loopless}$, the unlabelled $D$--regular loopless graphs. This stems from the property that each graph in $\bbgs_{D,\loopless}$ arises as the boundary of some spin foam molecules in $\sfrs_{D,\Simplicial}$, while the boundary of every spin foam molecule in $\sfrs_{D,\Simplicial}$ is a collection of graphs in $\bbgs_{D,\loopless}$.
\item[--] The perturbative expansion of the partition function (the \gft\ vacuum expectation value) leads to a series catalogued by saturated spin foam molecules $\sfr\in\sfrs_{D,\Simplicial}$, $\delta(\sfr) = \emptyset$. Meanwhile, the evaluation of a generic observable $O_{\bbg}[\phi]$ leads to a series catalogued by spin foam molecules with boundary $\bbg$, that is: $\sfr\in\sfrs_{D,\Simplicial}$ with $\delta(\sfr) = \bbg$. 
\end{itemize} 

The combinatorics of the propagator and the simplicial vertex kernel are illustrated in Figures \ref{fig:gftgraphs} and \ref{fig:dualtet} in the 3--dimensional case. Therein is drawn both the bisected boundary graph realization, alongside the usual stranded diagram representation.

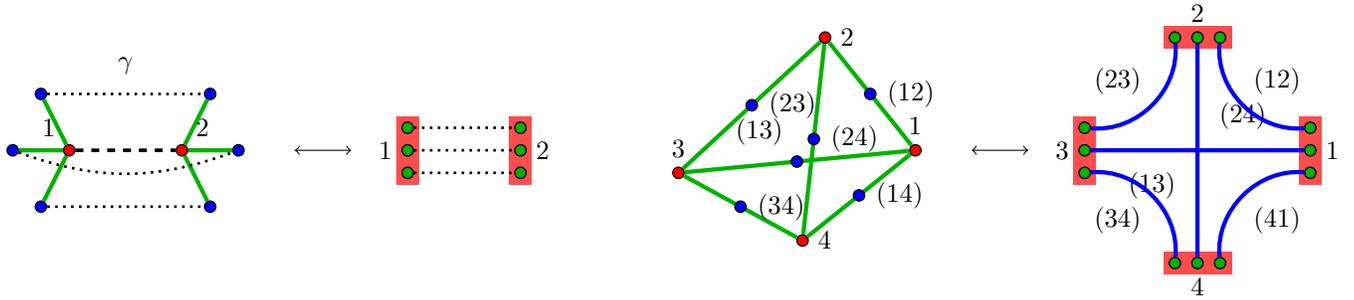
\begin{figure}[htb]
  \centering
  \tikzsetnextfilename{gftgraphs}
  \begin{tikzpicture}[scale=1.5]
  %propagator
  \draw [<->] (1.5,0) -- (2,0);
%  \node [vb, label=180:1]	(1)	at (-.5,0) 	{};
%  \node [vb, label=0:2]	(2)	at (1,0)	{}
%    edge [eb, bend right=70]	node[vh, label=90:	$(12)^1$] {}	(1)
%    edge [eb]				node[vh, label=90:	$(12)^2$] {}	(1)
%    edge [eb, bend left=70]		node[vh, label=90:	$(12)^3$] {}	(1);
\node [vb, label=135:1]	(a)	at (-0.5,0)		{}; 
\node [vh]		(1)	at (-.75,.5)		{};
\node [vh]		(2)	at (-.75,-.5)	{}; 
\node [vh]		(3)	at (-1,0)		{}; 
\node [vb, label=45:2]	(b)	at (0.5,0)		{}; 
\node [vh]		(4)	at (.75,.5)		{};
\node [vh]		(5)	at (.75,-.5)		{}; 
\node [vh]		(6)	at (1,0)		{}; 
\foreach \i/\j in {a/1,a/2,a/3,b/4,b/5,b/6}{
 \draw [eb] (\i) --  (\j);
  }
\path	(1) 	edge [bh] node[label=above:$\gamma$] {} (4)
	(2)	edge [bh]		 		(5)
	(3)	edge [bh, bend right=20]	(6)
	(a)	edge [bb]				(b);
  \begin{scope}[xshift=3cm]
  \foreach \i in {0,180}{
    \draw [cs, rotate=\i] (.6,-.3) rectangle (.4,.3);
    }
  \node [vs]			(11)	at (-.5,.2)	{};
  \node [vs, label=180:1]	(12)	at (-.5,0)	{};
  \node [vs]			(13)	at (-.5,-.2)	{};    
  \node [vs]			(21)	at (.5,.2)	{};
  \node [vs, label=0:2]	(22)	at (.5,0)	{};
  \node [vs]			(23)	at (.5,-.2)	{};  
  \path
  \foreach \i in {1,2,3}{
    (1\i) edge [bh]  (2\i)
    };
  \end{scope}

 %interaction   
  \begin{scope}[xshift=6cm]
    \draw [<->] (1.5,0) -- (2,0);
  
  \node [vb, label=90:3]	(3)	at (-1.1,-.2){};
  \node [vb, label=0:4]	(4)	at (0,-.8)	{};
  \node [vb, label=90:1]	(1)	at (1,0) 	{};
  \node [vb, label=-0:2]	(2)	at (.2,1)	{};
  \foreach \i/\j in {1/2,2/4,3/4,2/3,1/4}{
    \draw [eb] (\i) -- node[vh, label=0:(\i\j)] {} (\j);
    }
  \draw [eb] (1) -- node[vh, label=120:(13)] {} (3);
  \end{scope}  
  \begin{scope}[xshift=9.5cm] 
  \foreach \i in {0,90,180,270}{
    \draw [cs, rotate=\i] (1.1,-.3) rectangle (.9,.3);
    }
  \node [vs]			(12)	at (1,.2)	{};
  \node [vs, label=0:1]	(13)	at (1,0)	{};
  \node [vs]			(14)	at (1,-.2)	{};    
  \node [vs]			(21)	at (.2,1)	{};
  \node [vs, label=90:2]	(24)	at (0,1)	{};
  \node [vs]			(23)	at (-.2,1)	{};    
  \node [vs]		 	(32)	at (-1,.2)	{};
  \node [vs, label=180:3]	(31)	at (-1,0)	{};
  \node [vs]			(34)	at (-1,-.2)	{};    
  \node [vs]			(41)	at (.2,-1)	{};
  \node [vs, label=270:4]	(42)	at (0,-1)	{};
  \node [vs]			(43)	at (-.2,-1)	{};    
  \path
  (13) edge [es] 	node[label=225:(13)]  {}	 (31)
  (24) edge [es] 	node[label=45:(24)]  {}	 (42)  
  \foreach \i/\j in {14/41,21/12,32/23,43/34}{
  (\i) edge [es,bend right=50] node [auto] {(\j)} (\j)
  };
  \end{scope}
  \end{tikzpicture} 
  \caption{\label{fig:gftgraphs}Equivalent representation of combinatorics of propagator and simplicial interaction for a $D=3$ \gft\ in terms of bisected boundary graphs and in terms of the common stranded diagrams.}
\end{figure}

\begin{figure}[htb]
  \centering
  \tikzsetnextfilename{dualtet}
  \begin{tikzpicture}[scale=2]
 \draw [<->] (1.2,0) -- (1.7,0);
 \draw [<->] (-1.2,0) -- (-1.7,0);

  \begin{scope}[xshift=-3cm]
   \node [vb, label=90:3]	(3)	at (-1.1,-.2){};
  \node [vb, label=0:4]	(4)	at (0,-.8)	{};
  \node [vb, label=90:1]	(1)	at (1,0) 	{};
  \node [vb, label=-0:2]	(2)	at (.2,1)	{};
  \foreach \i/\j in {1/2,2/4,3/4,2/3,1/4}{
    \draw [eb] (\i) -- node[vh, label=0:(\i\j)] {} (\j);
    }
  \draw [eb] (1) -- node[vh, label=120:(13)] {} (3);
  \end{scope}

\begin{scope}
\node [c]					(v)	at (0,-.19)		{};
\node [c,label=0:1]			(1)	at (.5,0)	 	{}; 
\node [c,label=90:2]			(2)	at (-.12,.12)	{}; 
\node [c,label=180:3]		(3)	at (-.32,-.14)	{}; 
\node [c,label=-90:4]			(4)	at (0,-.84)		{};
\node [c,label=0:(12)]		(12)	at (.55,.55)	{};
\node [c,label=90:(13)]		(13)	at (.21,0)		{};
\node [c,label=0:(14)]		(14)	at (.77,-.83)	{};
\node [c,label=135:(23)]		(23)	at (-.73,.31)	{};
\node [c%,label=-90:(24)
				] 		(24)	at (-.17,-.51)	{};
\node [c,label=200:(34)]		(34)	at (-.5,-1.06)	{};
\foreach \i/\j in {1/2,1/3,1/4,2/3,2/4,3/4}{
 \path	[f] 	(\i) -- (\i\j) -- (\j) -- (v) -- cycle;
 }
 \foreach \i in {1,2,3,4}{
  \draw [e] (\i) -- (v);
  }
\foreach \i/\j in {1/2,1/3,1/4,2/3,2/4,3/4}{
 \draw 	[eh]	(v)		-- 	(\i\j);
 \draw	[eb] 	(\i) node[vb] {} -- (\i\j) node[vh] {};
 \draw 	[eb] 	(\j) node[vb] {} -- (\i\j) node[vh] {};
 }
\draw [e] (3) node[vb] {} -- (v) node[v] {};

\node [c,label=-90:(24)] 		(24)	at (-.17,-.51)	{};
%\draw (0,1.35)		-- (.43,-1.37);
%\draw (0,1.35)		-- (1.11,-.27);
%\draw (0,1.35)		-- (-1.44,-.73);
%\draw (.43,-1.37)	-- (1.11,-.27);
%\draw (.43,-1.37)	-- (-1.44,-.73);
%\draw (1.11,-.27)	-- (-1.44,-.73);
\end{scope}

\begin{scope}[xshift=3.cm]
\node [c]					(v)	at (0,-.19)		{};
\node [c,label=0:1]			(1)	at (.5,0)	 	{}; 
\node [c,label=90:2]			(2)	at (-.12,.12)	{}; 
\node [c,label=180:3]		(3)	at (-.32,-.14)	{}; 
\node [c,label=-90:4]			(4)	at (0,-.84)		{};
\node [c,label=0:(12)]		(12)	at (.55,.55)	{};
\node [c%,label=90:(13)
				]		(13)	at (.21,0)		{};
\node [c,label=0:(14)]		(14)	at (.77,-.83)	{};
\node [c,label=135:(23)]		(23)	at (-.73,.31)	{};
\node [c%,label=-90:(24)
				] 		(24)	at (-.17,-.51)	{};
\node [c,label=200:(34)]		(34)	at (-.5,-1.06)	{};
\foreach \i/\j in {1/2,1/3,1/4,2/3,2/4,3/4}{
 \path	[f] 	(\i) -- (\i\j) -- (\j) -- (v) -- cycle;
 }
 \foreach \i in {1,2,3,4}{
  \draw [e] (\i) -- (v);
  }
\foreach \i/\j in {1/2,1/3,1/4,2/3,2/4,3/4}{
 \draw 	[eh]	(v)		-- 	(\i\j);
 \draw	[eb] 	(\i) node[vb] {} -- (\i\j) node[vh] {};
 \draw 	[eb] 	(\j) node[vb] {} -- (\i\j) node[vh] {};
 }
\draw [e] (3) node[vb] {} -- (v) node[v] {};
\draw (0,1.35)		-- (.43,-1.37);
\draw (0,1.35)		-- (1.11,-.27);
\draw (0,1.35)		-- (-1.44,-.73);
\draw (.43,-1.37)	-- (1.11,-.27);
\draw (.43,-1.37)	-- (-1.44,-.73);
\draw (1.11,-.27)	-- (-1.44,-.73);
\end{scope}

\end{tikzpicture}
\caption{\label{fig:dualtet} In $D=3$, the neighbourhood of a vertex $v$ (a spin foam atom) within a spin foam molecule $\sfr$ and finally seen as dual to a tetrahedron.}
\end{figure}
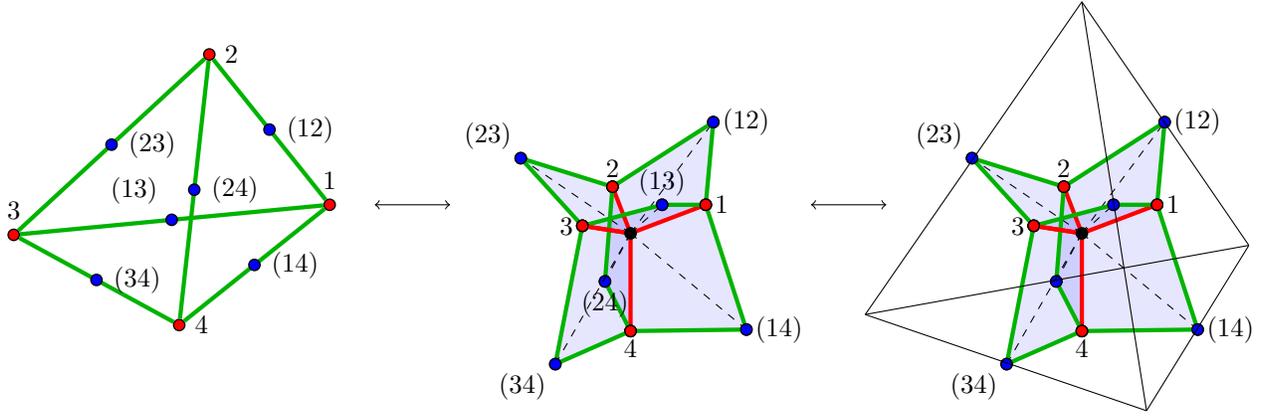

\begin{remark}
  To translate the points made in Section \ref{sssec:enhancing} into \gft\ language, one begins by noting that the spin foam molecules $\sfrs_{D,\Simplicial}$ are interpretable as locally simplicial in $D$ dimensions.  Thus, the group field corresponds to a $(D-1)$--simplex,  the interaction term corresponds to a $D$--simplex, while the kinetic term, through its role in Wick contraction, corresponds to the gluing of $D$--simplices along shared $(D-1)$--simplices.

  Note that the \gft\ action prescribes only the bonding of the spin foam atoms along patches. This corresponds to rules for identifying boundary $(D-1)$-- and $(D-2)$--simplices.  It does not specify uniquely the gluing rules for the full $D$--dimensional information. 
As mentioned in Section \ref{sssec:enhancing}, there are two ways around this limitation.  The first is to add information by hand, which is rather unsatisfactory. The second is to restrict to the so--called \emph{colored} structures $\sfrs_{n,coloured} \subset\sfrs_{n,\Simplicial}$. It is much more natural from the \gft\ point of view, since for any given (simplicial) \gft\ model generating series catalogued by elements of $\sfrs_{n,\Simplicial}$, there is an associated model generating the restricted subclass $\sfrs_{n,coloured}$.  See the review \cite{Gurau:2012hl} for details. 
\end{remark}
 
\begin{remark}
  One could generalize the class of interaction terms to include those based on graphs from the set $\bbgs_{D,\loopless}$. Since these are composed of unlabelled $D$--patches, the \gft\ remains dependent on a single group field:
  \begin{equation}
 \label{eq:SimplicialActionGeneralized}
S[\phi] = \frac12\int [\extd g]\; \phi(g_1)\;\Kbb(g_1, g_2)\;\phi(g_2) 
+\sum_{\bbg\in\bbgs_{D,\loopless}}\lambda_{\bbg} \int [\extd g]\; \Vbb_{\bbg}(g) \prod_{\vbar\in\Vbar}\phi(g_{\vbar})\;,
  \end{equation}
  where $\bbg = (\Vbar\cup\Vhat, \mathcal{E}_{\bbg})$.
Understanding the group field $\phi$ once more as a $(D-1)$--simplex, the spin foam atoms could still be given the interpretation of encoding $D$--dimensional building blocks with locally simplicial $(D-1)$--dimensional boundaries. All spin foam molecules generated by this \gft\ have boundaries in $\bbgs_{D,\loopless}$.  

  Thus, it is clear that the class of models specified by \eqref{eq:SimplicialActionGeneralized} is inadequate for the purposes of generating a dynamics for all \lqg\ quantum states with support in the larger space $\bbgs$. 
\end{remark}

\subsection{Multi--field group field theory}

An obvious strategy for generating series catalogued by (larger subsets of) $\sfrs$ is simply to increase the number of field species entering the model.  Such a scenario was already anticipated at the outset of the group field theory approach to spin foams \cite{Reisenberger:2000fj,Reisenberger:2001hd}. However, from a field theoretic viewpoint, it is a rather unattractive strategy, since the more one wishes to probe quantum states on arbitrary boundary graphs in $\bbgs$, the larger the number of field species and interaction terms required.  Thus, the resulting formalism is not easily controlled using QFT methods.  Having said that, with appropriate kinetic and interaction kernels, multi--field \gft s weight these broader classes of spin foam molecules in the same manner as the generalized constructions one finds in the spin foam literature.  As a result, these \gft\ models are at the same level of formality.
We illustrate multi--field \gft s here simply because we wish to demonstrate the absence of any  impediment  {\it in principle}  to having a \gft\ formulation for the quantum dynamics of all \lqg\ states.

A multi--field group field theory is devised in the following manner:
\begin{itemize}
  \item[--] A subset of group fields $\Phi_{\sub}\subseteq \Phi$ is indexed by a subset of patches $\bps_{\sub}\subseteq \bps$:  
    \begin{equation}
      \label{eq:multifield}
      \Phi_{\sub} = \{\phi_{\bp}\}_{\bps_{\sub}}
    \end{equation}
  \item[--]
    A distinguished class of \trace\ observables $\Ocal_{\sub}\subseteq\Ocal$ is indexed by $\bbgs_{\sub} = \sigma(\bps_{\sub})\subseteq \bbgs$, the bisected boundary graphs generated by $\bps_{\sub}$:
    \begin{equation}
      \label{eq:multiobservable}
      \Ocal_{\sub}  =\{O_{\bbg}\}_{\bbgs_{\sub}}
    \end{equation}
    In particular, observables of this type can be utilized as interaction terms in the action.

  \item[--]
A class of action functionals is then specified by:
\begin{equation}
  \label{eq:multiaction}
  S[\Phi_{\sub}] 
  = \sum_{\bp\in \bps_{\sub}} \int [\extd g]\; \phi_{\bp}(g_1)\; \Kbb_{\bp}(g_1,g_2)\;\phi_{\bp}(g_2) 
  + \sum_{\bbg\in\bbgs_{\sub}}  \lambda_{\bbg} \int [\extd g]\;\Vbb_{\bbg}\Big(\{g_{\vbar}\}_{\Vbar}\Big)\;\prod_{\vbar\in\Vbar}\phi_{\bp}(g_{\vbar})\;,
\end{equation}
where $\bbg = (\Vbar\cup\Vhat,\mathcal{E}_{\bbg})$.

\item[--]The expectation value of an arbitrary product of observables takes the form:
\begin{equation}
  \label{eq:multipartition}
  \langle O_{\bbg_1}\dots O_{\bbg_l}\rangle_{\mfgft} =  \int \Dcal\Phi_{\sub}\; O_{\bbg_1}[\Phi]\dots O_{\bbg_l}[\Phi]\; e^{-S[\Phi_{\sub}]}  = \sum_{\substack{\sfr\in\sfrs_{\sub}\\ \delta(\sfr) = \sqcup_{i = 1}^l \bbg_i}} \frac{1}{C(\sfr)} A(\sfr;\{\lambda_{\bbg}\}_{\bbgs_{\sub}})\;,
\end{equation}

\end{itemize}
\begin{remark}[{\bf higher--dimensional interpretation}]
  In this multi--field setting, one has lost the natural connection between a class of models and a particular value of $D$, the dimension of the reconstructed space--time.  Without doubt, it is difficult to identify precisely generalized classes of spin foam molecules, such that the reconstruction of a D--complex is always possible (and unique). In this non--simplicial setting, the restriction to \emph{coloured} structures is not available (to the best of our knowledge). Moreover, the set of gluing rules that one would need to specify at the outset grows with the generality of the boundary graphs and spin foam atoms.
\end{remark}
\begin{remark}[{\bf 3--dimensional example}]
  \label{rem:3dexample}
  Let us consider a particular multi--field \gft\ model and attempt to provide it with a 3--dimensional interpretation:
  \begin{itemize}
    \item[--]
      It is based on unlabelled $n$--patches, with $3\leq n \leq L$:
      \begin{equation}
      \bps_{\sub} = \{\bp_{n}:3\leq n\leq L\}\;.
    \end{equation} 
    Then, the group field $\phi_{\bp_n}$ could be viewed as representing an 2--dimensional $n$--gon.  
  
  \item[--] 
    A distinguished class of \trace\ observables is indexed by $\bbgs_{\sub} = \sigma(\bps_{\sub})$ and they may be interpreted as surfaces composed of polygons (as we have already stressed, reconstructing these surfaces is a subtle topic and extra information must be put in by hand).
    As a specific example, consider the following \trace\ observable:
    \begin{equation}
\label{eq:pyramid}
O_{\bbg}[\phi_{\bp_3},\phi_{\bp_4}] 
= \int [\extd g]\; \bbB_{\bbg}(g)\; \phi_{\bp_4}(g_1)\,\phi_{\bp_3}(g_2)\,\phi_{\bp_3}(g_3)\,\phi_{\bp_3}(g_4)\,\phi_{\bp_3}(g_5) 
\end{equation}
where
\begin{equation}
\bbB_{\bbg}(g) = \bbB(g_{12}g_{21}^{-1},g_{23}g_{32}^{-1},g_{34}g_{43}^{-1},g_{14}g_{41}^{-1},g_{15}g_{51}^{-1},g_{25}g_{52}^{-1},g_{35}g_{53}^{-1},g_{45}g_{54}^{-1})\;.
\end{equation}
As illustrated in Figure \ref{fig:pyramid}, one could associate a pyramid with a square base to the graph $\bbg$. In this case, the spin foam atom is simply a 3--ball.

\item[--] An action functional of the type given in \eqref{eq:multiaction}, along with the partition function \eqref{eq:multipartition} generate spin foam molecules that may be interpreted as 3--dimensional objects composed of such building blocks.  
\end{itemize}

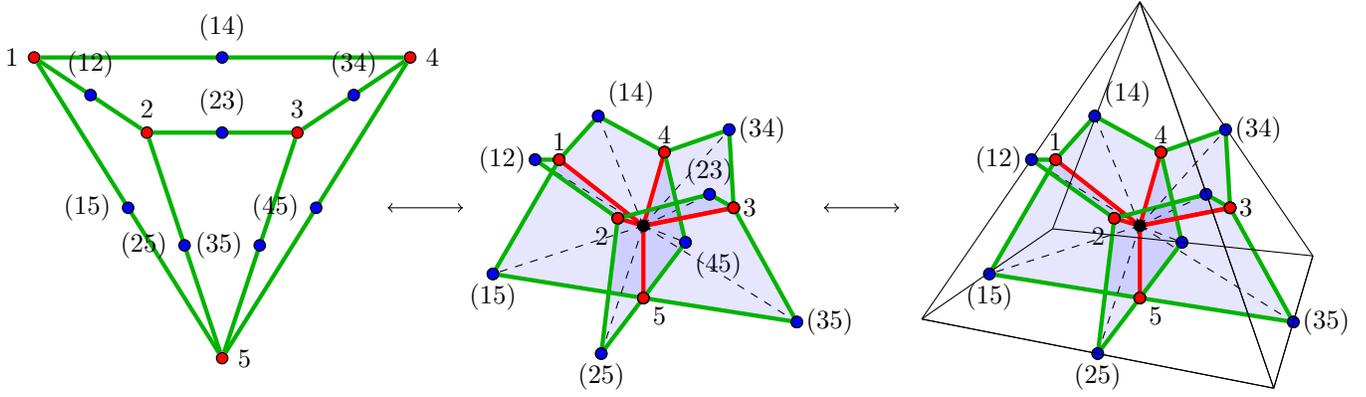
\begin{figure}[htp]
    \centering
     \tikzsetnextfilename{pyramid}
      \begin{tikzpicture}[scale=2]
 \draw [<->] (1.2,0) -- (1.7,0);
 \draw [<->] (-1.2,0) -- (-1.7,0);
%pyramid graph
\begin{scope}[xshift=-2.8cm]
\node [vb,label=180:1]	(1)	at (-1.25,1)	{}; 
\node [vb,label=90:2]		(2)	at (-0.5,0.5)	{}; 
\node [vb,label=90:3]		(3)	at (0.5,0.5) 	{}; 
\node [vb,label=0:4]		(4)	at (1.25,1)		{}; 
\node [vb,label=0:5]		(5)	at (0,-1)		{};
\foreach \i/\j in {1/4,1/2,2/3,3/4}{
 \draw [eb] (\i) -- node[vh,label=above:(\i\j)] {}  (\j);
  }
  \foreach \i/\j in {1/5,2/5,3/5,4/5}{
 \draw [eb] (\i) -- node[vh,label=left:(\i\j)] {}  (\j);
  }
\end{scope}

%pyramid SF atom
\begin{scope}
\node [c]					(v)	at (0,-.12)		{};
\node [c,label=88:1]			(1)	at (-0.56,0.32) 	{}; 
\node [c,label=185:2]			(2)	at (-0.17,-0.07)	{}; 
\node [c,label=0:3]			(3)	at (0.6,0) 		{}; 
\node [c,label=90:4]			(4)	at (.14,.37)	{}; 
\node [c,label=-10:5]			(5)	at (0,-.6)		{};
\node [c,label=left:(12)]		(12)	at (-.72,.32)	{};
\node [c,label=above:(23)]		(23)	at (.44,.09)	{};
\node [c,label=0:(34)]		(34)	at (.57,.52)	{};
\node [c,label=5:(14)]		(14)	at (-.3,.61)		{};
\node [c,label=below:(15)]		(15)	at (-1,-.44)		{};
\node [c,label=below:(25)]		(25)	at (-.28,-.97)	{};
\node [c,label=0:(35)]		(35)	at (1.02,-.76)	{};
\node [c,label=-10:(45)]		(45)	at (.28,-.23)	{};
\foreach \i/\j in {1/2,1/4,2/3,3/4,1/5,2/5,3/5,4/5}{
 \path	[f] 	(\i) -- (\i\j) -- (\j) -- (v) -- cycle;
 }
 \foreach \i in {1,2,3,4,5}{
  \draw [e] (\i) -- (v);
  }
\foreach \i/\j in {1/2,1/4,2/3,3/4,1/5,2/5,3/5,4/5}{
 \draw 	[eh]	(v)		-- 	(\i\j);
 \draw	[eb] 	(\i) node[vb] {} -- (\i\j) node[vh] {};
 \draw 	[eb] 	(\j) node[vb] {} -- (\i\j) node[vh] {};
 }
\draw [e] (3) node[vb] {} -- (v) node[v] {};

\end{scope}
% with pyramid
\begin{scope}[xshift=3.3cm]
\node [c]					(v)	at (0,-.12)		{};
\node [c,label=88:1]			(1)	at (-0.56,0.32) 	{}; 
\node [c,label=185:2]			(2)	at (-0.17,-0.07)	{}; 
\node [c,label=0:3]			(3)	at (0.6,0) 		{}; 
\node [c,label=90:4]			(4)	at (.14,.37)	{}; 
\node [c,label=-10:5]			(5)	at (0,-.6)		{};
\node [c,label=left:(12)]		(12)	at (-.72,.32)	{};
\node [c%,label=above:(23)
				]		(23)	at (.44,.09)	{};
\node [c,label=0:(34)]		(34)	at (.57,.52)	{};
\node [c,label=5:(14)]		(14)	at (-.3,.61)		{};
\node [c,label=below:(15)]		(15)	at (-1,-.44)		{};
\node [c,label=below:(25)]		(25)	at (-.28,-.97)	{};
\node [c,label=0:(35)]		(35)	at (1.02,-.76)	{};
\node [c%,label=-10:(45)
				]		(45)	at (.28,-.23)	{};
\foreach \i/\j in {1/2,1/4,2/3,3/4,1/5,2/5,3/5,4/5}{
 \path	[f] 	(\i) -- (\i\j) -- (\j) -- (v) -- cycle;
 }
 \foreach \i in {1,2,3,4,5}{
  \draw [e] (\i) -- (v);
  }
\foreach \i/\j in {1/2,1/4,2/3,3/4,1/5,2/5,3/5,4/5}{
 \draw	[eh]	(v)		-- 	(\i\j);
 \draw	[eb] 	(\i) node[vb] {} -- (\i\j) node[vh] {};
 \draw 	[eb] 	(\j) node[vb] {} -- (\i\j) node[vh] {};
 }
\draw [e] (3) node[vb] {} -- (v) node[v] {};

\draw (-0.58,-0.14)-- (0,1.37);
\draw (-0.58,-0.14)-- (1.15,-0.32);
\draw (-0.58,-0.14)-- (-1.45,-0.74);
\draw (0.89,-1.2)-- (1.15,-0.32);
\draw (1.15,-0.32)-- (0,1.37);
\draw(0,1.37)--  (0.89,-1.2);
\draw (-1.45,-0.74)-- (0.89,-1.2);
\draw(0.89,-1.2)-- (1.15,-0.32);
\draw (-1.45,-0.74)-- (0,1.37);
\draw(0,1.37)--  (0.89,-1.2);
\draw (0.89,-1.2)-- (-1.45,-0.74);
\end{scope}

\end{tikzpicture}
  \caption{\label{fig:pyramid}The pyramid graph $\bbg$, the associated spin foam atom $\sfa = \alpha(\bbg)$ and finally the pyramid 3--cell constructed around it.}
\end{figure}

\end{remark}

\begin{remark}[{\bf group field set $\Phi$}]
  \gft\ models based upon (in)finite subsets $\Phi_{\sub}\subset \Phi$ probe only subsets $\bbgs_{\sub}\subset\bbgs$ and $\sfrs_{\sub}\subset\sfrs$ and thus, only subsets of the \lqg\ states and spin foam dynamics. One could consider examining a model based on all of $\Phi$ and all of $\Ocal$. In this manner, one would probe all of $\bbgs$ and $\sfrs$,  as one might expect in the traditional \lqg\ context. 
 
The resulting construction, however, is likely to remain at a formal level. In fact, the multi--field \gft\ realization depends upon infinitely many fields and, in order to have non--trivial dynamics for each field, infinitely many interaction terms. This likely renders any field theoretic analysis rather impracticable.

Having said that, with appropriate choices for kinetic and interaction kernels, the multi--field \gft\ based upon $\Phi$ and $\Ocal$ generates series probing all the spin foam molecules of $\sfrs$,  weighted by amplitudes coinciding with the \kkl\ extension of the \eprl\ quantum gravity model and propagating \lqg\ states on graphs in $\bbgs$.
\end{remark}

 To the extent that \gft s are currently analytically tractable, one is motivated to repackage the structures generated above and devise a class of \gft\ models that encode the quantum dynamics of arbitrary \lqg\ states, while remaining more practically useful. This means managing to encode arbitrary boundary graphs using a single or at least a (small) finite number of \gft\ fields and interactions. The key to achieving this result, which we now illustrate, lies in the use of labelled structures. 
 
\subsection{Dually weighted group field theories}\label{duwegft}

This section focusses on the labelled structures $\bgst_{n,\Simplicial}$, $\bbgst_{n,\Simplicial}$, $\sfast_{n,\Simplicial}$ and $\sfrst_{n,\Simplicialdually}\subset\sfrst_{n,\Simplicial}$.  The reason is that while the first three sets of building blocks are finite, the set of dually--weighted molecules $\sfrst_{n,\Simplicialdually}$ is rich enough to encode all of $\sfrs$. Moreover, this translates to a \gft, based on a finite number of fields and interactions, that generates sets of spin foam molecules large enough to propagate arbitrary \lqg\ states.

\subsubsection{Labelled simplicial {\sc gft}}
\label{sssec:PrimitiveGFT}

Utilizing the labelled simplicial structures $\bpst_{n}$, $\bbgst_{n,\Simplicial}$ and $\sfast_{n,\Simplicial}$ to generate spin foam molecules $\sfrst_{n,\Simplicial}$ is a simple generalization of the simplicial model presented in Section \ref{ssec:simplicialgft}:

\begin{itemize}
  \item[--] The set of group fields is indexed by the set of labelled $n$--patches:
    \begin{equation}
      \Phit = \{\phit_{\bpt}\}_{\bpst_{n}}\;,\quad\quad\textrm{where}\quad\quad \phit_{\bpt}:G^{\times |\mathcal{E}_{\bpt}|} \longrightarrow \Rbb\;,
    \end{equation}
     % and $\bpt = (\{\vbar\}\cup\Vhat_{\bpt},\mathcal{E}_{\bpt})$. 
    Note that this is a finite set of fields: $|\Phit| = |\bpst_{n}| = 2^{n}$. Also, $|\mathcal{E}_{\bpt}| = n$. 
  \item[--]  The set of \trace\ observables is indexed by the set of labelled $n$--regular, loopless graphs $\bbgst_{n,\loopless}$:
    \begin{equation}
      \Obt = \{\obt_{\bbgt}\}_{\bbgst}\;,\quad\quad\textrm{where}\quad\quad \obt_{\bbgt}[\Phit] = \int [\extd g]\; \bbBt_{\bbgt}\big(\{g_{\vbar}\}_{\Vbar}\big)\prod_{\vbar\in\Vbar} \phi_{\bpt}(g_{\vbar})\;,
    \end{equation}
  %and $\bbgt = (\Vbar\cup\Vhat, \mathcal{E}_{\bbgt})$. 
    where $\bbBt_{\bbgt}$ implicitly depends on the edge labels drawn from $\{real,virtual\}$. 
  %A graph $\bbgt$ determines a pairwise patition of the field arguments occurring in the interaction term, and vice versa.    

  \item[--] The set of vertex interactions is indexed by labelled simplicial $n$--graphs $\bbgst_{n,\Simplicial}$.  Since these are all based on the complete graph over $n+1$ vertices, one can utilize the vertex labelling seen in equation \eqref{eq:SimplicialVertex}: 
    \begin{equation}
      \lambda_{\bbgt}\int [\extd g]\; \Vbbt_{\bbgt}\big(g\big)\prod_{j = 1}^{n+1} \phit_{\bpt}(g_j)
    \end{equation}
    This is a finite set of interactions: $|\bbgst_{n,\Simplicial}| = 2^{{n+1\choose 2}}$.
    
    Of course, the set of interaction terms can be extended to those indexed by $\bbgst_{n,\loopless} = \sigma(\bpst_{n})$, and one should probably expect them to be generated during the renormalization process.  However, the point is that the small set $\bbgst_{n,\Simplicial}$ is rich enough to generate spin foam molecules that could provide non--trivial correlations for all of $\bbgst_{n,\loopless}$, and so is a well--chosen minimal model to take at the outset. 

    Again, using  the bijection in Proposition \ref{prop:bijectionAtom}, the interaction terms can be interpreted as generating spin foam atoms $\sfat = \alpha(\bbgt)$.

  \item[--]
    The kinetic term %, through its role in the Wick contractions,  
    is responsible for the bonding of patches just as in the unlabeled case \eqref{eq:kinetic}:  
    \begin{equation}
      \frac12\int[\extd g]\; \phit_{\bpt}(g_1)\;\Kbbt_{\bpt}(g_1, g_2)\;\phit_{\bpt}(g_2)\;,\quad\quad\textrm{where}\quad\quad \Kbbt_{\bpt}(g_1,g_2) = \Kbbt(g_{\vbar_1}, g_{\vbar_2} )\;.
    \end{equation}
    %The pairing of the edges in the kinetic kernel, identifies the edges of the patch and vertex identification follows by the compatibility condition of Definition \ref{def:bonding}. 

  \item[--]
    Then, the class of labelled simplicial \gft s is defined via:
    \begin{equation}
      Z_{\pgft} = \int \Dcal\Phit \;e^{-S[\Phit]}
    \end{equation}
    with:
    \begin{equation}
      S[\Phit] = \frac12\sum_{\bpt\in\bpst_{n}} \int[\extd g]\;\phit_{\bpt}(g_1)\;\Kbbt_{\bpt}(g_1, g_2)\;\phit_{\bpt}(g_2) + 
      \sum_{\bbgt\in\bbgst_{n,\Simplicial}}\lambda_{\bbgt}\int [\extd g]\; \Vbbt_{\bbgt}\big(g\big)\prod_{j = 1}^{n+1} \phit_{\bpt}(g_j)
    \end{equation}

  \item[--]
    \trace\ observables can be estimated perturbatively, generating series of the type:
    \begin{eqnarray}
      \langle \obt_{\bbgt_1}\dots \obt_{\bbgt_l}\rangle_{\pgft} &=& \frac{1}{Z}\int \Dcal\Phit\; \obt_{\bbgt_1}[\Phit]\dots\obt_{\bbgt_l}[\Phit]\;e^{-S[\Phit]} \nonumber \\
 % &=& \frac{1}{Z_{\gft}} \int \Dcal\phi\;O[\phi] \sum_{\{c_i\}_I} \prod_{i\in I}\frac{1}{c_i!}\Bigg[ \lambda_i \int [\extd g]\;\Vbb_i\Big(\{g_j\}_{J_i}\Big)\;\prod_{j\in J_i}\phi(g_{j})\Bigg]^{c_i} e^{-\frac12\int[\extd g]\;\phi(g_1)\;\Kbb(g_1,g_2)\;\phi(g_2)}\nonumber\\
      &=& \sum_{\substack{\sfrt\in\sfrst_{n,\Simplicial}\\[0.1cm] \widetilde{\delta}(\sfrt) = \sqcup_{i = 1}^{l}\bbgt_{i}}} \frac{1}{C(\sfrt)} A(\sfrt;\{\lambda_{\bbgt}\}_{{\bbgst}_{n,\Simplicial}})\;,
    \end{eqnarray}
    %Thus, the Feynman diagrams generated by \gft s are actually better characterized as spin foam molecules. 
\end{itemize}

\begin{remark}[{\bf reducibility}]
  As pointed out in Remark \ref{rem:moleculered}, not all molecules in $\sfrst_{n,\Simplicial}$ reduce to a molecule in $\sfrs$. It is rather the dually--weighted subset $\sfrst_{n,\Simplicialdually}\subset\sfrst_{n,\Simplicial}$ that possesses this property.  As a result, one needs a mechanism at the \gft\ level that isolates this subset.  This mechanism is known as dual--weighting.  
  \end{remark}

\subsubsection{Dually--weighted {\sc gft}}
\label{sssec:DuallyWeightedGFT}

It emerges that employing a simple technique at the field theory level allows one to extract directly the subclass of structures $\sfrst_{n,\Simplicialdually}\subset\sfrst_{n,\Simplicial}$. This technique, dubbed \emph{dual--weighting} in the the matrix model literature, assigns parameterized weights to the vertices $\Vhat$ of the spin foam atoms $\sfat\in\sfast_{n,\Simplicial}$ and, through the bonding mechanism, of the spin foam molecules $\sfrt\in\sfrst_{n,\Simplicial}$.\footnote{The \emph{dual--weighting} moniker stems from the fact that in 2d these vertices are in one--to--one correspondence with the vertices of the dual topological structure. In that context, these parameterized weights can be interpreted as coupling parameters for dual vertices.} 
These weights can be tuned so that only virtual interior/boundary vertices in $\Vhat$ with precisely two virtual faces/one virtual face incident survive. This is precisely the condition pinpointing the configurations in $\sfrst_{n,\Simplicialdually}$.

The dual--weighting mechanism begins by enlarging the elementary data set from $G$ to $G\times\Mcal$, where $\Mcal = \{0,1,\dots, M\}$.
The integer $M$ can be regarded as a free parameter of the theory.
Since these data sets are associated to edges of both patches and boundary graphs, they permit a new encoding of the edge labels $\{real, virtual\}$. The \emph{real} label is encoded as the zero element $0\in\Mcal$, while the \emph{virtual} label is encoded by the non--zero elements $m\in\Mcal-\{0\}$.  
\begin{itemize}
  \item[--] This in turn allows one to repackage the $2^{n}$ fields $\phit_{\bpt}$ ($\bpt\in\bpst_{n,\Simplicial}$) into a single field 
    \begin{equation}
      \phi:(G\times\Mcal)^{n}\longrightarrow\Rbb
    \end{equation}
    based on the unique \emph{un}labelled $n$--patch $\bp_n\in\bps_{n,\Simplicial}$.
    This stems from the fact that these patches have the same combinatorics, differing only in the choice of labels $\{real, virtual\}$ assigned to their edges.
  
  \item[--] In principle, the \trace\ observables are indexed once again by labelled, $n$--regular, loopless graphs $\bbgt\in\bbgst_{n,\loopless}$. Encoding the labelling as above, one can re--index observables by \emph{un}labelled, $n$--regular, loopless graphs $\bbgs_{n,\loopless}$:
   \begin{equation}
      \Ocal = \{O_{\bbg}[\Phi]\}_{\bbgs_{n,\loopless}}\;,\quad\quad\textrm{where}\quad\quad O_{\bbg}[\Phi] = \int [\extd g]\sum_{[m]} \bbB_{\bbg}\big(\{g_{\vbar}; m_{\vbar}\}_{\Vbar}\big)\prod_{\vbar\in\Vbar} \phi_{\bp}(g_{\vbar};m_{\vbar})\;,
    \end{equation}
    where the \emph{combinatorial non--locality} extends to the $\Mcal$ variables. In effect, the observable $O_{\bbg}$ incorporates all $2^{|\mathcal{\bbg}|}$ labelled observables with support on that graphical structure.

%\cjt{the following might be simplified...}
    However, combinatorial non--locality, in conjunction with this novel label--encoding, places a restriction on $\bbB_{\bbg}(\{g_{\vbar};m_{\vbar}\}_{\Vbar})$. To detail this, 
one uses the same indexing of vertices and edges as in \eqref{eq:SimplicialVertex} with an extra label $(a)$ to number multi--edges. A simple illustration for a bisected edge between two vertices $i,j\in\Vbar$ looks like:
%    one indexes the vertices in $\Vbar$ by $i\in\{1,\dots,|\Vbar|\}$ and the bisecting vertices in $\Vhat$ by $(ij)(a)$, where $i,j$ indicate the vertices of $\Vbar$ to which it is connected and the extra index $(a)$ indicates that it may not be the only bisecting vertex connected to $i$ and $j$.\footnote{The parenthesis around $ij$ signifies that $(ji)(a)$ is the same vertex as $(ij)(a)$.} Then, the edge joining the vertex marked $i$ to the vertex marked $(ij)(a)$ is denoted by $ij(a)$, while the edge joining the vertex $j$ to the vertex $(ij)(a)$ is denoted by $ji(a)$.  A simple illustration of this subgraph of $\bbg$ looks like:
%    \begin{displaymath}
%      \xymatrix{
%	i\ar@{-}[rr]^{ij(a)}&&(ij)(a)&&j\ar@{-}[ll]_{ji(a)}
%      }
%    \end{displaymath}
\begin{center}
\tikzsetnextfilename{edge}
\tikz \draw [eb]  (-2,0) node [vb,label=left:$i$] {}  --  node[auto] {$ij(a)$} 
			(0,0) node[vh,label=above:$(ij)(a)$] {} -- node[auto] {$ji(a)$}  
			(2,0) node[vb,label=right:$j$] {};
\end{center}
Then the graph $\bbg$ dictates that the boundary kernel has the form:
\begin{equation}
  \bbB_{\bbg}(\{g_{\vbar};m_{\vbar}\}_{\Vbar}) = \bbB(\{g_{ij(a)}g_{ji(a)}^{-1}; m_{\vbar}\})\;
\end{equation}  
For labelled boundary graphs, both edges $ij(a)$ and $ji(a)$ are marked by the same label $\{real, virtual\}$ (see Remark \ref{rem:generalization}).  This translates to the restriction that $\bbB_{\bbg}(\{g_{\vbar};m_{\vbar}\}_{\Vbar}) = 0 $ when $m_{ij(a)} = 0$, $m_{ji(a)}\in\Mcal-\{0\}$ or vice versa. Alternatively, $\bbB_{\bbg}(\{g_{\vbar};m_{\vbar}\}_{\Vbar}) \neq 0$ only when both $m_{ij(a)} = 0 = m_{ji(a)}$ or both $m_{ij(a)}, m_{ji(a)} \in \Mcal - \{0\}$.    

 \item[--]   The $2^{{n+1\choose 2}}$ interaction terms are indexed by labelled simplicial graphs $\bbgt\in\bbgst_{n,\Simplicial}$.  As with the boundary kernels, one may re--encode the labelling in terms of the new data set.  As a result, one can capture all the interaction terms using the unique \emph{un}labelled, $n$--regular, loopless graph $\bbg\in\bbgs_{n,\Simplicial}$.:  
    \begin{equation}
      \lambda\int [\extd g]\sum_{[m]} \Vbb_{\bbg}(g; m)\prod_{j = 1}^{n+1} \phi(g_j; m_j)\;,
      \end{equation}
      where (analogously to \eqref{eq:SimplicialVertex}): 
     \begin{equation}
       \Vbb_{\bbg}(g;m) = \Vbb\big(\{g_{ij}g_{ji}^{-1};m_{\vbar}\}\big)\;, \quad\quad \textrm{with}\quad\quad i<j\;,
     \end{equation} 
     and the markers $i,j\in\{1,\dots,n+1\}$ index the $n+1$ vertices $\Vbar\subset \bbg$ and thus the patches of $\bbg$. Meanwhile, the pair $ij$ (with $j\neq i$) indexes the edge joining the vertex $i$ to the bisecting vertex $(ij)$.
Combinatorial non--locality imposes an analogous constraint on this interaction kernel.

    Just as for labelled simplicial \gft s,  the set of interaction terms could be extended to those indexed by $\bbgs_{n,\loopless} = \sigma(\bps_{n,\Simplicial})$, while still invoking the dual weighting mechanism. In terms of labelled structures, this means that one could isolate $\sfrst_{n,\looplessdually}\subset\sfrst_{n,\loopless}$. 

   \item[--] The kinetic term takes the form:
     \begin{equation}
       \frac12\int[\extd g]\sum_{[m]}\phi(g_1;m_1)\;\Kbb(g_1,g_2;m_1,m_2)\;\phi(g_2;m_2)\;,
     \end{equation}
     where:
   \begin{equation}
     \Kbb(g_1,g_2;m_1,m_2) = \Kbb\big(g_{\vbar_1},g_{\vbar_2};m_{\vbar_1},m_{\vbar_2}\big)\;,
     \end{equation}
    Since the kinetic term is responsible for the bonding of the patches and bonding respects labelling, then $\Kbb \neq 0$ only when both $m_{\vbar_1\vhat} = 0 = m_{\vbar_2\vhat}$ or both $m_{\vbar_1\vhat},m_{\vbar_2\vhat}\in\Mcal - \{0\}$.  
 \end{itemize}
 Before stating the class of dually--weighted \gft s, we specify the precise form of the $\Mcal$--sector of the various kernels. We shall leave the $G$--sector unspecified for the moment, dealing with specific cases in Section \ref{sec:sfm}.
 \begin{defin}[{\bf dual--weighting matrix}]
   The \emph{dual--weighting matrix sequence}, $\{\Acal_M\}_{M>0}$, is a sequence of invertible matrices (where $M$ denotes the size of $\Acal_M$) which satisfy the condition:%
\footnote{
The trace invariant information contained within an $M\times M$ matrix $\Acal$ can be characterized in a number of ways,   perhaps most familiarly through its $M$ eigenvalues, which arise as the roots of the \emph{characteristic equation} $\chi_{\Acal}(t) = 0$, where:
\begin{equation}
\chi_{\Acal}(t) = \det(t\Ical - \Acal)\;.
\end{equation}
However, a less succinct way to package this information is in the traces of matrix powers: $\tr(\Acal^k)$, where $0\leq k\leq M$. To see this, notice that one can rewrite the characteristic polynomial as:
\begin{equation}
  \chi_{\Acal}(t) = \sum_{k = 0}^M (-1)^k\,t^{M-k}\,\tr(\wedge^k \Acal)\;, \quad\quad\textrm{where}\quad\quad \wedge^k\Acal = \frac{1}{k!}
  \begin{vmatrix} 
    \tr(\Acal) & k-1 & 0 & \cdots & 0 \\[0.2cm]
    \tr(\Acal^2) & \tr(\Acal) & k-2 &\cdots & 0 \\
    \vdots & \vdots & \vdots &  & \vdots \\[0.2cm]
    \tr(\Acal^{k-1}) & \tr(\Acal^{k-2}) & \tr(\Acal^{k-3}) & \cdots & 1\\[0.2cm]
    \tr(\Acal^k) & \tr(\Acal^{k-1}) & \tr(\Acal^{k-2}) & \cdots & \tr(\Acal)
  \end{vmatrix}\;.
\end{equation}
The eigenvalues are determined in terms of the traces and vice versa. In the large--$M$ limit, it is clear therefore that one may impose an infinite number of conditions on the matrix traces.

For concreteness, let us consider as a specific sequence $\{\Acal_M\}$ the diagonal matrices 
$\left(\Acal_M\right)_{mm'} = (-1)^m M^{-1/2} \delta_{mm'}$. %\textrm{diag}(\lambda_1,\dots,\lambda_M)$.
%For each $M$, one can impose at most $M$ independent conditions, for example:
%\[
%\sum_{m=1}^M \lambda_m^k = \delta_{k,2}\;.
%\]
%In this case the condition Eq. \eqref{eq:dwproperty} is satisfied.
%But note that, in general, these conditions need not to be satisfied at finite $M$ but only in the large-$M$ limit.
These fulfill the conditions Eq.\,\eqref{eq:dwproperty} since for odd $k$ 
\begin{equation}
\tr \left( (\Acal_M)^k \right) = - M^{-k/2} \us{M\ra\infty}\longrightarrow 0
\end{equation}
and for even $k$ 
\begin{equation}
\tr \left( (\Acal_M)^k \right) = M^{1-k/2} 
\end{equation}
which equals one for $k=2$ and tends to zero in the large-$M$ limit for $k>2$.

 For more uses of the dual--weighting mechanism see \cite{Benedetti:2012ed} and references therein.
}
\begin{equation}
\label{eq:dwproperty}
  \lim_{M\rightarrow\infty}%\frac1M
  \tr\left( (\Acal_M)^k \right) = \delta_{k,2}.
\end{equation}
\end{defin}
In the remainder we suppress the index $M$ in the dual-weighting matrices.

\begin{remark}[{\bf dual--weighting mechanism}]
  \label{rem:DWmechanism}
 The implementation %smooth working
  of the dual--weighting mechanism places certain restrictions on the kinetic, interaction and boundary kernels:
 \begin{itemize}
   \item[--] The kinetic kernel takes the form:
 \begin{equation}
   \Kbb(g_1,g_2;m_1,m_2) = \Kbbb(g_1,g_2;m_1,m_2)\;\Dbb^{-1}(m_1,m_2)\;,
 \end{equation}
 where $\Kbbb$ is constant across $m_{1j}, m_{2j}\in\Mcal-\{0\}$, for each $j\in\{1,\dots,n\}$.  In other words, $\Kbbb$ only depends on whether the edges are \emph{real} or \emph{virtual}.
 Meanwhile, $\Dbb$ factorizes across the edges:
 \begin{equation}
   \Dbb(m_1,m_2) = \prod_{j = 1}^n \dbb_{m_{1j},m_{2j}}\;,
   \quad\quad\textrm{with}\quad\quad 
   \dbb = \left(
    \begin{array}{c|c}
      1&0\\ \hline
      0&\Acal
    \end{array}
    \right)\;.
 \end{equation}
 The condition on $\Kbbb$ means that the value it attains only depends on whether the edges are \emph{real} or \emph{virtual}. 
The zero entries in the $\dbb$--matrix are the manifestation of non--mixing of \emph{real} and \emph{virtual} edges. The dual--weighting matrix $\Acal$ is the truly significant player, as it will be responsible for restricting the spin foam molecules in the large--$M$ limit. 

 \item[--]
The interaction kernel takes the form:
\begin{equation}
  \Vbb_{\bbg}(g;m) = \Vbbb_{\bbg}(g;m)\;\Ibb(m)\;,
  \quad\quad\textrm{where}\quad\quad 
  \Ibb(m) = \prod_{(ij)} \ibb_{m_{ij},m_{ji}} 
  \quad\quad\textrm{and}\quad\quad
  \ibb =  \left(
    \begin{array}{c|c}
      1&0\\ \hline
      0&\Ical
    \end{array}
    \right)\;.
\end{equation}
$\Ical$ is the $M\times M$ identity matrix and the function $\Vbbb$ only depends on whether the edges are \emph{real} or \emph{virtual}.   

\item[--]
Meanwhile, the boundary kernels take the similar form:
\begin{equation}
  \bbB_{\bbg}(\{g_{\vbar};m_{\vbar}\}_{\Vbar}) = \bbBb_{\bbg}(\{g_{\vbar};m_{\vbar}\}_{\Vbar})\;\Ibb(\{m_{\vbar}\}_{\Vbar})\;, 
  \quad\quad\textrm{where}\quad\quad 
  \Ibb(\{m_{\vbar}\}_{\Vbar}) = \prod_{(ij)(a)} \ibb_{m_{ij(a)},m_{ji(a)}} 
\end{equation}
and $\bbBb$ only depends on whether the edges are \emph{real} or \emph{virtual}.
\end{itemize}
Of course, in true \gft\ style, one could shift the dual--weighting matrix to the interaction kernels, that is, swapping $\Ical$ for $\Acal$ in the interaction kernel, while simultaneously swapping $\Acal$ for $\Ical$ in the kinetic kernel. 
Accordingly, in this realization, the boundary kernels should also contain the dual--weighting matrix to ensure the correct propagation of virtual edges. 
\end{remark}
Now, back to the definition of the dually--weighted \gft s:
 \begin{itemize}
  \item[--]
    Then, the class of dually--weighted \gft s is defined via:
    \begin{equation}
      Z_{\dwgft} = \int \Dcal\Phi \;e^{-S[\Phi]}
    \end{equation}
    with:
      \begin{equation}
	S[\Phi] = \frac12\int[\extd g]\sum_{[m]}\phi(g_1;m_1)\;\Kbb(g_1, g_2;m_1,m_2)\;\phi(g_2;m_2)
	+ \lambda\int [\extd g]\sum_{[m]} \Vbb_{\bbg}(g;m)\prod_{j =1}^{n+1} \phi(g_j;m_j)
    \end{equation}

  \item[--]
   A \trace\ observable can be estimated perturbatively, generating series of the type:
    \begin{eqnarray}
    \label{eq:dwperturbative}
      \langle O_{\bbg_1}\dots O_{\bbg_l}\rangle_{\dwgft} &=& \frac{1}{Z_{\dwgft}}\int \Dcal\Phi\; O_{\bbg_1}[\Phi]\dots O_{\bbg_l}[\Phi]\;e^{-S[\Phi]} \nonumber \\
 % &=& \frac{1}{Z_{\gft}} \int \Dcal\phi\;O[\phi] \sum_{\{c_i\}_I} \prod_{i\in I}\frac{1}{c_i!}\Bigg[ \lambda_i \int [\extd g]\;\Vbb_i\Big(\{g_j\}_{J_i}\Big)\;\prod_{j\in J_i}\phi(g_{j})\Bigg]^{c_i} e^{-\frac12\int[\extd g]\;\phi(g_1)\;\Kbb(g_1,g_2)\;\phi(g_2)}\nonumber\\
      &=& \sum_{\substack{\sfr\in\sfrs_{n,\Simplicial}\\[0.1cm] {\delta}(\sfr) = \sqcup_{i = 1}^{l}\bbg_{i}}} \frac{1}{C(\sfr)} A(\sfr;\lambda)\;,
    \end{eqnarray}
    As has been stated repeatedly, these series are in principle catalogued by labelled simplicial $n$--molecules $\sfmt\in\sfmst_{n,\Simplicial}$.  However, the label--dependence\footnote{For example, one could tweak the coupling constants to depend more sensitively on the reality/virtuality of the various edges of the bisected graph.  In that case, one would really have to catalogue the sum explicitly in terms of the elements of $\sfmst_{n,\Simplicial}$.} permits the collation and repackaging of the amplitudes attached to the various labellings of each \emph{un}labelled simplicial $n$--molecule $\sfm\in\sfms_{n,\Simplicial}$. In the large--$M$, as one can see in a moment, the label information plays an important role and must be made explicit once more. 
\end{itemize}

\fbox{
  \begin{minipage}[c][][c]{0.97\textwidth}
\begin{proposition}[{\bf large--$M$ limit}]
\label{prop:mlimit}
In the large--$M$ limit of the \dwgft, the observable expectation values  possess perturbative expansions in terms of simplicial $n$--molecules within the dually--weighted subclass $\sfrst_{n,\Simplicialdually}$ (Remark \ref{rem:dually})
  \begin{eqnarray}
    \lim_{M\rightarrow\infty} \langle O_{\bbgt_1}\dots O_{\bbgt_l}\rangle_{\dwgft} &=&  \sum_{\substack{\sfrt\in\sfrst_{n,\Simplicialdually}\\[0.1cm] {\delta}(\sfrt) = \sqcup_{i = 1}^{l}\bbgt_{i}}} \frac{1}{C(\sfrt)} A(\sfrt;\lambda)\;.
    \end{eqnarray}
\end{proposition}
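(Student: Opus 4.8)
The plan is to compute, as a function of $M$, the Feynman amplitude that the \dwgft\ assigns to each \emph{labelled} simplicial molecule $\sfrt\in\sfrst_{n,\Simplicial}$, and to show that in the limit this amplitude either reproduces the amplitude $A(\sfrt;\lambda)$ of the labelled simplicial \gft\ of Section \ref{sssec:PrimitiveGFT} or vanishes, according to whether $\sfrt$ lies in the dually--weighted subclass $\sfrst_{n,\Simplicialdually}$ or not. As in \eqref{eq:dwperturbative} all expectation values are treated as formal power series in $\lambda$; at each fixed order only finitely many molecules contribute, so $\lim_{M\to\infty}$ commutes with the sum over $\sfrt$, and the statement reduces to a per--molecule claim for a molecule with the prescribed boundary $\widetilde\delta(\sfrt)=\sqcup_i\bbgt_i$.

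The first real step is to exploit the product structure of the kernels recorded in Remark \ref{rem:DWmechanism}: each propagator $\Pbb=\Kbb^{-1}$, each interaction kernel $\Vbb_{\bbg}$ and each boundary kernel $\bbB_{\bbg}$ factorises as a ``$G$--sector'' factor ($\Kbbb$, $\Vbbb$, $\bbBb$), which depends on the $\Mcal$--data only through the \emph{real}/\emph{virtual} character of the incident edges, times an ``$\Mcal$--sector'' factor built from the block matrices $\dbb$ and $\ibb$ (hence from $\Acal$ on the \emph{virtual} blocks). For a labelled molecule the real/virtual pattern is fixed, so the $G$--sector factors do not depend on the $\Mcal$--summation variables nor on $M$; convolving them over $\sfrt$ is exactly what produces $A(\sfrt;\lambda)$. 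One thus obtains a factorisation $A_M(\sfrt;\lambda)=A(\sfrt;\lambda)\,W_M(\sfrt)$, where $W_M(\sfrt)$ is the purely combinatorial sum over the $\Mcal$--labels of $\dbb$-- and $\ibb$--type matrix entries.

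The heart of the argument is the evaluation of $W_M(\sfrt)$. Because $\dbb$ and $\ibb$ are block diagonal with a trivial ($1\times1$) \emph{real} block, every strand running along a \emph{real} edge forces its label to $0$ and contributes $1$; hence $W_M(\sfrt)$ localises on the \emph{virtual} strands, where the interaction/boundary kernels restrict to $\Ical$ and the propagator restricts to $\Acal$ (or $\Acal^{-1}$ in the alternative placement of the dual--weighting matrix noted after Remark \ref{rem:DWmechanism}). Following the cyclic sequence of $\simplicial$ $n$--atoms bonded around a virtual bisecting vertex $\vh\in\Vhat$ --- the structure described in Remark \ref{rem:moleculered} --- the $\Mcal$--sum around an interior $\vh$ closes into $\tr\!\big(\Acal^{\,k_{\vh}}\big)$, where $k_{\vh}$ is the number of incident virtual edges, while around a boundary $\vh$ it becomes an open chain capped by the fixed boundary kernels $\bbBb\,\Ibb$ and contributes an $M$--independent factor (one equal to $1$ on every molecule with the prescribed boundary). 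Thus $W_M(\sfrt)=\prod_{\vh}\tr\!\big(\Acal^{\,k_{\vh}}\big)$ over interior virtual bisecting vertices, and the defining property \eqref{eq:dwproperty}, $\lim_{M\to\infty}\tr(\Acal^k)=\delta_{k,2}$, gives $\lim_{M\to\infty}W_M(\sfrt)=1$ if every interior virtual bisecting vertex of $\sfrt$ has exactly two incident virtual edges and $0$ otherwise. By Remark \ref{rem:dually} this is precisely the condition $\sfrt\in\sfrst_{n,\Simplicialdually}$; re--summing order by order in $\lambda$ yields the stated expansion.

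The step I expect to be the main obstacle is the localisation of $W_M(\sfrt)$: one must check carefully, from the bonding rules encoded in the kinetic kernel and the simplicial atom structure encoded in the interaction kernel, that the $\Mcal$--data really flows around a virtual bisecting vertex as a product of exactly $k_{\vh}$ copies of $\Acal$ closed into a trace --- in particular that consecutive bonds insert consecutive matrix factors with no spurious insertions --- and, relatedly, that the open chains at boundary bisecting vertices are capped consistently so as not to introduce hidden $M$--dependence. The cleanest route is probably to carry this out in the stranded--diagram language of Section \ref{ssec:stranded}, where the faces surrounding $\vh$ are literally the strands forming a single closed (or open) loop and the matrices $\dbb^{-1}$, $\ibb$ sit on successive edges of that loop, so that the trace over $\Mcal$ becomes manifest.
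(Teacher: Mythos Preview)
Your proposal is correct and follows essentially the same route as the paper: factor the amplitude into a $G$--sector piece and an $\Mcal$--sector piece, observe that the latter localises at each bisecting vertex $\vhat$ into either $1$ (real) or $\tr(\Acal^{k_{\vhat}})$ (virtual), and then apply the dual--weighting condition \eqref{eq:dwproperty}. Your write--up is considerably more explicit than the paper's --- you spell out the per--order commutation of the $M\to\infty$ limit with the perturbative sum, the boundary versus interior distinction, and the stranded--diagram justification of the trace structure --- whereas the paper simply asserts the factorisation over $\Vhat$ and the form of the virtual factor; but the underlying argument is the same.
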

\end{minipage}
}
\begin{proof}
  According to the definition of the amplitudes $A(\sfr;\lambda)$, with $\sfr\in\sfrs_{n,\Simplicial}$, in the perturbative sum \eqref{eq:dwperturbative}, the dual--weighting part of the amplitude factorizes across the vertices $\vhat\in\Vhat$ and for each vertex, denoted by $A_{\vhat}$, it takes one of two values:
  \begin{equation} 
    \begin{array}{cl}
      1 & \textrm{for $\vhat$ real.}\\[0.3cm]
   \displaystyle \tr\Big(\prod_{(\vbar\vhat)}\Acal\Big) & \textrm{for $\vhat$ virtual.}
  \end{array}
\end{equation}
More precisely, the amplitude $A(\sfr;\lambda)$ contains contributions from all the labelled counterparts $\sfrt$ of $\sfr$. In such a counterpart, if $\vhat$ is real, then the dual--weighting amplitude is unity, while if $\vhat$ is virtual, the contribution is the trace of a power of $\Acal$, with one $\Acal$--factor for each edge $(\vbar\vhat)$ incident at $\vhat$. 
Due to the dual--weighting property \eqref{eq:dwproperty},  the second contribution vanishes in the limit $M\longrightarrow\infty$ unless there are precisely two/one edge(s) incident at this internal/boundary vertex $\vhat$.  This is exactly the defining property of $\sfrst_{n,\Simplicialdually}$ (Remark \ref{rem:dually}).
\end{proof}

  \begin{remark}[{\bf molecular interpretation}]
	  Proposition \ref{prop:mlimit} allows one to recast the perturbative series generated by the \dwgft\ in the large--$M$ limit as a series catalogued by more generic molecules.  This follows directly from the reduction accomplished by the projection map $\Pi_{n,\Simplicialdually}:\sfrst_{n,\Simplicialdually}\longrightarrow\sfrs$ (Remark \ref{rem:refmolreduction}). From Proposition \ref{prop:limitation}, $\Pi_{n,\Simplicialdually}$ does not cover the whole of $\sfrs$, but only a subset $\sfrs' = \Pi_{n,\Simplicialdually}(\sfrst_{n,\Simplicialdually})$.
The perturbative series can indeed be rewritten as a sum over spin foam molecules:
  \begin{eqnarray}
    \lim_{M\rightarrow\infty} \langle O_{\bbgt_1}\dots O_{\bbgt_l}\rangle_{\dwgft} &=&  \sum_{\substack{\sfr\in\sfrs'\subset\sfrs\\[0.1cm] {\delta}(\sfr) = \sqcup_{i = 1}^{l}\pi_{n,\Simplicialdually}(\bbgt_{i})}} \frac{1}{C^\textbf{eff}(\sfr)} A^\textbf{eff}(\sfr;\lambda)\;.
    \end{eqnarray}   
    Having said that, note that \emph{every} collection of boundary graphs drawn from $\bbgs$ can be evolved within this $\dwgft$. Should one wish to include a larger set of effective molecules from $\sfrs$, this could be easily obtained by incorporation of more interaction terms with support on $\bbgs_{n,\loopless} = \sigma(\bps_{n,\Simplicial})$. 

    In the quantum gravity context, this means that \dwgft s are effectively \gft s describing physical inner products and correlations of various quantum gravity states with support on arbitrary graphs, which are estimated using perturbative series catalogued by spin foam molecules of the most general combinatorics.
\end{remark}

\ 

\begin{remark}[{\bf 3d example revisited}]

  Let us look again at the scenario of Remark \ref{rem:3dexample}. From the dually--weighted viewpoint:
  \begin{itemize}
    \item[--] The interpretation is 3--dimensional, thus one should take $n=3$, with a field $\phi:(G\times\Mcal)^3\longrightarrow\Rbb$.
    \item[--]
      The pyramid observable detailed in \eqref{eq:pyramid} has, among its realizations in the dually--weighted model, the following one based on the graph $\bbg$,  illustrated in Figure \ref{fig:dw3dgraph} (cf. the corresponding atom, Figure \ref{fig:atommoves}) and composed out of six fields:
\begin{equation}
\label{eq:pyramidAgain}
O_{\bbg}[\phi] = \int [\extd g]\sum_{[m]} \bbB_{\bbg}(g;m)\,
\phi(g_1;m_1)\,\phi(g_2;m_2)\,\phi(g_3;m_3)\,\phi(g_4;m_4)\,\phi(g_5;m_5)\;\phi(g_6;m_6)\,.
\end{equation}
where:
\begin{multline}
  \bbB_{\bbg}(g;m) = \bbB(g_{12}g_{21}^{-1},g_{23}g_{32}^{-1},g_{34}g_{43}^{-1},g_{14}g_{41}^{-1},g_{15}g_{51}^{-1},g_{25}g_{52}^{-1},g_{36}g_{63}^{-1},g_{46}g_{64}^{-1},g_{56}g_{65}^{-1}; m_{12},m_{21},\\
   \quad m_{23},m_{32},m_{34},m_{43},m_{14},m_{41},m_{15},m_{51},m_{25},m_{52},m_{36},m_{63},m_{46},m_{64},m_{56},m_{65})\;.
\end{multline}
This packages together a number of observables, depending on the labels $\Mcal$. The observable of interest is precisely the configuration where the labels $m_{56}$ and $m_{65}$ are non--zero (indicating a virtual edge), while the rest are zero (indicating real edges).  Upon reduction of this virtual edge, one arrives at a graph in $\sfrs$ with one 4--valent patch and four 3--valent patches just as in \eqref{eq:pyramid}. One has represented the square base of the observable \eqref{eq:pyramid} in terms of two triangles in the dually--weighted model.

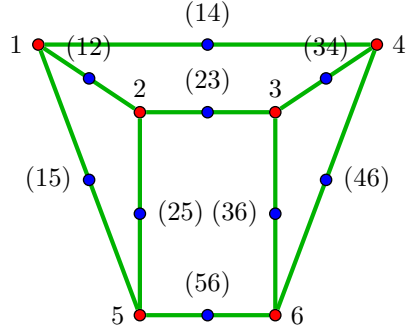
\begin{figure}[htb]
  \centering
  \tikzsetnextfilename{pyramiddw}
  \begin{tikzpicture}[scale=1.8]

\node [vb,label=180:1]	(1)	at (-1.25,1)	{}; 
\node [vb,label=90:2]		(2)	at (-0.5,0.5)	{}; 
\node [vb,label=90:3]		(3)	at (0.5,0.5) 	{}; 
\node [vb,label=0:4]		(4)	at (1.25,1)		{}; 
\node [vb,label=180:5]	(5)	at (-0.5,-1)		{};
\node [vb,label=0:6]		(6)	at (0.5,-1)		{};
\foreach \i/\j in {1/4,1/2,2/3,3/4}{
 \draw [eb] (\i) -- node[vh,label=above:(\i\j)] {}  (\j);
  }
\foreach \i/\j in {1/5,3/6}{
  \draw [eb] (\i) -- node[vh,label=left:(\i\j)] {}  (\j);
  }
\foreach \i/\j in {2/5,4/6}{
  \draw [eb] (\i) -- node[vh,label=right:(\i\j)] {}  (\j);
  }
\draw [eb] (5) -- node[vh,label=above:(56)] {}  (6);
%
%\draw [->, line width=1.6pt] (2,0) -- (3,0);
%
%%pyramid graph
%\begin{scope}[xshift=5cm]
%\node [vb,label=180:1]	(1)	at (-1.25,1)	{}; 
%\node [vb,label=90:2]		(2)	at (-0.5,0.5)	{}; 
%\node [vb,label=90:3]		(3)	at (0.5,0.5) 	{}; 
%\node [vb,label=0:4]		(4)	at (1.25,1)		{}; 
%\node [vb,label=0:5]		(5)	at (0,-1)		{};
%\foreach \i/\j in {1/2,2/3,3/4,1/4}{
% \draw [eb] (\i) -- node[vh,label=above:(\i\j)] {}  (\j);
%  }
%  \foreach \i/\j in {1/5,2/5,3/5,4/5}{
% \draw [eb] (\i) -- node[vh,label=left:(\i\j)] {}  (\j);
%  }
%\end{scope}

\end{tikzpicture}
  \caption{\label{fig:dw3dgraph}The graph $\bbg$ composed of six unlabelled 3--patches.}
\end{figure}

\end{itemize}

\end{remark}

\begin{remark}[{\bf extensions}]
 The dual--weighting mechanism can be applied to other classes of models:
 \begin{itemize}
   \item[--]
The construction above works for arbitrary valence $\copies$, and can clearly be extended to multiple \gft\ fields, if so wished. In models for $D$-dimensional gravity, we would like the valence of the graphs associated to quantum states to be $D$, as in simplicial models, and for the same reasons. One should note, however, that in even dimensions $D$, only effective nodes of even valency are then obtained after the dynamical contraction of virtual links. We have already noticed this combinatorial restriction in the previous section (Proposition \ref{prop:surjectors}).
If one wants to generate graphs of truly arbitrary valence, using the same mechanism, one can easily do so by incorporating a single odd--valent field species, endowed with a $D$--dimensional interpretation. Again, this doubling of fields does not change the general features of the construction. 

\item[--]

Also, notice that the {\it coloured extension} of the \gft\ formalism can be directly applied to the dually--weighted model, provided the valence $\copies$ is chosen to be the space--time dimension $D$. This can be done, as we have seen, either by choosing also a simplicial \gft\ interaction, which brings one back to the standard simplicial setting, or by choosing as \gft\ interactions only the tensor invariant ones. The result of doing so is, in both cases, a set of \gft\ Feynman diagrams dual to combinatorial complexes whose full homological structure can be reconstructed from the colour information. 
\end{itemize}
\end{remark}

\begin{remark}[{\bf model building}]
      The choice of group field determines that the kinematical state space of the theory is populated by quantum states with support on arbitrary $n$--valent labelled graphs. The dual--weighting mechanism is the part of the dynamics that ensures that those states are evolved by spin foam molecules that reduce properly to arbitrary spin foam molecules. 
The precise choice of \gft\ action, incorporating the dual--weighting mechanism, is then a matter of model building. In particular, depending on the choice of interaction kernels, some classes of graphs, present in the kinematical Hilbert space, can be suppressed dynamically. 
  \begin{itemize}
    \item[--]
Therefore, one possible criterion for model building stems from the wish to suppress or enhance specific combinatorial structures. Conversely, one may want to start from the simplest set of \gft\ interactions that ensures that {\it all} kinematical states participate to the quantum dynamics. We have proven that simplicial spin foam atoms, in the context of the dually--weighted theories, satisfy this criterion.  

\item[--]
Another criterion that might determine the choice of \gft\ interaction combinatorics emerges from the correspondence between the interaction kernels and the matrix elements of a canonical \lqg\ projector operator in the Fock representation, emphasized in \cite{Oriti:2013vv}. Prescribing the latter implies a choice for the former. In general though, one should expect there to be infinitely many non--trivial matrix elements, meaning infinitely many \gft\ interaction kernels, unless these are restricted by very strong symmetry requirements. As a result, the real quest centres on pinpointing the subsets of interactions that are {\it physically} relevant at different scales, in particular,  to define the theory in some deep UV or IR regime. In other words, the problem becomes that of {\it \gft\ renormalization} \cite{\cgftrenorm}. In fact, one should expect that the renormalization group flow will select a finite set of \gft\ interactions to define a renormalizable \gft\ model. Moreover, this  dictates which new terms are relevant for the quantum dynamics at different scales. In turn, this prescribes a renormalizable \lqg\ dynamics.
\end{itemize}
\end{remark}

To sum up this section, we have defined a class of \gft s admitting boundary states with higher--valent, node structure.  Rather than increasing the number of fields and interactions, we have extended the data set of the usual simplicial \gft. Utilizing these extra arguments to invoke a dual--weighting matrix, in a limit of the theory, we acquire the effective dynamical content of arbitrary spin foams and boundary states.

%%%%%%%%%%%%%%%%%%%%%%%%%%%%%%%%%%%%%%%%%%%%%%%%%%%%%%%%%%%%%%%%%%%%%%%%%%%%%%%%%%%%
%%%%%%%%%%%%%%%%%%%%%%%%%%%%%%%%%%%%%%%%%%%%%%%%%%%%%%%%%%%%%%%%%%%%%%%%%%%%%%%%%%%%

\newpage

%%%%%%%%%%%%%%%%%%%%%%%%%%%%%%%%%%%%%%%%%%%%%%%%%%%%%%%%%%%%%%%%%%%%%%%%%%%%%%%%%%%%
%%%%%%%%%%%%%%%%%%%%%%%%%%%%%%%%%%%%%%%%%%%%%%%%%%%%%%%%%%%%%%%%%%%%%%%%%%%%%%%%%%%%

\section{Spin foam models}
\label{sec:sfm}
The previous section explored the space of \gft s, concentrating on the development of a class whose perturbative expansions were effectively catalogued by general spin foam molecules.  From our point of view, the next important step is to demonstrate that such dually--weighted models are compatible with quantum gravitational dynamics, in particular, spin foam quantum gravity models.

There is a class of gravitational spin foam models in 4--dimensions ($D=4$) in the framework of simplicial molecules $\sfrs_{D,\primitive}$, of which the Engle--Pereira--Rovelli--Livine (\eprl) \cite{\ceprl}, the Freidel--Krasnov \cite{\cfk} and the Baratin--Oriti \cite{\cbo} models are members.  Moreover, all these models permit an extension to arbitrary spin foam molecules $\sfrs$  \cite{\ckkl}, which we call \kkl--extension in the case of the \eprl\ model.

In this section, we shall first review gravitational spin foam models, in their generalized context %of their \kkl--extension
, giving explicit details of the \eprl\ model. We shall show that a multi--field \gft\ straightforwardly assigns these amplitudes in the perturbative expansion.  We shall then define a dually--weighted \gft\ model that assigns the same as effective amplitudes in the large--$M$ limit.

\subsection{4--dimensional spin foam quantum gravity}

\begin{defin}[{\bf spin foam model}]
A \emph{spin foam model} is a quantum theory defined by a partition function of the following form:
\begin{equation}\label{eq:fullsf}
  Z_{\SF} = \sum_{\substack{\sfm\in\sfms\\[0.1cm] \delta(\sfm)=\emptyset}} W(\sfm)\, A(\sfm)\;,
\end{equation}
where:
\begin{description}
  \item[--] $\sfms$ is the set of spin foam molecules  from Definition \ref{def:molecule}, or one of its subsets;  
  \item[--] $A(\sfm)$ is the {\it spin foam amplitude} associated by the model to $\sfm$; and 
  \item[--] $W(\sfm)$ is the {\it spin foam measure factor} that weights $\sfm$ in the sum over such molecules. 
\end{description}
\end{defin}
\begin{remark}
The distinction between $W(\sfm)$ and $A(\sfm)$ might appear quite arbitrary. However, they are distinguished to highlight the fact that within the spin foam formalism, while derivations for certain amplitudes $A(\sfm)$ can be given,  one must prescribe $W(\sfm)$ by hand.  As one might expect, group field theory provides a complete prescription for both $A(\sfm)$ and $W(\sfm)$. 
\end{remark}

\begin{remark}
	In the operator spin foam formalism \cite{\cOperator}, the spin foam amplitude is specified by the sets of variables and operators that it associates to the components (vertices, edges, faces) of $\sfm$.  Here, those components may be identified through combinations of the various vertices $v\in\Vcal$, $\vbar\in\Vbar$, $\vhat\in\Vhat$, where $\Vcal_{\sfm} = \Vcal\cup\Vbar\cup\Vhat$ 
	%(an edge is a pair $(v\vbar)\in\Ecal$ or $(\vbar\vhat)\in\Ehat$, where $\Ecal_\sfm = \Ecal\cup\Ehat$, while a face is a triple $(v\vbar\vhat)\in\Fcal_\sfm$, cf. 
	(Remark \ref{rem:bondingexample}). The variables are often drawn from some group--related structures, namely,  group/algebra elements or group representations.  Sets of variables and operators are denoted by $\sfls$ and $\sfos$, respectively, while individual variables and operators are denoted by $\sfl$ and $\sfo$.
\end{remark}

\begin{remark}[{\bf gravitational spin foam model}]
  A \emph{gravitational spin foam model} is a spin foam model related to the Holst-Plebanski action, that is, it includes some quantum version of the simplicity constraint.  An interesting feature at this stage is their \emph{locality} property. They are local in the sense that the amplitude associated to a spin foam molecule $\sfm$ factorizes into operators associated to each of the vertices in $\Vcal_\sfm$, which depend only on the labels attached to \lq\lq nearby\rq\rq\ components, more precisely, components that contain a given vertex of $\Vcal_\sfm$.  Explicitly: 
\begin{equation}
\label{eq:skelsf} 
A(\sfm) = \sum_{\sfls}\prod_{\vhat\in\Vhat} \sfo_{\vhat}(\sfls_{\vhat}) \prod_{\vbar\in\Vbar} \sfo_{\vbar}(\sfls_{\vbar}) \prod_{v\in\Vcal}\sfo_v(\sfls_v) \;.
\end{equation}
\end{remark}
As stated above, we shall focus on a particular 4--dimensional gravitational spin foam model: the Engle--Pereira--Rovelli--Livine (\eprl) model \cite{\ceprl} and its \kkl--extension. 
Once more, we emphasize that the same construction, including the combinatorial generalization, applies to the other spin foam models as well.

\begin{remark}[{\bf variables}]
In the group representation, %element realization: 
%\begin{defin}[{\bf variables}]
  the relevant variables %\emph{\eprl\ variables} 
are the group elements $g_{v\vbar\vhat} \in G$, $G = \SO(4)$, where $(v\vbar\vhat)\in\Fcal_\sfm$ is a face of the molecule $\sfm$.
%\end{defin} 
%\begin{remark} 
 It is convenient to identify certain subsets of variables:
\begin{equation}
  \label{eq:defs}
\begin{array}{rcrclcrcl}
  \sfls_v &\equiv& g_v &=& \displaystyle \bigcup_{\vbar,\vhat}\{g_{v\vbar\vhat}: (v\vbar\vhat)\in\Fcal_\sfm \}\;,\\[0.3cm] % && g_v &\in& G^{\times(\sum_{e\supset v}|\mathcal{F}_e|)}\;,\\[0.3cm]
  \sfls_{\vbar} &\equiv& g_{\vbar} &=& \displaystyle\bigcup_{v,\vhat}\{g_{v\vbar\vhat}: (v\vbar\vhat)\in\Fcal_\sfm \}\;,\\[0.3cm]% &\quad\quad\textrm{so that}\quad\quad & g_{\vbar} &\in& G^{\times(2|\mathcal{F}_e|)}\\[0.3cm]
  \sfls_{v\vbar} &\equiv& g_{v\vbar} &=& g_{v}\cap g_{\vbar}\;, %\quad\quad \textrm{for}\;(v\vbar)\in\mathcal{E}_{bulk}.
 % \sfls_{\vhat} &\equiv& g_{\vhat} &=&  \displaystyle\bigcup_{v,\vbar}\{g_{v\vbar\vhat}: (v\vbar\vhat)\;\textrm{admissible} \}\;. % && g_{\vhat} &\in& G^{\times |\mathcal{F}_e|}\\
\end{array}
\end{equation}
\end{remark}
The operators are functions of these group elements. For the \eprl\ model in the group element realization, only the edge and vertex operators are non--trivial, that is, $\sfos = \sfos_v\cup\sfos_{\vbar}$.

Consider a vertex $\vbar\in\Vbar$ that arises in the molecule after the bonding of two patches $\bp_1\cong\bp_2\equiv\bp$ and denote the two edges in $\Ecal$ incident at $\vbar$ by $(v_1\vbar)$ and $(v_2\vbar)$.  The operator associated to $\vbar$, which is usually called the edge operator in the spin foam literature, is defined as follows: 
\begin{defin}[{\bf edge operator}]
  The \eprl\ \emph{edge operator} associated to $\vbar$ is:
\begin{equation}
  \label{eq:edgeop} 
  \sfo_{\vbar}(g_{\vbar}) \equiv \Pbb_{\bp}(g_{v_1\vbar}, g_{v_2\vbar})
  = \int_{G^{\times 2}} \extd h_{v_1\vbar}\, \extd h_{v_2\vbar}\;\prod_{e = (\vbar\vhat)\in\mathcal{E}_{\bp}} \left[\sum_{J_{\vhat}\in \mathcal{J}} \tr_{J_{\vhat}}\left(g_{v_1\vbar\vhat}\;h_{v_1\vbar}^{-1}\; \Sbb_{J_{\vhat},N_0}\; h_{v_2\vbar}\; g_{v_2\vbar\vhat}^{-1}\right)\right]\;,
\end{equation}
where:
\begin{description}
	\item[--] $\bp$ is the boundary patch associated to the vertex $\vbar$ (it may contain loops);
  \item[--] $\mathcal{J}$ is the set of $\gamma$--simple representations of $\mathfrak{g} = Lie(G) = \so(4) \cong \su(2)_+\times\su(2)_-$:
    \begin{equation}
      \mathcal{J} = \left\{J\in \textrm{Irrep}(\mathfrak{g}): J = (j_+,j_-) \;\textrm{with}\; \tfrac{j_-}{j_+} = \tfrac{|1-\gamma|}{1+\gamma}\;\textrm{and} \; j_\pm\in \Nbb/2\right\}\;.
    \end{equation}
    %$j_{\pm}\in\textrm{Irrep}(\su(2))$.
  \item[--] $\Sbb_{J,N}$ is the gauge--covariant simplicity operator:
\begin{equation}
  \Sbb_{J,N} = \left\{
    \begin{array}{ll}
      \displaystyle d_{J} \int_{\mathcal{S}_{N}^2}\extd \vec{n}\; |j_+ \vec{n}\rangle {|j_- \vec n\rangle}\, \langle j_+ \vec n| {\langle j_- \vec n |}\;,& (\gamma > 1)\\[0.4cm]
      \displaystyle d_{J}\int_{\mathcal{S}_{N}^2}\extd \vec{n}\; |j_+ \vec{n}\rangle \overline{|j_- \vec n\rangle}\, \langle j_+ \vec n| \overline{\langle j_- \vec n |}\;,& (\gamma < 1)
    \end{array}
    \right.
  \end{equation} 
  where $d_J = (2j_+ +1)(2j_- +1)$, $|j_\pm \vec{n}\rangle$ are $\su(2)$ coherent states,\footnotetext{$\su(2)$ coherent states are defined as:
    \begin{equation}
      |j\vec{n}\rangle = n|jj\rangle\;, 
    \end{equation}
    where $|jj\rangle$ is the highest weight state in the irrep $j$ of $\su(2)$ and $n = exp(\vec{n}\cdot\vec{\sigma})\in \SU(2)$, $\vec{\sigma}$ are the generators of $\su(2)$.}
    $\mathcal{S}^2_{N}$ is the unit 2--sphere in the 3--dimensional hypersurface perpendicular to the 4--vector $N$ and $N_0 = (1,0,0,0)$.
\end{description}
\end{defin}
\begin{remark}
Note that the edge operator factorizes across the edges of the intermediary patch $\bp$. 
Moreover, these factors are independent of the edge being part of a loop.

It is also worth elaborating on the gauge--covariance of the simplicity operator, since this is not usually emphasized in the literature.  The operator $\Sbb_{J,N}$ transforms as:
\begin{equation}
	\Sbb_{J,h\triangleright N} = h\;\Sbb_{J,N}\;h^{-1}\;,
\end{equation}
where $h\in\SO(4)$ and $h\triangleright N$ denotes the rotated 4--vector.
\end{remark}

\begin{defin}[{\bf vertex operator}]
  Consider a vertex $v\in\mathcal{V}$.  The \eprl\ \emph{vertex operator} associated to $v$ is:
  \begin{equation}
    \label{eq:vertexop}
    \sfo_v(g_v)  = \Vbb_{\bbg}(g_v) =   \prod_{(\vb_1,\vh),(\vb_2,\vh)\in\Ecal_{\bbg} }%\substack{f\in\Fcal_\sfm\\ f\supset v}} 
    \delta(g_{v\vbar_1\vhat},g_{v\vbar_2\vhat})\;,
  \end{equation}
  where $\bbg=(\Vcal_{\bbg},\Ecal_{\bbg})$ is the bisected boundary graph associated to the spin foam atom in $\sfr$ containing $v$. %, while $\vbar_1$ and $\vbar_2$ are the two vertices in $\Vbar_\bbg$ that share an edge with $\vhat$. 
  
Notice that this is just the vertex operator of a BF spin foam model. This confirms the general fact that the ingredients of a spin foam amplitudes can be freely shifted from vertex to edges, corresponding to a parallel shift from interaction to kinetic term in the corresponding group field theory formulation.
\end{defin}

\begin{remark}[{\sc eprl} {\bf spin foam amplitude}]
  These operator kernels are the constitutents of the \eprl\ spin foam amplitudes.  Their convolution, guided by the connectivity of the spin foam molecule $\sfm$, to which they are assigned, produces the amplitude $A(\sfm)$ for that molecule : 
  \begin{equation}
  \label{eq:qgmodelsa}
  A(\sfm) = \int [\extd g]\prod_{\vbar\in\Vbar} \sfo_{\vbar}(g_{\vbar}) \prod_{v\in\Vcal}\sfo_v(g_v)\;.
\end{equation}
\end{remark}

\begin{remark}
	A direct quantum gravity interpretation follows after attaching a 4--dimensional reference frame to each vertex $v$, $\vbar$ and $\vhat$. Then, one thinks of each $h_{v\vbar}$ as the parallel transport matrix from $v$ to $\vbar$, while the $g_{v\vbar\vhat}$ are the parallel transport matrices from $\vbar$ to $\vhat$ \emph{before} explicit bonding of the spin foam atoms; hence, the $v$--dependence arises.  The ordered product of elements $h_{v\vbar}$ arising in faces containing $\vhat$, constitutes a holonomy representation of the curvature tensor.  The group elements transport pre--geometric quantities,  which are encoded in the elements of the representation modules labelled by $J_{\vhat}$. At the vertices $\vbar$, simplicity constraints are applied to these elements (via the operator $\Sbb_{J,N}$), to ensure the propagation of a geometric subset of information.
 \end{remark}
 \begin{remark}
   The advantages of the strand diagram realization comes to the fore at this juncture, since the edges and vertex operators factorize over the strands, namely:
   %\begin{equation}
%	   \begin{array}{lcllcl}
   \begin{eqnarray}
	   \label{eq:strandfac}
	   \pbb (g_{v_1\vbar\vhat},g_{v_2\vbar\vhat}; h_{v_1\vbar},h_{v_2\vbar})
	   &=& \sum_{J_{\vhat}\in \mathcal{J}} \tr_{J_{\vhat}}\left(g_{v_1\vbar\vhat}\;h_{v_1\vbar}^{-1}\; \Sbb_{J_{\vhat},N_{\vbar}}\; h_{v_2\vbar}\; g_{v_2\vbar\vhat}^{-1}\right)\;,\\
		%   &\Pbb(g_v) &=& \int \extd h_{v_1\vbar}\;\extd h_{v_2\vbar}\; \prod_{e=(\vbar\vhat)\in\Ecal_\bp} \pbb(g_{v_1\vbar\vhat},g_{v_2\vbar\vhat}; h_{v_1\vhat},h_{v_2\vhat})\;,\\
	   \vbb(g_{v\vbar_1\vhat},g_{v\vbar_2\vhat}) 
	   &=&  \delta(g_{v\vbar_1\vhat},g_{v\vbar_2\vhat})\;.
   \end{eqnarray}
%	   \end{array}
 %  \end{equation}
%The factors is illustrated in Figure \ref{fig:strandfactor}.

%\begin{figure}[htb]
%\centering
%\vspace{3cm}
%\caption{\label{fig:strandfactor} Labelled strand diagram.}
%\end{figure}

\end{remark}
 \begin{remark}
	 Since the edge operator is not a projector, in order to have the functional form of the kinetic kernel, one should explicitly invert the propagator: $\Kbb_\bp = \Pbb_\bp^{-1}$.  We do not engage in this task for two reasons: \textit{i}) the kinetic operator is of secondary interest to the propagator, since the propagator determines the Feynman amplitudes; \textit{ii}) within a field theory approach, one can transfer the simplicity constraints from the propagator to the the interaction kernels, leaving a projective propagator that may be directly incorporated as the kinetic kernel.   
 \end{remark}
Thus, we have sufficient information already to lay out a multi--field \gft\ for the \kkl--extension of the \eprl\ model.

\fbox{
  \begin{minipage}[c][][c]{0.97\textwidth}
\begin{defin}[{\bf multi--field} {\sc eprl gft}]
A \emph{multi--field \eprl\ group field theory} is defined by a partition function of the form:
\begin{equation}
	Z_{\textsc{mf-}\eprl} = \int [\extd\Phi] \; e^{-S_{\textsc{mf-}\eprl}[\Phi]} 
\end{equation}
where:
\begin{equation}
\label{eq:sfmultiaction}
 S_{\textsc{mf-}\eprl}[\Phi] 
 = \sum_{\bp\in\mathfrak{P}} \int [\extd g]\; \phi_{\bp}(g_{v_1\vbar})\; \Kbb_{\bp}(g_{v_1\vbar},g_{v_2\vbar})\;\phi_{\bp}(g_{v_2\vbar}) 
 + \sum_{\bbg\in \mathfrak{B}} \lambda_{\bbg} \int [\extd g]\; \Vbb_{\bbg}(g_v)\;\prod_{\vbar\in\Vbar}\phi_{\bp}(g_{\vbar})\;.
\end{equation}
\end{defin}
\end{minipage}
}

\subsection{Dually--weighted {\sc eprl gft}}

At this point, we have at our disposal all the tools necessary to incorporate the \eprl\ model within the \dwgft\ formalism.
In the following, we shall deal exclusively with $4$--regular simplicial structures  $\bbgs_{4,\primitive}$, $\sfas_{4,\primitive}$ and $\sfrs_{4,\primitive}$, as well as their labelled counterparts.
\begin{defin}[{\bf dually--weighted variables}]
  The \emph{dually--weighted \eprl\ variables} are:
  \begin{description}
	  \item[--] the group elements $g_{v\vbar\vhat}\in G=\SO(4)$, where $(v\vbar\vhat)\in\Fcal_\sfm$;
	  \item[--] the dual--weighting indices $m_{v\vbar\vhat}\in \{0,1,\dots, M\}$, where $(v\vbar\vhat)\in\Fcal_\sfm$;
  \end{description}
\end{defin}
With the distinction of real and virtual structures comes the responsibility of designing amplitudes that assign the correct effective amplitude to the underlying real spin foam molecule.  To this end, we shall first state the operators and later show their efficacy. 
\begin{defin}[{\bf dually--weighted edge operator}]
  The \emph{dually--weighted \eprl\ edge operator} is:
  \begin{eqnarray}
	  \sfo_{\vbar}(g_{\vbar}, m_{\vbar}) = \Pbb(g_{v_1\vbar},g_{v_2\vbar};{m_{v_1\vbar},m_{v_2\vbar}})
  \end{eqnarray}
  where:
    \begin{eqnarray}
  \Pbb(g_{v_1\vbar},g_{v_2\vbar};{m_{v_1\vbar},m_{v_2\vbar}}) &=&  \int \extd h_{v_1\vbar}\, \extd h_{v_2\vbar}\;\prod_{e=(\vbar\vhat)\in\Ecal_\bp} 
    \Big[
      \pbb_{real}(g_{v_1\vbar\vhat}, g_{v_2\vbar\vhat}; h_{v_1\vbar}, h_{v_2\vbar})\;
      \dbb_{real}(m_{v_1\vbar\vhat},m_{v_2\vbar\vhat}) \notag \\
  &&\hspace{3cm} + \;
\pbb_{virtual}(g_{v_1\vbar\vhat}, g_{v_2\vbar\vhat}; h_{v_1\vbar}, h_{v_2\vbar})\;
\dbb_{virtual}(m_{v_1\vbar\vhat},m_{v_2\vbar\vhat})\Big], \label{eq:extprop}
  \end{eqnarray}
 and:
\begin{itemize}
	  \item[--] $\bp=\bp_4%\in\bps_{4,\primitive}
	  $, the unique unlabelled 4--patch;
	  \item[--] the gravitational factors are:
  \begin{eqnarray}
    \pbb_{real}(g_{v_1\vbar\vhat}, g_{v_2\vbar\vhat}; h_{v_1\vbar}, h_{v_2\vbar})
 &=& \sum_{J_{\vhat}\in \mathcal{J}} \tr_{J_{\vhat}}\left(g_{v_1\vbar\vhat}\;h_{v_1\vbar}^{-1}\; \Sbb_{J_{\vhat},N_{0}}\; h_{v_2\vbar}\; g_{v_2\vbar\vhat}^{-1}\right)\;,\\
     \pbb_{virtual}(g_{v_1\vbar\vhat}, g_{v_2\vbar\vhat}; h_{v_1\vbar}, h_{v_2\vbar})\;
 &=& \delta(g_{v_1\vbar\vhat}\;h_{v_1\vbar}^{-1})\;\delta(h_{v_2\vbar}\;g_{v_2\vbar\vhat}^{-1})\;,
  \end{eqnarray}
\item[--] the dual--weighting factors are:
  \begin{equation}
    \dbb_{real}= %(x_{v_1\vbar\vhat},x_{v_2\vbar\vhat}) =  % \delta({x_{v_1\vbar\vhat},0})\;\delta({x_{v_2\vbar\vhat},0})  = 
    \left(
    \begin{array}{c|c}
      1&0\\ \hline
      0&0
    \end{array}
    \right)
    \quad\quad\quad\quad
    \dbb_{virtual} = %(x_{v_1\vbar\vhat},x_{v_2\vbar\vhat}) = 
    \left(
    \begin{array}{c|c}
      0&0\\ \hline
      0&\Acal
    \end{array}
    \right)\;.
  \end{equation}
  \end{itemize}
\end{defin}
\begin{remark}
Notice that dual--weighting factors satisfy $\dbb = \dbb_{real} + \dbb_{virtual}$.  The \emph{real} and \emph{virtual} labels encode the conditions of the dual--weighting mechanism outlined in Remark \ref{rem:DWmechanism}.  Moreover, the real gravitational factor coincides with the original $\pbb$ from Equation \eqref{eq:strandfac}.  Thereby, the virtual strand factor decouples the information assigned to the edge in one atom, from that assigned with the other.   
\end{remark}

\begin{defin}[{\bf dually--weighted vertex operator}]
  Consider a bulk vertex $v$.  
  The \emph{dually--weighted \eprl\ vertex operator} is:
  \begin{equation}
    \label{eq:sfvertex2}
    \sfo_v(g_v;m_v) = \Vbb_{\bbg}(g_v,m_v)\;,
    \end{equation}
    where $\bbg$ is the unique unlabelled 4--simplicial bisected boundary graph and:
    \begin{equation}
	    \Vbb_{\bbg}(g_v,m_v) =
\prod_{\substack{f\in\Fcal_\sfm\\ f\supset v}} \left[ \vbb_{real}(g_{v\vbar_1\vhat},g_{v\vbar_2\vhat}) \;  \ibb_{real}(m_{v\vbar_1\vhat},m_{v\vbar_2\vhat}) + \vbb_{virtual}(g_{v\vbar_1\vhat},g_{v\vbar_2\vhat}) \;  \ibb_{virtual}(m_{v\vbar_1\vhat},m_{v\vbar_2\vhat})\right]\;.
    \end{equation}
 The factors are: 
    \begin{equation}
    \vbb_{real}(g_{v\vbar_1\vhat},g_{v\vbar_2\vhat}) = \vbb_{virtual}(g_{v\vbar_1\vhat},g_{v\vbar_2\vhat})=  \delta(g_{v\vbar_1\vhat},g_{v\vbar_2\vhat}) \;, 
    \end{equation}
    and:
    \begin{equation}
	    \ibb_{real} =
	    \left(
    \begin{array}{c|c}
      1&0\\ \hline
      0&0
    \end{array}
    \right)
    \quad\quad\quad\quad
    \ibb_{virtual} = %(x_{v_1\vbar\vhat},x_{v_2\vbar\vhat}) = 
    \left(
    \begin{array}{c|c}
      0&0\\ \hline
      0&\Ical
    \end{array}
    \right)\;.
  \end{equation}
\end{defin}

\fbox{
  \begin{minipage}[c][][c]{0.97\textwidth}
\begin{defin}[{\bf dually--weighted} {\sc eprl gft}]
A \emph{dually--weighted \eprl\ group field theory} is defined by a partition function of the form:
\begin{equation}
	Z_{\textsc{dw-}\eprl} = \int [\extd\Phi] \; e^{-S_{\textsc{dw-}\eprl}[\Phi]} 
\end{equation}
where:
\begin{equation}
\label{eq:sfdwaction}
 S_{\textsc{dw-}\eprl}[\Phi] 
 = \frac12\int[\extd g]\sum_{[m]}\phi(g_1;m_1)\;\Kbb(g_1, g_2;m_1,m_2)\;\phi(g_2;m_2)
	+ \lambda\int [\extd g]\sum_{[m]} \Vbb_{\bbg}(g;m)\prod_{j =1}^{n+1} \phi(g_j;m_j)
%
% = \sum_{\bp\in\mathfrak{P}} \int [\extd g]\; \phi_{\bp}(g_{v_1\vbar})\; \Pbb_{\bp}(g_{v_1\vbar},g_{v_2\vbar})\;\phi_{\bp}(g_{v_2\vbar}) 
% + \sum_{\bbg\in \mathfrak{B}} \lambda_{\bbg} \int [\extd g]\; \Vbb_{\bbg}(g_v)\;\prod_{\vbar\in\Vbar}\phi_{\bp}(g_{\vbar})\;,
\end{equation}
\end{defin}
\end{minipage}
}

Now, it is time to confirm that these operator assignments lead to the correct effective amplitude. 
\begin{proposition}
In the large--$M$ limit, the effective amplitude assigned by the dually--weighted \eprl\ model to the underlying real spin foam molecule coincides with that of the original \eprl\ model.  
\end{proposition}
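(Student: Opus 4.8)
The plan is to start from the perturbative expansion \eqref{eq:dwperturbative} of the \dwgft\ equipped with the dually--weighted \eprl\ edge and vertex operators, and to show that, term by term, in the large--$M$ limit the labelled simplicial amplitude $A(\sfrt;\lambda)$ collapses onto the \eprl\ amplitude \eqref{eq:qgmodelsa} of the reduced molecule $\sfm=\Pi_{4,\Simplicialdually}(\sfrt)$, which is precisely the assignment made by the multi--field \eprl\ model \eqref{eq:sfmultiaction}.

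First I would write $A(\sfrt;\lambda)$ explicitly by convolving the operators $\sfo_{\vbar}(g_{\vbar},m_{\vbar})$ and $\sfo_v(g_v,m_v)$ along the connectivity of $\sfrt$, as in \eqref{eq:skelsf}. Since $\dbb=\dbb_{real}+\dbb_{virtual}$ and $\ibb=\ibb_{real}+\ibb_{virtual}$ split into complementary diagonal blocks, and the gravitational factors $\pbb_{real},\vbb_{real}$ and $\pbb_{virtual},\vbb_{virtual}$ sit on the corresponding blocks, the amplitude factorizes into a $G$--sector times an $\Mcal$--sector, and the $\Mcal$--sector factorizes further over the bisecting vertices $\vhat\in\Vhat$: around each $\vhat$ the incident faces $(v\vbar\vhat)$ close into a cycle, so the $m$--sum there is a trace of an alternating product of $\dbb$-- and $\ibb$--blocks. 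The block structure annihilates every mixed configuration, leaving only the purely \emph{real} one (value $1$, accompanied by the original \eprl\ strand factors of \eqref{eq:strandfac}) and the purely \emph{virtual} one (value $\tr(\Acal^{k_{\vhat}})$, with $k_{\vhat}$ the number of edges $(\vbar\vhat)$ at $\vhat$, accompanied by $\pbb_{virtual},\vbb_{virtual}$).

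Next, by the defining property \eqref{eq:dwproperty}, $\tr(\Acal_M^{k})\to\delta_{k,2}$, so --- exactly as in the proof of Proposition \ref{prop:mlimit} --- the virtual configuration at $\vhat$ survives only for $k_{\vhat}=2$ at an interior bisecting vertex (and $k_{\vhat}=1$ at a boundary one), leaving precisely the labelled molecules of $\sfrst_{4,\Simplicialdually}$. For such a surviving $\sfrt$ the remaining work is in the $G$--sector: I would integrate out the group data carried by the virtual structures. At a surviving virtual bisecting vertex the factor $\pbb_{virtual}(g_{v_1\vbar\vhat},g_{v_2\vbar\vhat};h_{v_1\vbar},h_{v_2\vbar})=\delta(g_{v_1\vbar\vhat}\,h_{v_1\vbar}^{-1})\,\delta(h_{v_2\vbar}\,g_{v_2\vbar\vhat}^{-1})$ trivialises the corresponding $h$--integrations and identifies the face data on one side of the bond with that on the other, and the BF--type factor $\vbb_{virtual}=\delta$ does the same on the vertex side; hence the two \eprl\ trace factors meeting at the virtual edge concatenate into a single one. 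Running the amplitude analogue of the patch decomposition of Proposition \ref{prop:encodeproof} in reverse, the product of $4$--valent simplicial edge operators then reassembles into the \eprl\ edge operator \eqref{eq:edgeop} on the (possibly higher--valent, possibly looped) patch of $\sfm$, the gauge--covariance $\Sbb_{J,h\triangleright N}=h\,\Sbb_{J,N}\,h^{-1}$ ensuring consistent frame assignments at the merged vertices $\vbar$. Summing the surviving contributions over $\sfrt\in\Pi_{4,\Simplicialdually}^{-1}(\sfm)$ and collecting the combinatorial factors gives $A^{\textbf{eff}}(\sfm;\lambda)=A(\sfm)$ of \eqref{eq:qgmodelsa}.

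The main obstacle I anticipate lies in this last reassembly: one must check that every \emph{real} auxiliary piece introduced by the simplicial decomposition of a patch (the extra vertices $\vbar$ and real bisecting vertices produced via Remark \ref{rem:patchatom}) carries only trivial BF/identity gravitational weight, so that after eliminating the redundant group elements the resulting operator is exactly \eqref{eq:edgeop} with no leftover curvature or simplicity insertions. A secondary subtlety is the boundary case $k_{\vhat}=1$, where the surviving virtual factor and the boundary kernel $\bbB_{\bbg}$ must be verified to combine so that boundary data propagate precisely as on the reduced bisected boundary graph.
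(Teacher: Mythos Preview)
Your approach is essentially the same as the paper's: factor the amplitude into a $G$--sector and an $\Mcal$--sector, use the dual--weighting property \eqref{eq:dwproperty} to restrict to $\sfrst_{4,\Simplicialdually}$, and then integrate out the group data attached to each surviving virtual bisecting vertex $\vhat$. The paper carries out exactly this local computation: with the four faces $(v_i\vbar_{12}\vhat),(v_i\vbar_{21}\vhat)$, $i=1,2$, the product of the two $\pbb_{virtual}$ factors and the two $\vbb_{virtual}$ factors, after integrating over $g_{\vhat}$, collapses to $\delta(h_{v_1\vbar_{12}}h_{v_1\vbar_{21}}^{-1})\,\delta(h_{v_2\vbar_{12}}h_{v_2\vbar_{21}}^{-1})$. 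These deltas identify the gauge--averaging variables at $\vbar_{12}$ and $\vbar_{21}$, so the remaining \emph{real} strand factors at the two vertices combine into precisely the \eprl\ edge operator \eqref{eq:edgeop} on the merged patch. That is the whole proof; no appeal to the gauge--covariance of $\Sbb_{J,N}$ is needed, since the identification happens at the level of the $h$--variables rather than the $N$--frames.

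Where you over--reach is in your anticipated ``main obstacle''. You frame the reassembly as running Proposition~\ref{prop:encodeproof} in reverse and worry about the \emph{real} auxiliary edges introduced by that decomposition. But the proposition under consideration is a statement about the reduction map $\Pi_{4,\Simplicialdually}$ applied to a given $\sfrt\in\sfrst_{4,\Simplicialdually}$, not about decomposing an arbitrary $\sfm\in\sfms$ and recovering its \kkl--\eprl\ amplitude. Real edges are never touched by reduction, so there is nothing to check about their weight: they simply persist as real \eprl\ strand factors in the reduced molecule. Your concern is really about the separate (and stronger) question of whether $A^{\textbf{eff}}(\sfm)$ coincides with the \kkl\ amplitude for \emph{every} $\sfm\in\sfms$, which the paper does not claim here (cf.\ Proposition~\ref{prop:limitation}). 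Dropping that detour, your argument and the paper's coincide.
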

\begin{proof}
	Utilizing Proposition \ref{prop:mlimit}, in the large--$M$ limit, the contributing molecules are restricted to those, for which their virtual vertices $\vhat\in\Vhat\subset\Vcal_\sfm$ lie in precisely four virtual faces $f\in\mathcal{F}_{\sfm}$. 
	The amplitude is then:
	\begin{equation}
		\label{eq:dwAmp}
		A(\sfm) = \int [\extd g]\sum_{[m]}\prod_{\vbar\in\Vbar} \sfo_{\vbar}(g_{\vbar};m_{\vbar}) \prod_{v\in\Vcal}\sfo_v(g_v;m_v)\;,
	\end{equation}
  Then, the key calculation examines the effect of integrating out the variables associated to components in the neighbourhood of this vertex $\vhat$. More precisely, the amplitude assigned by the dually--weighted \eprl\ model to a molecule containing such a vertex has the following factors:
  \begin{multline}
	  \vbb_{virtual}(g_{v_1\vbar_{12}\vhat},g_{v_1\vbar_{21}\vhat})\;  \pbb_{virtual}(g_{v_1\vbar_{12}\vhat}, g_{v_2\vbar_{12}\vhat}; h_{v_1\vbar_{12}}, h_{v_2\vbar_{12}})\\[0.3cm]
	  \vbb_{virtual}(g_{v_2\vbar_{12}\vhat},g_{v_2\vbar_{21}\vhat})\; \pbb_{virtual}(g_{v_1\vbar_{21}\vhat}, g_{v_2\vbar_{21}\vhat}; h_{v_1\vbar_{21}}, h_{v_2\vbar_{21}})\;,
	  \label{eq:eval}
\end{multline}
where the configuration is illustrated in Figure \ref{fig:contraction}.

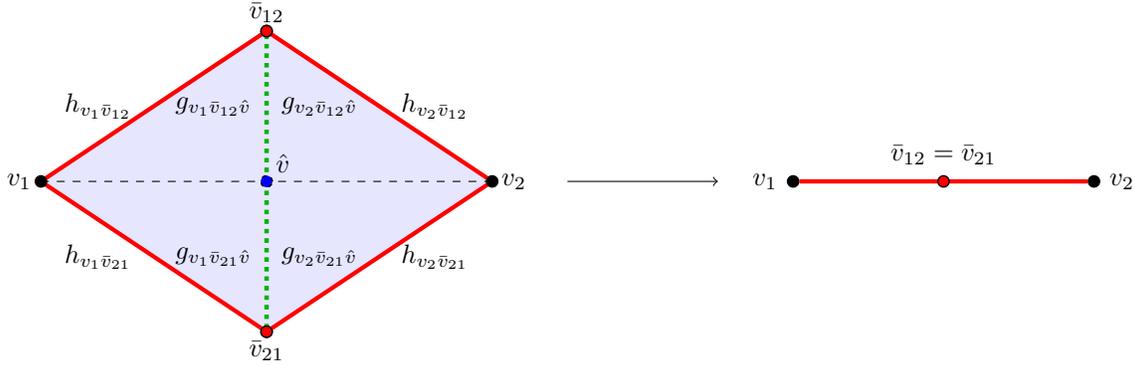
\begin{figure}[htb]
  \centering
  \tikzsetnextfilename{contraction}
  \begin{tikzpicture}[scale=2]

\begin{scope}
\node [c,label=left:$v_1$]		(1)	at (0,0)		{};
\node [c,label=right:$v_2$]	(2)	at (3,0)	 	{};  
\node [c,label=45:$\vh$]		(vh)	at (1.5,0)		{};
\node [c,label=above:$\vb_{12}$](12)at (1.5,1)		{};
\node [c,label=below:$\vb_{21}$](21)	at (1.5,-1)		{};
\foreach \i/\j/\k in {1/12/21,2/12/21}{
\path		[f] 	(\i) 	--  (\j) -- (vh) --  (\k) -- cycle;
\draw	[eh]	(\i)  	-- (vh);
\draw[dotted,eb](vh) 	-- (\j);
\draw[dotted,eb](vh) node[vh]{}	-- (\k);
 }
\draw [e] 	(1) 			-- node [label=left:$h_{v_{1}\vb_{12}}$,label=right:$g_{v_{1}\vb_{12}\vh}$]{} (12) node[vb]{};
\draw [e] 	(1) node[v]{}	-- node [label=left:$h_{v_{1}\vb_{21}}$,label=right:$g_{v_{1}\vb_{21}\vh}$]{} (21) node[vb]{};
\draw [e] 	(2) 			-- node [label=right:$h_{v_{2}\vb_{12}}$,label=left:$g_{v_{2}\vb_{12}\vh}$]{} (12) node[vb]{};
\draw [e] 	(2) node[v]{}	-- node [label=right:$h_{v_{2}\vb_{21}}$,label=left:$g_{v_{2}\vb_{21}\vh}$]{} (21) node[vb]{};
\end{scope}

\draw [->] (3.5,0) -- (4.5,0);

\begin{scope}[xshift=5cm]
\node [v,label=left:$v_1$]		(1)	at (0,0)		{};
\node [v,label=right:$v_2$]	(2)	at (2,0)	 	{};  
\draw [e] (1) -- node[vb,label=above:${\vb_{12} = \vb_{21}}$] {} (2);
\end{scope}
\end{tikzpicture}
  \caption{\label{fig:contraction} Integrating out a virtual face.}
\end{figure}

Manipulating the amplitude, one may simplify the factors in \eqref{eq:eval} by integrating with respect to the elements the set $g_{\vhat}$:
\begin{equation}
	\label{eq:evalResult}
\int dg_{\vhat} \left[ \textrm{Expression \eqref{eq:eval}} \right] % \xrightarrow{\displaystyle}\quad 
= \delta(h_{v_1\vbar_{12}}\;h_{v_1\vbar_{21}}^{-1})\;\delta(h_{v_2\vbar_{12}}\;h_{v_2\vbar_{21}}^{-1})\;.
  \end{equation}
   This integration can be completed since the elements in $g_{\vhat}$ only occur within the four factors \eqref{eq:eval}.
   Now, one is free to use these two $\delta$--functions to integrate out the variables $h_{v_1\vbar_{21}}$ and $h_{v_2\vbar_{21}}$, setting $h_{v_1\vbar_{21}}=h_{v_1\vbar_{12}}$ and $h_{v_2\vbar_{21}}=h_{v_2\vbar_{12}}$ in the remaining factors within \eqref{eq:dwAmp} to arrive at \eprl\ amplitude assigned to that molecule obtained from $\sfm$ by molecular reduction along the virtual structure. We illustrate the reduction in Figure \ref{fig:contraction}. 
\end{proof}

\begin{remark}[{\bf imposing greater simplicity}]
	Another tempting proposal is to impose the simplicity constraints on both real \emph{and} virtual structures.
The motivation is that, assuming a polyhedral interpretation is available, the polyhedra corresponding to the states of the model will now be decomposed into \emph{geometric} simplices, the geometricity of each being ensured by the imposition of the simplicity constraints.	
In the \dwgft\ above, this amounts to altering the propagator \eqref{eq:extprop}, using:
	\begin{equation}
		\label{eq:strandfinal}
		\pbb_{virtual}(g_{v_1\vbar\vhat}, g_{v_2\vbar\vhat}; h_{v_1\vbar}, h_{v_2\vbar})
		= \pbb_{real}(g_{v_1\vbar\vhat}, g_{v_2\vbar\vhat}; h_{v_1\vbar}, h_{v_2\vbar})
		= \sum_{J_{\vhat}\in \mathcal{J}} \tr_{J_{\vhat}}\left(g_{v_1\vbar\vhat}\;h_{v_1\vbar}^{-1}\; \Sbb_{J_{\vhat},N_{0}}\; h_{v_2\vbar}\; g_{v_2\vbar\vhat}^{-1}\right)\;.
	\end{equation}
This defines, a priori, a \emph{different-}spin foam model, with an expected higher degree of geometricity.	
A motivation for this change stems from the logic that polytopes that are constructed from geometric simplices are likely to be more physically viable than polytopes constructed from simplices that are only partially geometric (in the sense that the simplicity constraints are not imposed on some of their virtual sub--facets). 

	Of course, it is worth clarifying that the resulting model can be interpreted in two equivalent ways:
	\begin{itemize}
		\item[--] The perturbative series are catalogued by molecules in $\sfmst_{4,\Simplicial}$. In the large--$M$ limit, the surviving molecules are again those of $\sfmst_{4,\Simplicialdually}$.   Within this model, reduction does not lead to effective amplitudes that coincide with those assigned by the \eprl\ model to generic spin foam molecules.
		\item[--] The perturbative series are catalogued by molecules in $\sfms_{4,\Simplicial}$. Due to the coincidence of the strand factors in \eqref{eq:strandfinal}, the dual--weighting part of the amplitude factorizes completely from the gravitational part, as well as over the vertices $\vhat$:
			\begin{equation}
				\label{eq:factorfinal}
				\tr\Big(\prod_{(\vbar\vhat)}\dbb\Big) = 1 =\tr\Big(\prod_{(\vbar\vhat)}\Acal\Big)\;,
			\end{equation}
			where the product is over those edges $(\vbar\vhat)$ incident at $\vhat$. Thus, one get the original simplicial \eprl\ model, with a slight modification of the weights by a factor \eqref{eq:factorfinal} for each vertex $\vhat$.
	\end{itemize}
\end{remark}

%%%%%%%%%%%%%%%%%%%%%%%%%%%%%%%%%%%%%%%%%%%%%%%%%%%%%%%%%%%%%%%%%%%%%%%%%%%%%%%%%%%%

\section{Conclusions}

The main purpose of this work has been to show that it is possible to define \gft s compatible with \lqg\ in its full combinatorial generality, that is for quantum states defined on arbitrary boundary graphs, in particular with vertices of arbitrary valence.  

In order to set the ground for our \gft\ construction, we gave a precise and exhaustive classification of the combinatorial structures entering both group field theories and spin foam models, along with their associated boundaries.
To this end we used a physicochemical dictionary, with spin foam molecules obtained as bondings of atoms which are in one-to-one correspondence with these general boundary graphs. In particular, we believe that our classification complements, but also clarifies and completes the one in \cite{\ckkl}, which formed the basis for the first combinatorial generalization of spin foam models.
Moreover, our spin foam molecules turn out to be combinatorial 2--complexes in the precise sense of abstract polyhedral complexes, settling the question of determining the kind of spin foam complexes a theory should be based on when in an abstract, non--embedded context.  (This is at least a starting point, given that considerations about physical symmetries may require extra data to encode D--dimensional topologies).
We argued that these are the relevant combinatorial objects, in terms of which the most general  \gft s and spin foam models are defined.

With the ground properly set, it is straightforward to define a generalization of the well known simplicial \gft\ using arbitrary atoms. We presented explicitly how this can be obtained by a multi-field \gft. 
Since it is extremely difficult to turn such a formally defined field theory, with a potentially infinite number of fields, into an analytically manageable one, with the elaboration of concrete calculations and physical insights, we argued that there is need for an alternative.

Indeed, we introduced dually--weighted \gft s, which generate arbitrary structures, at the expense of a slight modi\-fication of the easiest simplicial \gft.  Therefore, they are as controllable as the latter.
The definition of \dwgft\ has been based firstly on the combinatorial possibility: \textit{i}) to decompose arbitrary boundary graphs into simplicial ones --  this permits their realization with single group field; and furthermore \textit{ii}) to decompose arbitrary spin foam atoms and molecules into simplicial atoms that correspond to a simplicial \gft\ interaction. These facts were proven in the combinatorics section in every detail.

Secondly, \dwgft\ is based on the possibility to implement such a definition at the dynamical level. To this end, we provided an example of a useful application of tensor model techniques to \lqg. We realized a dynamical mechanism for this decomposition of spin foam molecules in terms of a dual--weighting on a simplicial \gft. The effect is that in the large--$M$ limit only those molecules (still built from labelled, simplicial atoms)  that can be canonically reduced to arbitrary spin foam molecules survive. In this way, the \dwgft\ gives rise to an effective perturbative series over arbitrary molecules with the corresponding generalized spin foam amplitudes as dynamical quantum weights.

Finally, we showed that in both cases the implementation of the dynamics of gravitational spin foam models, generalized to arbitrary complexes, is possible. While the implementation along the lines we illustrate is generic,  we provided as an explicit example the spin foam operators in the case of the \eprl\ amplitude, thus obtaining a dually weighted \gft\ whose Feynman amplitudes match the \kkl\ spin foam amplitudes. Moreover, we have given also a modification of the same model, resulting from a better justified imposition of geometricity conditions, as suggested by our dually--weighted construction.

\

There are several tasks one might wish to tackle, on the basis of our results.

Concerning the geometry of gravitational models, the obvious first issue is the implementation of simplicity constraints.
We showed that their implementation in the known models can be straightforwardly applied to the \gft s generating, directly or effectively, arbitrary spin foam molecules. This is in the same spirit as \cite{\ckkl}.
However, known models are all derived from arguments resting on the classical geometry of simplices. 
A spin foam atom with arbitrary combinatorics, on the other hand, corresponds rather to a polytope. Taking the more general combinatorics of \lqg\ in earnest, it follows that a version of the simplicity constraints related to the classical geometry of polytopes is needed. 
Of course, one is then left to deal with the independent matter of quantizing any such geometricity constraints. %Given the classical polytope simplicity constraints one could then follow the quantization of one of the gravitational models. But the polytope case could also trigger new ideas for quantization. 

The topological structure of arbitrary molecules should also be considered more carefully. From the simplicial case, it is well known that the good behaviour of a spin foam model of quantum gravity may rest upon the spin foam molecules possessing an extension to a full $D$-dimensional topological structure.  This is important for the definition of a differential structure \cite{Calcagni:2013ku} and geometric quantities such as curvature, the control of divergences \cite{\cgftrenorm, \cCOR},  as well as for diffeomorphism symmetry \cite{Baratin:2011bk,Bonzom:2012tg,Bonzom:2012gwa,Freidel:2003bh,Dittrich:2008pw}. 
A first question is therefore how these issues translate to the case of polyhedral complexes.
A straightforward solution to the issue might be to pass over to  coloured  \gft s, which generate simplicial pseudo--$D$--manifolds. As we have shown, the \dwgft\ can be based on the coloured model without obstacle, and such formalism will then generate effectively all combinatorial $D$--complexes in terms of their triangulations. Still, one may want an encoding of the topology of general $D$--complexes directly at the level of generalized 2--complexes, and this remains an interesting open problem.
%On the other hand, it could be also be possible to obtain such a restriction on the non-simplicial level, namely to restrict the arbitrary atoms to those which are dual 2-skeletons of $D$-polytopes in the first place. Molecules resulting from such bondings have not been considered before and the analysis of their topological structure would be a relevant question in itself, setting the stage to tackle the quantum gravity related issues mentioned before.

Besides these conceptual issues, the most important task is surely the investigation of the field theoretic properties of \dwgft. Among them, one would like to understand the large--$N$ \cite{\clargeN} and double scaling \cite{\cdouble} limits of our (coloured) \dwgft\ and how it compares to the \gft\  theory without dual--weighting. This would extend the results obtained in the context of tensor models. Next, the most important question is probably renormalizability. As mentioned, there is no obstacle preventing the extension of the \dwgft\ from simplicial interactions to a sum over tensor invariant or bubble interactions. Investigating the renormalizability of such models can therefore be carried out using the same techniques already applied in the \gft\ literature \cite{\cgftrenorm, \cCOR}.  
 
Lastly, utilizing a recently proposed strategy  based upon \gft\ condensates \cite{\cGFC}, one can extract effective cosmological dynamics directly from the fundamental \gft\ formulation.  One should expect that a modification of the combinatorial structures entering the microscopic dynamics would percolate directly to such effective macroscopic dynamics. This may lead to interesting modification and could give an alternative way, alongside renormalization analysis, to check the physical relevance and necessity of generalizing the combinatorics of fundamental quantum gravity states and histories. 

%%%%%%%%%%%%%%%%%%%%%%%%%%%%%%%%%%%%%%%%%%%%%%%%%%%%%%%%%%%%%%%%%%%%%%%%%%%%%%%%%%%%
%%%%%%%%%%%%%%%%%%%%%%%%%%%%%%%%%%%%%%%%%%%%%%%%%%%%%%%%%%%%%%%%%%%%%%%%%%%%%%%%%%%%

\section*{Acknowledgements}
D.O.\ acknowledges financial support through a Sofia Kovalevskaja Award. J.T.\ acknowledges support from Evangelisches Studienwerk Villigst and the Andrea von Braun Foundation.

\newpage

%%%%%%%%%%%%%%%%%%%%%%%%%%%%%%%%%%%%%%%%%%%%%%%%%%%%%%%%%%%%%%%%%%%%%%%%%%%%%%%%%%%%
%%%%%%%%%%%%%%%%%%%%%%%%%%%%%%%%%%%%%%%%%%%%%%%%%%%%%%%%%%%%%%%%%%%%%%%%%%%%%%%%%%%%

\begin{appendix}

\section{Polyhedral complexes}
\label{sec:PCs}

The Feynman diagrams generated by \gft s are abstract combinatorial objects. 
It is therefore appropriate and necessary
%\footnote{While in the simplicial case every finite abstract simplicial $\m$-complex has a geometric realization at least in $\Rbb^k$ for some $k\ge \m$, in the polyhedral case there are abstract polytopes (e.g. the 11-cell) which have no such counterpart at all.}
to relate them to abstract combinatorial categories instead of the piecewise linear category.
In this appendix we will provide the definition for combinatorial complexes and show that spin foam molecules are a certain subclass of these.

%In the case of coloured $D$-dimensional \gft\ it has been shown that Feynman diagrams are dual to finite abstract simplicial $D$-complexes which are in particular $D$-dimensional pseudo-manifolds \cite{Gurau:2010iu}. For the appendix to be self-contained we will briefly give the necessary definitions for these (\ref{sec:simpcomp}).

A generalization of the notion of finite abstract simplicial $\m$-complex, briefly reviewed in  (\ref{sec:simpcomp}), to finite abstract polyhedral $\m$-complex is necessary to account for diagrams of more general \gft s. Providing such a definition (\ref{sec:polycomp}) based on the notion of abstract polytopes (\ref{sec:polytopes}) and proving the relation with spin foam molecules (\ref{sec:SFstructure}) is the main goal of this appendix.

To be clear, the goal is not to show that diagrams in any \gft\ are dual to some polyhedral $\m$-complex which is certainly not true in general. 
The aim is rather to identify the diagrams themselves, that is spin foam molecules and their subclasses (Section \ref{sec:comb}), as polyhedral 2-complexes (\ref{sec:SFstructure}).
Then one could further specify subclasses of polyhedral 2-complexes which allow for an extension to higher $\m$-complexes or for dual complexes of a certain type.

\subsection{Finite abstract simplicial complexes}\label{sec:simpcomp}

To remind the reader on what is meant by a combinatorial complex and for the sake of a self-contained appendix we provide the well known definitions \cite{Kozlov:2008wc} for the simplicial case in this section.

\begin{defin}[{\bf combinatorial simplicial complex}]
A \emph{finite abstract simplicial complex} $\SC$ %(in the following, \emph{simplicial complex} for short) 
is a collection (multiset) of ordered subsets $\sigma$ of a set of vertices
$\SC_{(0)}=\{v_{1},v_{2},\dots ,v_{N_{0}}\}$ such that
\begin{itemize}
\item[(C1)] for every $\sigma\in \SC$ and $\sigma'\In\sigma$ also $\sigma'\in \SC$.
\end{itemize} 
Such a $\sigma'\In\sigma$ is called a \emph{(boundary) face} of $\sigma$. 
All subsets of cardinality $p+1$ are called \emph{$p$-simplices} $\sigma_{p}\in \SC_{(p)}$ and the dimension $\m$ of $\SC$ is defined as the maximal cardinality of simplices in $\SC$. 
Thus
%$\SC$ consists of $-1$-simplices to $\m$-simplices, 
$\SC=\bigcup_{p=-1}^{\m} \SC_{(p)}$,\footnote{
Every non-empty $\SC$ contains the empty set which is considered as the unique (-1)-simplex, thus $\SC_{(-1)}=\{\emptyset\}\ne\emptyset$.
} 
 and it is also referred to as a \emph{simplicial $\m$-complex}.% $\SC=\SC_{\m}$.
%The ordering of the sets $\sigma_{p}=[v_{i_{1}},\dots ,v_{i_{p}}]=:[i_{1}\dots i_{p}]$ defines an orientation on the complex.
\end{defin}

\begin{remark}[{\bf intersection property}]
For piecewise linear cell complexes \cite{Baez:2000kp} a second defining property is that all intersections of cells (simplices in this case) are again part of the complex, in the language of $\SC$:
\begin{itemize}
\item[(C2)] if $\sigma, \sigma'\in\SC$, then $\sigma\cap\sigma'\in\SC$.
\end{itemize}
In the case of abstract simplicial complexes (C2) follows trivially since such intersections are subsets and thus boundary faces which are in $\SC$ due to property (C1).
\end{remark}
%\cjt{collection of maximal simplices is enough already...}

A special class of interest are complexes which are pseudo-manifolds. For this the definition common in the context of  simplicial complexes in the topological sense \cite{Seifert:1980uo} extends directly to the combinatorial context (where again cells are simplices)\cite{\cguraulost}:
\begin{defin}[{\bf simplicial pseudo-manifold}]
\label{def:Manifold}
A finite abstract simplicial $\m$-complex $\SC$ is a (finite abstract simplicial) \emph{$\m$-dimensional pseudo-manifold} if it has the following three properties: It is
\begin{itemize}
\item[(M1)] \emph{dimensional homogeneous} (also referred to as \emph{pure}): for each cell in the complex there is a $\m$-cell in the complex which it is a face of.
\item[(M2)] \emph{strongly connected}: Any two $\m$-cells can be joined by a chain of $\m$-cells in which each pair of neighbouring cells has a common $\m$-1-face.
\item[(M3)] \emph{non-branching}: Each $\m$-1-cell is face of at most two $\m$-cells. In the latter case the $\m$-1-cell and all its faces are called \emph{boundary faces} of the complex.%
\footnote{The notion of boundary face defined in this way applies to any combinatorial complex, not necessarily fulfilling (M3).
}
If there are no boundary faces the complex is called a \emph{closed} pseudo-manifold.
\end{itemize}
\end{defin}

The natural ansatz for a generalization from simplicial to polyhedral is to consider a complex built from collections of abstract polytopes instead of simplices.
This poses a twofold challenge. An abstract $p$-simplex defined by an ordered set of its $p+1$ vertices implies at the same time subsimplices given by all its subsets. %as well as relative orientations induced by the ordering.
For an abstract polytope the subcell structure has %both subcell structure and orientations have 
to be specified in a different way.
%For the first 
There is a well known alternative way in terms of a partially ordered set (poset) \cite{Kozlov:2008wc}:
\begin{defin}[{\bf poset representation}]
For a finite abstract simplicial complex $\SC$ the \emph{face poset} $\mathcal{F}(\SC)$ is the poset whose elements consist of all nonempty simplices of $\SC$ and whose partial order relation is the inclusion relation on the set of simplices.
\end{defin}
It will turn out in the following that this is the appropriate conceptual framework to extend from simplicial to polyhedral.

%%%%%%%%%%%%%%%%%%%%%%%%%%%%%%%%%%%%%%%%%

\subsection{Abstract polytopes}\label{sec:polytopes}

Fortunately there exists a combinatorial definition of abstract polytopes \cite{McMullen:2009ff,Danzer:1982dp}:

\begin{defin}
\label{def:Polytope}
An \emph{abstract $\m$-polytope}, i.e. an abstract polytope of finite dimension $\m\ge -1$, is a poset $(P,<)$ obeying the properties (P0) - (P3) below. 

Elements of $P$ are called \emph{faces}.
%and we will denote by $P$ the set  of its faces (without the partial order) as well.
Totally ordered subsets (called \emph{chains}) have length $p$ if they contain exactly $p+1$ faces. If they are maximal they are referred to as \emph{flags} of $P$.
Then the first two defining properties are
\begin{enumerate}
\item[(P0)]{$P$ contains a least and a greatest face, denoted $f_{-1}$ and $f_{\m}$.}
\item[(P1)]{Each flag has length $\m+1$ (which defines the dimension).}
\end{enumerate}
For the statement of the second two defining properties a few more definitions are needed.

The \emph{section} of two faces $f, g$ of $P$ is defined as 
\begin{equation}
f/g := \{h|h\in P, f\le h\le g\}.
\end{equation}
Each section of $P$ is itself a poset obeying the first two properties, with an appropriate dimension (it turns out that it is even an abstract polytope if $P$ is). Thus, identifying each face $f$ with the section over the least face $F\equiv f/f_{-1}$ each face can be attributed a dimension as well. 
Faces different from $f_{-1}$ and $f_{\m}$ are called \emph{proper} faces of $P$. 
As usual one calls 0-faces \emph{vertices} and 1-faces \emph{edges}.

A poset $P$ of dimension $\m$ with properties (P0) and (P1) is defined to be \emph{connected} if either $\m\le 1$, or $\m\ge 2$ and for any two proper faces $f, g$ of $P$ there is a finite sequence of proper faces $f=h_0,h_1,...,h_{k-1},h_k=g$ of $P$ such that $h_{i-1}$ and $h_i$ are incident for $i=1,...,k$.
In this context incidence means that $h_{p-1}\le h_p$ or $h_{p-1}\ge h_p$.

Furthermore $P$ is called \emph{strongly connected} if each section of $P$ (including itself) is connected.

With this we can state the remaining two defining properties:
\begin{enumerate}
\item[(P2)]{$P$ is strongly connected.}
\item[(P3)]{All one-dimensional sections of $P$ are diamond-shaped; that is for every $p=0,1,...\m-1$, if $f$ and $g$ are incident faces of $P$ of dimension $p-1$ and $p+1$, then there are exactly two $p$-faces $h$ of $P$ such that $f<h<g$.}
\end{enumerate}
\end{defin}

%The last condition (P3) is the crucial part for polytopes.
%In fact it can be generalized \cite{Danzer:1982dp}:
%\begin{defin}
%An \emph{incidence $\m$-complex} is  a poset obeying (P0-P2) and 
%\begin{enumerate}
%\item[(P3a)]{For every $p=0,1,...\m-1$ there is a homogeneity degree $k_p\in \Nbb\cup{\infty}$ such that, if $f$ and $g$ are incident faces of $P$ of dimension $p-1$ and $p+1$, then there are exactly $k_p$ $p$-faces $h$ of $P$ such that $f<h<g$.}
%\end{enumerate}
%\end{defin}
%Thus, by definition abstract $\m$-polytopes are incidence $\m$-complexes of homogeneity type $(k_p)_{p=0,...,\m-1} = (2,2,...,2)$.

\begin{remark}[{\bf Low dimensional polytopes}]
\label{rem:polyexamples}
Up to $\m=2$ there is a very manageable amount of abstract polytopes:
\begin{itemize}
\item Every 0-polytope is a single vertex, having the form $P=\{\emptyset, v\}$ with $\emptyset<v$.
\item Because of (P3), every 1-polytope consists of a single edge, $P=\{\emptyset, v_1, v_2, e\}$ with $\emptyset<v_i<e, i=1,2$.
\item Every finite 2-polytope is a polygon \cite{McMullen:2009ff} of the form shown in Figure \ref{fig:hasse}.\footnote{There is only one infinite 2-polytope \cite{McMullen:2009ff}.}
\end{itemize}
\end{remark}

\begin{remark}[{\bf Hasse diagrams}]
A good way to visualize posets $P$ are Hasse diagrams
(graphs drawn on the plane where vertices represent the elements of $P$  and and edges the transitivity reduced ordering relations, i.e. there is an edge for every two faces $f<g$ in $P$ for which there is no $h$ in $P$ such that $f<h<g$ which goes upwards from $f$ to $g$).  
In particular, since posets $P$ obeying (P0) and (P1) are graded posets $P=\bigcup_{p=-1}^{\m} P_{(p)}$, a canonical way to draw the Hasse diagram is with all elements of each $P_i$ on the same height in the plane (Figure. \ref{fig:hasse}).
\end{remark}

\begin{remark}[{\bf vertex representation}]
\label{rem:VertexRep}
The face set of a graded poset $P=\bigcup_{p=-1}^{\m} P_{(p)}$ (if countable) can be represented by a collection of (ordered) sets in analogy to abstract simplicial complexes in the following way: 
Vertices are labelled in an arbitrary way by natural numbers, $P_{(0)}=\{v_1,v_2,...\}$.
Then, every face $f$ is represented by the ordered set $(v_{i_{1}},v_{i_{2}},\dots)$ consisting of all vertices $v_{i_j}\le f$.% 
\footnote{Note again that different faces might have the same vertex set, which is the reason why this representation is a collection, i.e. a multiset. To distinguish explicitly an extra label is needed.}
In particular, the least face $f_{-1}$ is represented by $\emptyset$.

Obviously, the representation in terms of vertices of the face poset of a simplicial complex is just the simplicial complex itself.
For a polytope, the crucial difference to a simplex is that its $p$-face sets are not necessarily of cardinality $p+1$, and in particular (C1) does not hold.
\end{remark}

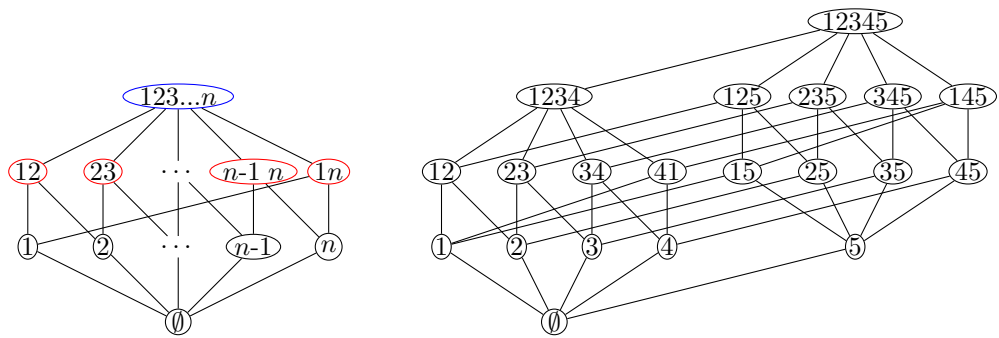
\begin{figure}
\begin{center}
\tikzsetnextfilename{hasse}
%n-polygon%
\begin{tikzpicture}
\node (empty)	[h, circle]				{$\emptyset$};
\node (3)		[above of=empty]	{\dots};
\node (2)		[h,left of=3]		{2};
\node (1)		[h,left of=2]		{1};
\node (4)		[h,right of=3]		{$n$-1};
\node (5)		[h,right of=4, circle,inner sep=1pt]	{$n$};
\node (12)		[hv,above of=1 ]		{12};
\node (23)		[hv,right of=12 ]		{23};
\node (34)		[right of=23 ]		{\dots};
\node (45)		[hv,right of=34 ]		{$n$-1$\;n$};
\node (51)		[hv,right of=45 ]		{$1n$};
\node (poly)	[he,above of=34]		{123...$n$};
\foreach \from/\to in {12/2,23/3,34/4,45/5,51/1}{
	\draw (poly) %[->]
	 -- (\from);
	\draw (\from)% [->]
	 -- (\to);
	\draw (\to) %[->]
	 -- (empty);
	}
\foreach \from/\to in {12/1,23/2,34/3,45/4,51/5}{
	\draw (\from) %[<-]
	 -- (\to);
	}
% pyramid
\begin{scope}[xshift=5cm]
\node (0)		[h, circle]					{$\emptyset$};
\node (a)		[h,above of=0,xshift=-15mm]	{$1$};
\node (b)		[h,right of=a]				{$2$};
\node (c)		[h,right of=b]				{$3$};
\node (d)		[h,right of=c]				{$4$};
\node (ab)		[h,above of=a]				{$12$};
\node (bc)		[h,right of=ab]				{$23$};
\node (cd)		[h,right of=bc]				{$34$};
\node (ad)		[h,right of=cd]				{$41$};
\node (f)		[h,above of=ab, xshift=15mm]	{$1234$};
\node (x)		[h,above of=0, xshift=40mm]	{$5$};
\node (ax)		[h,above of=x, xshift=-15mm]	{$15$};
\node (bx)		[h,right of=ax]				{$25$};
\node (cx)		[h,right of=bx]				{$35$};
\node (dx)		[h,right of=cx]				{$45$};
\node (abx)	[h,above of=ax]				{$125$};
\node (bcx)	[h,right of=abx]				{$235$};
\node (cdx)	[h,right of=bcx]				{$345$};
\node (adx)	[h,right of=cdx]				{$145$};
\node (p)		[h,above of=abx,xshift=15mm] 	{$12345$};
\foreach \from/\to in 
{a/0,b/0,c/0,d/0,x/0,
ab/b,bc/c,cd/d,ad/a,
f/ab,f/bc,f/cd,f/ad,
ax/x,bx/x,cx/x,dx/x,
abx/bx,bcx/cx,cdx/dx,adx/ax,
p/abx,p/bcx,p/cdx,p/adx,
abx/ab,bcx/bc,cdx/cd,adx/ad}
    \draw (\from) %[->]
     -- (\to);
\foreach \from/\to in 
{ab/a,bc/b,cd/c,ad/d,
abx/ax,bcx/bx,cdx/cx,adx/dx,
ax/a,bx/b,cx/c,dx/d,
p/f}
    \draw (\from) %[<-]
     -- (\to);
\end{scope}
\end{tikzpicture} 
\caption{Hasse diagram of the $n$-polygon (left) and of the pyramid (right) in a representation of faces in terms of vertices (Remark \ref{rem:VertexRep}). 
%Arrows refer to an orientation as explained in \ref{sec:orient}
}
\label{fig:hasse}
\end{center}
\end{figure}
\begin{remark}[{\bf duality}]
A nice property of abstract polytopes is that they have a natural dualization by flipping around the partial order. The finite graded structure, connectedness and diamond shape of 1-sections guarantee that the dual poset is in fact an abstract polytope as well \cite{McMullen:2009ff}. 
In terms of Hasse diagrams the dual polytope is represented by the same graph but read from bottom up to top instead of from top down to bottom.
\end{remark}

\begin{figure}
\begin{center}
\tikzsetnextfilename{hassedual}
\begin{tikzpicture}[every node=rectangle]
\node (0)		[h, circle]					{$\emptyset$};
\node (a)		[hv, below of=0,xshift=15mm]	{$1$};
\node (b)		[hv, left of=a]				{$2$};
\node (c)		[hv, left of=b]				{$3$};
\node (d)		[hv, left of=c]				{$4$};
\node (ab)		[he,below of=a]				{$12$};
\node (bc)		[he,left of=ab]				{$23$};
\node (cd)		[he,left of=bc]				{$34$};
\node (ad)		[he,left of=cd]				{$41$};
\node (f)		[h,below of=ab, xshift=-15mm]	{$1234$};
\node (x)		[hv, below of=0, xshift=-40mm]	{$5$};
\node (ax)		[he,below of=x, xshift=15mm]	{$15$};
\node (bx)		[he,left of=ax]				{$25$};
\node (cx)		[he,left of=bx]				{$35$};
\node (dx)		[he,left of=cx]				{$45$};
\node (abx)	[h,below of=ax]				{$125$};
\node (bcx)	[h,left of=abx]				{$235$};
\node (cdx)	[h,left of=bcx]				{$345$};
\node (adx)	[h,left of=cdx]				{$145$};
\node (p)		[h,below of=abx,xshift=-15mm] 	{$12345$};
\foreach \from/\to in 
{a/0,b/0,c/0,d/0,x/0,
ab/b,bc/c,cd/d,ad/a,
f/ab,f/bc,f/cd,f/ad,
ax/x,bx/x,cx/x,dx/x,
abx/bx,bcx/cx,cdx/dx,adx/ax,
p/abx,p/bcx,p/cdx,p/adx,
abx/ab,bcx/bc,cdx/cd,adx/ad}
    \draw (\from) -- (\to);
\foreach \from/\to in 
{ab/a,bc/b,cd/c,ad/d,
abx/ax,bcx/bx,cdx/cx,adx/dx,
ax/a,bx/b,cx/c,dx/d,
p/f}
    \draw (\from) -- (\to);
\end{tikzpicture}
\caption{Hasse diagram of the dual of the pyramid in Figure \ref{fig:hasse} which is itself a pyramid. Moreover this labeling corresponds to the pyramid vertex in Figure \ref{fig:pyramid}.}
\label{fig:hassedual}
\end{center}
\end{figure}
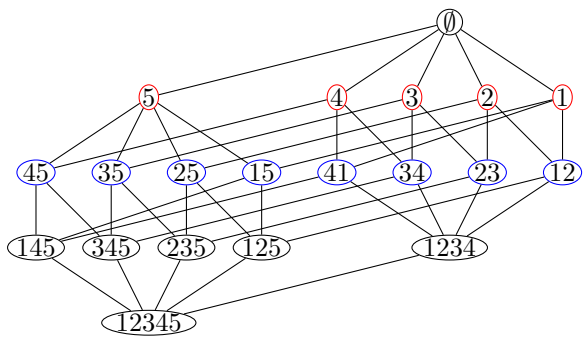

%%%%%%%%%%%%%%%%%%%%%%%%%%%%%%%%%%%%%%%%%

\subsection{Abstract polyhedral complexes}\label{sec:polycomp}

It is now possible to define polyhedral complexes as collections of abstract polytopes in the same spirit as simplicial complexes are collections of simplices. 
To the best of our knowledge, this has not been considered in the literature so far.
Technically, the essential difference between simplicial and polyhedral is the defining condition (C1) which guarantees that the cells are indeed simplices carrying the full structure of subsimplices. While these are just subsets of vertex sets there, for polytopes the subcell structure has to be spelled out explicitly in terms of the partial order relation.

\begin{defin}[{\bf Combinatorial polyhedral complex}]
An \emph{abstract polyhedral complex} $\PC$ is a poset which
\begin{itemize}
\item[(P0')] contains a least face, denoted $f_{-1}$, and
\item[(C1')] for every element $f\in\PC$ the section $F=f/f_{-1}$ is an abstract polytope.
\end{itemize}
\end{defin}

\begin{remark}[{\bf properties of polyhedral complexes}]
\label{rem:polyproperties}
A few comments on the so defined complexes are in order:
\begin{enumerate}
\item Even though there is no single greatest face in an abstract polyhedral complex $\PC$, it is a graded poset $P=\bigcup_{p=-1}^{\infty} P_p$ due to the grading of the polytopes it consists of. If $\PC$ is finite there are polytopes of a maximal dimension $\m$ and $\PC=\PC_{\m}$ can be called an \emph{abstract polytope $\m$-complex}.
\item Therefore a representation of the partial ordering  in terms of Hasse diagrams is possible. 
\item Condition (C1') implies that all faces of a polytope $F\In\PC$ are again polytopes in $\PC$, just because this is true for any abstract polytope (Definition \ref{def:Polytope}).
\item For the same reason the intersection property (C2) is true, i.e. that for any two faces $f,g\in\PC$ their intersection as polytopes is again a polytope in $\PC$,  $f/f_{-1}\cap g/f_{-1}\In\PC$.
\item From the above properties it is obvious that the face poset $\mathcal{F}(\SC_{\m})$ of a simplicial $\m$-complex is an abstract polytope $\m$-complex. Again one can represent the faces of a polyhedral complex by vertex sets as described in Remark \ref{rem:VertexRep}. For $\mathcal{F}(\SC_{\m})$ this gives back the original $\SC_{\m}$ (up to vertex relabeling).
%\item For an abstract polytope $\m$-complex $\PC_{\m}$ one can artificially define a greatest face of dimension $\m+1$, $f_{\m+1}$, to be included in $\PC_{\m}$. We will always consider this modification from now on since it will be useful later. In particular, $\PC_{\m}$ then obeys (P1), though with a face $f_{\m+1}$ instead of $f_{\m}$.
\end{enumerate}
\end{remark}

Now the conditions defining pseudo-manifolds can be directly applied to polyhedral complexes, where cells are now the polytopes:

\begin{defin}[{\bf polyhedral pseudo-manifold}]
An abstract polyhedral $\m$-complex $\SC$ is an abstract polyhedral \emph{$\m$-dimensional pseudo-manifold} if it has the properties (M1-M3) of Definition \ref{def:Manifold}.
\end{defin}
%
%\begin{remark}[{\bf manifold conditions}]
%The conditions (M1-M3) defining polyhedral pseudo-manifolds are related to the 
%\end{remark}

In fact, the manifold conditions (M1-M3) are implied by the polytope conditions (P1-P3). Showing this is the crucial part of the following consequence:

\begin{proposition}
Every abstract $\m$-polytope $P$ is an (abstract polyhedral) $\m$-dimensional pseudo-manifold.

The boundary $\br P$ is a closed ($\m$-1)-dimensional pseudo-manifold.
\end{proposition}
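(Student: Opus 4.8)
The plan is to show that the three pseudo-manifold conditions (M1)--(M3) hold for an abstract $\m$-polytope $P$ viewed as a polyhedral $\m$-complex whose faces are the sections $F = f/f_{-1}$, and then to deduce the corresponding statement for the boundary $\br P$. First I would make precise what "the complex associated to $P$" is: by Remark \ref{rem:polyproperties}, condition (C1') requires each section $f/f_{-1}$ to be an abstract polytope, and this is a known property of abstract polytopes (every section of an abstract polytope is again an abstract polytope, established in the references cited for Definition \ref{def:Polytope}). So $P$ itself, together with all its sections over $f_{-1}$, is a legitimate abstract polyhedral $\m$-complex, and the $\m$-cells are precisely the proper faces of dimension $\m-1$ in the Hasse sense together with the top face --- more carefully, the $\m$-cells are the maximal faces, which by (P1) all have dimension $\m$ (the single greatest face $f_\m$ is the only one, but in the complex picture one drops $f_\m$ and the $\m$-cells become the $(\m-1)$-faces of $P$). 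I would fix this bookkeeping at the outset since the off-by-one between "polytope dimension" and "complex dimension" is where most of the potential confusion lies.

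Next I would verify the three conditions in turn. For (M1), dimensional homogeneity: every proper face $g$ of $P$ lies below some $(\m-1)$-face, because the section $f_\m/g$ is itself an abstract polytope of dimension $\m-1-\dim g$, hence (if nonempty, i.e. $g \neq f_\m$) contains a vertex of that section, which is an $(\m-1)$-face of $P$ containing $g$; I would spell this out using (P1) applied to the section. For (M3), non-branching: an $(\m-2)$-face $h$ of $P$ and the greatest face $f_\m$ are incident faces of dimensions $(\m-2)$ and $\m$, so by the diamond condition (P3) there are exactly two $(\m-1)$-faces strictly between them --- this is almost immediate. For (M2), strong connectivity of $\m$-cells: this is where I would lean on (P2), the strong connectedness of $P$, specifically the connectedness of the section $f_\m/h$ for each $(\m-3)$-face $h$, or more directly a flag-connectedness argument: any two flags of $P$ are connected by a sequence of flags differing in exactly one face, and restricting attention to the $(\m-1)$- and $(\m-2)$-levels of such a sequence yields exactly a chain of $(\m-1)$-cells in which consecutive ones share an $(\m-2)$-cell.

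I expect the main obstacle to be (M2): translating the abstract "strong connectedness" of $P$ (a statement about connectedness of \emph{all} sections, phrased via arbitrary incident proper faces) into the concrete "chain of top cells sharing codimension-one faces" required by (M2) needs the standard fact that for abstract polytopes, strong connectedness is equivalent to strong flag-connectedness; I would either cite this from \cite{McMullen:2009ff} or reprove the one direction I need by induction on $\m$, using that every $2$-section is a polygon (Remark \ref{rem:polyexamples}) as the base case. Finally, for the boundary: by Remark \ref{rem:polyproperties} and the manifold conditions just established, $\br P$ is the subcomplex generated by the boundary faces, i.e. all $(\m-1)$-faces of $P$ and their sub-faces; each $(\m-1)$-face $f$ of $P$ gives a section $f/f_{-1}$ which is an abstract $(\m-1)$-polytope, so by the result already proved $\br P$ is an abstract polyhedral $(\m-1)$-complex satisfying (M1)--(M3), and it has \emph{no} boundary faces because every $(\m-2)$-face of $P$ lies in exactly two $(\m-1)$-faces (the (M3) computation above), so $\br P$ is a \emph{closed} $(\m-1)$-dimensional pseudo-manifold. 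I would close by noting that (M1) and (M2) for $\br P$ follow from those for $P$ by the same sectioning arguments, so no new work is needed there.
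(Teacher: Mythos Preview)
Your verifications of (M1)--(M3) are substantively correct and match the paper's approach (deriving (M1) from (P1), (M3) from the diamond condition (P3), and (M2) from connectedness/flag-connectedness (P2)). However, you have mislocated them in the two-part structure of the statement.

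The first claim --- that $P$ is an $\m$-dimensional pseudo-manifold --- is essentially trivial, and the paper dispatches it in two lines. As a polyhedral $\m$-complex, $P$ has exactly one $\m$-cell, namely $f_\m$ itself. Hence (M1) and (M2) hold vacuously (every face sits inside $f_\m$; there is nothing to connect), and (M3) holds because each $(\m-1)$-face is a face of at most one $\m$-cell (the only one there is), which simultaneously identifies all $(\m-1)$-faces as boundary faces and gives $\br P = P \setminus \{f_\m\}$. Your remark that ``in the complex picture one drops $f_\m$ and the $\m$-cells become the $(\m-1)$-faces of $P$'' is precisely this confusion: once you drop $f_\m$ you are no longer looking at $P$ as an $\m$-complex but at $\br P$ as an $(\m-1)$-complex.

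Consequently, the work you do under the heading ``verify (M1)--(M3) for $P$'' --- every proper face lies below an $(\m-1)$-face, every $(\m-2)$-face lies in exactly two $(\m-1)$-faces, any two $(\m-1)$-faces are joined by a chain sharing $(\m-2)$-faces --- is in fact the proof of the second, nontrivial claim about $\br P$. Your final paragraph then becomes circular (``by the result already proved, $\br P$ satisfies (M1)--(M3)''), since that is what you just proved. The fix is purely organizational: state the trivial argument for $P$ first, then present your (M1)--(M3) verifications as the argument for $\br P$. Your flag-connectedness treatment of (M2) is, incidentally, a bit more careful than the paper's, which simply asserts that poset-connectedness of $P$ yields the required chain of $(\m-1)$-faces.
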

\begin{proof}
Let $P$ be an abstract $\m$-polytope. 
For $\m<2$ the proposition is trivial. Therefore let $\m\ge2$ in the following.

The first part is rather straightforward: 
Obviously, $P$ is an abstract polyhedral $\m$-complex.
%That $P$ is a pseudo-manifold with boundary is shown by the following:
Since $P=f_{\m}/f_{-1}$ is the single $\m$-polytope in $P$ (because of P0) and thus contains all other polytopes in $P$, (M1) and (M2) follow trivially.
In particular, all $\m$-1-polytopes $F=f/f_{-1}\in P$ are faces only of this single $\m$-polytope and hence are boundary cells of $P$, proving (M3).
Thus $P$ is a pseudo-manifold with boundary $\br P=P-\{f_{\m}\}$.

Since $\br P$ still consist of polytopes (C1') as sections over the unique least face $f_{-1}\in\br P$ (P0') which are of maximal dimension $\m-1$, it follows immediately that $\br P$ is a polyhedral $\m$-1-complex.

The proof, that $\br P$ is further a closed pseudo-manifold, is more illuminating. To this end the properties (M1-M3) for $\br P$ will be shown to follow from the defining properties of $P$, (P1-P3).
\begin{itemize}
\item[(M1):]{Let $f\in \br P$ be an arbitrary face of dimension $0\le p\le\m-1$. 
Since also $f\in P$ and thus $f_{-1}<f<f_{\m}$ it follows (P1) that there is a chain of length $\m$ in $P$ and hence a chain of length $\m-1$ in $\br P$ containing $f$. Hence there is also an $\m$-1-face $g$ with $f<g$.
}
\item[(M2):]{The notion of \emph{strong} connectedness in (M2) is much weaker that in (P2). In fact, (M2) already follows from connectedness in the poset sense: 

Since $P$ is strongly connected (P2) it is also connected. This implies in particular that for $p=\m-1$ every two $p$-faces have a finite sequence of $p$-faces incident along $p-1=\m-2$ dimensional faces.
}
\item[(M3):] {Finally from (P3) it follows that in particular for every $\m$-2 -face $f\in P$ the section $f_{\m}/f$ is diamond shaped; that is, there are exactly two $\m$-1-faces in $P$ which $f$ is a face of.
}
\end{itemize}
Thus, $\br P$ is a closed pseudo-manifold.
\end{proof}

%%%%%%%%%%%%%%%%%%%%%%%%%%%%%%%%%%%%%%%%%%

\subsection{Structure of spin foam molecules}
\label{sec:SFstructure}

Now the stage is set to analyse the structure of spin foam molecules, that is what kind of combinatorial complexes they are. Defined as bonding of atoms consisting of triples of vertices, obviously they are simplicial 2-complexes. 
But since these triangular faces are only wedges of actual larger faces they turn out to be simplicial subdivisions of polyhedral complexes and generalizations thereof. 

In any case, with the understanding of the manifold conditions (M1-M3) it is clear that spin foam molecules  are homogeneous of dimension two (M1) and obviously strongly connected (M2). But since they are intended to capture a higher dimensional structure of $D>2$, in all interesting cases of spin foam atoms they are branching.

In this section we discuss these statements in detail.

\begin{proposition}[{\bf spin foam molecules}]
\label{prop:Molecules}
Spin foam molecules are homogeneous, strongly connected simplicial 2-complexes. 
\begin{proof}
Let $\sfr= (\mathcal{V}_{\sfr}, \mathcal{E}_{\sfr}, \mathcal{F}_{\sfr})\in\sfrs_{}$ be a spin foam molecule. 
By definition (\ref{def:molecule}), its vertex set comes with a graded, tripartite structure $\Vcal_{\sfr}=\Vcal\cup\Vb\cup\Vh$ and every face $f\in\mathcal{F}_{\sfr}$ is defined by a triple of vertices $f=(v,\vb,\vh)\in\Vcal\times\Vb\times\Vh$.
Furthermore, according to definition \ref{def:sfatom}, for $f=(v,\vb,\vh)\in\Fcal_{\sfr}$ every pair of vertices is an edge in $\Ecal_{str}$, concluding the proof of the defining condition (C1) of simplicial complexes.

By the same definition \ref{def:sfatom}, for every vertex $v\in\Vcal_{\sfr}$ there is a face $f\in\Fcal_{\sfr}$ such that $v\in f$, proving homogeneity (M1). 
Finally, in (M2) holds since in an atom every pair of triangles is strongly connected and the bonding transfers this property to the whole molecule.
Thus $\sfr$ is a homogeneous and strongly connected simplicial complex.
\end{proof}
\end{proposition}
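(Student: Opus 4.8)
The plan is to unwind the defining data of a spin foam molecule $\sfr = (\mathcal{V}_{\sfr}, \mathcal{E}_{\sfr}, \mathcal{F}_{\sfr})$ and check the three required properties directly: (C1) it is a simplicial $2$-complex, (M1) it is dimensionally homogeneous, and (M2) it is strongly connected. Everything should follow from Definition~\ref{def:sfatom} (structure of atoms) and Definition~\ref{def:molecule} (molecules as quotients of atoms by bonding maps), together with the observation that a bonding map identifies patches elementwise and so never destroys faces or the incidences among their sub-simplices.

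First, I would establish (C1). Recall that in a single atom $\sfa = \alpha(\bbg)$ every face has the form $f = (v\bar v\hat v)$ with $v\in\mathcal{V}$ the bulk vertex, $\bar v\in\Vbar$, $\hat v\in\Vhat$, and $(\bar v\hat v)\in\mathcal{E}_{\bbg}$; moreover $(v\bar v),(v\hat v)\in\mathcal{E}_{\sfa}$ and $(\bar v\hat v)\in\mathcal{E}_{\sfa}$, so each of the three two-element subsets of $\{v,\bar v,\hat v\}$ is an edge of the atom. Hence an atom, regarded as the collection of subsets $\{(v\bar v\hat v)\}\cup\{$their sub-edges$\}\cup\{$vertices$\}\cup\{\emptyset\}$, satisfies the closure condition (C1), i.e.\ it is an abstract simplicial $2$-complex. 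Under a bonding map $\sharp_\gamma$ the faces, edges and vertices are only identified in pairs (never split), and the identifications respect incidence by the compatibility condition in Definition~\ref{def:bonding}; therefore the image $\sfr$ is again a collection of vertex subsets closed under taking subsets, i.e.\ a simplicial complex, and since its top-dimensional cells are the triples $f=(v\bar v\hat v)$, it is of dimension $2$.

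Second, (M1): I would show that every vertex lies in some face. By Definition~\ref{def:sfatom}, in each atom the bulk vertex $v$ lies in every face; each $\bar v\in\Vbar$ lies in all faces $(v\bar v\hat v)$ with $(\bar v\hat v)\in\mathcal{E}_{\bbg}$, and such an edge exists since $\bbg$ is connected with the $\hat v$ bivalent (Definition~\ref{def:bisectedbg}), so $\Vhat\neq\emptyset$ and every $\bar v$ has an incident bisecting vertex; each $\hat v\in\Vhat$ is bivalent, hence incident to some $(\bar v\hat v)$ and thus lies in the face $(v\bar v\hat v)$. Passing to the molecule only identifies cells, so every vertex of $\sfr$ still lies in a $2$-face; likewise every edge is a sub-edge of a face. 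This is exactly (M1).

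Third, (M2): within a single atom, any two faces $(v\bar v_1\hat v_1)$ and $(v\bar v_2\hat v_2)$ share the edge $(v\bar v_i)$ or $(v\hat v_i)$ through the common bulk vertex — more carefully, one walks from one face to another through faces sharing an edge containing $v$, using connectedness of $\bbg$ to move between the $\bar v$'s — so an atom is strongly connected. A bonding $\sharp_\gamma$ glues two atoms along a whole patch, in particular along edges of the form $(\bar v\hat v)$ that are shared faces' edges, so strong connectedness is transmitted across each bond; iterating over all bonds in the (connected) molecule yields strong connectedness of $\sfr$. The one point deserving care, and the only place I expect any real friction, is making precise that ``every pair of triangles in an atom is strongly connected'' — i.e.\ pinning down the chain of faces sharing $1$-faces — and then verifying that the bonding genuinely identifies a $1$-face shared by faces on the two sides, rather than merely $0$-faces; this uses that a bonded patch contains edges $(\bar v\hat v)$ each of which is an edge of some face $(v\bar v\hat v)$ in the respective atom. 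Once that bookkeeping is done, the three properties combine to give the statement.
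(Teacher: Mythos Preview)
Your proposal is correct and follows essentially the same route as the paper's own proof: both verify (C1), (M1), (M2) directly from the definitions of atoms and bondings, using the tripartite vertex structure and the fact that faces are triples $(v,\bar v,\hat v)$ all of whose pairs are edges. Your version is in fact more explicit than the paper's---in particular, you spell out why each vertex type lies in a face and why a bonding identifies a genuine $1$-face $(\bar v\hat v)$ shared by faces from both atoms, a point the paper only asserts.
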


Nevertheless, spin foam molecules are usually regarded as something more general than simplicial complexes. Indeed one can consider our definition of the molecules as a triangulation of more general complexes. 
This can be made precise in the following way:
A simplicial subdivision%
\footnote{Even in the combinatorial topology context this is often called barycentric subdivision \cite{Kozlov:2008wc}, even though there is no notion of centre in the abstract setting. For this reason, and to highlight that it is a subdivision into simplices, we prefer to call it  "simplicial subdivision".}
of polyhedral complexes can be defined exactly the same way as done in the case of simplicial complexes \cite{Kozlov:2008wc}, by defining vertices for every face and simplices for every chain, effectively subdividing all polytopes into simplices.
Including boundaries we introduce one modification to the standard definition, identifying the subdividing vertices of each boundary $\m$-1-cell with the subdividing vertex of the single $\m$-cell it is a face of:
\begin{defin}[{\bf simplicial subdivision}]
The \emph{simplicial subdivision} of an abstract polyhedral $\m$-complex $\PC$ is the simplicial complex
\[
\ssub\PC := \left\{ \{f_1,f_2,\dots,f_t\} | f_1>f_2>\dots>f_t, f_i\in\PC/\sim , t\ge1 \right\}.
\]
Here $f,g\in\PC$ are equivalent, $f\sim g$, if and only if either $f=g$ or $f\in\PC_{(\m-1)}$ and $g\in\PC_{(\m)}$ is the unique $\m$-cell such that $f<g$.
\end{defin}

\begin{remark}
Spin foam molecules have a very similar structure: Vertices $\vhat\in\Vhat$ correspond to faces and vertices $\vbar\in\Vbar$ to edges,
Therefore they can be regarded as simplicial subdivisions of some 2-dimensional objects.
To determine their structure we define  for a molecule $\sfr = (\Vcal_{\sfr},\Ecal_{\sfr},\Fcal_{\sfr}) \in\sfrs$ with $\Vcal_{\sfr}=\Vcal\cup\Vb\cup\Vh$ the inverse to the subdivision,
%\ssub^{-1}\sfr = 
$\Cc ^{\sfr} = \{\emptyset\}\cup\Cc ^{\sfr}_{(0)}\cup\Cc ^{\sfr}_{(1)}\cup\Cc ^{\sfr}_{(2)}$,
in the following way:
\begin{description}
  \item[--] $\Cc ^{\sfr}_{(0)} := \Vcal\cup\Vbar_{\delta\sfr}$ is the set of bulk vertices and of boundary graph vertices on the boundary of $\sfr$, $\Vbar_{\delta\sfr}=\Vb\cap\Vcal_{\delta\sfr}$.
    
  \item[--]{ $\Cc ^{\sfr}_{(1)} := \Ecal_{int}\cup\Ecal_{ext}$,
  that is internal edges, either between bulk vertices or a bulk and one boundary vertex in $\Vb$,
  \[
  \Ecal_{int} = \left\{(v_1,v_2) | \exists \vb\in\Vb: \{(v_1\vb),(v_2,\vb)\}\In\Ecal_{\sfr}\right\}
   \cup \left\{(v,\vb)|v\in\Vcal, \vb\in\Vb_{\delta\sfr} \right\} 
  \]
  and boundary edges between two boundary vertices in $\Vb$
  \[
   \Ecal_{ext} = \left\{(\vb_1,\vb_2)|\exists \vh\in\Vh: (\vb_1\vh),(\vb_2,\vh)\in\Ecal_{\sfr}\right\}
   \]
   One can show that indeed the latter are edges on the boundary according to (M3). 
   The internal edges are in one-to-one correspondence to the vertices in $\Vb$.
  }
  \item[--] $\Cc ^{\sfr}_{(2)} := \left\{ (\Cc ^{\sfr}_{(0)}\cap\bigcup_{f\in\Fcal_{\sfr} : \vh\in f} f) | \vh\in\Vh \right\} $ is the set of unions of all triangles sharing a bisection point $\vh\in\Vh$. 
%   in terms of the remaining vertices of $\ssub^{-1}\sfr$. 
   These are either of the form $(v_1,v_2,...,v_k)$ for $k$ vertices $v_i\in \Vcal$ or, if they contain a boundary edge $(\vb_1,\vb_2)\in\Ecal_{ext}$, of the form $(\vb_1,\vb_2,v_1,v_2,...,v_k)$.    Due to the definition of bondings (remark \ref{rem:bondingexample}) these are the only two possibilities.
\end{description}
As this is a vertex representation, a partial ordering is given by the inclusion relations between cells in $\Cc^{\sfr}$.

It is then straightforward to show that $\ssub\Cc^{\sfr} = \sfr$. % for $\sfr \in \sfrs$.
%Examples are given in Figure \ref{fig:triangulation}.
\end{remark}

\begin{proposition}[{\bf Loopless spin foam molecules}]
\label{prop:LMolecules}
Loopless spin foam molecules without self-bondings are simplicial subdivisions of homogenous, strongly connected polyhedral 2-complexes.
\end{proposition}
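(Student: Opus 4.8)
The plan is to upgrade Proposition \ref{prop:Molecules} by showing that, under the loopless and no-self-bonding hypotheses, the complex $\Cc^{\sfr}$ constructed in the preceding remark is an abstract polyhedral $2$-complex (not merely a poset), and that it is homogeneous and strongly connected. Since we already have $\ssub\Cc^{\sfr} = \sfr$, this immediately realizes $\sfr$ as the simplicial subdivision of a polyhedral $2$-complex, which is exactly the claim. So the real content is: (i) verify the polytope conditions (P0)--(P3) for each section $F = f/f_{-1}$ with $f\in\Cc^{\sfr}$, and (ii) transport homogeneity (M1) and strong connectedness (M2) from $\sfr$ to $\Cc^{\sfr}$.

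First I would check (P0)--(P3) dimension by dimension. The least face is $\emptyset$, giving (P0'). For $f\in\Cc^{\sfr}_{(0)}$ the section is a single vertex, for $f\in\Cc^{\sfr}_{(1)}$ it is $\{\emptyset, v_1, v_2, e\}$ — here the \emph{looplessness} hypothesis is what guarantees the two endpoints of each edge in $\Ecal_{int}$ (resp.\ $\Ecal_{ext}$) are genuinely distinct, so that this really is a $1$-polytope rather than a degenerate loop. For $f\in\Cc^{\sfr}_{(2)}$, i.e.\ $f = (v_1,\dots,v_k)$ or $f=(\vb_1,\vb_2,v_1,\dots,v_k)$ coming from the union of triangles around a bisection vertex $\vh\in\Vh$, I would argue that the poset of its faces is a $k$-gon (or a $k{+}2$-gon on the boundary): the triangles around $\vh$ in the molecule form a cyclic sequence $(v_iv_{i+1}\vh)$ because each edge of type $(v\vh)$ lies in exactly two such triangles in the bulk (from Definition \ref{def:sfatom} applied atomwise, preserved under bonding) — and the \emph{no self-bonding} hypothesis is what prevents this cyclic chain from closing up prematurely or identifying two of the $v_i$. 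This cyclicity gives the diamond property (P3) (each vertex $v_i$ lies in exactly two edges, each edge in exactly two faces: $f_{-1}$ and $f$), the flag-length condition (P1), and strong connectedness of the section (P2), since a polygon and all its sections are connected. Remark \ref{rem:polyexamples} is precisely the statement that every finite $2$-polytope is a polygon, so matching the combinatorics to a polygon suffices. This establishes (C1') and hence that $\Cc^{\sfr}$ is an abstract polyhedral $2$-complex.

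Next, homogeneity: by Proposition \ref{prop:Molecules}, $\sfr$ is a homogeneous simplicial $2$-complex, so every vertex and edge lies in some triangle; each triangle of $\sfr$ lies in the simplicial subdivision of a unique $2$-cell of $\Cc^{\sfr}$, and one checks that every $0$- and $1$-cell of $\Cc^{\sfr}$ is a face of some such $2$-cell, giving (M1) for $\Cc^{\sfr}$. Strong connectedness (M2) follows because $\sfr$ is strongly connected (again Proposition \ref{prop:Molecules}): a chain of triangles pairwise sharing an edge descends, after grouping triangles by their bisection vertex, to a chain of $2$-cells of $\Cc^{\sfr}$ pairwise sharing a $1$-cell (two adjacent such bundles of triangles meet along an internal edge $(v_1,v_2)\in\Ecal_{int}$). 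Assembling (C1'), (M1), (M2) gives that $\Cc^{\sfr}$ is a homogeneous, strongly connected polyhedral $2$-complex with $\ssub\Cc^{\sfr}=\sfr$, which is the proposition.

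The main obstacle I expect is step (i) for the $2$-cells: cleanly proving that the triangles sharing a given $\vh\in\Vh$ assemble into a single polygon — i.e.\ that the link of $\vh$ is a single cycle with no repeated vertices — and that looplessness plus absence of self-bondings are exactly the hypotheses needed to rule out the pathological configurations (a ``figure-eight'' of triangles at $\vh$, or a bundle collapsing onto fewer vertices). This is the point where one genuinely uses the two hypotheses in the statement; everything else is bookkeeping transported from Proposition \ref{prop:Molecules} and the definition of $\ssub$. A secondary subtlety is the boundary bookkeeping: verifying that $\Ecal_{ext}$ really consists of boundary faces in the sense of (M3) and that the $k{+}2$-gon description of boundary $2$-cells is correct, but this is routine given Remark \ref{rem:moleculeboundary}.
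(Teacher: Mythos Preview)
Your proposal is correct and follows essentially the same route as the paper: verify (C1') for $\Cc^{\sfr}$ by checking that each $2$-cell $f_{\vhat}/\emptyset$ is a polygon, then inherit (M1) and (M2) from Proposition \ref{prop:Molecules}. The paper phrases the polygon check as an induction on the number of bondings touching $\vhat$ (starting from the single-atom case $(v,\vb_1,\vb_2)$), whereas you argue directly via the cyclic link structure around $\vhat$; these are the same argument in different clothing. One small correction on the attribution of hypotheses: looplessness is what ensures the endpoints of edges in $\Ecal_{ext}$ are distinct (a loop $(\vb\vb)$ in the boundary graph would collapse $(\vb_1,\vb_2)$), while the no-self-bonding hypothesis is what ensures the endpoints of the bulk-to-bulk edges in $\Ecal_{int}$ are distinct and, more importantly, that the triangles around $\vhat$ do not close into a polygon with identified vertices---the paper makes exactly this point in its footnote.
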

\begin{proof}
Let  $\sfr\in\sfrs$ and consider $\Cc^{\sfr} = \{\emptyset\}\cup\Cc ^{\sfr}_{(0)}\cup\Cc ^{\sfr}_{(1)}\cup\Cc ^{\sfr}_{(2)}$. 
Trivially $\emptyset$ is the least face and elements of $\Cc ^{\sfr}_{(0)}=\Vcal\cup\Vbar_{\delta\sfr}$ and $\Cc ^{\sfr}_{(1)}$ are polytopes, being respectively vertices and edges build from those vertices.
For the proof that $\Cc^{\sfr}$ is a polyhedral 2-complex, it remains to show that the elements of $\Cc ^{\sfr}_{(2)}$, together with their subsets in $\Cc ^{\sfr}$, are polygons.

Let $\vh\in\sfr$ and $f_{\vh}=\Cc ^{\sfr}_{(0)}\cap\bigcup_{f\in\Fcal_{\sfr} : \vh\in f} f$ the corresponding face in $\Cc ^{\sfr}_{(2)}$.
Consider first the case in which there is exactly one bulk vertex  $ v\in V$ part of that face, $v\in f_{\vh}$. Then, since there are no self-bondings in $\sfr$, the face has the form $\Cc ^{\sfr}_{(0)}\cap(v,\vb_1,\vh)\cup(v,\vb_2,\vh)=(v, \vb_1,\vb_2)$ for $\vb_1,\vb_2\in \Vcal_{\delta\sfr}$.%
\footnote{A self-bonding which identifies $\vb_1$ and $\vb_2$ leads to a face $(v)$ of a self-loop edge $(vv)$. This is not a polytope because it violates (P3).
}
All the two-element subsets are edges in $\Cc ^{\sfr}_{(1)}$, thus the section $f_{\vh}/\emptyset$ is a polytope in $\Cc ^{\sfr}$. 
One can then show by induction that every bonding taking $\vh$ into account effectively adds another $v_i \in \Vcal$ to $f_{\vh}$ and edges of the so defined polygon are still in $\Cc ^{\sfr}_{(1)}$. Finally, it may then, for  $|f_{\vh}\cap\Vcal|>1$ occur that  $\vb_1$ and $\vb_2$ are bonded to each other and thus are not part of $f_{\vh}$ anymore. But still the section $f_{\vh}/\emptyset\In\Cc ^{\sfr}$.
This concludes the proof of (C1') and of $\Cc ^{\sfr}$ being a polyhedral complex.

Finally, homogeneity and strong connectedness of $\Cc ^{\sfr}$ are directly induced by 
$\sfr$ having these properties as a simplicial complex (Proposition \ref{prop:Molecules}).
\end{proof}

\begin{remark}
From the proof of Proposition \ref{prop:LMolecules} it is clear that spin foam molecules, in their full generality, have to be described by an extension of the polytope concept which includes loops. Loops occur in self-bondings of atoms as well as for patches of boundary graphs with loops, leading to faces with only one boundary edge.
Both cases can be easily included in a definition of \emph{generalized polytopes} by loosening (P3), allowing two or one $p$-faces in sections of $p$+1 with $p$-1 faces.

One can then prove that spin foam molecules $\sfrs$ are simplicial subdivisions of generalized polyhedral complexes.
We are not presenting the details for this here because it is rather straightforward. Moreover, there are good reasons to prefer polyhedral $\m$-complexes to generalized polyhedral $\m$-complexes from a quantum gravity perspective: While the former might have a higher dimensional extension to pseudo-$D$-manifolds ($D>\m$), this is not expected for the latter. It has already been shown in the simplicial ($n$-regular) case that, with the same extension, the loops in self-bondings lead to degeneracies such that there is no interpretation as pseudo-$D$-manifolds \cite{\cguraulost}.

In a \gft\ it is rather straightforward to implement the property that no self-bondings occur in the generation of complexes. A complex field, together with interaction terms as functionals of either the field or its complex conjugate are enough to generate bipartite graphs. In that case, no atom can be bonded to itself in the complexes generated by the \gft. From this perspective, genuine polyhedral complexes are indeed the only combinatorial objects occurring.
\end{remark}

\end{appendix}

\

\bibliographystyle{JHEP}
\bibliography{GFTallLQG}

\providecommand{\href}[2]{#2}\begingroup\raggedright\begin{thebibliography}{10}

\bibitem{Thiemann:2007wt}
T.~Thiemann, {\em {Modern canonical quantum general relativity}}.
\newblock Cambridge monographs on mathematical physics. Cambridge University
  Press, Cambridge, UK, 2007.

\bibitem{Ashtekar:2004fk}
A.~Ashtekar and J.~Lewandowski, {\it {Background independent quantum gravity: a
  status report}},  {\em Class. Quant. Grav.} {\bf 21} (2004) R53--R152,
  [\href{http://xxx.lanl.gov/abs/gr-qc/0404018}{{\tt gr-qc/0404018}}].

\bibitem{Rovelli:2004wb}
C.~Rovelli, {\em {Quantum Gravity}}.
\newblock Cambridge University Press, Cambridge, UK, 2004.

\bibitem{Baez:2000kp}
J.~C. Baez, {\it {An Introduction to Spin Foam Models of BF Theory and Quantum
  Gravity}},  in {\em Geometry and Quantum Physics}, pp.~25--93.
\newblock Springer, Berlin, Heidelberg, 2000.
\newblock \href{http://xxx.lanl.gov/abs/gr-qc/9905087}{{\tt gr-qc/9905087}}.

\bibitem{Oriti:2003uw}
D.~Oriti, {\em {Spin Foam Models of Quantum Spacetime}}.
\newblock PhD thesis, Cambridge University, 2003.
\newblock \href{http://xxx.lanl.gov/abs/gr-qc/0311066}{{\tt gr-qc/0311066}}.

\bibitem{Perez:2003wk}
A.~P{\'e}rez, {\it {Spin foam models for quantum gravity}},  {\em Class. Quant.
  Grav.} {\bf 20} (2003) R43--R104,
  [\href{http://xxx.lanl.gov/abs/gr-qc/0301113}{{\tt gr-qc/0301113}}].

\bibitem{Perez:2013uz}
A.~P{\'e}rez, {\it {The Spin-Foam Approach to Quantum Gravity}},  {\em Living
  Rev. Relativity} {\bf 16} (2013) 3,
  [\href{http://xxx.lanl.gov/abs/1205.2019}{{\tt arXiv:1205.2019}}].

\bibitem{Rovelli:2011tk}
C.~Rovelli, {\it {Zakopane lectures on loop gravity}},  in {\em PoS QGQGS2011},
  2011.
\newblock \href{http://xxx.lanl.gov/abs/1102.3660}{{\tt arXiv:1102.3660}}.

\bibitem{Bianchi:2013fh}
E.~Bianchi and F.~Hellmann, {\it {The Construction of Spin Foam Vertex
  Amplitudes}},  {\em SIGMA} {\bf 9} (2013) 008,
  [\href{http://xxx.lanl.gov/abs/1207.4596}{{\tt arXiv:1207.4596}}].

\bibitem{Engle:2008ka}
J.~Engle, R.~Pereira, and C.~Rovelli, {\it {Flipped spinfoam vertex and loop
  gravity}},  {\em Nucl. Phys. B} {\bf 798} (2008) 251--290,
  [\href{http://xxx.lanl.gov/abs/0708.1236}{{\tt arXiv:0708.1236}}].

\bibitem{Engle:2008fj}
J.~Engle, E.~R. Livine, R.~Pereira, and C.~Rovelli, {\it {LQG vertex with
  finite Immirzi parameter}},  {\em Nucl. Phys. B} {\bf 799} (2008) 136--149,
  [\href{http://xxx.lanl.gov/abs/0711.0146}{{\tt arXiv:0711.0146}}].

\bibitem{Freidel:2008fv}
L.~Freidel and K.~Krasnov, {\it {A new spin foam model for 4D gravity}},  {\em
  Class. Quant. Grav.} {\bf 25} (2008) 125018,
  [\href{http://xxx.lanl.gov/abs/0708.1595}{{\tt arXiv:0708.1595}}].

\bibitem{Baratin:2012br}
A.~Baratin and D.~Oriti, {\it {Group field theory and simplicial gravity path
  integrals: A model for Holst-Plebanski gravity}},  {\em Phys. Rev. D} {\bf
  85} (2012) 044003, [\href{http://xxx.lanl.gov/abs/1111.5842}{{\tt
  arXiv:1111.5842}}].

\bibitem{Freidel:2005jy}
L.~Freidel, {\it {Group Field Theory: An Overview}},  {\em Int. J. Theor.
  Phys.} {\bf 44} (2005) 1769,
  [\href{http://xxx.lanl.gov/abs/hep-th/0505016}{{\tt hep-th/0505016}}].

\bibitem{Oriti:2006ts}
D.~Oriti, {\it {The group field theory approach to quantum gravity}},  in {\em
  Approaches to Quantum Gravity: Toward a New Understanding of Space, Time and
  Matter} (D.~Oriti, ed.).
\newblock Cambridge University Press, Cambridge, UK, 2007.
\newblock \href{http://xxx.lanl.gov/abs/gr-qc/0607032}{{\tt gr-qc/0607032}}.

\bibitem{Oriti:2012wt}
D.~Oriti, {\it {The microscopic dynamics of quantum space as a group field
  theory}},  in {\em Foundations of Space and Time}.
\newblock Cambridge University Press, Cambridge, UK, 2012.
\newblock \href{http://xxx.lanl.gov/abs/1110.5606}{{\tt arXiv:1110.5606}}.

\bibitem{Krajewski:2012wm}
T.~Krajewski, {\it {Group Field Theories}},  in {\em PoS QGQGS2011}, p.~005,
  2012.
\newblock \href{http://xxx.lanl.gov/abs/1210.6257}{{\tt arXiv:1210.6257}}.

\bibitem{Baratin:2012ge}
A.~Baratin and D.~Oriti, {\it {Ten questions on Group Field Theory (and their
  tentative answers)}},  {\em Journal of Physics: Conference Series} {\bf 360}
  (2012) 2002, [\href{http://xxx.lanl.gov/abs/1112.3270}{{\tt
  arXiv:1112.3270}}].

\bibitem{Oriti:2009ur}
D.~Oriti, {\it {The group field theory approach to quantum gravity: some recent
  results}},  in {\em The Planck Scale: Proceedings of the XXV Max Born
  Symposium}, 2009.
\newblock \href{http://xxx.lanl.gov/abs/0912.2441}{{\tt arXiv:0912.2441}}.

\bibitem{Oriti:2014wf}
D.~Oriti, {\it {Group Field Theory and Loop Quantum Gravity}},
  \href{http://xxx.lanl.gov/abs/1408.7112}{{\tt arXiv:1408.7112}}.

\bibitem{Oriti:2013vv}
D.~Oriti, {\it {Group field theory as the 2nd quantization of Loop Quantum
  Gravity}},  \href{http://xxx.lanl.gov/abs/1310.7786}{{\tt arXiv:1310.7786}}.

\bibitem{Dittrich:2012ba}
B.~Dittrich, {\it {From the discrete to the continuous: towards a cylindrically
  consistent dynamics}},  {\em New J. Phys.} {\bf 14} (2012) 123004,
  [\href{http://xxx.lanl.gov/abs/1205.6127}{{\tt arXiv:1205.6127}}].

\bibitem{Reisenberger:2001hd}
M.~P. Reisenberger and C.~Rovelli, {\it {Spacetime as a Feynman diagram: the
  connection formulation}},  {\em Class. Quant. Grav.} {\bf 18} (2001)
  121--140, [\href{http://xxx.lanl.gov/abs/gr-qc/0002095}{{\tt
  gr-qc/0002095}}].

\bibitem{Gurau:2012hl}
R.~Gurau and J.~P. Ryan, {\it {Colored Tensor Models - a Review}},  {\em SIGMA}
  {\bf 8} (2012) 020, [\href{http://xxx.lanl.gov/abs/1109.4812}{{\tt
  arXiv:1109.4812}}].

\bibitem{Francesco:1995ih}
P.~Di~Francesco, P.~Ginsparg, and J.~Zinn-Justin, {\it {2D gravity and random
  matrices}},  {\em Phys. Rept.} {\bf 254} (1995) 1--133,
  [\href{http://xxx.lanl.gov/abs/hep-th/9306153}{{\tt hep-th/9306153}}].

\bibitem{Barbieri:1998kx}
A.~Barbieri, {\it {Quantum tetrahedra and simplicial spin networks}},  {\em
  Nucl. Phys. B} {\bf 518} (1998) 714--728,
  [\href{http://xxx.lanl.gov/abs/gr-qc/9707010}{{\tt gr-qc/9707010}}].

\bibitem{Barrett:1998fp}
J.~W. Barrett and L.~Crane, {\it {Relativistic spin networks and quantum
  gravity}},  {\em J. Math. Phys.} {\bf 39} (1998) 3296,
  [\href{http://xxx.lanl.gov/abs/gr-qc/9709028}{{\tt gr-qc/9709028}}].

\bibitem{Barrett:2000fr}
J.~W. Barrett and L.~Crane, {\it {A Lorentzian signature model for quantum
  general relativity}},  {\em Class. Quant. Grav.} {\bf 17} (2000) 3101--3118,
  [\href{http://xxx.lanl.gov/abs/gr-qc/9903060}{{\tt gr-qc/9903060}}].

\bibitem{Freidel:2009ek}
L.~Freidel, R.~Gurau, and D.~Oriti, {\it {Group field theory renormalization in
  the 3D case: Power counting of divergences}},  {\em Phys. Rev. D} {\bf 80}
  (2009) 044007, [\href{http://xxx.lanl.gov/abs/0905.3772}{{\tt
  arXiv:0905.3772}}].

\bibitem{BenGeloun:2013fw}
J.~Ben~Geloun and V.~Rivasseau, {\it {A Renormalizable 4-Dimensional Tensor
  Field Theory}},  {\em Comm. Math. Phys.} {\bf 318} (2013) 69--109,
  [\href{http://xxx.lanl.gov/abs/1111.4997}{{\tt arXiv:1111.4997}}].

\bibitem{BenGeloun:2013dl}
J.~Ben~Geloun and D.~O. Samary, {\it {3D Tensor Field Theory: Renormalization
  and One-Loop $\beta$-Functions}},  {\em Ann. Henri Poincar{\'e}} {\bf 14}
  (2013) 1599--1642, [\href{http://xxx.lanl.gov/abs/1201.0176}{{\tt
  arXiv:1201.0176}}].

\bibitem{BenGeloun:2013ek}
J.~Ben~Geloun, {\it {On the finite amplitudes for open graphs in Abelian
  dynamical colored Boulatov-Ooguri models}},  {\em J. Phys. A} {\bf 46} (2013)
  402002, [\href{http://xxx.lanl.gov/abs/1307.8299}{{\tt arXiv:1307.8299}}].

\bibitem{BenGeloun:2013uf}
J.~Ben~Geloun, {\it {Renormalizable Models in Rank $d\geq 2$ Tensorial Group
  Field Theory}},  \href{http://xxx.lanl.gov/abs/1306.1201}{{\tt
  arXiv:1306.1201}}.

\bibitem{Samary:2014bs}
D.~O. Samary and F.~Vignes-Tourneret, {\it {Just Renormalizable TGFT's on U(1)
  d with Gauge Invariance}},  {\em Comm. Math. Phys.} {\bf 329} (2014)
  545--578, [\href{http://xxx.lanl.gov/abs/1211.2618}{{\tt arXiv:1211.2618}}].

\bibitem{Carrozza:2014ee}
S.~Carrozza, D.~Oriti, and V.~Rivasseau, {\it {Renormalization of Tensorial
  Group Field Theories: Abelian U(1) Models in Four Dimensions}},  {\em Comm.
  Math. Phys.} {\bf 327} (2014) 603--641,
  [\href{http://xxx.lanl.gov/abs/1207.6734}{{\tt arXiv:1207.6734}}].

\bibitem{Carrozza:2014bh}
S.~Carrozza, D.~Oriti, and V.~Rivasseau, {\it {Renormalization of a SU(2)
  Tensorial Group Field Theory in Three Dimensions}},  {\em Comm. Math. Phys.}
  {\bf 330} (2014) 581--637, [\href{http://xxx.lanl.gov/abs/1303.6772}{{\tt
  arXiv:1303.6772}}].

\bibitem{Carrozza:2014tf}
S.~Carrozza, {\it {Discrete Renormalization Group for SU(2) Tensorial Group
  Field Theory}},  \href{http://xxx.lanl.gov/abs/1407.4615}{{\tt
  arXiv:1407.4615}}.

\bibitem{Bonzom:2012bg}
V.~Bonzom, R.~Gurau, and V.~Rivasseau, {\it {Random tensor models in the large
  N limit: Uncoloring the colored tensor models}},  {\em Phys. Rev. D} {\bf 85}
  (2012) 084037, [\href{http://xxx.lanl.gov/abs/1202.3637}{{\tt
  arXiv:1202.3637}}].

\bibitem{Kaminski:2010ba}
W.~Kaminski, M.~Kisielowski, and J.~Lewandowski, {\it {Spin-foams for all loop
  quantum gravity}},  {\em Class. Quant. Grav.} {\bf 27} (2010) 095006,
  [\href{http://xxx.lanl.gov/abs/0909.0939}{{\tt arXiv:0909.0939}}].

\bibitem{Kisielowski:2012bo}
M.~Kisielowski, J.~Lewandowski, and J.~Puchta, {\it {Feynman diagrammatic
  approach to spinfoams}},  {\em Class. Quant. Grav.} {\bf 29} (2012) 015009,
  [\href{http://xxx.lanl.gov/abs/1107.5185}{{\tt arXiv:1107.5185}}].

\bibitem{Gurau:2010iu}
R.~Gurau, {\it {Lost in translation: topological singularities in group field
  theory}},  {\em Class. Quant. Grav.} {\bf 27} (2010) 235023,
  [\href{http://xxx.lanl.gov/abs/1006.0714}{{\tt arXiv:1006.0714}}].

\bibitem{Bahr:2011ey}
B.~Bahr, F.~Hellmann, W.~Kaminski, M.~Kisielowski, and J.~Lewandowski, {\it
  {Operator spin foam models}},  {\em Class. Quant. Grav.} {\bf 28} (2011)
  105003, [\href{http://xxx.lanl.gov/abs/1010.4787}{{\tt arXiv:1010.4787}}].

\bibitem{Bahr:2012iu}
B.~Bahr, {\it {Operator Spin Foams: holonomy formulation and coarse graining}},
   {\em Journal of Physics: Conference Series} {\bf 360} (2012) 012042,
  [\href{http://xxx.lanl.gov/abs/1112.3567}{{\tt arXiv:1112.3567}}].

\bibitem{Denicola:2010fa}
D.~Denicola, M.~Marcolli, and A.~Zainy~al Yasry, {\it {Spin foams and
  noncommutative geometry}},  {\em Class. Quant. Grav.} {\bf 27} (2010) 205025,
  [\href{http://xxx.lanl.gov/abs/1005.1057}{{\tt arXiv:1005.1057}}].

\bibitem{Smerlak:2011ea}
M.~Smerlak, {\it {Comment on 'Lost in translation: topological singularities in
  group field theory'}},  {\em Class. Quant. Grav.} {\bf 28} (2011) 178001,
  [\href{http://xxx.lanl.gov/abs/1102.1844}{{\tt arXiv:1102.1844}}].

\bibitem{Reisenberger:2000fj}
M.~P. Reisenberger and C.~Rovelli, {\it {Spin foams as Feynman diagrams}},
  {\em arXiv} (2000) [\href{http://xxx.lanl.gov/abs/gr-qc/0002083}{{\tt
  gr-qc/0002083}}].

\bibitem{Benedetti:2012ed}
D.~Benedetti and R.~Gurau, {\it {Phase transition in dually weighted colored
  tensor models}},  {\em Nucl. Phys. B} {\bf 855} (2012) 420--437,
  [\href{http://xxx.lanl.gov/abs/1108.5389}{{\tt arXiv:1108.5389}}].

\bibitem{Engle:2007em}
J.~Engle, R.~Pereira, and C.~Rovelli, {\it {Loop-Quantum-Gravity Vertex
  Amplitude}},  {\em Phys. Rev. Lett.} {\bf 99} (2007) 161301,
  [\href{http://xxx.lanl.gov/abs/0705.2388}{{\tt arXiv:0705.2388}}].

\bibitem{Calcagni:2013ku}
G.~Calcagni, D.~Oriti, and J.~Th{\"u}rigen, {\it {Laplacians on discrete and
  quantum geometries}},  {\em Class. Quant. Grav.} {\bf 30} (2013) 125006,
  [\href{http://xxx.lanl.gov/abs/1208.0354}{{\tt arXiv:1208.0354}}].

\bibitem{Baratin:2011bk}
A.~Baratin, F.~Girelli, and D.~Oriti, {\it {Diffeomorphisms in group field
  theories}},  {\em Phys. Rev. D} {\bf 83} (2011) 104051,
  [\href{http://xxx.lanl.gov/abs/1101.0590}{{\tt arXiv:1101.0590}}].

\bibitem{Bonzom:2012tg}
V.~Bonzom and M.~Smerlak, {\it {Bubble Divergences: Sorting out Topology from
  Cell Structure}},  {\em Ann. Henri Poincar{\'e}} {\bf 13} (2012) 185--208,
  [\href{http://xxx.lanl.gov/abs/1103.3961}{{\tt arXiv:1103.3961}}].

\bibitem{Bonzom:2012gwa}
V.~Bonzom and M.~Smerlak, {\it {Bubble Divergences from Twisted Cohomology}},
  {\em Comm. Math. Phys.} {\bf 312} (2012) 399--426,
  [\href{http://xxx.lanl.gov/abs/1008.1476}{{\tt arXiv:1008.1476}}].

\bibitem{Freidel:2003bh}
L.~Freidel, {\it {Diffeomorphisms and spin foam models}},  {\em Nucl. Phys. B}
  {\bf 662} (2003) 279--298, [\href{http://xxx.lanl.gov/abs/gr-qc/0212001}{{\tt
  gr-qc/0212001}}].

\bibitem{Dittrich:2008pw}
B.~Dittrich, {\it {Diffeomorphism Symmetry in Quantum Gravity Models}},  {\em
  Adv. Sci. Lett.} {\bf 2} (2009) 151--163,
  [\href{http://xxx.lanl.gov/abs/0810.3594}{{\tt arXiv:0810.3594}}].

\bibitem{Bonzom:2011cs}
V.~Bonzom, R.~Gurau, A.~Riello, and V.~Rivasseau, {\it {Critical behavior of
  colored tensor models in the large N limit}},  {\em Nucl. Phys. B} {\bf 853}
  (2011) 174--195, [\href{http://xxx.lanl.gov/abs/1105.3122}{{\tt
  arXiv:1105.3122}}].

\bibitem{Gurau:2012ek}
R.~Gurau, {\it {The Complete 1/N Expansion of Colored Tensor Models in
  Arbitrary Dimension}},  {\em Ann. Henri Poincar{\'e}} {\bf 13} (2012)
  399--423, [\href{http://xxx.lanl.gov/abs/1102.5759}{{\tt arXiv:1102.5759}}].

\bibitem{Gurau:2012td}
R.~Gurau, {\it {A review of the large N limit of tensor models}},
  \href{http://xxx.lanl.gov/abs/1209.4295}{{\tt arXiv:1209.4295}}.

\bibitem{Gurau:2012wj}
R.~Gurau, {\it {A review of the 1/N expansion in random tensor models}},
  \href{http://xxx.lanl.gov/abs/1209.3252}{{\tt arXiv:1209.3252}}.

\bibitem{Baratin:2014bea}
A.~Baratin, S.~Carrozza, D.~Oriti, J.~P. Ryan, and M.~Smerlak, {\it {Melonic
  Phase Transition in Group Field Theory}},  {\em LMP} {\bf 104} (2014)
  1003--1017, [\href{http://xxx.lanl.gov/abs/1307.5026}{{\tt
  arXiv:1307.5026}}].

\bibitem{Gurau:2011sk}
R.~Gurau, {\it {The Double Scaling Limit in Arbitrary Dimensions: A Toy
  Model}},  {\em Phys. Rev.} {\bf D84} (2011) 124051,
  [\href{http://xxx.lanl.gov/abs/1110.2460}{{\tt arXiv:1110.2460}}].

\bibitem{Gielen:2013cr}
S.~Gielen, D.~Oriti, and L.~Sindoni, {\it {Cosmology from Group Field Theory
  Formalism for Quantum Gravity}},  {\em Phys. Rev. Lett.} {\bf 111} (2013)
  031301, [\href{http://xxx.lanl.gov/abs/1303.3576}{{\tt arXiv:1303.3576}}].

\bibitem{Gielen:2014gv}
S.~Gielen, D.~Oriti, and L.~Sindoni, {\it {Homogeneous cosmologies as group
  field theory condensates}},  {\em JHEP} {\bf 06} (2014) 013,
  [\href{http://xxx.lanl.gov/abs/1311.1238}{{\tt arXiv:1311.1238}}].

\bibitem{Gielen:2014ca}
S.~Gielen, {\it {Quantum cosmology of (loop) quantum gravity condensates: an
  example}},  {\em Class. Quant. Grav.} {\bf 31} (2014) 155009,
  [\href{http://xxx.lanl.gov/abs/1404.2944}{{\tt arXiv:1404.2944}}].

\bibitem{Calcagni:2014uz}
G.~Calcagni, {\it {Loop quantum cosmology from group field theory}},  {\em
  Phys. Rev. D} {\bf 90} (2014) 064047,
  [\href{http://xxx.lanl.gov/abs/1407.8166}{{\tt arXiv:1407.8166}}].

\bibitem{Gielen:2014vk}
S.~Gielen and D.~Oriti, {\it {Quantum cosmology from quantum gravity
  condensates: cosmological variables and lattice-refined dynamics}},  {\em New
  J. Phys.} {\bf 16} (2014) 123004,
  [\href{http://xxx.lanl.gov/abs/1407.8167}{{\tt arXiv:1407.8167}}].

\bibitem{Sindoni:2014vs}
L.~Sindoni, {\it {Effective equations for GFT condensates from fidelity}},
  \href{http://xxx.lanl.gov/abs/1408.3095}{{\tt arXiv:1408.3095}}.

\bibitem{Kozlov:2008wc}
D.~Kozlov, {\em {Combinatorial Algebraic Topology}}.
\newblock Algorithms and Computation in Mathematics. Springer, 2008.

\bibitem{Seifert:1980uo}
H.~Seifert and W.~Threlfall, {\em {Seifert and Threlfall, A textbook of
  topology}}.
\newblock Pure and Applied Mathematics. Elsevier Science, 1980.

\bibitem{McMullen:2009ff}
P.~McMullen and E.~Schulte, {\em {Abstract Regular Polytopes}}.
\newblock Cambridge University Press, Cambridge, 2009.

\bibitem{Danzer:1982dp}
L.~Danzer and E.~Schulte, {\it {Regul{\"a}re Inzidenzkomplexe I}},  {\em Geom.
  Dedicata} {\bf 13} (1982) 295--308.

\end{thebibliography}\endgroup

\end{document}